\newlength{\myA}%
\crefname{rule}{rule}{rules}
\Crefname{rule}{Rule}{Rules}
\newenvironment{rules}[1][]{\begin{enumerate}[#1]\crefalias{enumi}{rule}}{\end{enumerate}}
\crefname{condition}{condition}{conditions}
\Crefname{condition}{Condition}{Conditions}
\newenvironment{conditions}[1][]{\begin{enumerate}[#1]\crefalias{enumi}{condition}}{\end{enumerate}}
\crefname{statement}{statement}{statements}
\Crefname{statement}{Statement}{Statements}
\newenvironment{statements}[1][]{\begin{enumerate}[#1]\crefalias{enumi}{statement}}{\end{enumerate}}
\crefname{claim}{Claim}{Claims}
\Crefname{claim}{Claim}{Claims}
\declaretheorem[
	name=Definition,
	numberwithin=section,
	style=definition,
	qed={\ensuremath{\lrcorner}}
]{definition}
\let\endDefinition\qedhere
\declaretheorem[
	name=Example, 
	sibling=definition,
	style=definition,
	qed={\ensuremath{\lrcorner}}
]{example}
\let\endExample\qedhere
\declaretheorem[
	name=Remark, 
	sibling=definition,
	style=definition
]{remark}
\declaretheorem[
	name=Lemma, 
	sibling=definition
]{lemma}
\declaretheorem[
	name=Fact, 
	sibling=definition
]{fact}
\declaretheorem[
	name=Claim, 
	sibling=definition
]{claim}
\declaretheorem[
	name=Claim, 
	numbered=no
]{claim*}
\declaretheorem[
	name=Proposition, 
	sibling=definition
]{proposition}
\declaretheorem[
	name=Theorem,
	sibling=definition,
]{theorem}
\declaretheorem[
	name=Theorem
]{alphtheorem}
\definecolor{myred}{HTML}{882255}
\definecolor{myblue}{HTML}{332288}
\definecolor{mygreen}{HTML}{117733}
\newcommand*{\RelSymE}{\textcolor{myred}{\textbf{E}}}
\newcommand*{\RelSymF}{\textcolor{mygreen}{\textbf{F}}}
\newcommand*{\RelSymR}{\textcolor{myblue}{\textbf{R}}}
\tikzset{
    colorE/.style={draw=myred},
    colorF/.style={draw=mygreen},
    colorR/.style={draw=myblue},
    slice/.style={},
	RelE/.style={thick, -Stealth, colorE,fill=myred},
	RelF/.style={thick, densely dashed, -Stealth, colorF,fill=mygreen},
	RelR/.style={thick, dashed, colorR},
	mynode/.style={inner sep=2pt},
	myvertex/.style={rectangle, draw, very thick},
    tpFor/.style={-latex},
    tpBack/.style={latex-},
    edgeAnnotation/.style={draw=none, font={\tiny}},
    myLeft/.style={},
    myRight/.style={bend right=10},
    cr-node/.style={circle, draw=black, inner sep=1pt, minimum size=10pt},
    dreieck/.style={regular polygon, regular polygon sides=3},
    viereck/.style={regular polygon, regular polygon sides=4},
    fuenfeck/.style={regular polygon, regular polygon sides=5},
    sechseck/.style={regular polygon, regular polygon sides=6},
    siebeneck/.style={regular polygon, regular polygon sides=7},
    achteck/.style={regular polygon, regular polygon sides=8},
    neuneck/.style={regular polygon, regular polygon sides=9},
    colourA/.style={cr-node, sechseck, text=white, fill=Navy},
    colourB/.style={cr-node, viereck, text=white, fill=Purple},
    colourC/.style={cr-node, fill=DarkKhaki},
    colourD/.style={cr-node, dreieck, fill=Goldenrod},
    colourE/.style={cr-node, fuenfeck, fill=Turquoise},
    colourF/.style={cr-node, siebeneck, fill=Lime},
    colourG/.style={cr-node, sechseck, fill=Crimson, text=white},
    colourH/.style={cr-node, achteck, fill=HotPink},
    colourI/.style={cr-node, neuneck, fill=Salmon},
    colourJ/.style={cr-node, fill=DarkGreen, text=white},
    colourK/.style={cr-node, viereck, fill=SaddleBrown, text=white},
    colourL/.style={cr-node, dreieck, fill=SlateBlue},
    colourM/.style={cr-node, fuenfeck, fill=GreenYellow},
    colourN/.style={cr-node, siebeneck, fill=Plum},
    colourO/.style={cr-node, dreieck, fill=PeachPuff},
    colourP/.style={cr-node, fuenfeck, fill=Teal},
    colourQ/.style={cr-node, siebeneck, fill=SkyBlue},
    colourR/.style={cr-node, sechseck, fill=Grey},
    colourS/.style={cr-node, achteck, fill=DarkOrange},
    colourT/.style={cr-node, neuneck, fill=Tan},
    colourU/.style={cr-node, fill=Olive},
    colourV/.style={cr-node, viereck, fill=IndianRed},
}
\newcommand{\convexpath}[2]{
[
    create hullnodes/.code={
        \global\edef\namelist{#1}
        \foreach [count=\counter] \nodename in \namelist {
            \global\edef\numberofnodes{\counter}
            \node at (\nodename) [draw=none,name=hullnode\counter] {};
        }
        \node at (hullnode\numberofnodes) [name=hullnode0,draw=none] {};
        \pgfmathtruncatemacro\lastnumber{\numberofnodes+1}
        \node at (hullnode1) [name=hullnode\lastnumber,draw=none] {};
    },
    create hullnodes
]
($(hullnode1)!#2!-90:(hullnode0)$)
\foreach [
    evaluate=\currentnode as \previousnode using \currentnode-1,
    evaluate=\currentnode as \nextnode using \currentnode+1
    ] \currentnode in {1,...,\numberofnodes} {
-- ($(hullnode\currentnode)!#2!-90:(hullnode\previousnode)$)
  let \p1 = ($(hullnode\currentnode)!#2!-90:(hullnode\previousnode) - (hullnode\currentnode)$),
    \n1 = {atan2(\y1,\x1)},
    \p2 = ($(hullnode\currentnode)!#2!90:(hullnode\nextnode) - (hullnode\currentnode)$),
    \n2 = {atan2(\y2,\x2)},
    \n{delta} = {-Mod(\n1-\n2,360)}
  in
    {arc [start angle=\n1, delta angle=\n{delta}, radius=#2]}
}
-- cycle
}
\title{Color Refinement for Relational Structures}
\date{}
\author{Benjamin Scheidt\\\href{mailto:benjamin.scheidt@hu-berlin.de}{benjamin.scheidt@hu-berlin.de} \and Nicole Schweikardt\\\href{mailto:schweikn@informatik.hu-berlin.de}{schweikn@informatik.hu-berlin.de}}
\affil{Humboldt-Universität zu Berlin, Germany}
\newcommand*{\inlineTodo}[2]{\todo[inline, caption={}, color=#1]{#2}}
\newcommand*{\nat}{\mathbb{N}}
\newcommand*{\natpos}{\nat_{\geq{} 1}}
\newcommand*{\isdef}{\coloneqq}
\newcommand*{\defis}{\eqqcolon}
\newcommand*{\isomorphic}{\cong}
\newcommand{\img}{\operatorname{img}}
\newcommand*{\hati}{{\hat{\imath}}}
\newcommand*{\card}[1]{\lvert{} {#1} \rvert}
\newcommand*{\size}[1]{\lVert{} {#1} \rVert}
\renewcommand*{\hom}{\operatorname{hom}}
\newcommand*{\Hom}{\operatorname{Hom}}
\newcommand*{\bigO}{\mathcal{O}}
\let\oldphi\phi%
\let\phi\varphi%
\let\varphi\oldphi%
\let\geq\geqslant%
\let\leq\leqslant%
\let\emptyset\varnothing%
\newcommand\varcal[1]{\text{\usefont{U}{eus}{m}{n}#1}}
\newcommandx*{\set}[3][1={},3={\,}]{\ensuremath{#1\{#3 {#2} #3#1\}}}
\renewcommand*{\mid}{\;:\;}
\newcommandx*{\mset}[3][1={},3={\,}]{\ensuremath{
	#1\{\!\!#1\{#3 {#2} #3#1\}\!\!#1\}
}}
\newcommand*{\mult}[1]{\operatorname{mult}_{#1}}
\newcommand*{\union}{\cup}
\newcommand*{\intersect}{\cap}
\newcommand*{\bigunion}{\bigcup}
\newcommand*{\disunion}{\mathrel{\dot{\cup}}}
\newcommand*{\bigdisunion}{\dot{\bigcup}}
\newcommand*{\compl}[1]{\overline{#1}}
\newcommand*{\tup}[1]{{\bm{#1}}}
\newcommand*{\tset}{\operatorname{set}}
\newcommand*{\at}{\tup{a}}
\newcommand*{\bt}{\tup{b}}
\newcommand*{\ct}{\tup{c}}
\newcommand*{\sig}[1][\sigma]{#1}
\newcommand*{\ksig}[1][\sigma]{{\sig[#1]\vert_k}}
\newcommand*{\ar}{\operatorname{ar}}
\newcommand*{\str}[1]{\mathcal{#1}}
\newcommand*{\class}[1]{\mathsf{#1}}
\newcommand*{\A}{\str{A}}
\newcommand*{\B}{\str{B}}
\newcommand*{\C}{\varcal{C}}
\newcommand*{\tA}{\tup{A}}
\newcommand*{\tB}{\tup{B}}
\newcommand*{\tC}{\tup{C}}
\newcommand*{\cohesion}{\ensuremath{\Gamma}}
\newcommand*{\density}{\cohesion}
\newcommand*{\V}{V}
\newcommand*{\E}{E}
\newcommand*{\repsig}[1][\sigma]{\widehat{\sig[#1]}}
\newcommand*{\jtrep}[2]{\str{G}_{#1}^{#2}}
\newcommand*{\grep}[1]{\str{G}_{#1}}
\newcommand*{\vgrep}[1]{\str{H}_{#1}}
\newcommand{\slices}{\mathcal{S}}
\newcommand*{\slice}{\tup{s}}
\newcommand*{\atp}{\mathsf{atp}}
\newcommand*{\atype}{\rho}
\newcommand*{\stp}{\mathsf{stp}}
\newcommand*{\stype}{\tau}
\newcommand*{\sTypes}[1][\sigma]{\class{STP}_{#1}}
\newcommand*{\links}{l}
\newcommand*{\rechts}{r}
\newcommand*{\lpart}[1]{\mathsf{\links{}p}_{#1}}
\newcommand*{\rpart}[1]{\mathsf{\rechts{}p}_{#1}}
\newcommand*{\cpart}[1]{\mathsf{cp}_{#1}}
\newcommand*{\rc}{\operatorname{\varrho}}
\newcommand*{\col}[2]{\rc_{#1}(#2)} %
\newcommand*{\cols}[2]{\class{RC}_{#1}(#2)} %
\newcommand*{\ccount}[2]{\mult{#1}(#2)} %
\newcommand*{\Nset}[2]{N_{#1}^{#2}}
\newcommand*{\classicalCR}{\operatorname{\gamma}}
\newcommand*{\gcr}[2]{\classicalCR_{#1}(#2)} %
\newcommand*{\Logic}[1]{\mathsf{#1}}
\newcommand*{\CFO}{\Logic{C}}
\newcommand*{\GF}[1]{{\mathsf{GF}}(#1)} %
\newcommand*{\GFC}{\GF{\CFO}}
\newcommand*{\VAR}{\class{Var}}
\newcommand*{\var}{\mathtt{v}}
\newcommand*{\vartup}{\tup{\var}}
\newcommand*{\xvar}{x}
\newcommand*{\xt}{\tup{\xvar}}
\newcommand*{\LogGuard}{\Delta}
\newcommand*{\gd}{\operatorname{gd}}
\newcommand*{\free}{\operatorname{free}}
\newcommand*{\vset}{\operatorname{vars}}
\newcommand*{\Int}{\mathcal{I}}
\newcommand*{\assignment}{\alpha}
\newcommand*{\Isubst}[2]{\Int\!\tfrac{#1}{#2}}
\newcommand*{\bigland}{\bigwedge}
\newcommand*{\existsgeq}[1]{\exists^{\scalebox{0.6}{$\geq$}#1}}
\newcommand*{\existseq}[1]{\exists^{\scalebox{0.6}{$=$}#1}}
\newcommand*{\lequal}{\kern1pt{=}\kern1pt}
\newcommand*{\qsep}{\mathbin{.}}
\newcommand*{\sigEx}{\ensuremath{{\sigma_{1}}}}
\newcommand*{\SpExa}{\ensuremath{{\A_{1}}}}
\newcommand*{\SpExb}{\ensuremath{{\B_{1}}}}
\newcommand*{\AdsigEx}{\ensuremath{\sigma_{2}}}
\newcommand*{\AdExa}{\ensuremath{\A_2}}
\newcommand*{\AdExb}{\ensuremath{\B_2}} %
\renewcommand{\epsilon}{\varepsilon}
\newcommand{\nc}[1]{\newcommand{#1}}
\newcommand{\rnc}[1]{\renewcommand{#1}}
\nc{\myparagraph}[1]{\noindent\textbf{#1}}
\nc{\deff}{\isdef}
\nc{\ffed}{\defis}
\nc{\setsize}[1]{\ensuremath{|#1|}}
\nc{\Setsize}[1]{\ensuremath{\big|#1\big|}}
\nc{\Set}[1]{\ensuremath{\big\{#1\big\}}}
\nc{\setc}[2]{\set{#1 \, : \, #2}}
\nc{\Setc}[2]{\Set{#1 \, : \, #2}}
\nc{\aufgerundet}[1]{\ensuremath{\lceil #1 \rceil}}
\nc{\abgerundet}[1]{\ensuremath{\lfloor #1 \rfloor}}
\nc{\dcup}{\disunion}
\nc{\dCup}{\ensuremath{\dot\bigcup}}
\nc{\ov}[1]{\ensuremath{\overline{#1}}}
\nc{\NN}{\ensuremath{\mathbb{N}}}
\nc{\NNpos}{\ensuremath{\NN_{\scriptscriptstyle\geq 1}}}
\nc{\RR}{\ensuremath{\mathbb{R}}}
\nc{\RRpos}{\ensuremath{\RR_{\scriptscriptstyle\geq 0}}}
\nc{\QQ}{\ensuremath{\mathbb{Q}}}
\nc{\QQpos}{\ensuremath{\QQ_{\scriptscriptstyle\geq 0}}}
\nc{\und}{\ensuremath{\wedge}}
\nc{\Und}{\ensuremath{\bigwedge}}
\nc{\oder}{\ensuremath{\vee}}
\nc{\Oder}{\ensuremath{\bigvee}}
\nc{\nicht}{\ensuremath{\neg}}
\nc{\impl}{\ensuremath{\to}}
\nc{\gdw}{\ensuremath{\leftrightarrow}}
\nc{\Semijoin}{\ensuremath{\ltimes}}
\nc{\quant}{\ensuremath{\textrm{\upshape quant}}}
\nc{\isom}{\ensuremath{\isomorphic}}
\nc{\eval}[2]{\ensuremath{#1(#2)}}
\nc{\semantik}[1]{\ensuremath{\left\llbracket#1\right\rrbracket}}
\nc{\sem}[1]{{\semantik{#1}}}
\nc{\mypot}[1]{\ensuremath{2^{#1}}}
\nc{\mypoteq}[2]{\ensuremath{\binom{#1}{#2}}}
\nc{\mypotk}[1]{\mypoteq{#1}{k}}
\nc{\mypotedges}[1]{\mypoteq{#1}{2}}
\nc{\D}{\str{D}}
\nc{\tD}{\tup{D}}
\nc{\dt}{\tup{d}}
\nc{\vt}{\tup{v}}
\nc{\dirT}{\ensuremath{\vec{T}}}
\nc{\myatype}{\ensuremath{\textsf{at}}}
\nc{\myrho}{\ensuremath{\myatype}}
\nc{\mystype}{\ensuremath{\textsf{st}}}
\nc{\mytau}{\ensuremath{\mystype}}
\nc{\onesig}[1][\sigma]{{\sig[#1]\vert_{1}}}
\nc{\twosig}[1][\sigma]{{\sig[#1]\vert_{2}}}
\nc{\myC}{\ensuremath{\class{C}}}
\rnc{\vgrep}[1]{\ensuremath{\str{H}_{#1}}}
\rnc{\rc}{\operatorname{\varrho}}
\nc{\myc}{\ensuremath{f}}
\nc{\myColors}{\ensuremath{Z}}
\nc{\raus}[1]{\inlineTodo{gray!40}{\textbf{N\@: rausgeworfen, weil es
      (hier) nicht
      mehr benötigt wird: } #1}}
\nc{\PotGeq}[2]{\ensuremath{\mathcal{P}_{\geq #1}(#2)}}
\nc{\PotEq}[2]{\ensuremath{\mathcal{P}_{#1}(#2)}}
\nc{\Neighbors}[1]{\ensuremath{N(#1)}}
\nc{\NeighborsEq}[2]{\ensuremath{N_{#1}(#2)}}
\nc{\NeighborsGeq}[2]{\ensuremath{N_{\geq #1}(#2)}}
\begin{document}
	\maketitle

	\begin{abstract}
Color Refinement, also known as Naive Vertex Classification, is a classical method to distinguish graphs by iteratively computing a coloring of their vertices.
While it is mainly used as an imperfect way to test for isomorphism, the algorithm permeated many other, seemingly unrelated, areas of computer science.
The method is algorithmically simple, and it has a well-understood distinguishing power:
It is logically characterized by Cai, Fürer and Immerman (1992), who showed that it distinguishes precisely those graphs that can be distinguished by a sentence of first-order logic with counting quantifiers and only two variables.
A combinatorial characterization is given by Dvo\v{r}ák (2010), who shows that it distinguishes precisely those graphs that can be distinguished by the number of homomorphisms from some tree.

In this paper, we introduce Relational Color Refinement (RCR, for short), a generalization of the Color Refinement method from graphs to arbitrary relational structures, whose distinguishing power admits the equivalent combinatorial and logical characterizations as Color Refinement has on graphs:
We show that RCR distinguishes precisely those structures that can be distinguished by the number of homomorphisms from an acyclic relational structure.
Further, we show that RCR distinguishes precisely those structures that can be distinguished by a sentence of the guarded fragment of first-order logic with counting quantifiers.

Additionally, we show that for every fixed finite relational signature, RCR can be implemented to run on structures of that signature in time $\bigO(N\cdot \log N)$, where $N$ denotes the number of tuples present in the structure.  	\end{abstract}

	\section{Introduction}\label{sec:introduction}
\myparagraph{Color Refinement} (CR, for short) constitutes a simple procedure to classify the vertices of a graph $G$; it is well-understood and widely used in many areas of computer science.
The idea is a simple iteration: Given a coloring $\classicalCR$ of the vertices $\V(G)$ of a graph $G$, one computes a new coloring $\classicalCR'$ of $\V(G)$ following a certain procedure.
The new coloring $\classicalCR'$ is then used to compute another coloring $\classicalCR''$ following the same procedure, and so on, until the coloring stabilizes, i.e., the partitioning of $\V(G)$ induced by the new coloring is the same as the one induced by the previous coloring.
The procedure to compute the new coloring is very simple: Two vertices $u,v$ shall get different colors if they already have different colors, or they have a different number of neighbors of some color.
Otherwise, they receive the same color.
To start the iteration, one either uses the colors of the vertices (if $G$ is a colored graph), or the uniform coloring that assigns every vertex the same color.
This approach is sometimes also called \enquote{naive vertex classification} or the \enquote{1-dimensional Weisfeiler-Leman algorithm}.
CR is often formalized using multisets in the following way, see e.g.~\cite[Chapter 3]{Grohe2017}.
Considering an uncolored, undirected, simple graph $G$, we start with $\gcr{0}{v} = 0$ for all $v \in \V(G)$, and for all $i \in \nat$, we let
\begin{equation*}
	\gcr{i+1}{v} \;\isdef\; \bigl(
		\gcr{i}{v},\,
		\mset{ \gcr{i}{w} \mid \set{ v,w } \in \E(G)  }
	\bigr)\;.
\end{equation*}
Note that this formalizes the procedure described above: If $\gcr{i}{u} \neq \gcr{i}{v}$, then $\gcr{i+1}{u} \neq \gcr{i+1}{v}$; and if the number of neighbors of some color disagrees for $u$ and $v$, then $\mset{ \gcr{i}{w} \mid \set{ u,w } \in \E(G)  }[] \neq \mset{ \gcr{i}{w} \mid \set{ v,w } \in \E(G) }[]$, hence, $\gcr{i+1}{u} \neq \gcr{i+1}{v}$.
This formalization has the additional advantage that the colorings $\classicalCR_i$ assign \emph{canonical colors}, i.e., the colors themselves do not depend on the graph $G$.
Berkholz, Bonsma and Grohe~\cite{Berkholz2017} showed that such a canonical coloring can be computed in time $\bigO((n+m) \cdot \log(n))$, where $n$ denotes the number of vertices, and $m$ the number of edges.%
\medskip

\myparagraph{Applications.}
An obvious application of Color Refinement is to test for graph isomorphism: If there is an $i \in \nat$ and some $c$ such that $\card{\set{ v \in \V(G) \mid \gcr{i}{v} = c }} \neq \card{\set{ v \in \V(H) \mid \gcr{i}{v} = c }}$ (we say Color Refinement distinguishes $G$ and $H$ if this is true), then $G$ and $H$ cannot be isomorphic.
However, this test is not perfect, since there exist non-isomorphic pairs of graphs that are not distinguished by Color Refinement.
Nevertheless, it is a common subroutine in practical isomorphism testers and even plays a part in Babai's seminal result that graph isomorphism is solvable in quasi-polynomial time~\cite{Babai2016}.
In recent years, the classification of \enquote{similar vertices} that Color Refinement establishes, has been applied to other problems as well: It was used in~\cite{Grohe2014a,Grohe2021} to reduce the cost of solving linear programs, in~\cite{Riveros2024} it was used to speed up the evaluation of binary acyclic conjunctive queries, and in the area of machine learning, it is used as a graph kernel~\cite{Shervashidze2011} and was proven to be equivalent to so-called Graph Neural Networks (GNNs)~\cite{Xu2019,Grohe2020a}.
\medskip

\myparagraph{The power of Color Refinement} is well-understood~\cite{Arvind2017,Berkholz2017,Cai1992,Dvorak2010,Immerman1990,Kiefer2021} --- consult e.g.~\cite{Grohe2017,Kiefer2020} for an overview.
The key results on the distinguishing power of CR are a logical characterization due to Immerman and Lander~\cite{Immerman1990} and Cai, Fürer, Immerman~\cite{Cai1992}, and a combinatorial characterization w.r.t.\ the concept of \enquote{homomorphism indistinguishability} due to Dvo\v{r}ák~\cite{Dvorak2010} and Dell, Grohe and Rattan~\cite{Dell2018}:
In~\cite{Cai1992,Immerman1990} it is shown that CR distinguishes $G$ and $H$ if, and only if, there is a sentence $\phi$ of first-order logic with counting quantifiers with at most 2 variables ($\CFO^2$, for short) such that $G \models \phi$ and $H \not\models \phi$.
In~\cite{Dvorak2010,Dell2018} it is shown that CR distinguishes $G$ and $H$ if, and only if, $G$ and $H$ are distinguished by the number of homomorphisms from some tree $T$ into $G$ and $H$.
If no such $T$ exists, one says that $G$ and $H$ are \emph{homomorphism indistinguishable} over the class of trees.
This result sparked active research in recent years exploring the concept of homomorphism indistinguishability over various graph classes, see e.g.~\cite{Dawar2021,Fluck2024,Grohe2020,Mancinska2020,Montacute2022,Roberson2022,Seppelt2023,Scheidt2024,Scheidt2023}.
These characterizations can explain the success of the Color Refinement method, and in particular, they give us a hint on why the vertex classification produced by CR is so powerful: Two vertices $u, v \in \V(G)$ get classified as \enquote{similar} by CR if the number of homomorphisms from every rooted tree $T$ into $G$ that map the root to $u$ is equal to the number of homomorphisms from $T$ into $G$ that map the root to $v$.
\medskip

\myparagraph{Contributions.} 
With the success of Color Refinement in mind, it is an obvious question how one could devise a method to color arbitrary finite relational structures, not just graphs.
In particular, we would like a method that admits a combinatorial characterization w.r.t.\ homomorphism counts from the class of acyclic relational structures (for a sensible, broad definition of acyclicity) and a logical characterization for a sensible logic, analogously to the ones CR has.
We propose \emph{Relational Color Refinement} (RCR, for short) as such a method and show that it indeed admits the desired characterizations.
We prove the following two main theorems for every finite relational signature $\sig$:

\begin{restatable*}{alphtheorem}{mainhoms}\label{thm:main-homs}
	For all $\sig$-structures $\A$ and $\B$, the following statements are equivalent.
	\begin{statements}[topsep=0pt, noitemsep]
		\item Relational Color Refinement distinguishes $\A$ and $\B$.
		\item There exists an acyclic and connected $\sig$-structure $\C$ such that $\hom(\C, \A) \neq \hom(\C, \B)$.
	\end{statements}%
\end{restatable*}
\begin{restatable*}{alphtheorem}{mainlogic}\label{thm:main-logic}
	For all $\sig$-structures $\A$ and $\B$, the following statements are equivalent.
	\begin{statements}[topsep=0pt, noitemsep]
		\item Relational Color Refinement distinguishes $\A$ and $\B$.
		\item There exists a sentence $\phi \in \GFC$ such that $\A \models \phi$ and $\B \not\models \phi$.
		\item Spoiler wins the Guarded-Game on $\A, \B$.
	\end{statements}
\end{restatable*}
\noindent
Here, $\GFC$ denotes the \emph{guarded fragment} of the logic $\CFO$ (see~\cref{sec:gfc}),
and the \emph{Guarded-Game} is a particular variant of Ehrenfeucht-Fra\"\i{}ss\'{e} games (see \cref{sec:game}).

Furthermore, we obtain the following additional main theorem, where $\size{\A}$ denotes the number of tuples present in relations of a $\sig$-structure $\A$.

\begin{restatable*}{alphtheorem}{mainruntime}\label{thm:main-runtime}
  Relational Color Refinement (RCR) can be implemented to run in time $\bigO(\size{\A}\cdot\log\size{\A})$ upon input of a $\sig$-structure $\A$.
\end{restatable*}

\Cref{sec:preliminaries} introduces the necessary basic concepts and notation used throughout the paper.
In~\cref{sec:RCR} we first consider "quick and dirty" ways to extend Color Refinement to relational structures and show how their distinguishing power falls short, even for simple structures.
Then, we introduce Relational Color Refinement, discuss its connection to CR, and give a first upper bound on the running time.
\Cref{sec:main-homomorphism} is devoted to the proof of~\cref{thm:main-homs}.
\Cref{sec:main-logic} starts by introducing the logic $\GF{\CFO}$ and the Guarded-Game, followed by the proof of~\cref{thm:main-logic}.
\Cref{sec:main-runtime} is devoted to the proof of~\cref{thm:main-runtime}.
Finally, we conclude in~\cref{sec:conclusion} with a summary and an outlook on future work.
\medskip

\myparagraph{Related work on structures and hypergraphs.}
The work by Dell, Grohe and Rattan~\cite{Dell2018} has been generalized to relational structures by Butti and Dalmau~\cite{Butti2021}.
However, they apply classical CR on the incidence graph of the relational structure and use the weaker notion of Berge-acyclicity that is subsumed by our notion of acyclicity.
There is also related work in this direction on hypergraphs, which are conceptually similar to relational structures.
Böker~\cite{Boeker2019} introduced a variant of Color Refinement that works on hypergraphs, and showed that it distinguishes two hypergraphs if, and only if, there is a Berge-acyclic hypergraph that has a different number of homomorphisms to them.
The connection between the logic $\CFO^2$ and homomorphism counts from trees due to~\cite{Dvorak2010,Dell2018} was generalized to the logic $\Logic{GC}$ (a logic similar to the guarded fragment $\GFC$, but tailored towards hypergraphs) and homomorphism counts from acyclic hypergraphs by Scheidt and Schweikardt~\cite{Scheidt2023}. 
	\section{Preliminaries}\label{sec:preliminaries}
\myparagraph{Basic notation.}
We write $\nat$ for the set of non-negative integers, and we let $\natpos \isdef \nat \setminus \set{0}[]$.
For $n \in \NN$ we write $[n]$ to denote the set $\set{ i \in \NN \mid 1 \leq i \leq n }$ (i.e., $[0] = \emptyset$, $[1] = \set{1}[]$, and $[n] = \set{1,\ldots,n}$ for $n \geq 2$).
For a set $S$ we write $\mypot{S}$ to denote the power set (i.e., the set of all subsets) of $S$; and for $k \in \NN$ we let $\mypotk{S} \deff \set{ X \subseteq S \mid |X|=k }$.
We use bold letters $\at$ to denote a tuple $(a_1, \dots, a_k)$.
The tuple's \emph{arity} $\ar(\at)$ is $k$, and $a_i$ denotes that tuple's $i$-th entry (for $i\in[k]$).
We let $\tset(\at)=\set{a_1,\ldots,a_k}$.
We write $\set{a_1\to b_1, \; \ldots, \; a_k\to b_k}$ to describe the function $f\colon\set{a_1,\ldots,a_k}\to \set{b_1,\ldots,b_k}$ with $f(a_i)=b_i$ for $i\in[k]$.

A \emph{multiset} $M$ is a tuple $(S,f)$, where $S$ is a set and $f$ is a function $f\colon S \to \NNpos$; the number $f(s)$ indicates the multiplicity with which the element $s\in S$ occurs in the multiset $M$.
We write $\mult{M}(x)$ to denote the multiplicity of $x$ in the multiset $M$;
in particular, $\mult{M}(x) = 0$ denotes that $x \not\in S$.
We adopt the usual notation for multisets using brackets $\mset{\cdots}$ in which each $s\in S$ is listed exactly $\mult{M}(s)$ times.
E.g., $\mset{ a, a, b }$ denotes the multiset $(\set{ a,b }, \set{ a \to 2, b \to 1 })$.

A \emph{coloring} of a set $S$ is a function $\gamma\colon S \to C$ for some set $C$.
Let $\alpha\colon S \to C_\alpha$ and $\beta\colon S \to C_\beta$ be two colorings of the same set $S$.
We say that  $\alpha$ \emph{refines} $\beta$, if for all $u, v \in S$: $\alpha(u) = \alpha(v) \implies \beta(u) = \beta(v)$.
The colorings $\alpha$ and $\beta$ are \emph{equivalent}, if for all $u, v \in S$: $\alpha(u) = \alpha(v) \iff \beta(u) = \beta(v)$.

An (uncolored, undirected, simple) graph is a tuple $G \isdef (\V(G),\E(G))$, where $\V(G)$ is a finite set of vertices and $\E(G) \subseteq \mypotedges{\V(G)}$ is a set of edges.
A \emph{forest} is an acyclic graph; and a \emph{tree} is a connected forest.
\smallskip

\myparagraph{Relational Structures.} A \emph{(finite, relational) signature} $\sig$ is a finite, non-empty set;
the elements in $\sig$ are called \emph{relation symbols}.
Every $R\in\sig$ has an associated \emph{arity} $\ar(R)\in\NNpos$.
For a fixed number $k\in\NNpos$ we write $R/k$ to indicate that $R$ is a relation symbol of arity $k$.
The \emph{arity of} $\sig$ is defined as $\ar(\sig) \isdef \max\set{ \ar(R) \mid R \in \sig }$.
By $\ksig$ we denote the subset $\set{ R \in \sig \mid \ar(R) = k }$ of relation symbols of $\sig$ with arity exactly $k$.

A structure $\str{A}$ of signature $\sig$ (for short, \emph{$\sig$-structure}) consists of a finite, non-empty set $V(\str{A})$ (called the \emph{universe} of $\str{A}$), and a relation $R^{\str{A}}\subseteq {V(\str{A})}^{\ar(R)}$ for every $R\in\sig$.
We additionally require that every $v\in V(\str{A})$ occurs as an entry in at least one tuple of at least one relation of $\str{A}$ --- note that this assumption can easily be met, e.g.\ by inserting into $\sig$ a new relation symbol $U$ of arity 1 with $U^{\str{A}}=V(\str{A})$ (here, we identify a tuple $(v)$ of arity~1 with the element $v$).
By $\tup{A}$ we denote the set $\bigunion_{R \in \sig} R^\str{A}$ of all tuples that belong to some relation of $\str{A}$.
We define the \emph{size} of $\str{A}$ as $\size{\str{A}} \isdef \card{\tup{A}}$.
We say that two $\sig$-structures $\A$, $\B$ have \emph{strictly equal size}, if $\card{R^\A} = \card{R^\B}$ for every $R \in \sig$.
The \emph{cohesion} of $\A$ is defined as the number $\density(\A)$ of all tuples $(\at,\bt)\in \tup{A}\times\tup{A}$ with $\at\neq\bt$ and $\tset(\at)\cap\tset(\bt)\neq \emptyset$.
Obviously, $\density(\A) < \size{\A}^2$.

The \emph{Gaifman graph} of $\A$ is defined as the (undirected, simple) graph $G$ with $\V(G) = V(\A)$ and where $\E(G)$ consists of all $\set{u,v}\in \mypotedges{\V(G)}$ for which there is a tuple $\at \in \tA$ with $u,v \in \tset(\at)$.
A $\sig$-structure is called \emph{connected} if its Gaifman graph is connected.

A \emph{binary} signature is a signature $\sig$ where every $R\in\sig$ has arity $\leq 2$.
A \emph{colored multigraph} $\str{G}$ is a structure of a binary signature.
The binary relations of $\str{G}$ can be viewed as directed edge relations that carry specific labels, and the unary relations of $\str{G}$ can be viewed as assigning specific labels to the vertices of $\str{G}$.
\smallskip

\myparagraph{Color Refinement} can be adapted to colored multigraphs by including the vertex colors and the loops in the base color, and the edge colors in the iteration.
This can be formalized as follows:
Let $\sig$ be a binary signature, and let $\str{G}$ be a $\sig$-structure.
For every $v \in \V(\str{G})$, let $\gcr{0}{v} = (\set{ C \in\onesig \mid v \in C^{\str{G}} }[],\, \set{ E_j \in\twosig \mid (v,v) \in {(E_j)}^{\str{G}} }[])$, and for all $i \in \nat$ let
\[
  \gcr{i+1}{v} \;\isdef\; \bigl(
    \gcr{i}{v},\,
    \mset{
      (\lambda(v,w), \gcr{i}{w})
      \mid
      \set{ v,w }[] \in \E(G)
    }[]
  \bigr),
\]
where $G$ denotes the Gaifman graph of $\str{G}$ and
\[
  \lambda(v,w)
  \ \isdef \
  \set[\big]{
    E_j^{+}
    \mid
    E_j\in\twosig,\ (v,w) \in {(E_j)}^{\str{G}}
  }
  \;\union\;
  \set[\big]{
    E_j^{-}
    \mid
    E_j\in\twosig,\
    (w,v) \in {(E_j)}^{\str{G}}
  }.
\]
\begin{example}\label{ex:simple-structures}
Consider $\sigEx \isdef \set{ E/2, R/6 }$ and let $\SpExa$, $\SpExb$ be $\sigEx$-structures with the same universe $\set{ 1,2,3,u,v,w }$, where $E^\SpExa \isdef \{ (1,2), (2,3), (3,1), (u,v), (v,w), (w,u) \}$, $E^\SpExb \isdef \{ (1,2), (2,w), (w,u),\allowbreak (u,v), (v,3), (3,1) \}$ and $R^\SpExa \isdef \set{ (1,2,3,u,v,w) } \defis R^\SpExb$.
A graphical representation of $\SpExa$, $\SpExb$ as \enquote{hypergraphs} can be found in \cref{fig:ex:simple-hypergraph-a,fig:ex:simple-hypergraph-b}, and their Gaifman graph is depicted in~\cref{fig:ex:simple-gaifman}.
We will use these two structures as a running example throughout the rest of the paper.\endExample{}

\newdimen\figrasterwd%
\figrasterwd\textwidth%
\begin{figure}
	\parbox{\figrasterwd}{
		\parbox[b]{0.22\figrasterwd}{
			\subcaptionbox{%
				Representation of $\SpExa$.%
				\label{fig:ex:simple-hypergraph-a}%
			}[\linewidth]{%
\begin{tikzpicture}[
	edge from parent/.style={draw=none},
	every node/.style={mynode},
	]
	\node (1) at (0,0) {$1_{\text{\tiny \textcolor{myblue}{(1)}}}$}
	child {
		node (2) {$2_{\text{\tiny \textcolor{myblue}{(2)}}}$}
		child {
			node (w) {$w_{\text{\tiny \textcolor{myblue}{(6)}}}$}
			child[missing]
			child {
				node (u) {$u_{\text{\tiny \textcolor{myblue}{(4)}}}$}
			}
		}
	}
	child {
		node (3) {$3_{\text{\tiny \textcolor{myblue}{(3)}}}$}
		child {
			node (v) {$v_{\text{\tiny \textcolor{myblue}{(5)}}}$}
		}
	};
	\draw[RelE] (1) edge (2);
	\draw[RelE] (3) edge (1);
	
	\draw[RelE] (u) edge (v);
	\draw[RelE] (w) edge (u);
	
	\draw[RelE] (2) edge (3);
	\draw[RelE] (v) edge (w);

	\draw[RelR] \convexpath{1,3,v,u,w,2}{15pt};
	\draw[RelR] \convexpath{1,3,v,u,w,2}{15pt};
\end{tikzpicture} 			}
		}
		\hfill
		\parbox[b]{0.22\figrasterwd}{
			\subcaptionbox{%
				Representation of $\SpExb$.%
				\label{fig:ex:simple-hypergraph-b}%
			}[\linewidth]{%
\begin{tikzpicture}[
	edge from parent/.style={draw=none},
	every node/.style={mynode},
	]
	\node (1) at (0,0) {$1_{\text{\tiny \textcolor{myblue}{(1)}}}$}
	child {
		node (2) {$2_{\text{\tiny \textcolor{myblue}{(2)}}}$}
		child {
			node (w) {$w_{\text{\tiny \textcolor{myblue}{(6)}}}$}
			child[missing]
			child {
				node (u) {$u_{\text{\tiny \textcolor{myblue}{(4)}}}$}
			}
		}
	}
	child {
		node (3) {$3_{\text{\tiny \textcolor{myblue}{(3)}}}$}
		child {
			node (v) {$v_{\text{\tiny \textcolor{myblue}{(5)}}}$}
		}
	};
	\draw[RelE] (1) edge (2);
	\draw[RelE] (3) edge (1);
	
	\draw[RelE] (u) edge (v);
	\draw[RelE] (w) edge (u);

	\draw[RelE] (2) edge (w);
	\draw[RelE] (v) edge (3);

	\draw[RelR] \convexpath{1,3,v,u,w,2}{15pt};
\end{tikzpicture} 			}
		}
		\hfill
		\parbox[b]{0.22\figrasterwd}{
			\subcaptionbox{%
				Their Gaifman graph.%
				\label{fig:ex:simple-gaifman}%
			}[\linewidth]{
\begin{tikzpicture}[
	every node/.style={mynode},
	]
	\node (1) at (0,0) {$1$}
	child {
		node (2) {$2$}
		child {
			node (w) {$w$}
			child[missing]
			child {
				node (u) {$u$}
			}
		}
	}
	child {
		node (3) {3}
		child {
			node (v) {$v$}
		}
	};
	\draw (1) -- (u);
	\draw (1) -- (v);
	\draw (1) -- (w);

	\draw (2) -- (3);
	\draw (2) -- (v);
	\draw (2) -- (u);

	\draw (3) -- (u);
	\draw (3) -- (w);

	\draw (v) -- (w);
	\draw (v) -- (u);
\end{tikzpicture} 				\vspace{4.5mm}
			}
		}
		\hfill
		\parbox[b]{0.3\figrasterwd}{
			\subcaptionbox{%
				The incidence graph of $\SpExa$.%
				\label{fig:ex:simple-incidence-a}%
			}[\linewidth]{
\begin{tikzpicture}[
	every node/.style={circle, fill=black},
	re/.style={draw=black, regular polygon, regular polygon sides=4, inner sep=3.25pt, fill=myred},
	be/.style={draw=black, regular polygon, regular polygon sides=5, fill=myblue},
	xscale=0.8,
	yscale=0.8]
	\node (1) at (0,0) {};
	\node (2) at (1,0) {};
	\node (3) at (2,0) {};
	\node (u) at (3,0) {};
	\node (v) at (4,0) {};
	\node (w) at (5,0) {};

	\node[re] (12) at (0,1) {};
	\node[re] (23) at (1,1) {};
	\node[re] (31) at (2,1) {};
	\node[re] (uv) at (3,1) {};
	\node[re] (vw) at (4,1) {};
	\node[re] (wu) at (5,1) {};
	\node[be] (123uvw) at (2.5,-.75) {};

	\draw[-latex] (1) -- (123uvw);
	\draw[-latex] (1) -- (12);
	\draw[-latex] (1) -- (31);

	\draw[-latex] (2) -- (123uvw);
	\draw[-latex] (2) -- (12);
	\draw[-latex] (2) -- (23);

	\draw[-latex] (3) -- (123uvw);
	\draw[-latex] (3) -- (23);
	\draw[-latex] (3) -- (31);

	\draw[-latex] (u) -- (123uvw);
	\draw[-latex] (u) -- (uv);
	\draw[-latex] (u) -- (wu);

	\draw[-latex] (v) -- (123uvw);
	\draw[-latex] (v) -- (uv);
	\draw[-latex] (v) -- (vw);

	\draw[-latex] (w) -- (123uvw);
	\draw[-latex] (w) -- (vw);
	\draw[-latex] (w) -- (wu);
\end{tikzpicture} 				\vspace{4.5mm}
			}
			\vspace{4.5mm}

			\subcaptionbox{%
				The incidence graph of $\SpExb$.%
				\label{fig:ex:simple-incidence-b}%
			}[\linewidth]{
\begin{tikzpicture}[
	every node/.style={circle, fill=black},
	re/.style={draw=black, regular polygon, regular polygon sides=4, inner sep=3.25pt, fill=myred},
	be/.style={draw=black, regular polygon, regular polygon sides=5, fill=myblue},
	xscale=0.8,
	yscale=0.8]
	\node (1) at (0,0) {};
	\node (2) at (1,0) {};
	\node (w) at (2,0) {};
	\node (u) at (3,0) {};
	\node (v) at (4,0) {};
	\node (3) at (5,0) {};

	\node[re] (12) at (0,1) {};
	\node[re] (2w) at (1,1) {};
	\node[re] (wu) at (2,1) {};
	\node[re] (uv) at (3,1) {};
	\node[re] (v3) at (4,1) {};
	\node[re] (31) at (5,1) {};
	\node[be] (123uvw) at (2.5,-.75) {};

	\draw[-latex] (1) -- (123uvw);
	\draw[-latex] (1) -- (12);
	\draw[-latex] (1) -- (31);

	\draw[-latex] (2) -- (123uvw);
	\draw[-latex] (2) -- (12);
	\draw[-latex] (2) -- (2w);

	\draw[-latex] (3) -- (123uvw);
	\draw[-latex] (3) -- (v3);
	\draw[-latex] (3) -- (31);

	\draw[-latex] (u) -- (123uvw);
	\draw[-latex] (u) -- (wu);
	\draw[-latex] (u) -- (uv);

	\draw[-latex] (v) -- (123uvw);
	\draw[-latex] (v) -- (uv);
	\draw[-latex] (v) -- (v3);

	\draw[-latex] (w) -- (123uvw);
	\draw[-latex] (w) -- (2w);
	\draw[-latex] (w) -- (wu);
\end{tikzpicture} 				\vspace{4.5mm}
			}
		}
	}
	\caption{The two $\sigEx$-structures $\SpExa$, $\SpExb$ according to~\cref{ex:simple-structures}, and their Gaifman and incidence graphs.}%
	\label{fig:ex:simple}
\end{figure}
 \end{example}

\myparagraph{Types.}
The notion of \emph{atomic type} and the \emph{similarity type} of tuples will be crucial for the definition of Relational Color Refinement.
For a $\sig$-structure $\A$ and a tuple $\at \in {V(\A)}^k$ of arity $k$, the \emph{atomic type} $\atp(\at)$ is the set $\set{ R \in \sig \mid \at \in R^\A }$.
For every tuple $\bt \in {V(\A)}^\ell$ of arity $\ell$, the \emph{similarity type} $\stp(\at, \bt)$ between $\at$ and $\bt$ is defined as the set $\set{ (i, j)\in[k]\times[\ell] \mid a_i = b_j }$.
We use $\stp(\at)$ as shorthand for $\stp(\at, \at)$.

In general, an \emph{atomic type $\atype$ of arity $k$} (over signature $\sig$) is a subset of $\ksig$.
A \emph{similarity type $\stype$ of arity $(k,\ell)$} (over signature $\sig$) is a subset of $[k] \times [\ell]$ that satisfies the following condition of \enquote{transitivity}:
\begin{equation*}
  \text{For all } \ i,i' \in [k],\ j,j' \in [\ell]\ \text{ we have:} \quad
	\set{ (i,j), (i',j), (i,j') } \subseteq \stype \implies (i', j') \in \stype\;.
\end{equation*}
We write $\ar(\atype)$ and $\ar(\stype)$ to denote the arity of $\atype$ and $\stype$.
If $\stype$ has arity $(k,k)$, we simply say that $\stype$ has arity $k$ and write $\ar(\stype) = k$.
We let $\sTypes$ be the set of all similarity types of arity $(\ar(R), \ar(S))$ for any $R,S \in \sig$.

We say that a tuple $\at\in {V(\A)}^k$ has atomic type $\atype$ if $\rho = \atp(\at)$.
Analogously, we say that $\at$ has similarity type $\stype$, if $\stype = \stp(\at)$; and we say that $\at$, $\bt$ have similarity type $\stype$, if $\stype = \stp(\at, \bt)$. 
	\section{Color Refinement on Relational Structures}\label{sec:RCR}
The goal of this section is to introduce Relational Color Refinement as a generalization of Color Refinement from graphs to relational structures.
Let $\sig$ be an arbitrary (relational) signature; this $\sig$ will be fixed throughout the rest of the paper.%
\smallskip

\myparagraph{Naive approaches.}
There are two obvious ways to lift the Color Refinement method from graphs to $\sig$-structures $\A$, $\B$.
The first is to run Color Refinement on the Gaifman graphs of $\A$ and $\B$.
Recall the $\sigEx$-structures $\SpExa$, $\SpExb$ from \cref{ex:simple-structures}: as their Gaifman graphs are identical, this will not distinguish $\SpExa$ and $\SpExb$.
The second way is to run Color Refinement on the \emph{incidence graphs} of $\A$ and $\B$.
The incidence graph of $\A$ is the colored multigraph $\str{I}$ of signature $\sig_{I} \isdef \set{E} \union \set{ U_R \mid R \in \sig }$, where $\ar(E)=2$ and $\ar(U_R)=1$ for all $R \in \sig$.
The universe $V(\str{I})$ of $\str{I}$ consists of all elements in $V(\A)$ and a new element $v_{(R,\at)}$ for each $R\in\sig$ and each $\at\in R^{\A}$.
Furthermore, $E^{\str{I}}$ is the set of all tuples $(u,v_{(R,\at)})$ with $R\in\sig$, $\at\in R^{\A}$, and $u\in\tset(\at)$;
and $(U_R)^{\str{I}} = \set{ v_{(R,\at)} \mid \at \in R^{\A} }$, for every $R\in\sig$. %
The incidence graphs of $\SpExa$ and $\SpExb$ from~\cref{ex:simple-structures} are depicted in~\cref{fig:ex:simple-incidence-a,fig:ex:simple-incidence-b}; the black (red, blue) vertices are those vertices of the incidence graph that belong to the original structure's universe (the relation $U_E$, the relation $U_R$, respectively).
Running the classical Color Refinement method on these incidence graphs will not distinguish between $\SpExa$ and $\SpExb$, because the initial coloring of the vertices of the incidence graphs is already a stable coloring.

In \cref{app:advanced-color-refinement} we discuss two further naive approaches that use Color Refinement on an enhanced version of the Gaifman graph or the incidence graph, respectively, where additional information is encoded as edge colors; and we present examples which demonstrate that the distinguishing power of these approaches still is rather weak.

This paper aims for a stronger coloring method.
The key idea is to color the tuples in $\tup{A}$ and $\tup{B}$ and take into account the tuples' atomic type and their mutual overlap (the latter is formalized by their similarity type).
The details are as follows.
\smallskip

\myparagraph{Relational Color Refinement (RCR, for short).}
Consider an arbitrary $\sig$-structure $\A$.
We iteratively compute colors for all tuples $\at\in\tup{A}$.
For all $\at \in \tA$, the initial color consists of the atomic type and the similarity type of $\at$, i.e., it is $\col{0}{\at} \isdef (\atp(\at), \stp(\at))$.
For $i \in \natpos$, the color after $i$ iterations is defined as $\col{i}{\at} \isdef (\col{i-1}{\at}, \Nset{i}{\A}(\at))$, where
\begin{equation*}
	\Nset{i}{\A}(\at) \:\isdef\: \mset[\big]{
		\bigl( 
			\stp(\at,\bt),\,
			\col{i-1}{\bt}
		\bigr)
		\mid
		\bt \in \tA,\, \stp(\at, \bt) \neq \emptyset 
	}[].
\end{equation*}
Note that $\stp(\at,\bt)\neq\emptyset$ $\iff$ $\tset(\at)\cap\tset(\bt)\neq\emptyset$.
By definition, $\rc_i$ refines $\rc_{i-1}$ for all $i \in \natpos$.
The $i$-th coloring is \emph{stable}, if for all $\at, \bt \in \tA$ we have that $\col{i}{\at} = \col{i}{\bt} \iff \col{i+1}{\at} = \col{i+1}{\bt}$.
It is easy to see that for every $\sig$-structure $\A$ there is an $i\in\NN$ such that the $i$-th coloring is stable;
we let $i_\A$ be the smallest such number, and we write $\col{\infty}{\at}$ to denote $\col{i_\A}{\at}$.

For $i\in\NN$ we write  $\cols{i}{\A}$ to denote the set of colors produced in the $i$-th refinement round, i.e., $\cols{i}{\A} = \set{ \col{i}{\at} \mid \at \in \tA }$.
We let $\cols{}{\A}\deff\cols{\infty}{\A}\deff \cols{i_\A}{\A}$; and we will call this the set of \emph{stable colors on $\A$ produced by RCR}.

For each $i\in\NN$ and each $c \in \cols{i}{\A}$ we let $\ccount{\A}{c} \isdef \card{\set{ \at \in \tA \mid \col{i}{\at} = c}}$, i.e., $\ccount{\A}{c}$ is the number of tuples with color $c$.
We say that RCR \emph{distinguishes $\A$ and $\B$ in round $i$}, if there is a color $c \in \cols{i}{\A} \union \cols{i}{\B}$ such that $\ccount{\A}{c} \neq \ccount{\B}{c}$.
Furthermore, we say that RCR \emph{distinguishes $\A$ and $\B$} if there is an $i\leq\max\set{i_\A,i_\B}$ such that RCR distinguishes $\A$ and $\B$ in round $i$.
It is straightforward to see that if $\A$ and $\B$ are \emph{not} of strictly equal size, then RCR distinguishes $\A$ and $\B$ in round 0.

\begin{example}
Consider the $\sigEx$-structures $\SpExa$, $\SpExb$ from~\cref{ex:simple-structures}.
RCR distinguishes $\SpExa$, $\SpExb$ --- see~\cref{app:color-refinement-run} for a complete run of Relational Color Refinement on $\SpExa$, $\SpExb$.
\cref{app:advanced-color-refinement} presents a more sophisticated pair
of structures that are also distinguished by Relational Color Refinement.
\end{example}

\myparagraph{Connection to Color Refinement on graphs.}
We represent a $\sig$-structure $\A$ by a colored multigraph $\grep{\A}$ of the signature $\repsig\deff \set{ E_{i,j} \mid i,j\in [\ar(\sig)] } \union \set{ U_R \mid R \in \sig }$, where $\ar(E_{i,j})=2$ for all $i,j\in [\ar(\sig)]$ and $\ar(U_R)=1$ for all $R\in\sig$.
The universe $V(\grep{A})$ of $\grep{\A}$ consists of a new element $w_{\at}$ for every tuple $\at\in\tup{A}$.
Furthermore, ${(U_R)}^{\grep{\A}} = \set{ w_{\at} \mid \at \in R^\A}$, for all $R\in\sig$.
And for all $i,j\in [\ar(\sig)]$ we have ${(E_{i,j})}^{\grep{\A}} \isdef \set{ (w_{\at}, w_{\bt}) \mid \at,\bt\in\tup{A}, \ (i,j) \in \stp(\at, \bt) }$.
\begin{example}
The representations $\grep{\SpExa}$, $\grep{\SpExb}$ of $\SpExa$ and
$\SpExb$ as colored multigraphs are depicted in~\cref{fig:ex:simple-graph-repr-a,fig:ex:simple-graph-repr-b}, respectively.
To keep the figure easy to grasp, we labeled the vertices with the tuples they represent, omitted the self-loops, and contracted multi-edges into a single one with combined edge label, where $xy$ denotes the tuple $(x,y)$. %
\endExample{}

\begin{figure}
	\begin{subfigure}{0.495\textwidth}
		\centering
\begin{tikzpicture}[
	every node/.style={myvertex},
	xscale=2.4,
	yscale=1.2]
	\clip (-1.7,-2.252) rectangle (1.7, 2.252);
	\tikzmath{
		\rad=1.5;
		\offset=.075;
		\small=8;
		\medium=15;
		\myangle=40;
		\distA=38pt;
		\distB=43pt;
	}
	\node[colorR, densely dashed] (123uvw) at (0:0) {$1, 2, 3, u, v, w$};
	\node[colorE] (12) at (0:\rad) {$1, 2$};
	\node[colorE] (23) at (60:\rad) {$2, 3$};
	\node[colorE] (31) at (120:\rad) {$3, 1$};
	\node[colorE] (uv) at (180:\rad) {$\vphantom{1}u, v$};
	\node[colorE] (vw) at (240:\rad) {$\vphantom{1}v, w$};
	\node[colorE] (wu) at (300:\rad) {$\vphantom{1}w, u$};

	\draw 
		($(123uvw.east)-(0,\offset)$)
		edge[tpFor]
		node[edgeAnnotation, below=-2pt] {$11, 22$}
	($(12.west)-(0,\offset)$);
	\draw[tpFor, bend right=\small] 
		(123uvw)
		edge
		node[edgeAnnotation, rotate=\myangle, below=-2pt] {$21, 32$}
	(23);
	\draw[tpFor, bend right=\small] 
		(123uvw)
		edge
		node[edgeAnnotation, rotate=-\myangle, above=-2pt] {$12, 31$}
	(31);
	\draw 
		($(123uvw.west)+(0,\offset)$)
		edge[tpFor]
		node[edgeAnnotation, above=-2pt] {$41, 52$}
	($(uv.east)+(0,\offset)$);
	\draw[tpFor, bend right=\small] 
		(123uvw)
		edge
		node[edgeAnnotation, rotate=\myangle, above=-2pt] {$51, 62$}
	(vw);
	\draw[tpFor, bend right=\small] 
		(123uvw)
		edge
		node[edgeAnnotation, rotate=-\myangle, below=-2pt] {$42, 61$}
	(wu);

	\draw
		($(12.west)+(0,\offset)$)
		edge[tpFor]
		node[edgeAnnotation, above=-2pt] {$11, 22$}
	($(123uvw.east)+(0,\offset)$);
	\draw[tpFor, bend right=\small]
		(12)
		edge
		node[edgeAnnotation, rotate=-\myangle, above=-2pt] {$21$}
	(23);
	\draw[tpBack, in=75, out=85, distance=\distB]
		(31)
		edge
		node[edgeAnnotation, rotate=-15, above=-2pt] {$12$}
	(12);

	\draw[tpFor, bend right=\small]
		(23)
		edge
		node[edgeAnnotation, rotate=\myangle, above=-2pt] {$12, 23$}
	(123uvw);
	\draw[tpFor, bend right=\small]
		(23)
		edge
		node[edgeAnnotation, rotate=-\myangle, below=-2pt] {$12$}
	(12);
	\draw[tpFor, bend right=\medium]
		(23)
		edge
		node[edgeAnnotation, above=-1pt] {$21$}
	(31);

	\draw[tpFor, bend right=\small]
		(31)
		edge
		node[edgeAnnotation, rotate=-\myangle, below=-2pt] {$13, 21$}
	(123uvw);
	\draw[tpFor, in=85, out=75, distance=\distA]
		(31)
		edge
		node[edgeAnnotation, rotate=-15, below=-2pt] {$21$}
	(12);
	\draw[tpFor, bend right=\medium]
		(31)
		edge
		node[edgeAnnotation, below=-1pt] {$12$}
	(23);

	\draw
		($(uv.east)-(0,\offset)$)
		edge[tpFor]
		node[edgeAnnotation, below=-1pt] {$14, 25$}
	($(123uvw.west)-(0,\offset)$);
	\draw[tpFor, bend right=\small]
		(uv)
		edge
		node[edgeAnnotation, rotate=-\myangle, below=-2pt] {$21$}
	(vw);
	\draw[tpFor, in=265, out=255, distance=\distB]
		(uv)
		edge
		node[edgeAnnotation, rotate=-15, below=-2pt] {$12$}
	(wu);

	\draw[tpFor, bend right=\small]
		(vw)
		edge
		node[edgeAnnotation, rotate=\myangle, below=-2pt] {$15, 26$}
	(123uvw);
	\draw[tpFor, bend right=\small]
		(vw)
		edge
		node[edgeAnnotation, rotate=-\myangle, above=-2pt] {$12$}
	(uv);
	\draw[tpFor, bend right=\medium]
		(vw)
		edge
		node[edgeAnnotation, below=-1pt] {$21$}
	(wu);

	\draw[tpFor, bend right=\small]
		(wu)
		edge
		node[edgeAnnotation, rotate=-\myangle, above=-2pt] {$16, 24$}
	(123uvw);
	\draw[tpBack, in=255, out=265, distance=\distA]
		(uv)
		edge
		node[edgeAnnotation, rotate=-15, above=-2pt] {$21$}
	(wu);
	\draw[tpFor, bend right=\medium]
		(wu)
		edge
		node[edgeAnnotation, above=-1pt] {$12$}
	(vw);
\end{tikzpicture} 		\vspace{-\abovedisplayskip} %
		
		\caption{Colored multigraph representation $\grep{\SpExa}$.}%
		\label{fig:ex:simple-graph-repr-a}
	\end{subfigure}
	\hfill
	\begin{subfigure}{0.495\textwidth}
		\centering
\begin{tikzpicture}[
	every node/.style={myvertex},
	xscale=2.4,
	yscale=1.2,
	]
	\clip (-1.7,-2.252) rectangle (1.7, 2.252);
	\tikzmath{
		\rad=1.5;
		\offset=.075;
		\small=8;
		\medium=15;
		\myangle=40;
	}
	\node[colorR, densely dashed] (123uvw) at (0:0) {$1, 2, 3, u, v, w$};
	\node[colorE] (12) at (0:\rad) {$1, 2$};
	\node[colorE] (2w) at (60:\rad) {$2, w$};
	\node[colorE] (wu) at (120:\rad) {$\vphantom{1}w, u$};
	\node[colorE] (uv) at (180:\rad) {$\vphantom{1}u, v$};
	\node[colorE] (v3) at (240:\rad) {$v, 3$};
	\node[colorE] (31) at (300:\rad) {$3, 1$};

	\draw 
		($(123uvw.east)-(0,\offset)$)
		edge[tpFor]
		node[edgeAnnotation, below=-2pt] {$11, 22$}
	($(12.west)-(0,\offset)$);
	\draw[tpFor, bend right=\small] 
		(123uvw)
		edge
		node[edgeAnnotation, rotate=\myangle, below=-2pt] {$21, 62$}
	(2w);
	\draw[tpFor, bend right=\small] 
		(123uvw)
		edge
		node[edgeAnnotation, rotate=-\myangle, above=-2pt] {$42, 61$}
	(wu);
	\draw
		($(123uvw.west)+(0,\offset)$)
		edge[tpFor]
		node[edgeAnnotation, above=-2pt] {$41, 52$}
	($(uv.east)+(0,\offset)$);
	\draw[tpFor, bend right=\small] 
		(123uvw)
		edge
		node[edgeAnnotation, rotate=\myangle, above=-2pt] {$32, 51$}
	(v3);
	\draw[tpFor, bend right=\small] 
		(123uvw)
		edge
		node[edgeAnnotation, rotate=-\myangle, below=-2pt] {$12, 31$}
	(31);

	\draw
		($(12.west)+(0,\offset)$)
		edge[tpFor]
		node[edgeAnnotation, above=-2pt] {$11, 22$}
	($(123uvw.east)+(0,\offset)$);
	\draw[tpFor, bend right=\small]
		(12)
		edge
		node[edgeAnnotation, rotate=-\myangle, above=-2pt] {$21$}
	(2w);
	\draw[tpFor, bend right=\small]
		(12)
		edge
		node[edgeAnnotation, rotate=\myangle, above=-2pt] {$12$}
	(31);

	\draw[tpFor, bend right=\small]
		(2w)
		edge
		node[edgeAnnotation, rotate=\myangle, above=-2pt] {$12, 26$}
	(123uvw);
	\draw[tpFor, bend right=\small]
		(2w)
		edge
		node[edgeAnnotation, rotate=-\myangle, below=-2pt] {$12$}
	(12);
	\draw[tpFor, bend right=\medium]
		(2w)
		edge
		node[edgeAnnotation, above=-1pt] {$21$}
	(wu);

	\draw[tpFor, bend right=\small]
		(wu)
		edge
		node[edgeAnnotation, rotate=-\myangle, below=-2pt] {$16, 24$}
	(123uvw);
	\draw[tpFor, bend right=\medium]
		(wu)
		edge
		node[edgeAnnotation, below=-2pt] {$12$}
	(2w);
	\draw[tpFor, bend right=\small]
		(wu)
		edge
		node[edgeAnnotation, rotate=\myangle, above=-1pt] {$21$}
	(uv);

	\draw
		($(uv.east)-(0,\offset)$)
		edge[tpFor]
		node[edgeAnnotation, below=-1pt] {$14, 25$}
	($(123uvw.west)-(0,\offset)$);
	\draw[tpFor, bend right=\small]
		(uv)
		edge
		node[edgeAnnotation, rotate=\myangle, below=-2pt] {$12$}
	(wu);
	\draw[tpFor, bend right=\small]
		(uv)
		edge
		node[edgeAnnotation, rotate=-\myangle, below=-2pt] {$21$}
	(v3);

	\draw[tpFor, bend right=\small]
		(v3)
		edge
		node[edgeAnnotation, rotate=\myangle, below=-2pt] {$15, 23$}
	(123uvw);
	\draw[tpFor, bend right=\small]
		(v3)
		edge
		node[edgeAnnotation, rotate=-\myangle, above=-2pt] {$12$}
	(uv);
	\draw[tpFor, bend right=\medium]
		(v3)
		edge
		node[edgeAnnotation, below=-1pt] {$21$}
	(31);

	\draw[tpFor, bend right=\small]
		(31)
		edge
		node[edgeAnnotation, rotate=-\myangle, above=-2pt] {$13, 21$}
	(123uvw);
	\draw[tpBack, bend right=\small]
		(31)
		edge
		node[edgeAnnotation, rotate=\myangle, below=-2pt] {$21$}
	(12);
	\draw[tpFor, bend right=\medium]
		(31)
		edge
		node[edgeAnnotation, above=-1pt] {$12$}
	(v3);
\end{tikzpicture} 		\vspace{-\abovedisplayskip}
		
		\caption{Colored multigraph representation $\grep{\SpExb}$.}%
		\label{fig:ex:simple-graph-repr-b}
	\end{subfigure}
	\caption{}%
	\label{fig:ex:graph-representation}
\end{figure} \end{example}

It is easy to see (cf.~\cref{app:rcr-is-cr-on-rep} for a proof) that running RCR on $\A$ produces (in the same number of rounds) a stable coloring of $\tA$ that is equivalent (via identifying $\at\in\tA$ with $w_{\at}\in V(\grep{\A})$) to the stable coloring produced by classical Color Refinement on the colored multigraph $\grep{\A}$.

Berkholz, Bonsma and Grohe~\cite{Berkholz2017} showed that classical Color Refinement can be implemented to run in time $\bigO((n+m) \cdot \log(n))$ on colored multigraphs, where $n$ denotes the number of vertices and $m$ denotes the total number of edges.
Thus, on the colored multigraph $\grep{\A}$ representing $\A$, classical Color Refinement runs in time $\bigO((n+m) \cdot \log(n))$ where $n=\card{\tA}=\size{\A}$ and $m=\sum_{i,j\in [\ar(\sig)]}\card{{(E_{i,j})}^{\grep{\A}}}$.
Therefore, Relational Color Refinement can be implemented to run, for each fixed signature $\sig$, in time $\bigO\bigl( (\size{\A} + \density(\A)) \cdot \log(\size{\A}) \bigr)$ (recall from Section~\ref{sec:preliminaries} the definition of the \emph{cohesion} $\density(\A)$).
In \cref{sec:main-runtime} we will improve the running time to $\bigO(\size{\A}\cdot\log\size{\A})$ by using a representation of $\A$ by a colored multigraph different from $\grep{\A}$.

For the special case where $\sigma=\set{E/2,U/1}$, the following is straightforward to see (consult~\cref{app:rel-cr-generalizes-cr} for a proof).
For any (simple, undirected) graph $G$ let $\A_G$ be the $\sig$-structure $\A$ that represents $G$ as follows.
$V(\A)=U^\A=\V(G)$ and $E^\A$ consists of the tuples $(u,v)$ and $(v,u)$ for all $\set{u,v}\in \E(G)$.
RCR on $\A_G$ produces a stable coloring $\gamma$ that is equivalent to the stable coloring $\gamma'$ produced by classical Color Refinement on $G$ in the sense that for any two $u,v\in \V(G)$ we have $\gamma(u)=\gamma(v)$ $\iff$ $\gamma'(u)=\gamma'(v)$.
Therefore, RCR can be viewed as a generalization of classical Color Refinement from graphs to $\sig$-structures for arbitrary (relational) signatures $\sig$.

In the next two sections, we will prove that the distinguishing power of RCR on $\sig$-structures has several natural characterizations that are analogous to those known for classical Color Refinement on graphs.
 	
	\section{Connection to Homomorphism Counts}\label{sec:main-homomorphism}
In this section we relate the distinguishing power of Relational Color Refinement to distinguishability via homomorphism counts from \emph{acyclic} $\sig$-structures.
\smallskip

\myparagraph{Acyclic $\sig$-structures.}
Let $\C$ be a $\sig$-structure.
A \emph{join-tree} for $\C$ is a tree (i.e., an undirected, simple
graph that is connected and acyclic) $T$ with vertex set $\V(T) \deff \tC$ (i.e., the tuples in $\tC$ serve as vertices of $T$) and which
satisfies the following connectivity condition: for all $u\in V(\C)$
the set $\set{ {\ct}\in \V(T) \mid u \in \tset(\ct) }$ induces a
connected subgraph of $T$.

We call a $\sig$-structure $\C$ \emph{acyclic} if there exists a join-tree for $\C$.
This definition of acyclicity of $\sig$-structures is equivalent to acyclicity as defined in the textbook~\cite{Abiteboul1995}, it is equivalent to the notion of \emph{alpha-acyclicity} as defined in~\cite{Beeri1983,DBLP:journals/siamcomp/BernsteinG81} and, finally, is also equivalent to $\C$ having (generalized or fractional) hypertree width~1 as defined in~\cite{Gottlob2002,Gottlob2003,Grohe2014}.
In the literature, also other notions of acyclicity for relational structures (and hypergraphs) have been considered; but alpha-acyclicity arguably is the most common and the least restrictive one.
Consult~\cite{BraultBaron2016} for a detailed survey on this topic.

For the special case of \emph{binary} signatures $\repsig$, i.e., where $\repsig$-structures are colored multigraphs, it is known (see e.g.~\cite{BraultBaron2016}) that a $\repsig$-structure $\C$ is acyclic if, and only if, its Gaifman graph is acyclic (w.r.t.\ the usual notion of acyclicity of undirected simple graphs).
The following example shows that for non-binary signatures $\sig$ there exist acyclic $\sig$-structures whose Gaifman graph is not acyclic.
\begin{example}\label{ex:join-trees}

Both $\sigEx$-structures $\SpExa$, $\SpExb$ from~\cref{ex:simple-structures} are acyclic, as witnessed by the join-trees depicted in~\cref{fig:ex:join-tree}.
The Gaifman graphs of $\SpExa$ and $\SpExb$ (see \cref{fig:ex:simple-gaifman}) are not acyclic.%
\endExample{}

\begin{figure}
	\begin{subfigure}{0.495\textwidth}
		\centering
\settowidth{\myA}{${(1,2)}$}%
\begin{tikzpicture}[
	every node/.style={draw, rectangle}, 
	level/.style = {level distance=1cm, sibling distance=1.3cm},
]
	\node (123uvw) at (0,0) {${(1,2,3,u,v,w)}$}
	child[edge from parent path={
		(\tikzparentnode\tikzparentanchor) edge [bend right=8]  (\tikzchildnode\tikzchildanchor)
	}] { 
		node {\makebox[\myA][c]{${(1,2)}$}} 
	}
	child { node {\makebox[\myA][c]{${(2,3)}$}} }
	child { node {\makebox[\myA][c]{${(3,1)}$}} }
	child { node {\makebox[\myA][c]{$\vphantom{1}{(u,v)}$}} }
	child { node {\makebox[\myA][c]{$\vphantom{1}{(v,w)}$}} }
	child[edge from parent path={
		(\tikzparentnode\tikzparentanchor) edge [bend left=8]  (\tikzchildnode\tikzchildanchor)
	}] { node {\makebox[\myA][c]{$\vphantom{1}{(w,u)}$}} };
\end{tikzpicture} 		\caption{A join-tree for $\SpExa$.}%
		\label{fig:ex:simple-join-tree-a}
	\end{subfigure}
	\hfill
	\begin{subfigure}{0.495\textwidth}
		\centering
\settowidth{\myA}{${(1,2)}$}%
\begin{tikzpicture}[
	every node/.style={draw, rectangle}, 
	level/.style = {level distance=1cm, sibling distance=1.3cm},
]
	\node (123uvw) at (0,0) {${(1,2,3,u,v,w)}$}
	child[edge from parent path={
		(\tikzparentnode\tikzparentanchor) edge [bend right=8]  (\tikzchildnode\tikzchildanchor)
	}] { 
		node {\makebox[\myA][c]{${(1,2)}$}} 
	}
	child { node {\makebox[\myA][c]{${(2,w)}$}} }
	child { node {\makebox[\myA][c]{$\vphantom{1}{(w,u)}$}} }
	child { node {\makebox[\myA][c]{$\vphantom{1}{(u,v)}$}} }
	child { node {\makebox[\myA][c]{${(v,3)}$}} }
	child[edge from parent path={
		(\tikzparentnode\tikzparentanchor) edge [bend left=8]  (\tikzchildnode\tikzchildanchor)
	}] { node {\makebox[\myA][c]{${(3,1)}$}} };
\end{tikzpicture} 		\caption{A join-tree for $\SpExb$.}%
		\label{fig:ex:simple-join-tree-b}
	\end{subfigure}
	\caption{}%
	\label{fig:ex:join-tree}
\end{figure} \end{example}

In this section, we will use a slightly different (but equivalent) notion of a join-tree that we call \enquote{relaxed join-tree}.
Let $\C$ be a $\sig$-structure.
A \emph{relaxed join-tree} $J \isdef (T, \beta)$ for $\C$ consists 
of a tree $T$ and a surjective function $\beta\colon \V(T) \to \tC$ which satisfies the following connectivity condition: for all $u \in \V(\C)$ the set $\set{ v \in \V(T) \mid u \in \tset(\beta(v)) }$ induces a connected subgraph of $T$.

\begin{proposition}\label{prop:relaxedjointree}
	For every $\sig$-structure $\C$, the following are equivalent:
	\begin{enumerate}[topsep=0pt, noitemsep,label=(\arabic*)]
		\item There exists a join-tree for $\C$.
		\item There exists a relaxed join-tree for $\C$.
	\end{enumerate}
\end{proposition}
\begin{proof}
	It is easy to see that (1) implies (2) because a join-tree $T$
        can be turned into a relaxed join-tree $J=(T,\beta)$ by
        letting $\beta(\ct) \deff\ct$ for all $\ct\in \tC=\V(T)$.

	To see that (2) implies (1), note that every relaxed join-tree  satisfies the requirements for a hypertree decomposition of width 1 (cf.\ \cite{Gottlob2002}).
        Thus, (2) implies
        that $\C$ has hypertree width~1. From~\cite{Gottlob2002}
        it
        is known that a structure has hypertree width 1
        if, and only if, it has a join tree.
        This completes the proof of \cref{prop:relaxedjointree}.
\end{proof}

\myparagraph{Homomorphisms.}
A homomorphism from a $\sig$-structure $\C$ to a $\sig$-structure $\A$ is a mapping $h: V(\C) \to V(\A)$ such that for all $R \in \sig$, for $k\deff \ar(R)$, and for all $\ct = (c_1, \dots, c_k) \in R^{\C}$ we have $(h(c_1), \dots, h(c_k)) \in R^\A$.
We write $\Hom(\C, \A)$ for the set of all homomorphisms from $\C$ to $\A$, and we let $\hom(\C, \A) \isdef \card{\Hom(\C, \A)}$ denote the number of homomorphisms from $\C$ to $\A$.
\begin{example}\label{ex:homomorphisms}
The identity is a homomorphism from $\SpExa$ to $\SpExa$; thus $\hom(\SpExa, \SpExa) \geq 1$.
There is no homomorphism from $\SpExa$ to $\SpExb$: Assume for contradiction that $h\in\Hom(\SpExa,\SpExb)$.
Since $R^\SpExa=R^\SpExb=\set{(1,2,3,u,v,w)}$, $h$ must be the identity; however, $(2,3)\in E^\SpExa$ but $(h(2),h(3))=(2,3)\not\in E^\SpExb$.
Hence, $\hom(\SpExa,\SpExb)=0$.
Recall that $\SpExa$ is acyclic (see~\cref{ex:join-trees}).
Hence, there exists an acyclic $\sig$-structure $\C$ (namely, $\C=\SpExa$) such that $\hom(\C,\SpExa)\neq \hom(\C,\SpExb)$.
\endExample{}
 \end{example}

This section's main result is \cref{thm:main-homs}:
\mainhoms{}
This theorem can be viewed as a generalization of the following result by Dvo\v{r}ák~\cite{Dvorak2010} and Dell, Grohe and Rattan~\cite{Dell2018} to arbitrary signatures $\sig$.
While~\cite{Dvorak2010,Dell2018} state the theorem just for graphs, it easily extends to colored multigraphs (as noted in~\cite{Dell2018}).
A \emph{colored multitree} is an acyclic and connected colored multigraph, i.e., a colored multigraph whose Gaifman graph is a tree.

\begin{theorem}[\cite{Dvorak2010,Dell2018}]\label{thm:dvorak}
	Let $\str{G}$ and $\str{H}$ be colored multigraphs.
	The following statements are equivalent.
	\begin{enumerate}[topsep=0pt, noitemsep]
		\item Color Refinement distinguishes $\str{G}$ and $\str{H}$.
		\item There exists a colored multitree $\str{T}$ such that $\hom(\str{T}, \str{G}) \neq \hom(\str{T}, \str{H})$.
	\end{enumerate}
\end{theorem}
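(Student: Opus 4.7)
The plan is to prove the two implications separately. Throughout, write $\hom_v((\str{T}, r), \str{G})$ for the number of homomorphisms from a rooted colored multitree $(\str{T}, r)$ into $\str{G}$ that map $r$ to $v$.

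For (2) $\Rightarrow$ (1) I argue by contrapositive. Assume Color Refinement does not distinguish $\str{G}$ and $\str{H}$, so the multisets of stable colors coincide. The key lemma I would establish is that for every rooted colored multitree $(\str{T}, r)$, the quantity $\hom_v((\str{T}, r), \str{G})$ depends only on the stable color $\gcr{\infty}{v}$, and the same functional dependence holds in $\str{H}$. I would prove this by induction on $|\V(\str{T})|$: the base case is a single vertex whose contribution is determined by the initial color, and in the inductive step I decompose $(\str{T}, r)$ into children-subtrees $(\str{T}_1, r_1), \dots, (\str{T}_d, r_d)$ joined to $r$ via labeled edges, yielding
\[
\hom_v((\str{T}, r), \str{G})
\;=\;
[\gcr{0}{v}\text{ matches }\gcr{0}{r}]
\cdot
\prod_{i=1}^{d} \sum_{u} \hom_u((\str{T}_i, r_i), \str{G}),
\]
with the inner sum ranging over neighbors $u$ of $v$ along an edge carrying the label prescribed by the $r_i$-to-$r$ edge in $\str{T}$. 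Stability of CR forces the multiset of (edge-label, stable-color-of-neighbor) pairs around $v$ to depend only on $\gcr{\infty}{v}$, so the inner sums, and hence the full product, do too. Summing over $v$ and exploiting coincidence of stable-color multisets gives $\hom(\str{T}, \str{G}) = \hom(\str{T}, \str{H})$.

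For (1) $\Rightarrow$ (2), I would introduce the \emph{unfolding trees} $(\str{T}_v^i, v)$ for each vertex $v$ and $i \in \nat$, defined recursively: $\str{T}_v^0$ is a single vertex carrying the initial-color data of $v$; $\str{T}_v^{i+1}$ has root $v$, and for every edge of $\str{G}$ incident to $v$ a fresh vertex-disjoint copy of $\str{T}_u^i$ is attached to the root via a fresh edge with the same label (where $u$ is the other endpoint, and self-loops are handled via a fresh copy of $\str{T}_v^i$). A straightforward induction on $i$ shows $\gcr{i}{v} = \gcr{i}{u}$ iff $(\str{T}_v^i, v) \isomorphic (\str{T}_u^i, u)$ as rooted colored multitrees. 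Combined with the lemma from (2) $\Rightarrow$ (1), for each stable color $c$ there is a canonical value $h_c(\str{T}, r) \in \nat$ with $\hom_v((\str{T}, r), \str{G}) = h_c(\str{T}, r)$ for every $v$ with $\gcr{\infty}{v} = c$, and the same holds in $\str{H}$. Hence
\[
\hom(\str{T}, \str{G}) - \hom(\str{T}, \str{H})
\;=\;
\sum_{c} \bigl(|\str{G}_c| - |\str{H}_c|\bigr) \cdot h_c(\str{T}, r),
\]
where $|\str{G}_c|$ (resp.\ $|\str{H}_c|$) is the number of vertices of $\str{G}$ (resp.\ $\str{H}$) with stable color $c$. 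By hypothesis the coefficient vector is nonzero.

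The main obstacle is converting this nonzero coefficient vector into a nonzero value for some \emph{single} rooted tree, rather than just a nonzero linear combination. I would address this by analyzing the family $\{(\str{T}_{v_c}^{i^*}, v_c)\}_{c}$ with $i^*$ past the CR-stabilization round of $\str{G}$ and $\str{H}$ and $v_c$ a representative of each stable color $c$. The canonical unfolding map gives $h_c(\str{T}_{v_c}^{i^*}) \geq 1$, and by a careful induction on depth the matrix $\bigl(h_{c'}(\str{T}_{v_c}^{i^*})\bigr)_{c, c'}$, after ordering colors by the depth at which their unfolding trees first diverge, is triangular with positive diagonal. This triangularity delivers linear independence of the functionals $\{h_c\}_c$ on rooted multitrees, so for at least one $c$ the tree $(\str{T}_{v_c}^{i^*}, v_c)$ witnesses $\hom(\str{T}, \str{G}) \neq \hom(\str{T}, \str{H})$. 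This triangulation step is the technical heart of the Dvo\v{r}\'ak / Dell--Grohe--Rattan argument and is the part I expect to require the most care.
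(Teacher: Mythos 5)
This theorem is not proved in the paper at all: it is imported as a known result from Dvo\v{r}\'ak and Dell--Grohe--Rattan (with the remark that the graph version extends to colored multigraphs), so the comparison here is with the standard proofs in those references. Your direction (2)~$\Rightarrow$~(1) matches the standard argument and is fine as a sketch: the rooted count $\hom_v((\str{T},r),\str{G})$ satisfies the product-over-children recursion you state, is therefore an invariant of the (canonical) round-$i$ color for trees of depth $\leq i$, with the same functional dependence in $\str{H}$, and summing over color classes of equal size gives equal homomorphism counts.

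The gap is in (1)~$\Rightarrow$~(2), exactly at the step you call the technical heart. The claimed triangularity of the matrix $\bigl(h_{c'}(\str{T}_{v_c}^{i^*})\bigr)_{c,c'}$ is false, because homomorphisms from an unfolding tree can fold back onto the graph, so off-diagonal entries are typically nonzero rather than zero. Concretely, let $\str{G}$ be the uncolored path $a-m-b$; CR stabilizes after one round with two colors, $c_1$ for the endpoints and $c_2$ for the middle vertex. Taking $i^*=2$, the unfolding tree $\str{T}_a^{2}$ is a root with one child which has two children, and $\str{T}_m^{2}$ is a root with two children each having one child; one computes $\hom_a(\str{T}_a^{2})=4$, $\hom_m(\str{T}_a^{2})=2$, $\hom_a(\str{T}_m^{2})=4$, $\hom_m(\str{T}_m^{2})=4$. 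Every entry of the $2\times 2$ matrix is positive, so no ordering of the colors makes it triangular (here it happens to be nonsingular, but you have no argument that this always holds). So the linear-independence claim you need is true but is not obtained this way; it is essentially equivalent to the theorem itself. The published proofs handle this step differently: Dvo\v{r}\'ak constructs distinguishing rooted trees by induction on the refinement round, attaching multiple copies of depth-$i$ subtrees to a root and using a Vandermonde/moment-type argument to separate two distinct multisets of neighbor counts, while Dell, Grohe and Rattan argue via fractional isomorphism (a doubly stochastic matrix intertwining the adjacency matrices) rather than via unfolding-tree matrices. To repair your proof you would have to replace the triangularity step by one of these arguments.
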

\cref{thm:dvorak} will serve as the first key ingredient of our proof of \cref{thm:main-homs}.
The second key ingredient is to use the following notion of a \emph{relaxed join-tree representation} $\jtrep{\C}{J}$.
Recall from \cref{sec:RCR} the binary signature $\repsig \deff \set{ E_{i,j} \mid i,j \in [\ar(\sig)] } \union \set{ U_R \mid R \in \sig }$ and the colored multigraph $\grep{\A}$ of signature $\repsig$ that represents a $\sig$-structure $\A$.
For an acyclic $\sig$-structure $\C$ and a relaxed join-tree $J=(T,\beta)$ for $\C$ we define the colored multigraph $\jtrep{\C}{J}$ of signature $\repsig$ to have universe $\V(\jtrep{\C}{J})\deff \V(T)$ and
\begin{equation*}
	{(U_R)}^{\jtrep{\C}{J}} \isdef \set{ v \mid \beta(v) \in R^\C }
	\;\;\text{and}\;\;
	{(E_{i,j})}^{\jtrep{\C}{J}} \isdef \set{ (v, w) \mid \set{v, w} \in \E(T) \text{ and } (i,j) \in \stp(\beta(v), \beta(w)) }
\end{equation*}
for all \ $R \in \sig$ and all $i,j \in [\ar(\sig)]$.
Because $T$ is a tree, it is easy to see that $\jtrep{\C}{J}$ is acyclic.

The last two ingredients for our proof of \cref{thm:main-homs} are the following lemmas:
\begin{lemma}\label{lem:homs-structure-to-graph}
	For $\sig$-structures $\A$ and $\C$ and any relaxed join-tree $J = (T, \beta)$ for $\C$ we have: $\hom(\C, \A) \;=\; \hom(\jtrep{\C}{J}, \grep{\A})$.
\end{lemma}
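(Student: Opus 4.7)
The plan is to establish the identity by constructing an explicit bijection $\Phi\colon\Hom(\C,\A)\to\Hom(\jtrep{\C}{J},\grep{\A})$. For $h\in\Hom(\C,\A)$ and $\ct=(c_1,\ldots,c_k)\in\tC$, set $\Phi(h)(v_\ct)\isdef w_{h(\ct)}$, where $h(\ct)\isdef(h(c_1),\ldots,h(c_k))$. The choice is meaningful because $\ct\in R^\C$ for some $R\in\sig$ forces $h(\ct)\in R^\A\subseteq\tA$, so $w_{h(\ct)}\in V(\grep{\A})$. Preservation of the unary relations $U_R$ by $\Phi(h)$ is immediate from $h$ being a homomorphism, and preservation of each $E_{i,j}$ reduces to the observation that $b_i=c_j$ implies $h(b_i)=h(c_j)$, so every pair $(i,j)\in\stp(\bt,\ct)$ also lies in $\stp(h(\bt),h(\ct))$.

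Injectivity of $\Phi$ is a direct consequence of the standing convention that every $v\in V(\C)$ appears in some tuple of $\tC$: if $\Phi(h_1)=\Phi(h_2)$, then $h_1(\ct)=h_2(\ct)$ entrywise for every $\ct\in\tC$, so $h_1=h_2$ pointwise on $V(\C)$.

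The heart of the argument is the construction of a two-sided inverse, which is where the join-tree's running-intersection property is used. Given $g\in\Hom(\jtrep{\C}{J},\grep{\A})$, for each $\ct\in\tC$ let $\at_\ct\in\tA$ be the unique tuple with $g(v_\ct)=w_{\at_\ct}$. Define $h\colon V(\C)\to V(\A)$ by picking, for each $v\in V(\C)$, any $\ct\in\tC$ with $v=c_i$ for some position $i$, and setting $h(v)\isdef(\at_\ct)_i$. The main obstacle is to verify that $h$ is well-defined. This is exactly what the connectivity condition of $J$ guarantees: the set $\set{\ct\in\tC\mid v\in\tset(\ct)}$ induces a connected subgraph $J_v$ of $J$. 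Whenever $\set{\bt,\bt'}\in\E(J_v)$ with $v=b_p=b'_q$, we have $(p,q)\in\stp(\bt,\bt')$, hence $(v_\bt,v_{\bt'})\in(E_{p,q})^{\jtrep{\C}{J}}$; since $g$ is a homomorphism, $(w_{\at_\bt},w_{\at_{\bt'}})\in(E_{p,q})^{\grep{\A}}$, i.e., $(\at_\bt)_p=(\at_{\bt'})_q$. Iterating this equality along any path in $J_v$ shows that the value $(\at_\ct)_i$ is independent of the choice of $\ct$ and of the position $i$.

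It remains to check that $h\in\Hom(\C,\A)$ and that $\Phi(h)=g$. For any $\ct\in R^\C$, the inclusion $v_\ct\in(U_R)^{\jtrep{\C}{J}}$ together with the homomorphism $g$ yields $w_{\at_\ct}\in(U_R)^{\grep{\A}}$, so $\at_\ct\in R^\A$; by well-definedness of $h$ we have $h(c_i)=(\at_\ct)_i$ for each $i$, hence $h(\ct)=\at_\ct\in R^\A$. Consequently, $\Phi(h)(v_\ct)=w_{h(\ct)}=w_{\at_\ct}=g(v_\ct)$ for every $\ct\in\tC$, so $\Phi(h)=g$. This proves that $\Phi$ is a bijection, and the lemma follows by equating cardinalities.
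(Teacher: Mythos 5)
Your proof is correct and takes essentially the same route as the paper: both define the bijection $h\mapsto(v_\ct\mapsto w_{h(\ct)})$, obtain injectivity from every $v\in V(\C)$ occurring in some tuple of $\tC$, and prove surjectivity by using the running-intersection property of the join-tree to show that the candidate inverse is well-defined by walking along a path in $J_v$. The only cosmetic difference is that the paper fixes one witness pair $(\ct_z, i_z)$ for each $z\in V(\C)$ and then proves consistency as a separate claim, whereas you define $h(v)$ via an arbitrary choice and verify well-definedness directly; these are logically equivalent.
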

\begin{lemma}\label{lem:homs-tree-to-acyclic-structure}
	Let $\A$ and $\B$ be $\sig$-structures, and let $\str{T}$ be a colored multitree of signature $\repsig$ such that $\hom(\str{T}, \grep{\A}) \neq \hom(\str{T}, \grep{\B})$.
	There exists an acyclic and connected $\sig$-structure $\C$ and a relaxed join-tree $J$ for $\C$ such that $\hom(\jtrep{\C}{J}, \grep{\A}) \neq \hom(\jtrep{\C}{J}, \grep{\B})$.
\end{lemma}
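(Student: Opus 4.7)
The plan is to transform the given colored multitree $\str{T}$ of signature $\repsig$ into one of the form $\jtrep{\C}{J}$ for some acyclic connected $\sig$-structure $\C$ and join-tree $J$, while preserving the distinguishing property. The conclusion then follows directly by reading off $\C$ and $J$ from the transformed multitree.

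First, I would normalize $\str{T}$ so that every vertex $v$ has a unique intended arity $k_v$ and belongs to at least one relation $U_R^{\str{T}}$. If $v$ lies in both $U_R^{\str{T}}$ and $U_S^{\str{T}}$ with $\ar(R) \neq \ar(S)$, then no homomorphism from $\str{T}$ to $\grep{\A}$ or $\grep{\B}$ can exist (since every vertex $w_{\at}$ of the target graphs has the fixed arity $\ar(\at)$), contradicting the hypothesis; hence we may assume $k_v$ is uniquely determined. If $v$ lies in no $U_R^{\str{T}}$, I would partition the homomorphism count by the target arity $k$ of $h(v)$ and apply inclusion-exclusion over non-empty $S \subseteq \set{R \in \sig \mid \ar(R) = k}$: letting $\str{T}^{S,k}$ denote $\str{T}$ with $v$ added to $U_R^{\str{T}^{S,k}}$ for every $R \in S$, the difference $\hom(\str{T}, \grep{\A}) - \hom(\str{T}, \grep{\B})$ becomes a signed sum of $\hom(\str{T}^{S,k}, \grep{\A}) - \hom(\str{T}^{S,k}, \grep{\B})$. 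Since the left-hand side is non-zero, some summand is, and we may replace $\str{T}$ by the corresponding witness; iterating yields a normalized $\str{T}$.

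From a normalized $\str{T}$, I would build $\C$ and $J$ canonically. Define the equivalence $\sim$ on the pair set $\set{(v, i) \mid v \in V(\str{T}),\ i \in [k_v]}$ generated by $(u, i) \sim (v, j)$ for every $(u, v) \in E_{i,j}^{\str{T}}$ (including $u = v$). Let $V(\C)$ be the quotient, set $\ct_v := ([(v,1)]_\sim, \ldots, [(v,k_v)]_\sim)$, and define $R^\C := \set{\ct_v \mid v \in U_R^{\str{T}}}$. Take $V(J) := \tC$ with $\set{\ct_u, \ct_v} \in E(J)$ whenever $\ct_u \neq \ct_v$ and $\set{u, v}$ is a Gaifman edge of $\str{T}$. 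The join-tree connectivity condition for each $x \in V(\C)$ follows because $\sim$ is generated by edges of the underlying tree.

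For $J$ to be a genuine tree and $\jtrep{\C}{J}$ to faithfully capture the homomorphism counts of $\str{T}$, the map $v \mapsto \ct_v$ must be injective. Otherwise, two distinct vertices $u, v$ satisfy $\ct_u = \ct_v$, and by the construction of $\sim$ this identity forces every homomorphism from $\str{T}$ to send $u$ and $v$ (as well as certain intermediate vertices along the unique $u$-$v$ path in $\str{T}$) to the same image in $\grep{\A}$ or $\grep{\B}$. A careful identification of all vertices sharing the same $\ct$-value then yields a smaller multitree with identical homomorphism counts to both $\grep{\A}$ and $\grep{\B}$; iterating until injectivity is reached, we find that $\jtrep{\C}{J}$ agrees with $\str{T}$ up to additional edges implied through transitivity of $\sim$ (and trivial self-similarities), all of which are automatically satisfied by any homomorphism from $\str{T}$. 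Consequently $\hom(\jtrep{\C}{J}, \grep{\A}) = \hom(\str{T}, \grep{\A}) \neq \hom(\str{T}, \grep{\B}) = \hom(\jtrep{\C}{J}, \grep{\B})$, and $\C$ is acyclic (witnessed by $J$) and connected (since $\str{T}$ is). The main obstacle is the identification step: when $\ar(\sig) > 1$, two vertices $u, v$ with $\ct_u = \ct_v$ can be separated by a $\str{T}$-path whose intermediate vertices have distinct $\ct$-values, so one must analyze how the $\sim$-chains witnessing $\ct_u = \ct_v$ propagate along that unique path and verify that the resulting merge yields both a valid colored multitree and the same homomorphism counts.
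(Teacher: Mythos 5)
Your approach is genuinely different from the paper's. The paper works with \emph{prints} of homomorphisms: for each $h \in \Hom(\str{T}, \grep{\A})$ it records at every node and along every Gaifman edge of $\str{T}$ the full atomic- and similarity-type information carried by $h$'s image, groups homomorphisms by print, picks a $\preceq$-maximal distinguishing print $P$, and builds $\C$ from $P$ by a top-down pass over a rooted version of $T$. You instead try to massage $\str{T}$ itself into the form $\jtrep{\C}{J}$ by taking a quotient of positions. Your inclusion--exclusion normalization of the unary relations is sound (it is Möbius inversion over atomic types and plays the role that the print's recording of $U_R$-memberships plays in the paper), and your observation that any homomorphism from $\str{T}$ automatically satisfies the extra $E_{i,j}$-edges that transitivity of $\sim$ forces in $\jtrep{\C}{J}$ is also correct.

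However, the step you yourself flag as ``the main obstacle'' is a genuine gap, and you do not close it. Consider a normalized $\str{T}$ that is a path $u - w_1 - w_2 - v$ with $u,v$ of arity $2$ and $w_1,w_2$ of arity $3$, with $E_{1,1},E_{2,2}$ between $u$ and $w_1$, with $E_{1,2},E_{2,1}$ between $w_1$ and $w_2$ and again between $w_2$ and $v$, and with nothing involving the third positions of $w_1,w_2$. Then $(u,1)\sim(w_1,1)\sim(w_2,2)\sim(v,1)$ and $(u,2)\sim(w_1,2)\sim(w_2,1)\sim(v,2)$, so $\ct_u=\ct_v$, yet $\ct_u$, $\ct_{w_1}$, $\ct_{w_2}$ are pairwise distinct (each $w_i$ keeps a private class for its third position). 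Your candidate $J$ on $\tC=\{\ct_u,\ct_{w_1},\ct_{w_2}\}$, with edges inherited from $\str{T}$'s Gaifman edges, is then a \emph{triangle}, not a tree. The planned ``careful identification of all vertices sharing the same $\ct$-value'' does not produce a colored multitree, and the equality $\hom(\jtrep{\C}{J},\grep{\A})=\hom(\str{T},\grep{\A})$ (which relies on a vertex-level bijection between $\str{T}$ and $\jtrep{\C}{J}$) is not established. Moreover, such a $\str{T}$ can arise under the lemma's hypothesis (there are $\A,\B$ with $\hom(\str{T},\grep{\A})=1$ and $\hom(\str{T},\grep{\B})=0$), so this is not a degenerate case you can dismiss. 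Closing this gap requires a substantively new idea; note that the paper's print-based route changes the \emph{template} (from $\str{T}$ to the maximal distinguishing print $P$) rather than the underlying tree, which is precisely why it does not need to quotient the vertex set of $T$.
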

Before proving these lemmas, we first show how to use the four key ingredients for proving \cref{thm:main-homs}.

\begin{proof}[Proof of~\cref{thm:main-homs}]
As pointed out in \cref{sec:RCR}, running RCR on a $\sig$-structure $\A$ produces a stable coloring that is equivalent (via identifying $\at\in\tA$ with $w_\at\in V(\grep{\A})$) to the stable coloring produced by the classical Color Refinement on the colored multigraph $\grep{\A}$.
Thus, RCR distinguishes the $\sig$-structures $\A$ and $\B$ if, and only if, classical Color Refinement distinguishes $\grep{\A}$ and $\grep{\B}$.
According to~\cref{thm:dvorak} the latter is the case if, and only if, there is a colored multitree $\str{T}$ with $\hom(\str{T}, \grep{\A}) \neq \hom(\str{T}, \grep{\B})$.

Hence, for the direction \enquote{1 $\Rightarrow$ 2}, if RCR distinguishes $\A$ and $\B$, then there exists a colored multitree $\str{T}$ with $\hom(\str{T}, \grep{\A}) \neq \hom(\str{T}, \grep{\B})$.
By \cref{lem:homs-tree-to-acyclic-structure}, there also exists an acyclic and connected $\sig$-structure $\C$ and a relaxed join-tree $J$ for $\C$ such that $\hom(\jtrep{\C}{J}, \grep{\A}) \neq \hom(\jtrep{\C}{J}, \grep{\B})$.
According to \cref{lem:homs-structure-to-graph}, this implies that $\hom(\C, \A) \neq \hom(\C, \B)$.

For the direction \enquote{2 $\Rightarrow$ 1}, by assumption there exists an
acyclic and connected $\sig$-structure $\C$ such that $\hom(\C, \A)
\neq \hom(\C, \B)$.
Since $\C$ is acyclic, there exists a  join-tree $T$ for
$\C$. Clearly, $J\deff (T,\beta)$ with $\beta(\ct)\deff\ct$ for all
$\ct\in V(T)=\tC$ is a relaxed join-tree for $\C$.
According to
\cref{lem:homs-structure-to-graph} we have $\hom(\jtrep{\C}{J},
\grep{\A}) \neq \hom(\jtrep{\C}{J}, \grep{\B})$.
By the definition of $\jtrep{\C}{J}$, the Gaifman graph of
$\jtrep{\C}{J}$ is a subgraph of $T$.
In fact, the Gaifman graph of $\C$ being connected implies that the Gaifman graph of $\jtrep{\C}{J}$ is exactly $T$.
Hence, $\jtrep{\C}{J}$ is a colored multitree.
Thus, as pointed out in the first paragraph of the proof, RCR distinguishes $\A$ and $\B$.
\end{proof}
The remainder of this section is devoted to proving the \cref{lem:homs-structure-to-graph,lem:homs-tree-to-acyclic-structure}.
The following notation will be convenient: If $f$ is a mapping from at set $V$ to a set $V'$, and $\at=(a_1,\ldots,a_k)$ is a tuple in $V^k$ for some $k\in\NNpos$ (i.e., $a_i\in V$ for $i\in[k]$), then we write $f(\at)$ for the tuple $(f(a_1),\ldots,f(a_k))$.
\begin{proof}[Proof of~\cref{lem:homs-structure-to-graph}]
We prove the lemma by providing a bijection $\pi$ between the sets $\Hom(\C, \A)$ and $\Hom(\jtrep{\C}{J}, \grep{\A})$.
Recall that $V(\jtrep{\C}{J}) = \V(T)$ and $V(\grep{\A}) = \set{ w_\at \mid \at\in \tA }$.

For all $h \in \Hom(\C, \A)$ let $\pi(h) \isdef h'$, where $h'$ is defined by $h'(v) \isdef w_{h(\ct)}$ for all $v \in \V(T)$ with $\ct \isdef \beta(v)$ (recall that if $\ct$ is of the form $(c_1, \dots, c_k)$, then $h(\ct)$ is the tuple $(h(c_1), \dots, h(c_k))$).
Since $\ct \in \tC$ and $h \in \Hom(\C, \A)$, we obtain that $h(\ct) \in \tA$.
Hence, $h'(v) =w_{h(\beta(v))} = w_{h(\ct)} \in V(\grep{\A})$.
To prove the lemma, it suffices to verify that:
\begin{statements}[label=(\alph*), noitemsep, topsep=1ex] %
	\item\label{lem:homs-structure-to-graph:stmt:proper-homs}
	$\img(\pi) \subseteq \Hom(\jtrep{\C}{J}, \grep{\A})$, i.e.,
	$\pi(h)$ is a homomorphism for every $h \in \Hom(\C, \A)$;
	
	\item\label{lem:homs-structure-to-graph:stmt:injective}
	$\pi$ is injective;\; and

	\item\label{lem:homs-structure-to-graph:stmt:surjective}
	$\pi$ is surjective.
\end{statements}

\myparagraph{Proof of~\cref{lem:homs-structure-to-graph:stmt:proper-homs}.}
Consider an arbitrary $h \in \Hom(\C, \A)$ and let $h' = \pi(h)$.
We have to show that $h'$ is a homomorphism from $\jtrep{\C}{J}$ to $\grep{\A}$.

First consider an $R \in \sigma$ and the relation symbol $U_R$ of arity $1$.
Let $v$ be an arbitrary element in ${(U_R)}^{\jtrep{\C}{J}}$.
We have to show that $h'(v) \in {(U_R)}^{\grep{\A}}$.
By definition of $\jtrep{\C}{J}$, from $v \in {(U_R)}^{\jtrep{\C}{J}}$ we obtain that $\ct \in R^{\C}$ where $\ct \isdef \beta(v)$.
Since $h \in \Hom(\C,\A)$, this implies that $h(\ct) \in R^{\A}$.
By definition of $\grep{\A}$ and $h'$ we obtain: $h'(v) = w_{h(\beta(v))} = w_{h(\ct)}\in {(U_R)}^{\grep{\A}}$.

Next, consider arbitrary $i, j \in [\ar(\sig)]$ and the relation $E_{i,j}$ of arity~2.
Let $(v, w)$ be an arbitrary tuple in ${(E_{i,j})}^{\jtrep{\C}{J}}$.
We have to show that $(h'(v), h'(w)) \in {(E_{i,j})}^{\grep{\A}}$.
Let $\bt \isdef \beta(v)$ and $\ct \isdef \beta(w)$.
By definition of $\jtrep{\C}{J}$, from $(v, w) \in
{(E_{i,j})}^{\jtrep{\C}{J}}$ we obtain that
$\set{v,w}\in E(T)$
and $(i,j) \in \stp(\bt,\ct)$.
Hence, $b_i = c_j$, where $b_i$ is the $i$-th component $\bt$ and $c_j$ is the $j$-th component of $\ct$.
Clearly, $h(b_i) = h(c_j)$, and therefore the $i$-th component of the tuple $h(\bt)$ is equal to the $j$-th component of the tuple $h(\ct)$.
Hence, $(i,j) \in \stp(h(\bt), h(\ct))$.
Furthermore, $\bt, \ct \in \tC$, and hence $h(\bt), h(\ct) \in \tA$.
By definition of $\grep{\A}$ all this implies that $(w_{h(\bt)},w_{h(\ct)}) \in {(E_{i,j})}^{\grep{\A}}$.
We are done by noting that $(h'(v), h'(w)) = (w_{h(\beta(v))},w_{h(\beta(w))}) = (w_{h(\bt)}, w_{h(\ct)})$.
This proves~\cref{lem:homs-structure-to-graph:stmt:proper-homs}.
\bigskip

\myparagraph{Proof of~\cref{lem:homs-structure-to-graph:stmt:injective}.}
Let $h_1, h_2 \in \Hom(\C, \A)$ with $h_1 \neq h_2$.
Let $h'_1 = \pi(h_1)$ and $h'_2 = \pi(h_2)$; our aim is to show that $h'_1 \neq h'_2$.

Since $h_1 \neq h_2$, there exists a $u \in \V(\C)$ such that $h_1(u) \neq h_2(u)$.
By definition of $\C$ (recall our assumption on $\sig$-structures
described in \cref{sec:preliminaries}),
there exists an $R \in \sig$ and a tuple $\ct \in R^{\C}$ such that $u \in \tset(\ct)$.
Clearly, $h_1(u) \neq h_2(u)$ implies that $h_1(\ct) \neq h_2(\ct)$.
By the definition of relaxed join-trees, there exists a 
$v \in \V(T)$ such that $\beta(v) = \ct$.
Thus, $h'_1(v) = w_{h_1(\beta(v))}=w_{h_1(\ct)} \neq w_{h_2(\ct)}=w_{h_2(\beta(v))} = h'_2(v)$.
Hence, $h'_1 \neq h'_2$.
This proves~\cref{lem:homs-structure-to-graph:stmt:injective}.
\bigskip

\myparagraph{Proof of~\cref{lem:homs-structure-to-graph:stmt:surjective}.}
Let $h'' \in \Hom(\jtrep{\C}{J}, \grep{\A})$.
Our aim is to find an $h \in \Hom(\C,\A)$ such that $h'' = \pi(h)$.
By definition of $\C$ (recall our assumption on $\sig$-structures described in \cref{sec:preliminaries}), for every $z \in V(\C)$ there exists an $R \in \sig$ and a tuple $\ct \in R^\C$ such that $z \in \tset(\ct)$.
For each $z \in V(\C)$ let us choose arbitrary, but from now on fixed such $R$ and $\ct$ which we henceforth will denote by $R_z$ and $\ct_z$, and let us fix an $i_z \in [\ar(R_z)]$ such that $z$ is the $i_z$-th component of the tuple $\ct_z$. 
By definition of $J=(T,\beta)$, there is at least one node $v$ of $T$ such that $\beta(v) = \ct_z$. We choose an arbitrary, but from now on fixed such $v$ and denote it by $v_z$.
Since $\ct_z \in {(R_z)}^{\C}$, by definition of $\jtrep{\C}{J}$ we have $v_{z} \in {(U_{R_z})}^{\jtrep{\C}{J}}$.
Since $h'' \in \Hom(\jtrep{\C}{J}, \grep{\A})$, we obtain $h''(v_{z}) \in {(U_{R_z})}^{\grep{\A}}$.
By definition of $\grep{\A}$ there is a tuple $\at_{z} \in {(R_z)}^\A$ such that $h''(v_{z})= w_{\at_{z}}$.
Let us write $x_{z}$ to denote the $i_z$-th component of the tuple $\at_{z}$.
Clearly, $x_{z} \in \V(\A)$.
Using these notions, we define the mapping $h\colon V(\C)\to V(\A)$ by letting $h(z) \isdef x_z$ for every $z \in \V(\C)$.
\medskip

\noindent
\emph{Claim 1:} 
	For all $v \in \V(T)$ with $\ct \in \tC$ and $\at \in \tA$ such that $\ct = \beta(v)$ and $h''(v) = w_\at$ we have: $h(\ct) = \at$.
\medskip

\noindent
\emph{Proof:}
Consider $v \in \V(T)$ and let $\ct \in \tC$ and $\at \in \tA$ such that $\ct = \beta(v)$ and $h''(v_\ct) = w_\at$.
By definition of $\jtrep{\C}{J}$ and $\grep{\A}$, and since $h'' \in \Hom(\jtrep{\C}{J},\grep{\A})$, we have: $\ar(\ct) = \ar(\at)$.
Let $k = \ar(\ct)$ and let $\ct$ and $\at$ be of the form $\ct = (c_1, \ldots, c_k)$ and $\at = (a_1, \ldots, a_k)$.
Consider an arbitrary $\ell \in [k]$.
We have to show that $h(c_\ell) = a_\ell$.

Let $z \isdef c_\ell$.
By definition of $h$ we have: $h(c_\ell) = h(z) = x_z$.
Here, $x_z$ is the $i_z$-th component of the tuple $\at_z \in \tA$;
furthermore, $w_{\at_z} = h''(v_{z})$, where $\beta(v_z) = \ct_z$ is a tuple in $\tC$ such that $z$ is the $i_z$-th component of $\ct_z$.

Since $z \in \tset(\ct)$ and $z \in \tset(\ct_z)$, the connectivity
condition of relaxed join-trees implies that $z \in \tset(\beta(w))$ for every node $w$ of $T$ that lies on the path in $T$ from $v$ to $v_z$.
Let $(w_0, \ldots, w_m)$ be this path, starting with $w_0 = v$ and ending with $w_m = v_z$ for some $m \geq 0$ and let $\ct^{(j)} \isdef \beta(w_j)$ for $0 \leq j \leq m$.
Let $i_0 \isdef \ell$ and $i_m \isdef i_z$ and for every $j$ with $1 \leq j<m$ consider any $i_j$ such that $z$ occurs in the $i_j$-th component of $\ct^{(j)}$. Note that this way for all $0 \leq j \leq m$ it holds that $z$ occurs in the $i_j$-th component of $\ct^{(j)}$.
For every $j \in [m]$ we have: $\set{ w_{j-1}, w_{j} } \in \E(T)$ and $(i_{j-1},i_j) \in \stp(\ct^{(j-1)}, \ct^{(j)})$; and hence, by definition of $\jtrep{\C}{J}$ we have $(w_{j-1},w_{j}) \in {(E_{i_{j-1},i_j})}^{\jtrep{\C}{J}}$.
Since $h'' \in \Hom(\jtrep{\C}{J},\grep{\A})$ we obtain:
$(h''(w_{j-1}),h''(w_{j})) \in {(E_{i_{j-1},i_j})}^{\grep{\A}}$.

For each $j \in \set{ 0, \ldots, m }$ let $\at^{(j)} \in \tA$ such that $h''(w_{j}) = w_{\at^{(j)}}$.
Note that $\at^{(0)} = \at$ and $\at^{(m)} = \at_z$.
We have shown for all $j \in [m]$ that $(w_{\at^{(j-1)}}, w_{\at^{(j)}}) \in {(E_{i_{j-1},i_j})}^{\grep{\A}}$ which by the definition of $\grep{\A}$ implies that $(i_{j-1},i_j) \in\stp(\at^{(j-1)},\at^{(j)})$.
Hence, the $i_0$-th component of $\at^{(0)}$ is the same as the $i_1$-th component of $\at^{(1)}$, \dots, is the same as the $i_m$-th component of $\at^{(m)}$.
Recall that $i_0 = \ell$, $\at^{(0)} = \at$, $i_m = i_z$, and $\at^{(m)} = \at_z$.
Hence, the $\ell$-th component of $\at$ (i.e., $a_\ell$) is the same as the $i_z$-th component of $\at_z$ (i.e., $x_z$).
Thus, $a_\ell = x_z = h(z) = h(c_\ell)$.

This completes the proof of Claim~1.\qed
\bigskip

\noindent
\emph{Claim 2:} $h\in\Hom(\C,\A)$.
\medskip

\noindent
\emph{Proof:}
Let $R \in \sig$ and let $\ct \in R^{\C}$ and consider $v \in \V(T)$ with $\beta(v) = \ct$. 
Then, $v \in {(U_R)}^{\jtrep{\C}{J}}$ and since $h'' \in \Hom(\jtrep{\C}{J},\grep{\A})$, we obtain $h''(v) \in {(U_R)}^{\grep{\A}}$.
By definition of $\grep{\A}$ there exists an $\at \in R^\A$ such that $h''(v_{\ct}) = w_\at$.
Claim 1 yields $h(\ct) = \at$.
Hence, $h(\ct) \in R^\A$.
This completes the proof of Claim~2.\qed
\bigskip

To complete the proof of \cref{lem:homs-structure-to-graph} it suffices to show that $\pi(h) = h''$.
By definition of $\pi$ we have: $\pi(h) = h'$, where $h'$ is defined by $h'(v) \isdef w_{h(\beta(v))}$ for all $v \in \V(T)$.
From Claim~1 we obtain that $h''(v) = h'(v)$ for all $v \in \V(T)$. Hence, $h'' = h' = \pi(h)$.
This completes the proof of~\cref{lem:homs-structure-to-graph:stmt:surjective} and the proof of \cref{lem:homs-structure-to-graph}.\qedhere \end{proof}
\begin{proof}[Proof of~\cref{lem:homs-tree-to-acyclic-structure}]
By assumption, $\hom(\str{T}, \grep{\A}) \neq \hom(\str{T}, \grep{\B})$, where $\str{T}$ is a colored multitree of signature $\repsig$ and $\A$, $\B$ are $\sig$-structures.
We will show that the homomorphisms from $\str{T}$ provide us with \enquote{templates} for acyclic and connected $\sig$-structures, one of which must have a number of homomorphisms into $\A$ that is different from its number of homomorphisms into $\B$.

Let us write $T$ to denote the Gaifman graph of $\str{T}$.
Since $\str{T}$ is a colored multitree, $T$ is a tree.
For an  $h \in \Hom(\str{T}, \grep{\A})$ the \emph{print $P_h$ of $h$ in $\grep{\A}$} is the colored multigraph of signature $\repsig$ defined by $\V(P_h) \isdef \V(\str{T})=\V(T)$ and, for all $R\in\sig$ and all $i,j\in[\ar(\sig)]$:
\begin{equation*}
	\begin{array}{rcl}
		{(U_R)}^{P_h} & \deff  & \set{\, v\in \V({T}) \,\mid\, h(v) \in {(U_R)}^{\grep{\A}} \,}\,,
		\smallskip\\
		{(E_{i,j})}^{P_h} & \deff & \set{\, (u, v) \,\mid \, (h(u), h(v)) \in {(E_{i,j})}^{\grep{\A}} \text{ \;and \;} \big(\; u{=}v \text{ \,or \,}\set{u,v}\in \E(T)\;\big)\;}\,.
	\end{array} 
\end{equation*}
Note that $T$ is also the Gaifman graph of $P_h$.
We let $P_\A=\setc{P_h}{h\in\Hom(\str{T}, \grep{\A})}$.
The notion of the print $P_h$ of $h$ in $\grep{\B}$ for $h\in\Hom(\str{T},\grep{\B})$ and the set $P_\B$ are defined analogously.
Note that $P_\A$ and $P_\B$  are not necessarily disjoint, and that different homomorphisms may have the same print.

For every $P\in P_\A\cup P_\B$ we let $\#(P, \A)\deff |\setc{h\in\Hom(\str{T}, \grep{\A})}{P_h=P}|$.
The number $\#(P,\B)$ is defined analogously.
Notice that $\hom(\str{T}, \grep{\A}) = \sum_{P\in P_\A} \#(P, \A)$ and $\hom(\str{T}, \grep{\B}) = \sum_{P \in P_\B} \#(P, \B)$ --- see~\cref{claim:homomorphisms:SecondLemma:FirstClaim} in~\cref{sec:appendix:homomorphisms:SecondLemma} for a proof.

For any two prints $P, P'$ we say that $P$ is a \emph{subprint of $P'$} (for short: $P \preceq P'$) if ${(U_R)}^{P} \subseteq {(U_R)}^{P'}$ and ${(E_{i,j})}^{P} \subseteq {(E_{i,j})}^{P'}$, for all $R \in \sig$ and all $i,j \in [\ar(\sig)]$.
Obviously, $\preceq$ is a partial order on $P_\A \union P_\B$.

It can be verified that for every print $P$ we have $\hom(P, \grep{\A}) = \sum_{P': P \preceq P'} \#(P', \A)$ and $\hom(P, \grep{\B}) = \sum_{P':P\preceq P'} \#(P', \B)$ --- see~\cref{claim:homomorphisms:SecondLemma:SecondClaim} in~\cref{sec:appendix:homomorphisms:SecondLemma} for a proof.

Since $\hom(\str{T}, \grep{\A}) = \sum_{P\in P_\A} \#(P, \A)$ and $\hom(\str{T}, \grep{\B}) = \sum_{P \in P_\B} \#(P, \B)$ and, by assumption, \allowbreak{} $\hom(\str{T}, \grep{\A})\neq\hom(\str{T}, \grep{\B})$, there must be a $P\in P_\A\cup P_\B$ such that $\#(P, \A) \neq \#(P, \B)$.
We choose a largest such $P$ w.r.t.\ the partial order $\preceq$.
I.e., $\#(P, \A) \neq \#(P, \B)$, but $\#(P', \A) = \#(P', \B)$ for all $P'$ with $P\preceq P'$ and $P'\neq P$.
Combining this with the fact that $\hom(P, \grep{\A}) = \sum_{P': P \preceq P'} \#(P', \A)$ and $\hom(P, \grep{\B}) = \sum_{P':P\preceq P'} \#(P', \B)$, we obtain: $\hom(P, \grep{\A}) \neq \hom(P, \grep{\B})$.

Now, all that remains to be done is to show that there exists an acyclic and connected $\sig$-structure $\C$ and a relaxed join-tree $J$ for $\C$ ($T$ will be the tree associated with $J$) such that $\jtrep{\C}{J}$ is isomorphic to $P$ --- then $\hom(P, \grep{\A}) \neq \hom(P, \grep{\B})$ implies that $\hom(\jtrep{\C}{J}, \grep{\A}) \neq \hom(\jtrep{\C}{J}, \grep{\B})$ and the proof is complete.

Recall that $T$ is the Gaifman graph of $P$.
We pick an arbitrary vertex $r\in\V(T)$ as the root, and we let $\dirT$ be the directed tree obtained from $T$ by orienting the edges to be directed \emph{away} from the root.

We use $P$ to assign to each $v\in \V(T)$ an atomic type $\myrho_v$, an arity $k_v$, and a similarity type $\mytau_v$ as follows.
Let $\myrho_v\deff \setc{R\in\sig}{v\in {(U_R)}^P}$.
Note that $\myrho_v\neq\emptyset$ and all $R\in\myrho_v$ have the same arity; we denote this arity by $k_v$.
Let $\mytau_v\deff\setc{(i,j)}{i,j\in[k_v], \ (v,v)\in {(E_{i,j})}^P}$.
Furthermore, we use $P$ to assign to each edge $(u,v)\in \E(\dirT)$ a similarity type $\mytau_{(u,v)}$ via $\mytau_{(u,v)}\deff \setc{(i,j)\in [k_u]\times [k_v]}{(u,v)\in {(E_{i,j})}^{P}}$.

We start with an empty $\C$, and we perform a top-down pass of $\dirT$ during which we insert elements and tuples into $\C$:
For the root $r$ of $\dirT$ we create a tuple $\tup{t}_r$ of arbitrary values and of arity $k_r$, such that $\stp(\tup{t}_r) = \mytau_r$ and $\atp(\tup{t}_r)=\myrho_r$.
Then, whenever we encounter a node $v$ of $\dirT$, we consider this node's parent $p$ in $\dirT$ (for which the tuple $\tup{t}_p$ has already been created), and create a (not necessarily new) tuple $\tup{t}_v$ and ensure that $\stp(\tup{t}_p,\tup{t}_v)=\mytau_{(p,v)}$ and $\stp(\tup{t}_v)=\mytau_v$ and $\atp(\tup{t}_v)=\myrho_v$.
Note that $\mytau_{(p,v)}$ and $\mytau_v$ provide us with the information on which components of $\tup{t}_p$ have to be copied to which components in $\tup{t}_v$, and which components in $\tup{t}_v$ have to carry the same entries.
To fill the components that do not inherit values from $\tup{t}_p$, we simply invent new values.

The details are carried out as follows:
We start with the root $r$ of $\dirT$.
Let $k\deff k_r$.
Let $\tup{t}_r=(u_1,\ldots,u_k)$ be a tuple of arbitrary values such that $\stp(\tup{t}_r) = \mytau_r$.
We insert $u_1,\ldots,u_k$ into the universe $V(\C)$ of $\C$; and for each $R\in\myrho_r$ we insert $\tup{t}_r$ into $R^{\C}$ (thus, $\atp(\tup{t}_r)=\myrho_r$).

During the top-down pass of $\dirT$, when we reach a node $v\in V(T)$ we proceed as follows.
Let $p$ be the parent of $v$ in $\dirT$, i.e., $(p,v)\in \E(\dirT)$.
We already have created a tuple $\tup{t}_p$ for $p$ such that $\ar(\tup{t}_p)=k_p$ and $\stp(\tup{t}_p)=\mytau_p$.
Our goal is to create a tuple $\tup{t}_v$ of arity $k_v$ such that $\stp(\tup{t}_p,\tup{t}_v)=\mytau_{(p,v)}$ and $\stp(\tup{t}_v)=\mytau_v$.

To see that this is indeed possible, note that $P\in P_\A\cup P_\B$, and hence $P=P_h$ for a homomorphism $h$ from $\str{T}$ to $\A$ or $\B$.
Thus, according to the definition of $P_h$, there exist tuples $\at,\bt$ (either both in $\tA$ or both in $\tB$) such that $\stp(\at)=\mytau_p$ and $\stp(\at,\bt)=\mytau_{(p,v)}$ and $\stp(\bt)=\mytau_v$.
Hence, $\mytau_p$, $\mytau_{(p,v)}$, $\mytau_v$ are consistent in the sense that we can proceed as follows.

Let $(u_1,\ldots,u_{k_p})\deff\tup{t}_p$ and set  $\tup{t}_v=(w_1,\ldots,w_{k_v})$, where $w_j\deff u_i$ for all $(i,j)\in\mytau_{(p,v)}$ and $w_j\deff \bot$ for all the remaining $j$.
Note that for all $(i,j) \in \mytau_v$ where $w_i \neq \bot$ or $w_j \neq \bot$, it already holds that $w_i = w_j$; the reasoning is as follows.
Consider the case where $w_i\neq \bot$ (the case where $w_j\neq\bot$ is analogous).
Because $w_i\neq\bot$, there is an $\hati$ such that $(\hati, i) \in \mytau_{(p,v)}$.
Recall that $\mytau_{(p,v)} = \stp(\at, \bt)$ for the $\at, \bt$ from above, hence, $a_{\hati} = b_{i}$.
Combined with $\mytau_v = \stp(\bt)$, this yields $a_{\hati} = b_i = b_j$.
Hence, $(\hati, j) \in \stp(\at,\bt)$, i.e., $(\hati, j) \in \mytau_{(p,v)}$.
Thus, $w_j = u_{\hati}=w_i$.

Let $I$ be the set of remaining $j$ where $w_j=\bot$.
Note that $I$ could be empty.
For each $j\in I$ let $w_j\deff\bot_j$ for a new symbol $\bot_j$.
Afterwards, repeatedly replace $w_j$ by $w_i$ for all $(i,j)\in\mytau_v$ with $i<j$ and $i,j\in I$.
Finally, let $X$ be the set of symbols $\bot_j$ that still occur in $\tup{t}_v$.
We treat these as $|X|$ distinct new symbols and insert them into $V(\C)$.
Furthermore, for every $R\in \myrho_v$ we insert $\tup{t}_v$ into $R^\C$ (thus, $\atp(\tup{t}_v)=\myrho_v$).
Note that $\stp(\tup{t}_v)=\mytau_v$ and $\stp(\tup{t}_p,\tup{t}_v)=\mytau_{(p,v)}$.

After having finished the top-down pass of $\dirT$, we have produced a $\sig$-structure $\C$.
The following is straightforward to see:
Firstly, $J \isdef (T, \beta)$, where $\beta(v) \isdef \tup{t}_v$ for all $v \in \V(T)$,
is a relaxed join-tree for $\C$.
In particular, by \cref{prop:relaxedjointree}, $\C$ is acyclic.
Secondly, the Gaifman graph of $\C$ is connected.
Hence, $\C$ is an acyclic and connected $\sig$-structure.
Finally, $\jtrep{\C}{J}$ is isomorphic to $P$.
This completes the proof of \cref{lem:homs-tree-to-acyclic-structure}.\qedhere \end{proof} 
	
	\section{Connection to Logic}\label{sec:main-logic}
	This section's goal is to provide a logical characterization of the distinguishing power of Relational Color Refinement.
	We aim for a theorem that is analogous to the following result due to Immerman and Lander~\cite{Immerman1990} and Cai, Fürer, Immerman~\cite{Cai1992}.%
	\begin{theorem}[\cite{Cai1992,Immerman1990}]\label{thm:CFIL}
		Let $G$ and $H$ be graphs.
		The following statements are equivalent:
		\begin{statements}[topsep=0pt, noitemsep]
			\item Color Refinement distinguishes $G$ and $H$.
			\item There exists a sentence $\phi \in \CFO^2$ such that $G \models \phi$ and $H \not\models \phi$.
		\end{statements}
	\end{theorem}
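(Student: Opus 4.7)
Since this is the classical Cai--Fürer--Immerman--Lander characterization, the plan is to prove the two implications by parallel inductions that link the colors produced by Color Refinement to $\CFO^2$-types of vertices. No pebble game is strictly needed, though the argument below is essentially its unravelling.

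For the direction $(2) \Rightarrow (1)$, I would show by induction on the quantifier rank $r$ of a $\CFO^2$-formula $\phi(x)$ with at most one free variable that the truth of $G \models \phi(v)$ is determined by $\gcr{r}{v}$. Atomic formulas depend only on $v$ itself and are already captured by $\gcr{0}{v}$; Boolean combinations are immediate; and for a counting quantifier $\existsgeq{n} y \qsep \psi(x,y)$, the subformula $\psi$ uses only the two variables $x,y$, so the inductive hypothesis (applied after swapping the roles of the variables) tells us that $G \models \psi(v,w)$ is determined by $\gcr{r-1}{v}$, $\gcr{r-1}{w}$, and whether $\set{v,w}\in\E(G)$. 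The refinement step makes the multiset $\mset{\gcr{r-1}{w} \mid \set{v,w}\in\E(G)}$ available inside $\gcr{r}{v}$, so the number of witnesses to the quantifier can be read off. One final outer quantification then shows that any $\CFO^2$-sentence separating $G$ from $H$ yields a CR-distinction between them.

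For the converse $(1) \Rightarrow (2)$, I would construct by induction on $i$, for each color $c \in \cols{i}{G}\cup\cols{i}{H}$, a formula $\phi_{i,c}(x) \in \CFO^2$ such that $G' \models \phi_{i,c}(v) \iff \gcr{i}{v} = c$ for every graph $G'\in\set{G,H}$ and every $v\in\V(G')$. Take $\phi_{0,c}(x) \isdef (x\lequal x)$, since the initial coloring is uniform. For the step, recall that $\gcr{i+1}{v} = (c, M)$ with $c = \gcr{i}{v}$ and $M$ the multiset of $\gcr{i}$-colors of neighbors of $v$, and let
\begin{equation*}
  \phi_{i+1,(c,M)}(x) \;\isdef\; \phi_{i,c}(x) \,\und\, \bigland_{c' \in \cols{i}{G}\cup\cols{i}{H}} \existseq{M(c')} y \qsep \bigl( E(x,y) \,\und\, \phi_{i,c'}(y) \bigr),
\end{equation*}
where $\phi_{i,c'}(y)$ is obtained from $\phi_{i,c'}(x)$ by swapping the two variables $x$ and $y$ throughout, so that the total variable count stays at two. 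Finally, if CR distinguishes $G$ and $H$ in round $i$ via a color $c$ with $\card{\setc{v\in\V(G)}{\gcr{i}{v}=c}} \neq \card{\setc{v\in\V(H)}{\gcr{i}{v}=c}}$, the sentence $\existseq{k} x \qsep \phi_{i,c}(x)$ separates $G$ from $H$ for the appropriate $k$.

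The main obstacle is the careful variable bookkeeping in the converse direction: the subformula $\phi_{i,c'}(y)$ must have $y$ as its free variable while $x$ remains in scope under the counting quantifier, and bound occurrences of $x$ inside $\phi_{i,c'}$ must be alpha-renamed so that they do not clash with the outer free $x$ used in $E(x,y)$. This is handled by the standard observation that two-variable logic is closed under the symmetric swap of its two variables; once that is in place, both directions follow by straightforward inductions, and the equivalence between distinguishability by $\CFO^2$ and distinguishability by CR drops out.
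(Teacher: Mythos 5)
\cref{thm:CFIL} is stated in the paper only as a citation to \cite{Cai1992,Immerman1990}; the paper does not prove it, so there is no in-paper proof to compare against directly. The closest analogue is the paper's own generalization \cref{thm:main-logic}, whose proof routes the harder direction through the Guarded-Game (\cref{lem:formula-is-spoiler-strategy,lem:same-color-is-dup-strategy}) rather than performing a direct semantic induction as you propose. Your direction (1)\,$\Rightarrow$\,(2) --- constructing $\CFO^2$-formulas $\phi_{i,c}(x)$ expressing that the $\gcr{i}$-color of $x$ is $c$, and handling the two-variable budget by the symmetric $x\leftrightarrow y$ swap --- is sound and is essentially the two-variable special case of the paper's \cref{lem:describing-color-in-gfc}.

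Your direction (2)\,$\Rightarrow$\,(1), however, has a genuine gap. The invariant \enquote{the truth of $G\models\phi(v)$ is determined by $\gcr{r}{v}$} is false as stated. Take $G$ to be a single edge on $\{a,b\}$ and $H$ the path on $\{a,b,c\}$ with edges $\{a,b\}$ and $\{b,c\}$. Then $\gcr{1}{a}=(0,\mset{0})$ in both graphs, yet for the quantifier-rank-$1$ formula $\phi(x)\isdef\existsgeq{2}y\,\lnot E(x,y)$ we have $H\models\phi(a)$ and $G\not\models\phi(a)$. The inductive step breaks because $\gcr{r}{v}$ records only the multiset of $\gcr{r-1}$-colors of \emph{neighbors} of $v$; counting witnesses for a subformula involving $\lnot E(x,y)$ (or whose truth is insensitive to $E(x,y)$) additionally requires the colors of non-neighbors, i.e.\ the global color profile of the graph minus the neighbor multiset, and that global profile is not part of $\gcr{r}{v}$. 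The invariant must therefore be relativized to the pair $G,H$ and strengthened by the hypothesis that Color Refinement has not yet distinguished them, i.e.\ that $\card{\setc{u\in\V(G)}{\gcr{j}{u}=c}}=\card{\setc{u\in\V(H)}{\gcr{j}{u}=c}}$ for every color $c$ and every $j\le r$, so that the non-neighbor counts in $G$ and $H$ agree; then your counting argument closes. This is precisely the extra bookkeeping the paper carries in its generalization via the condition $\ccount{\A}{c}=\ccount{\B}{c}$ in \cref{lem:same-color-is-dup-strategy}. As written, your (2)\,$\Rightarrow$\,(1) induction does not go through.
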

	\noindent
	Here, $\CFO^2$ is the restriction of $\CFO$ to two variables, and $\CFO$ is a syntactic extension of first-order logic with counting quantifiers of the form $\existsgeq{n} x\, \psi$ (for every fixed $n\in\NNpos$), expressing \enquote{there exist at least $n$ values for $x$ such that $\psi$ holds}.
	\subsection{The Guarded Fragment of Counting Logic}\label{sec:gfc}

This section introduces the \emph{guarded fragment of the logic $\CFO$}, for short: $\GFC$.
Its definition is in the same spirit as the logic $\Logic{GF}$ (the guarded fragment of first-order logic; see~\cite{Graedel1999}) and the logic $\Logic{GF(L)}$ (the guarded fragment of any logic $\Logic{L}$ that is a subset of first-order logic; see~\cite{Gottlob2003}).
Here, we use a similar notation as in~\cite{Gottlob2003}, but adapt it in order to obtain a reasonable notion of \enquote{guarded fragment of $\CFO$}.

Let $\VAR \isdef \set{ \var_i \mid i \in \natpos }$ be the set of \emph{variables}.
We call a tuple $\vartup$ of $m$ \emph{distinct} variables of the form $(\var_{i_1}, \dots, \var_{i_m}) \in \VAR^{m}$ a \emph{variable tuple}, and we let $\vset(\vartup) \isdef \set{ \var_{i_1}, \dots, \var_{i_m} }$.
Recall that at the beginning of \cref{sec:RCR} we have chosen an arbitrary (relational) signature that is fixed throughout the rest of this paper.

\begin{definition}[Syntax of $\GFC$]\label{def:logic-syntax}
	The logic $\GFC$ is inductively defined along with the \emph{free variables} and the \emph{guard-depth}, formalized by the functions $\free\colon \GFC \to \mypot{\VAR}$ and $\gd\colon \GFC \to \nat$.
	\begin{description}
		\settowidth{\myA}{$\existsgeq{n}\, \vartup \qsep (\LogGuard \land \psi)$\; }%
		\item[Atomic Formulas:] For all $R \in \sig$ with $\ell \isdef \ar(R)$, all $x_1, \dots, x_\ell \in \VAR$ and all $x,y \in \VAR$, the following formulas $\phi$  (of signature $\sig$) are in $\GFC$: \ $\phi$ is of the form
		\begin{rules}[resume=logic-syntax,noitemsep,topsep=0.5ex]
			\item\label{def:logic-syntax:rules:relation}
			\makebox[\myA][l]{$R(x_1, \dots, x_\ell)$ } 
			with\,
			$\free(\phi) \isdef \set{ x_1, \dots, x_\ell }$ 
			\,and\,
			$\gd(\phi) = 0$;

			\item\label{def:logic-syntax:rules:equality}
			\makebox[\myA][l]{$x \lequal y$ } 
			with\,
			$\free(\phi) \isdef \set{ x, y }$ 
			\,and\,
			$\gd(\phi) = 0$.
		\end{rules}
		\item[Inductive Rules:] Let $\chi, \psi$ be formulas (of signature $\sig$) in $\GFC$.
		The following formulas $\phi$ (of signature $\sig$) are in $\GFC$: \ $\phi$ is of the form
		\begin{rules}[resume=logic-syntax, noitemsep,topsep=0.5ex]
			\item\label{def:logic-syntax:rules:negation}
			\makebox[\myA][l]{ $\lnot \chi$ }
			with\,
			$\free(\phi) \isdef \free(\chi)$
			\,and\,
			$\gd(\phi) = \gd(\chi)$;

			\item\label{def:logic-syntax:rules:conjunction}
			\makebox[\myA][l]{ $(\chi \land \psi)$ } 
			with\,
			$\free(\phi) \isdef \free(\chi) \union \free(\psi)$
			\,and\,
			$\gd(\phi) = \max(\gd(\chi), \gd(\psi))$.
		\end{rules}
		An atomic formula $\LogGuard$ (of signature $\sig$) of the form $R(x_1,\ldots,x_\ell)$ in $\GFC$ is called \emph{a guard for $\psi$}, if $\free(\psi) \subseteq \free(\LogGuard)$.
		Let $n \in \natpos$ and let $\LogGuard$ be a guard for $\psi$.
		For every \emph{variable tuple} $\vartup$ with $\vset(\vartup) \subseteq \free(\LogGuard)$, the following formula $\phi$ (of signature $\sig$) is in $\GFC$: \ $\phi$ is of the form
		\begin{rules}[resume=logic-syntax]
			\item\label{def:logic-syntax:rules:quantification}
			\makebox[\myA][l]{
				$\existsgeq{n}\, \vartup \qsep (\LogGuard \land \psi)$
			}
			with\,
			$\free(\phi) \isdef \free(\LogGuard) \setminus \vset(\vartup)$
			\,and\,
			$\gd(\phi) = \gd(\psi) + 1$.
			\endDefinition%
		\end{rules}
	\end{description}
\end{definition}
In this paper we assume w.l.o.g.\ that the variable tuple $\vartup = (\var_{i_1}, \dots, \var_{i_m})$ after a quantifier $\existsgeq{n}$ is \emph{ordered}, i.e., $i_1 < \cdots < i_m$.
This has no effect on the semantics, but simplifies some arguments.
We write $\existseq{n}\, \vartup \qsep (\LogGuard \land \phi)$ as shorthand for $\big(\,\existsgeq{n}\, \vartup \qsep (\LogGuard \land \phi) \ \;\land \;\lnot \,\existsgeq{n{+}1}\, \vartup \qsep (\LogGuard \land \phi)\,\big)$.
We omit parentheses in the usual way.

\begin{definition}[Semantics of $\GFC$]\label{def:logic-semantics}
	A \emph{$\sig$-interpretation} is a tuple $\Int = (\A, \assignment)$ consisting of a $\sig$-structure $\A$ and a function $\assignment: \VAR \to V(\A)$.
	Formulas (of signature $\sig$) in $\GFC$ are evaluated on $\sig$-interpretations $\Int$.
	We write $\Int \models \phi$ to denote that $\Int$ \emph{satisfies} $\phi$, and $\Int \not\models \phi$ to denote that $\Int$ does not satisfy $\phi$.
	By $\Isubst{(a_1, \dots, a_\ell)}{(\var_{i_1}, \dots, \var_{i_\ell})}$ we denote the $\sig$-interpretation $(\A, \assignment')$ with $\assignment'(\var_{i_j}) \isdef a_j$ for all $j\in[\ell]$, and $\assignment'(x) \isdef \assignment(x)$ for all $x \in \VAR \setminus \set{ \var_{i_1}, \dots, \var_{i_\ell} }$.
	The semantics of formulas in $\GFC$ are inductively defined as follows:
	\begin{description}[font={\mdseries\itshape},noitemsep,topsep=1ex]
		\item[\Cref{def:logic-syntax:rules:relation}:]
		$\Int \models R(x_1, \dots, x_\ell) \iff (\assignment(x_1), \dots, \assignment(x_\ell)) \in R^\A$.

		\item[\Cref{def:logic-syntax:rules:equality}:]
		$\Int \models x \lequal y \iff \assignment(x) = \assignment(y)$.

		\item[\Cref{def:logic-syntax:rules:negation,def:logic-syntax:rules:conjunction}:]
		$\Int \models \lnot \chi \iff \Int \not\models \chi$.
		\;
		$\Int \models (\chi \land \psi) \iff \Int \models \chi$ and $\Int \models \psi$.

		\item[\Cref{def:logic-syntax:rules:quantification}:]
		$\Int \models \existsgeq{n} \vartup \qsep (\LogGuard \land \psi)$ $\iff$ there are at least $n$ tuples $\at\in {V(\A)}^{\ar(\vartup)}$ such that $\Isubst{\at}{\vartup} \models (\LogGuard \land \psi)$.
		\endDefinition%
	\end{description}
\end{definition}
We will use the following conventions throughout the paper:
$\phi(x_1, \dots, x_k)$ denotes that $\free(\phi) \subseteq \set{ x_1, \dots, x_k }$; and $\A, (a_1, \dots, a_k) \models \phi(x_1, \dots, x_k)$ denotes that $(\A, \assignment) \models \phi$ where $\assignment$ is \emph{any} assignment with $\assignment(x_i) \isdef a_i$ for all $i \in [k]$.
A \emph{sentence} is a formula $\phi \in \GFC$ that has no free variables, i.e., $\free(\phi) = \emptyset$.
If $\phi$ is a sentence, we write $\A \models \phi$ to denote that $(\A, \assignment) \models \phi$ for \emph{any} assignment $\assignment$ (since $\assignment$ does not matter in this case).
We write $\A \equiv_{\GFC} \B$ to denote that the $\sig$-structures $\A$ and $\B$ satisfy the same sentences (of signature $\sig$) in $\GFC$.
Finally, we say that $\A$ and $\B$ are \emph{distinguishable} in $\GFC$, if $\A \not\equiv_{\GFC} \B$.
\begin{example}

Consider the formula $
	\phi \isdef \existsgeq{1}(\var_1, \var_2, \var_3, \var_4, \var_5, \var_6) \qsep \bigl(
			R(\var_1, \var_2, \var_3, \var_4, \var_5, \var_6)
			\land \bigl(
				E(\var_1, \var_2) 
				\land (\,E(\var_2, \var_3) 
				\land E(\var_3, \var_1)\,)
			\bigr)
		\bigr).
$
Clearly, $\psi \isdef \bigl( E(\var_1, \var_2) \land (\,E(\var_2, \var_3) \land E(\var_3, \var_1)\,)\bigr)$ is a formula in $\GFC$ with $\free(\psi) = \set{ \var_1, \var_2, \var_3 }$ and $\gd(\psi) = 0$.

Thus, $R(\var_1, \var_2, \var_3, \var_4, \var_5, \var_6)$ is a guard for $\psi$.
Hence, $\phi \in \GFC$ with $\free(\phi) = \emptyset$, and $\gd(\phi) = 1$.
The formula $\phi$ states that there is at least one tuple (of arity $6$) in $R$ such the first 3 entries of this tuple form a triangle w.r.t.\ relation $E$.
Hence, $\SpExa \models \phi$ but $\SpExb \not\models \phi$, where $\SpExa$, $\SpExb$ are the structures from \cref{ex:simple-structures}.
Thus, $\SpExa$ and $\SpExb$ are distinguishable in $\GFC$.\endExample{} \end{example}

\subsection{The Guarded-Game}\label{sec:game}
Our ultimate goal in \cref{sec:main-logic} is to prove for any two $\sig$-structures $\A$ and $\B$ that RCR distinguishes $\A$ and $\B$ $\iff$ $\A\not\equiv_{\GFC} \B$.
Similarly to the proof of \cref{thm:CFIL}, our proof will use, as an intermediate step, a game characterization of (in)distinguishability of two $\sig$-structures in $\GFC$. %
We call this game the \emph{Guarded-Game}; it is defined as follows.
It is played on two $\sig$-structures $\A$, $\B$.
A \emph{configuration} of the Guarded-Game is a tuple of the form $((\A, \at), (\B, \bt))$, where $\A$ and $\B$ are the given $\sig$-structures and $\at \in {V(\A)}^k$, $\bt \in {V(\B)}^k$ for some $k \in \nat$.
A configuration $((\A, \at), (\B, \bt))$ is called \emph{distinguishing}, if $\stp(\at) \neq \stp(\bt)$ or there are a $\ell\in[\ar(\sig)]$ and indices $i_1,\ldots,i_\ell\in[k]$ such that $\atp((a_{i_1}, \dots, a_{i_\ell})) \neq \atp((b_{i_1}, \dots, b_{i_\ell}))$.
We may omit parentheses if they are clear from the context.
If $k=0$, we write $\A, \B $ for the configuration $((\A, ()), (\B, ()))$, and we call this the \emph{empty} configuration; note that this configuration is \emph{not} distinguishing.

A \emph{round} of the Guarded-Game is played as follows:
Consider $(\A, \at)$, $(\B, \bt)$ to be the configuration at the beginning of the round.
Spoiler picks a relation symbol $R \in \sig$.
Then, Duplicator provides a bijection $\pi$ between $R^\A$ and $R^\B$.
Finally, Spoiler picks some $\at' \in R^\A$ and creates the new configuration $(\A, \at')$, $(\B, \bt')$ where $\bt' \isdef \pi(\at')$.
Duplicator wins this round if the new configuration is not distinguishing and $\stp(\at, \at') = \stp(\bt, \bt')$.
Otherwise, Spoiler has won.
In particular, Spoiler wins the round if Duplicator is unable to provide a bijection because $\card{R^\A} \neq \card{R^\B}$.

Duplicator has a $0$-round \emph{winning strategy} on $(\A, \at)$, $(\B, \bt)$ if the configuration is not distinguishing.
For $i\geq 1$, Duplicator has an $i$-round winning strategy on $(\A, \at)$, $(\B, \bt)$ if this configuration is not distinguishing, and she can provide a bijection $\pi$ for every $R \in \sig$ that Spoiler may pick, such that for every $\at'$ and $\bt' \deff \pi(\at')$ that Spoiler may choose, she wins the current round and has an $(i{-}1)$-round winning strategy on the resulting configuration $(\A, \at')$, $(\B, \bt')$.
Spoiler has an $i$-round winning-strategy on $(\A, \at)$, $(\B, \bt)$, if Duplicator does not have one.
In particular, if Spoiler has a winning strategy for $i$ rounds, he also has a winning strategy for more than $i$ rounds.
If $\A$ and $\B$ are not of strictly equal size, then Spoiler has a trivial $1$-round winning strategy, because Duplicator is unable to give a bijection in the first round.
We say that Duplicator \emph{wins} the Guarded-Game on $(\A, \at)$, $(\B, \bt)$ if she has an $i$-round winning strategy on $(\A, \at)$, $(\B, \bt)$ for every $i \in \nat$.
\begin{example}

The pair $(\SpExa,\, (1,2))$, $(\SpExb, (u,v))$ is a configuration.
Note that $\stp((1,2)) = \set{ (1,1), (2,2) }[] = \stp((u,v))$ and $\atp((1,2)) = \set{ E }[] = \atp((u,v))$.
Furthermore, $\atp(\at') = \emptyset$ for every tuple $\at'$ of arity $\leq 6 = \ar(\sigEx)$ with entries in $\set{ 1, 2 }[]$ or $\set{ u, v }[]$ such that $\at'\neq (1,2)$ and $\at' \neq (u,v)$.
Therefore, this configuration \emph{not} distinguishing.
Hence, Duplicator has a 0-round winning strategy on $(\SpExa, (1,2))$, $(\SpExb, (u,v))$.

Consider the (empty) configuration $\SpExa$, $\SpExb$, which is obviously not distinguishing.
Hence, Duplicator has a 0-round winning strategy.
But, Spoiler has a 1-round winning strategy that goes as follows:
Spoiler picks $R$ as relation.
Since $R^{\SpExa} = \set{ (1,2,3,u,v,w) }[] = R^{\SpExb}$, Duplicator can only give the identity as bijection.
Spoiler creates the configuration $(\SpExa, (1,2,3,u,v,w))$, $(\SpExb, (1,2,3,u,v,w))$.
Since the atomic type of $(2,3)$ is $\set{ E }[]$ in $\SpExa$ and $\emptyset$ in $\SpExb$, this configuration is distinguishing.
Thus, Spoiler wins this round, and wins the Guarded-Game on $\SpExa$, $\SpExb$.%
\endExample{} \end{example}
 	\subsection{Relational Color Refinement is equivalent to \texorpdfstring{$\GFC$}{GF(C)}} %
In this section we prove \cref{thm:main-logic}.
\mainlogic{}
We prove the theorem by showing that the implication chain $1 \Rightarrow 2 \Rightarrow 3 \Rightarrow 1$ holds.
For this, we use the following three lemmas;
their proofs are inductive and quite similar to the way the analogous result on graphs is shown, thus we defer their proofs to~\cref{app:main-logic}.
The \emph{arity} $\ar(c)$ (atomic type $\atp(c)$, similarity type $\stp(c)$) of a color $c\in \cols{i}{\A}$ is defined as the arity (atomic type, similarity type) of the tuples in $\tA$ that receive this color.
\begin{restatable}{lemma}{describingColorInGFC}\label{lem:describing-color-in-gfc}
	For every $\sig$-structure $\A$, every $i \in \nat$, and every $c \in \cols{i}{\A}$ of arity $k$, there exists a formula $\phi^{i}_{c}(\xt) \in \GFC$ with $\xt = (x_1, \dots, x_k)$ such that for every $\sig$-structure $\B$ of size strictly equal to $\A$ and every $\bt \in \tB$ of arity $k$ we have:
	$\B, \bt \models \phi^{i}_{c}(\xt)$ $\iff$ $\col{i}{\bt} = c$.
\end{restatable}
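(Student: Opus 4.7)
The plan is to proceed by induction on $i$, constructing $\phi^i_c$ uniformly for every $c \in \cols{i}{\A}$ from the formulas $\phi^{i-1}_{c'}$ provided by the induction hypothesis. For the base case $i = 0$, a color $c$ decomposes into an atomic type $\rho$ and a similarity type $\tau$ of some arity $k$, and I would take
\[
\phi^0_c(\xt) \;\isdef\; \bigland_{R \in \rho} R(\xt) \;\land\; \bigland_{R \in \sigma|_k \setminus \rho} \lnot R(\xt) \;\land\; \bigland_{(i,j) \in \tau} x_i \lequal x_j \;\land\; \bigland_{(i,j) \in [k]^2 \setminus \tau} \lnot(x_i \lequal x_j),
\]
which lies in $\GFC$ by rules~\ref{def:logic-syntax:rules:relation}--\ref{def:logic-syntax:rules:conjunction} and captures $\col{0}{\bt} = c$ on the nose.

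For the inductive step, I would write $c = (c', N)$ with $c' \in \cols{i-1}{\A}$ the predecessor color and $N = \Nset{i}{\A}(\at)$ for any witness $\at \in \tA$, and set
\[
\phi^i_c(\xt) \;\isdef\; \phi^{i-1}_{c'}(\xt) \;\land\; \bigland_{(\tau, c'')} \psi_{\tau, c''}(\xt) \;\land\; \bigland_{(\tau, \rho'')} \chi_{\tau, \rho''}(\xt),
\]
where the first conjunction ranges over non-empty similarity types $\tau$ of arity $(k, \ar(c''))$ and colors $c'' \in \cols{i-1}{\A}$, and the second over non-empty $\tau$ and atomic types $\rho''$ of appropriate arity. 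To build $\psi_{\tau, c''}$ I would pick any $R \in \atp(c'')$ (non-empty since every tuple inhabits some relation), form $\tup{y} = (y_1, \ldots, y_{\ar(c'')})$ by setting $y_j \isdef x_i$ for the least $i$ with $(i, j) \in \tau$ and a fresh variable otherwise, let $\vartup$ be the ordered tuple of the fresh variables, and put
\[
\psi_{\tau, c''}(\xt) \;\isdef\; \existseq{M_{(\tau, c'')}} \vartup \qsep \bigl( R(\tup{y}) \;\land\; \phi^{i-1}_{c''}(\tup{y}) \bigr),
\]
with target $M_{(\tau, c'')} \isdef \sum_{\tau^* \supseteq \tau} \mult{N}((\tau^*, c''))$. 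The clauses $\chi_{\tau, \rho''}$ are identical in shape but with $\phi^{i-1}_{c''}(\tup{y})$ replaced by the quantifier-free formula fixing $\atp(\tup{y}) = \rho''$ and an analogous total target taken from $\A$. Here $R(\tup{y})$ is a bona-fide guard because the only free variables of the inner formula are those of $\tup{y}$ (after alpha-renaming bound variables of $\phi^{i-1}_{c''}$), which coincide with $\free(R(\tup{y}))$.

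Correctness will split cleanly: the $\Leftarrow$-direction is immediate by construction. For $\Rightarrow$, $\phi^{i-1}_{c'}(\bt)$ yields $\col{i-1}{\bt} = c'$ by the induction hypothesis; the $\psi$-clauses pin down, for every $c'' \in \cols{i-1}{\A}$ and every $\tau$, the count of $\bt' \in \tB$ with $\stp(\bt, \bt') \supseteq \tau$ and $\col{i-1}{\bt'} = c''$; and the $\chi$-clauses do so per atomic type $\rho''$. A M\"obius-inversion argument on the poset of similarity types (ordered by $\supseteq$) converts these \enquote{superstp} counts into exact-stp counts matching $N$ on the known-color part, while the $\chi$-clauses then force contributions of colors outside $\cols{i-1}{\A}$ to vanish, so $\Nset{i}{\B}(\bt) = N$ and hence $\col{i}{\bt} = c$.

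The main obstacle I anticipate is the guard discipline of $\GFC$: a naive formulation would include disequalities $\lnot(x_i \lequal y_j)$ inside the counting quantifier to pin $\stp(\xt, \bt)$ to $\tau$ exactly, but these disequalities can mention free variables $x_i$ absent from the guard $R(\tup{y})$, breaking the condition $\free(\psi) \subseteq \free(\LogGuard)$. Circumventing this via the \enquote{superstp} counts and a M\"obius-inversion argument is, in my view, the crux; verifying it further requires (i) carefully justifying the bijection between witnesses of $\existseq{M}$ and the counted neighbors $\bt$, (ii) checking that $\phi^{i-1}_{c''}(\tup{y})$ remains syntactically in $\GFC$ after alpha-renaming, and (iii) invoking the strictly-equal-size hypothesis to guarantee that the $\chi$-clauses' $\A$-targets really do exhaust the $\rho''$-typed neighbor slots in $\B$.
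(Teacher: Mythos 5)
Your plan follows the paper's proof in outline: induction on $i$; in the base case, atomic-type and similarity-type formulas; in the inductive step, guard the counting quantifier with a relation symbol from $\atp(c'')$ and encode the \enquote{superstp} quantities $\sum_{\tau^* \supseteq \tau} \mult{N}((\tau^*, c''))$ rather than exact multiplicities. That matching all superstp counts pins down the multiset $N$ is the paper's Claim~a, proved by a $\supseteq$-maximality argument --- precisely your M\"obius inversion over the poset of similarity types. The one genuine divergence is how you treat colors of $\B$ that are absent from $\A$: the paper strengthens the induction hypothesis to quantify over the finite set $\myC_{i-1}$ of colors realizable in \emph{any} structure of strictly-equal size, so that $\phi^{i-1}_{d}$ is available for every $d$ that could occur in $\B$ and no extra clauses are needed; you keep the IH over $\cols{i-1}{\A}$ and compensate with the atomic-type $\chi$-clauses. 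Your accounting sketch of why this works is correct, and it is a legitimate alternative --- it even avoids having to observe that $\myC_{i-1}$ is finite.

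Two cautions when filling in the details. First, the paper sidesteps alpha-renaming entirely by building every $\phi^i_c(\xt)$ over a fixed stock of $2m$ complementary variables, so that a substituted tuple $\xt'$ uses the complement $\compl{x_j}$ of a variable never occurring in $\xt$; appealing to alpha-renaming inside $\GFC$, with its rigid variable supply and guard discipline, needs more care than you credit. Second, the substitution $y_j \isdef x_{\min\setc{i}{(i,j)\in\tau}}$ encodes only a transversal of $\tau$: your variable repetition $y_j = y_{j'}$ already enforces the right-hand equalities that $\tau$ imposes on $\bt'$, but the left-hand equalities among components of $\bt$ are enforced only when they are contained in $\stp(c')$ (which $\phi^{i-1}_{c'}$ imposes on $\bt$). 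For $\tau$ not compatible with $c'$ in this sense the target is $0$ while the formula's count can be positive, so you should restrict the conjunction to $c'$-compatible $\tau$ (and, symmetrically, $c''$-compatible on the right) --- which loses nothing, since incompatible $(\tau,c'')$ have multiplicity zero in both $N$ and $\Nset{i}{\B}(\bt)$ whenever $\col{i-1}{\bt}=c'$. The paper's Claim~b rests on the analogous compatibility assumptions implicitly, so spelling this out would tighten both arguments.
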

\vspace{-1ex}

\begin{restatable}{lemma}{formulaIsSpoilerStrategy}\label{lem:formula-is-spoiler-strategy}
	Let $\A$ and $\B$ be $\sig$-structures of strictly equal size and let $\at \in {V(\A)}^k$, $\bt\in {V(\B)}^k$ be arbitrary tuples of arity $k$.
	Let $\xt=(x_1,\ldots,x_k)$ be a tuple of $k$ distinct variables.
	If there exists a formula $\phi \in \GFC$ with $\free(\phi)\subseteq\set{x_1,\ldots,x_k}$ such that $\A, \at \models \phi(\xt) {\iff}\; \B, \bt \not\models \phi(\xt)$, then Spoiler has a $\gd(\phi)$-round winning strategy for the Guarded-Game on $(\A, \at)$, $(\B, \bt)$.
\end{restatable}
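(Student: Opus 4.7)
The plan is to proceed by induction on the structure of $\phi$. Throughout, assume without loss of generality that $\A, \at \models \phi$ while $\B, \bt \not\models \phi$.

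For the base case $\gd(\phi) = 0$, the formula is a Boolean combination of atomic formulas $R(\ldots)$ and equalities $x \lequal y$, all built from variables in $\{x_1, \ldots, x_k\}$. If the configuration $((\A, \at), (\B, \bt))$ were not distinguishing, then every projection $(a_{i_1}, \ldots, a_{i_\ell})$ and $(b_{i_1}, \ldots, b_{i_\ell})$ would share atomic type, and $\stp(\at) = \stp(\bt)$ would force equalities to evaluate identically; hence every atomic formula, and therefore every Boolean combination of such, would agree on the two sides, contradicting the choice of $\phi$. So the configuration is already distinguishing and Spoiler wins in zero rounds. The inductive cases $\phi = \lnot \chi$ and $\phi = \chi_1 \land \chi_2$ are handled routinely by passing to a distinguishing subformula (for conjunction, the conjunct that fails on the side where $\phi$ fails), whose guard-depth does not exceed $\gd(\phi)$.

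The interesting case is $\phi = \existsgeq{n}\, \vartup \qsep (\LogGuard \land \psi)$ with $\LogGuard = R(y_1, \ldots, y_\ell)$. Let $\psi^*(z_1, \ldots, z_\ell)$ denote the formula obtained from $\psi$ by renaming each $y_i$ to a fresh variable $z_i$; then $\psi^* \in \GFC$ with $\gd(\psi^*) = \gd(\psi) = \gd(\phi) - 1$. Spoiler's first move is to pick the relation symbol $R$, and Duplicator must respond with a bijection $\pi \colon R^\A \to R^\B$. Call a tuple $\tup{r} \in R^\A$ \emph{consistent with $\at$} if $\tup{r}_i = a_j$ whenever $y_i$ is a free variable $x_j \in \free(\phi)$ and $\tup{r}_i = \tup{r}_{i'}$ whenever $y_i$ and $y_{i'}$ are the same variable. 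The consistent tuples are exactly the ones that arise from a witness $\at'$ of the quantifier. Let $T_\A$ be the set of $\tup{r} \in R^\A$ that are consistent with $\at$ and satisfy $\A, \tup{r} \models \psi^*$, and define $T_\B$ analogously with respect to $\bt$. The assumption $\A, \at \models \phi$ and $\B, \bt \not\models \phi$ gives $|T_\A| \geq n > |T_\B|$.

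By pigeonhole there exists some $\tup{r}^\A \in T_\A$ whose image $\tup{r}^\B \isdef \pi(\tup{r}^\A)$ lies outside $T_\B$; Spoiler picks this $\tup{r}^\A$. If Duplicator loses this round --- because the new configuration $((\A, \tup{r}^\A), (\B, \tup{r}^\B))$ is distinguishing or $\stp(\at, \tup{r}^\A) \neq \stp(\bt, \tup{r}^\B)$ --- Spoiler wins outright in this single round. Otherwise $\stp(\at, \tup{r}^\A) = \stp(\bt, \tup{r}^\B)$, which transports consistency with $\at$ into consistency with $\bt$ (consistency is entirely captured by which components of $\tup{r}$ coincide with which components of $\at$, i.e., by the similarity type); combined with $\tup{r}^\B \notin T_\B$ this forces $\B, \tup{r}^\B \not\models \psi^*$. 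Thus $\psi^*$ separates the new configuration, and the induction hypothesis applied to $\psi^*$ yields a $\gd(\psi)$-round winning strategy for Spoiler on $((\A, \tup{r}^\A), (\B, \tup{r}^\B))$, for a total of $1 + \gd(\psi) = \gd(\phi)$ rounds. The main obstacle is the bookkeeping in the quantifier case: the consistency condition must be set up so that the correspondence between quantifier witnesses $\at'$ and tuples $\tup{r} \in R^\A$ is bijective (which matters because the arity of $\vartup$ may be strictly less than $\ell$, and the $y_i$ may have repetitions), and one must then check that equality of similarity types is strong enough to transport consistency from $\at$ to $\bt$; everything else follows the standard Ehrenfeucht-style template.
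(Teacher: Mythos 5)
Your proof is correct and follows essentially the same line of argument as the paper's: structural induction, with negation and conjunction handled by passing to a distinguishing subformula, and the guarded quantifier handled by having Spoiler pick $R$, pigeonhole against Duplicator's bijection to find a witness whose image fails, and then split into the case where Spoiler wins the round immediately versus the case where the inner formula separates the new configuration and the induction hypothesis applies. The only divergence is bookkeeping: where the paper works directly with $\psi(\xt')$ and the fixed similarity type $\mystype = \stp(\xt,\xt')$, you rename the guard variables to a fresh distinct tuple $(z_1,\dots,z_\ell)$ and phrase the witness correspondence via an explicit "consistency with $\at$'' condition. This is arguably a bit cleaner, since it makes the bijection between quantifier witnesses and guard tuples explicit and ensures the inductive hypothesis is applied with a tuple of genuinely distinct variables, a point the paper's write-up leaves implicit. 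Also note that you fold the two atomic base cases of the paper into a single guard-depth-zero base case, which is harmless since the negation/conjunction inductive cases remain available for the depth-positive levels.
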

\vspace{-1ex}

\begin{restatable}{lemma}{sameColorIsDupStrategy}\label{lem:same-color-is-dup-strategy}
	Let $i \in \nat$, let $\A$ and $\B$ be $\sig$-structures of strictly equal size, and let $\at \in \tA$, $\bt \in \tB$ be tuples of arity $k$.
	If $\col{1}{\at} = \col{1}{\bt}$, then the configuration $(\A, \at)$, $(\B, \bt)$ is not distinguishing.
	Further, if $\col{i+1}{\at} = \col{i+1}{\bt}$ and $\ccount{\A}{c} = \ccount{\B}{c}$ for all $c \in \cols{i}{\A} \union \cols{i}{\B}$, then Duplicator has an $i$-round winning strategy for the Guarded-Game on $(\A, \at)$, $(\B, \bt)$.
\end{restatable}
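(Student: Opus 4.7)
The plan is to prove the two statements separately, with the first providing the base case for an induction that establishes the second.

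For the first statement, I unpack $\col{1}{\at} = \col{1}{\bt}$ to obtain $\col{0}{\at} = \col{0}{\bt}$ (giving $\stp(\at) = \stp(\bt)$) together with the multiset identity $\Nset{1}{\A}(\at) = \Nset{1}{\B}(\bt)$. To rule out the second way a configuration can be distinguishing, suppose for contradiction there were indices $i_1, \ldots, i_\ell \in [k]$ and $R \in \sig$ of arity $\ell$ witnessing $(a_{i_1}, \ldots, a_{i_\ell}) \in R^\A$ but $(b_{i_1}, \ldots, b_{i_\ell}) \notin R^\B$. Setting $\at' := (a_{i_1}, \ldots, a_{i_\ell})$, we have $\at' \in \tA$ and $\stp(\at, \at')$ contains $(i_j, j)$ for every $j$, so $(\stp(\at, \at'), \col{0}{\at'})$ appears in $\Nset{1}{\A}(\at)$. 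Matching it on the $\B$-side produces a $\bt' \in \tB$ with $\stp(\bt, \bt') = \stp(\at, \at')$, forcing $\bt' = (b_{i_1}, \ldots, b_{i_\ell})$, and with $\atp(\bt') = \atp(\at') \ni R$, contradicting the choice of indices.

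For the second statement I induct on $i$. The base case $i = 0$ asks for a $0$-round winning strategy, which by definition is just that the configuration be non-distinguishing; this is delivered by the first statement, since $\col{1}{\at} = \col{1}{\bt}$. For the inductive step, in each round Spoiler picks some $R \in \sig$, and the task is to produce a bijection $\pi \colon R^\A \to R^\B$ such that every pair $(\at', \bt' := \pi(\at'))$ is admissible. I partition $R^\A$ (resp.\ $R^\B$) by the pair $(c, \sigma) := (\col{i}{\at'}, \stp(\at, \at'))$ (resp.\ $(\col{i}{\bt'}, \stp(\bt, \bt'))$). For $\sigma \neq \emptyset$, the $(c, \sigma)$-class size equals the multiplicity of $(\sigma, c)$ in $\Nset{i+1}{\A}(\at)$ (restricted to colors $c$ with $R \in \atp(c)$), which matches on both sides by the hypothesis $\col{i+1}{\at} = \col{i+1}{\bt}$. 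For $\sigma = \emptyset$ the class size equals $\ccount{\A}{c}$ minus the total overlap count, so it also matches by the round-$i$ multiplicity hypothesis. Picking any bijection respecting this joint partition yields pairs with $\col{i}{\at'} = \col{i}{\bt'}$ and $\stp(\at, \at') = \stp(\bt, \bt')$.

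It remains to verify that Duplicator retains an $(i{-}1)$-round winning strategy on each resulting $(\A, \at'), (\B, \bt')$. By the first statement applied to $\at', \bt'$, the new configuration is non-distinguishing, since $\col{i}{\at'} = \col{i}{\bt'}$ implies $\col{1}{\at'} = \col{1}{\bt'}$. To invoke the inductive hypothesis at level $i{-}1$, the remaining requirement is $\ccount{\A}{c'} = \ccount{\B}{c'}$ for all $c' \in \cols{i-1}{\A} \cup \cols{i-1}{\B}$, which follows by summing the round-$i$ multiplicities over the disjoint colors at round $i$ that refine each $c'$ (and noting that $\cols{i}{\A} = \cols{i}{\B}$ follows from the round-$i$ multiplicity hypothesis). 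The main delicate point of the whole argument is precisely the handling of tuples in $R^\A$ disjoint from $\at$: these are invisible to $\Nset{i+1}{\A}(\at)$, and it is exactly here that the multiplicity hypothesis on round-$i$ colors is needed to match them consistently on the two sides.
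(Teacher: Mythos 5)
Your proposal is correct and follows essentially the same strategy as the paper's proof: the non-distinguishing claim is reduced to a multiplicity-mismatch contradiction in $\Nset{1}{\A}(\at)$ versus $\Nset{1}{\B}(\bt)$ using the fact that the similarity type $\stp(\at,\at')$ determines $\at'$ uniquely, and the induction constructs a bijection class-by-class, using $\Nset{i+1}{\A}(\at) = \Nset{i+1}{\B}(\bt)$ for tuples overlapping $\at$ and the global round-$i$ color-count hypothesis for the disjoint ones. The only cosmetic difference is that you partition $R^\A$ jointly by $(\col{i}{\cdot}, \stp(\at,\cdot))$ in one go, whereas the paper first builds the bijection on the overlapping part $X$ and then extends it to $R^\A \setminus X$; these are the same argument organized differently.
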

\begin{proof}[Proof of~\cref{thm:main-logic}]
	If $\A$ and $\B$ are \emph{not} of strictly equal size, it is easy to see, that RCR distinguishes them, that a suitable sentence of the form $\existsgeq{n} (\var_1, \dots, \var_k) \qsep (R(\var_1, \dots, \var_k) \land\, \var_1{=}\var_1)$ distinguishes $\A$ and $\B$, and that Spoiler has a $1$-round winning strategy in the Guarded-Game on $\A$, $\B$.
	Thus, it suffices to consider the case where $\A$ and $\B$ are of strictly equal size.

\enquote{1$\implies$2}:
If RCR distinguishes $\A$ and $\B$, then there exists an $i\in\NN$ and a color $c \in \cols{i}{\A} \union \cols{i}{\B}$ such that $n \isdef \card{\set{ \at \in \tA \mid \col{i}{\at} = c }} \neq \card{\set{ \bt \in \tB \mid \col{i}{\bt} = c }}$.
W.l.o.g.\ we assume that $n\neq 0$.
Note that $\A \models \phi$ and $\B \not\models \phi$ for $\phi \isdef \existseq{n} \xt \qsep (R(\xt) \land \phi^{i}_{c}(\xt))$, where $R \in \atp(c)$ and $\phi^{i}_{c}(\xt)$ is chosen according to \cref{lem:describing-color-in-gfc}.

\enquote{2$\implies$3}: This is obtained from \cref{lem:formula-is-spoiler-strategy} for $k=0$ and $\at=\bt=()$.

\enquote{3$\implies$1}:
We show the contraposition.
By assumption, RCR does \emph{not} distinguish $\A$ and $\B$.
Let $j\deff\max\set{i_\A,i_\B}$.
Let us consider the Guarded-Game starting with configuration $\A$, $\B$.
In round 0, Duplicator trivially wins the  because the configuration is empty.
In round 1, for any $R \in \sig$ that Spoiler might choose, Duplicator can provide a bijection $\pi$ between $R^\A$ and $R^\B$ such that for every $\at\in R^\tA$ we have $\col{j}{\at} = \col{j}{\pi(\at)}$ (this follows since RCR does not distinguish $\A$ and $\B$).
For any $\at\in R^\A$ and $\bt\deff\pi(\at)$ that Spoiler might choose, we obtain from the first statement of \cref{lem:same-color-is-dup-strategy} that $(\A, \at)$, $(\B, \bt)$ is not distinguishing.
Hence, Duplicator can win round 1.
By our choice of $j$ we have $\col{i+1}{\at} = \col{i+1}{\bt}$ for all $i\in\NN$.
Since RCR does not distinguish $\A$ and $\B$, the assumption for the second statement of \cref{lem:same-color-is-dup-strategy} is satisfied for \emph{every} $i\in\NN$.
Hence, Duplicator has an $i$-round winning strategy on $(\A, \at), (\B, \bt)$, for every $i\in\NN$.
Consequently, Duplicator wins the Guarded-Game on $\A$, $\B$.
\end{proof}
  
	\section{Implementing RCR in \texorpdfstring{$\bigO(\size{\A} \cdot \log(\size{\A}))$}{|A| log|A|}}\label{sec:main-runtime}
Let $\sig$ be a fixed relational signature.
This section is devoted to proving our following third main result:
\mainruntime{}%
The factor hidden by the $\bigO$-notation is of size
$2^{\bigO(k\cdot \log k)}$, where $k$ is the maximum arity of a relation symbol in $\sig$.

Recall from \cref{sec:RCR} that by performing classical Color Refinement on the colored multigraph $\grep{\A}$ we can implement RCR on a $\sig$-structure $\A$ with runtime $\bigO((n+m) \cdot \log(n))$, where $n$ denotes the number of vertices and $m$ denotes the total number of edges of $\grep{\A}$.

Note that the nodes $w_\at$ for all tuples $\at\in\tA$ that share an element, form a clique in $\grep{\A}$.
This causes a blow-up of the number of edges in $\grep{\A}$.
We will alleviate this by resolving every clique by inserting a constant number of \emph{fresh} vertices that are connected to all tuples participating in a clique.
This will drastically reduce the number of edges, yielding a new colored multigraph $\vgrep{\A}$ whose number of nodes and edges is in $\bigO(\size{\A})$.
For the precise definition of $\vgrep{\A}$ we need the following notation.

\begin{definition}\label{def:slice}
	Let $\A$ be a $\sig$-structure.
	We call a tuple $\slice = (s_1, \dots, s_\ell) \in {\V(\A)}^\ell$ a \emph{slice (over $\V(\A)$)}, if its elements are pairwise distinct (i.e., $\ell = \card{\tset(\slice)}$) and $\ell\geq 1$.
	We call $\slice$ a \emph{slice of $\at\in {\V(\A)}^k$}, if $\slice$ is a slice over $V(\A)$ and $\tset(\slice) \subseteq \tset(\at)$.
	For an $\at \in \tA$, we write $\slices(\at)$ for the set of all slices of $\at$, i.e., $\slices(\at) = \set{ \slice \in {\V(\A)}^\ell \mid \tset(\slice)\subseteq\tset(\at),\ \ell\geq 1,\ \card{\tset(\slice)} = \ell }$.
	Conversely, for a slice $\slice$ over $V(\A)$ we denote by $\slices^{-1}(\slice) \isdef \set{ \at \in \tA \mid \slice \in \slices(\at) }$ the set of tuples in $\tA$ that $\slice$ is a slice of.
\end{definition}

\begin{definition}\label{def:efficient-rep}
	\let\qed\relax%
	Let $\A$ be a $\sig$-structure.
	Let $\vgrep{\A}$ be the colored multigraph of signature $\repsig$ defined as follows.
	The universe $\V(\vgrep{\A})$ consists of the nodes $w_{\at}$ for all $\at \in \tA$ and a new node $v_{\slice}$ for every slice $\slice \in \slices(\tA)\deff \bigunion_{\at\in \tA} \slices(\at)$.
  I.e., $\V(\vgrep{\A}) = \set{ w_{\at} \mid \at \in \tA } \disunion \set{ v_{\slice} \mid \slice\in \slices(\tA) }$.
	Furthermore, ${(U_R)}^{\vgrep{\A}} \isdef \set{ w_{\at} \mid \at \in R^{\A} }$, for all $R \in \sig$.
	And for all $i,j \in [\ar(\sig)]$ we let ${(E_{i,j})}^{\vgrep{\A}} \isdef$
	\begin{alignat*}{3}
		&\bigl\{\, &(w_{\at}, v_{\slice}) 
			&\mid
			\at \in \tA,\,
			\slice \in \slices(\at),\,
			&(i,j) \in \stp(\at, \slice)
		\,&\bigr\}
		\;\;\union \\
		&\bigl\{\, &(v_{\slice}, w_{\bt}) 
			&\mid
			\bt \in \tA,\,
			\slice \in \slices(\bt),\,
			&(i,j) \in \stp(\slice, \bt) 
		\,&\bigr\}\;.\tag*{$\lrcorner$}
	\end{alignat*}
\end{definition}

\begin{figure}
  \centering
\begin{tikzpicture}[
	every node/.style={draw, rectangle},
	xscale=1,
	yscale=1]
	\clip (-3.5,-2.252) rectangle (3.5, 2.252);
	\tikzmath{
		\hght=1.25;
		\wdth=1.5;
		\offset=.075;
		\small=8;
		\medium=15;
		\myangle=-35;
		\distA=38pt;
		\distB=43pt;
	}
	\node[slice] (2) at (0,0) {$v_{(2)}$};
	\node[slice, above =\hght of 2] (1) {$v_{(1)}$};
	\node[slice, below =\hght of 2] (3) {$v_{(3)}$};
	\node[colorR, very thick, right =\wdth of 2] (112) {$w_{(1,1,2)}$};
	\node[colorR, very thick, left =\wdth of 2] (232) {$w_{(2,3,2)}$};
	\node[slice, above =\hght of 112] (12) {$v_{(1,2)}$};
	\node[slice, below =\hght of 112] (21) {$v_{(2,1)}$};
	\node[slice, above =\hght of 232] (23) {$v_{(2,3)}$};
	\node[slice, below =\hght of 232] (32) {$v_{(3,2)}$};

	\draw[tpFor, bend right=\small] 
		(112)
		edge
		node[midway, edgeAnnotation, above=-2pt, sloped] {$11, 21$}
	(1);
	\draw[tpFor, bend right=\small] 
		(1)
		edge
		node[midway, edgeAnnotation, below=-2pt, sloped] {$11, 12$}
	(112);

	\draw[bend right=\small] 
		($(112.west)+(0,\offset)$)
		edge[tpFor]
		node[edgeAnnotation, above=-2pt] {$31$}
	($(2.east)+(0,\offset)$);
	\draw[bend right=\small] 
		($(2.east)-(0,\offset)$)
		edge[tpFor]
		node[edgeAnnotation, below=-2pt] {$13$}
		($(112.west)-(0,\offset)$);

	\draw[bend right=\small] 
		($(232.east)-(0,\offset)$)
		edge[tpFor]
		node[edgeAnnotation, below=-2pt] {$11, 31$}
	($(2.west)-(0,\offset)$);
	\draw[bend right=\small] 
		($(2.west)+(0,\offset)$)
		edge[tpFor]
		node[edgeAnnotation, above=-2pt] {$11, 13$}
	($(232.east)+(0,\offset)$);

	\draw[tpFor, bend right=\small] 
		(232)
		edge
		node[midway, edgeAnnotation, below=-2pt, sloped] {$21$}
	(3);
	\draw[tpFor, bend right=\small] 
		(3)
		edge
		node[midway, edgeAnnotation, above=-2pt, sloped] {$12$}
	(232);

	\draw[tpFor, bend right=\small] 
		(112)
		edge
		node[midway, edgeAnnotation, above, sloped] {$11, 21, 32$}
	(12);
	\draw[tpFor, bend right=\small] 
		(12)
		edge
		node[midway, edgeAnnotation, above, sloped] {$11, 12, 23$}
	(112);

	\draw[tpFor, bend right=\small] 
		(112)
		edge
		node[midway, edgeAnnotation, below, sloped] {$12, 22, 31$}
	(21);
	\draw[tpFor, bend right=\small] 
		(21)
		edge
		node[midway, edgeAnnotation, below, sloped] {$13, 21, 22$}
	(112);

	\draw[tpFor, bend right=\small] 
		(232)
		edge
		node[midway, edgeAnnotation, above, sloped] {$11, 22, 31$}
	(23);
	\draw[tpFor, bend right=\small] 
		(23)
		edge
		node[midway, edgeAnnotation, above, sloped] {$11, 13, 22$}
	(232);

	\draw[tpFor, bend right=\small] 
		(232)
		edge
		node[midway, edgeAnnotation, below, sloped] {$12, 21, 32$}
	(32);
	\draw[tpFor, bend right=\small] 
		(32)
		edge
		node[midway, edgeAnnotation, below, sloped] {$12, 21, 23$}
	(232);

\end{tikzpicture}   %
	\caption{$\vgrep{\A}$ for the structure $\A$ considered in \cref{example:vgrepA}.}%
	\label{fig:ex:efficient-runtime-representation}
\end{figure}

\begin{example}\label{example:vgrepA}
	Consider the signature $\sig \isdef \set{ R }$ with $\ar(R)=3$.
	Then, $\repsig$ consists of the unary relation symbol $U_R$ and binary relation symbols $E_{i,j}$ for $i,j\in \set{1,2,3}$.

	Let $\A$ be the $\sig$-structure where $\V(\A) \isdef \set{ 1,2,3 }$ and $R^{\A} \isdef \set{ (1,1,2),\; (2,3,2) }$.
	Then, $\tA = R^{\A}$ and $\slices(\tA) = \set{ (1),\; (2),\; (3),\;\allowbreak (1,2),\; (2,1),\; (2,3),\; (3,2) }$.
	The colored multigraph $\vgrep{\A}$ is the $\repsig$-structure with universe $V(\vgrep{\A}) = \set{ w_{(1,1,2)},\; w_{(2,3,2)} }\allowbreak \union \set{ v_\slice \mid \slice\in\slices(\tA) }$.
	The unary symbol $U_R$ is interpreted by the set ${(U_R)}^{\vgrep{\A}}=\set{ w_{(1,1,2)},\; w_{(2,3,2)} }$.
	See \cref{fig:ex:efficient-runtime-representation} for an illustration of $\vgrep{\A}$.
\end{example}

Note that for each $\at\in\tA$ of arity $k$, the number of slices of $\at$ is $\card{\slices(\at)} \leq k {\cdot} {k!}$.
Thus, for $k\deff \ar(\sig) = \max\setc{\ar(R)}{R\in\sig}$ we have 
	$\card{\V(\vgrep{\A})} = \card{\tA} + \card{\slices(\tA)} \leq (1 + k {\cdot} {k!}) {\cdot} \card{\tA}$
and
	$\card{{(E_{i,j})}^{\vgrep{\A}}} \leq 2 {\cdot} k {\cdot} {k!} {\cdot} \card{\tA}$, for all $i, j \in [k]$.
Hence, the total number of edges of $\vgrep{\A}$ is at most $2 {\cdot} k^3{\cdot} k! {\cdot} \card{\tA}$.
I.e., for the fixed relational signature $\sig$, the number of nodes and edges of the colored multigraph $\vgrep{\A}$ associated with a $\sig$-structure $\A$ is of size $\bigO(\size{\A})$, where the factor hidden by the $\bigO$-notation is bounded by $2 {\cdot} k^3{\cdot} k! = 2^{\bigO(k\cdot \log k)}$ for $k\deff\ar(\sig)$.

From Berkholz, Bonsma and Grohe~\cite{Berkholz2017} we know that classical Color Refinement can be implemented to run in time $\bigO((n+m) \cdot \log(n))$ on the colored multigraph $\vgrep{\A}$, where $n$ denotes the number of vertices and $m$ denotes the total number of edges.
Therefore, since $n$ and $m$ both are of size $\bigO(\size{\A})$, \cref{thm:main-runtime} is obtained as an immediate consequence of the following \cref{thm:rcr-is-cr-on-vgrep}:

\begin{theorem}\label{thm:rcr-is-cr-on-vgrep}
	Let $\A$ be a $\sig$-structure.
	Let $\col{i}{\at}$ be the color assigned to tuple $\at\in\tA$ in the $i$-th round of Relational Color Refinement RCR on $\A$, and let $\gcr{i}{u}$ be the color assigned to node $u$ of $\vgrep{\A}$ in  round $i$ of conventional Color Refinement CR on the colored multigraph $\vgrep{\A}$.
	For all $i \in \nat$ and all $\at, \bt \in \tA$ we have: \ $\col{i}{\at} = \col{i}{\bt}$ $\iff$ $\gcr{2i+1}{w_{\at}} = \gcr{2i+1}{w_{\bt}}$.
\end{theorem}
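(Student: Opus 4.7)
The plan is to proceed by induction on $i$, guided by the intuition that two rounds of CR on $\vgrep{\A}$ simulate one round of RCR on $\A$, with the slice nodes $v_\slice$ acting as routers: the RCR update at $\at$ aggregates over overlapping tuples $\ct$, and in $\vgrep{\A}$ such information travels along two-step paths $w_\at \to v_\slice \to w_\ct$ for $\slice\in\slices(\at)\cap\slices(\ct)$. To make this rigorous I will interleave the target claim \textbf{(P$_i$)}: $\col{i}{\at}=\col{i}{\bt}$ iff $\gcr{2i+1}{w_\at}=\gcr{2i+1}{w_\bt}$, with an auxiliary claim \textbf{(Q$_i$)}: $\gcr{2i+2}{v_\slice}=\gcr{2i+2}{v_{\slice'}}$ iff the multisets $\mset{(\stp(\slice,\ct),\stp(\ct,\slice),\col{i}{\ct})\mid \ct\in\slices^{-1}(\slice)}$ and the analogous one for $\slice'$ agree. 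The induction then alternates \textbf{(P$_0$)}, \textbf{(Q$_0$)}, \textbf{(P$_1$)}, \textbf{(Q$_1$)}, and so on.

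For the base case \textbf{(P$_0$)}, observe that $\vgrep{\A}$ carries no self-loops on $w$-nodes, so $\gcr{0}{w_\at}$ captures only $\atp(\at)$ via the unary labels $U_R$. Since all slice nodes share a common initial color, $\gcr{1}{w_\at}$ is determined by $\atp(\at)$ together with the multiset $\mset{\lambda(w_\at,v_\slice)\mid \slice\in\slices(\at)}$ of edge-label sets, each such label being fixed by $\stp(\at,\slice)$ and $\stp(\slice,\at)$. Focusing on length-$1$ slices $\slice=(v)$ for $v\in\tset(\at)$ extracts, for every distinct value $v\in\tset(\at)$, the set $\set{i\mid a_i=v}$, whose union over all $v$ recovers $\stp(\at)$. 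Conversely, $\atp(\at)$ and $\stp(\at)$ obviously determine this multiset up to isomorphism, so \textbf{(P$_0$)} holds in both directions.

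Deriving \textbf{(Q$_i$)} from \textbf{(P$_i$)} is then routine: the color $\gcr{2i+2}{v_\slice}$ unfolds into $\gcr{2i+1}{v_\slice}$ together with the multiset $\mset{(\lambda(v_\slice,w_\ct),\gcr{2i+1}{w_\ct})\mid \ct\in\slices^{-1}(\slice)}$; by \textbf{(P$_i$)} each $\gcr{2i+1}{w_\ct}$ bijects with $\col{i}{\ct}$, and the edge labels encode $\stp(\slice,\ct)$ and $\stp(\ct,\slice)$. The historical component $\gcr{2i+1}{v_\slice}$ depends only on strictly earlier iterations, which are already covered.

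The main obstacle is deriving \textbf{(P$_i$)} from \textbf{(Q$_{i-1}$)} and \textbf{(P$_{i-1}$)}: one must argue that the multiset $\mset{(\lambda(w_\at,v_\slice),\gcr{2i}{v_\slice})\mid \slice\in\slices(\at)}$ determining $\gcr{2i+1}{w_\at}$ is equivalent to $\Nset{i}{\A}(\at)$. The key is once again length-$1$ slices: for any tuple $\ct$ with $\stp(\at,\ct)\neq\emptyset$ and any shared element $v\in\tset(\at)\cap\tset(\ct)$, the slice $\slice=(v)$ yields a two-step path $w_\at\to v_\slice\to w_\ct$ in $\vgrep{\A}$; the edge $w_\at\to v_\slice$ records where $v$ sits in $\at$, and \textbf{(Q$_{i-1}$)} tells us $\gcr{2i}{v_\slice}$ encodes the multiset $\mset{(\stp(\slice,\ct'),\stp(\ct',\slice),\col{i-1}{\ct'})\mid \ct'\in\slices^{-1}(\slice)}$. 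Together, these data reconstruct $\stp(\at,\ct)$ and $\col{i-1}{\ct}$ for every overlapping $\ct$, which is precisely the content of $\Nset{i}{\A}(\at)$. The converse runs symmetrically, by lifting an RCR-level bijection between overlapping tuples to a CR-compatible bijection between slice neighbors, so \textbf{(P$_i$)} follows.
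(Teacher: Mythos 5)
Your overall architecture matches the paper's: induct on $i$, two rounds of CR on $\vgrep{\A}$ per round of RCR on $\A$, with slice nodes acting as relays, and interleave a claim about $w$-nodes with an auxiliary claim about $v$-nodes. But the decisive combinatorial content is missing, and what you substitute for it does not work.

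The gap is in the step "together, these data reconstruct $\stp(\at,\ct)$ and $\col{i-1}{\ct}$ for every overlapping $\ct$, which is precisely the content of $\Nset{i}{\A}(\at)$." You base this on length-$1$ slices. Fix $v\in\tset(\at)$ and $\slice=(v)$. Your auxiliary claim gives you the multiset of $(\stp(\slice,\ct),\stp(\ct,\slice),\col{i-1}{\ct})$ over $\ct\ni v$, i.e., for each tuple passing through $v$, the positions of $v$ in $\ct$ and the RCR color of $\ct$. It does \emph{not} tell you how the rest of $\ct$ sits relative to $\at$. To recover $\stp(\at,\ct)$ you would need to combine information across several $v_{(v_1)},\,v_{(v_2)},\dots$, and for that you must know which $\ct$ seen from $v_{(v_1)}$ is the same as which $\ct$ seen from $v_{(v_2)}$. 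Color Refinement gives you no such identification --- two distinct tuples through $v_1$ with the same RCR color and the same position pattern at $v_1$ are indistinguishable from $v_{(v_1)}$'s viewpoint, yet they may intersect $\at$ very differently. The $\Nset{i}{\A}(\at)$ multiset counts each overlapping $\ct$ exactly once with its full similarity type; your per-slice data overcounts each $\ct$ once for every sub-slice of the intersection $\tset(\at)\cap\tset(\ct)$, and nothing in your argument reconciles these countings.

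This is exactly where the paper's Lemma~\ref{lem:vgrep-equiv-rcr} does the real work, and it is not \enquote{routine.} Its direction $(2)\Rightarrow(1)$ constructs the required bijection on $N(\at)$ by a descending induction on the size $n$ of the overlap $\tset(\at)\cap\tset(\ct)$: first match the tuples meeting $\at$ in all of $\tset(\at)$ using the single maximal slice, then peel them off and match the tuples with overlap size $n$ using slices of size exactly $n$, discounting the already-matched tuples with larger overlap. This uses slices of \emph{every} length, not just length $1$; it is precisely the mechanism that resolves the overcounting. Your proposal omits it, and restricting attention to length-$1$ slices cannot replace it. The remark that \enquote{the converse runs symmetrically} is also misleading: directions $(1)\Rightarrow(2)$ and $(2)\Rightarrow(1)$ of that lemma require genuinely different arguments, the latter being substantially harder. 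Without this lemma (or something of equivalent strength), the induction step of your proof does not go through.

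A secondary, repairable issue: in your base case, \enquote{$\atp(\at)$ and $\stp(\at)$ obviously determine this multiset up to isomorphism} is the content of Lemma~\ref{lem:self-stp-identifies-stp-of-slices}, which is not difficult but is also not immediate --- it needs a short argument, given in the appendix. Flagging it as obvious is fine in a sketch, but note that this is another spot where your proposal relies silently on the paper's machinery.
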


The remainder of this section is devoted to the proof of \cref{thm:rcr-is-cr-on-vgrep}.
See \cref{app:alt-rep} for proofs of all subsequent lemmas and claims.
We start with a lemma that summarizes some obvious facts.

\begin{restatable}{lemma}{sliceObservations}\label{lem:slice-observations}
	Let $\A$ be a $\sig$-structure.
	Let $k,k',\ell\geq 1$ and let $\at=(a_1,\ldots,a_k)$ and $\bt=(b_1,\ldots,b_{k'})$ be elements in $\tA$, and let $\slice=(s_1,\ldots,s_\ell)$ be a slice over $\V(\A)$.
	\begin{enumerate}[label={(\alph*)}, ref=\thelemma\,(\alph*), topsep=0pt, noitemsep]
		\crefalias{enumi}{lemma}

		\item\label{lem:slice-observations:stp-nonempty-iff-intersect}
		$\stp(\at, \bt) \neq \emptyset \iff \tset(\at) \intersect \tset(\bt) \neq \emptyset$.

		\item\label{lem:slice-observations:self-stp-equal-iff-bijection}
		$\stp(\at) = \stp(\bt)$ $\iff$ $\ar(\at) = \ar(\bt)$ and the function $\beta\colon \tset(\at) \to \tset(\bt)$ with $\beta(a_i) \isdef b_i$ for all $i \in [k]$ is well-defined and bijective.

		\item\label{lem:slice-observations:every-j-has-i}
		$\slice \in \slices(\at)$ $\iff$ for every $i \in [\ell]$ there exists a $j \in [k]$ such that $(i, j) \in \stp(\slice, \at)$ $\iff$ for every $j \in [\ell]$ there exists an $i \in [k]$ such that $(i, j) \in \stp(\at, \slice)$.

		\item\label{lem:slice-observations:slice-stp-is-unique}
		Let $\slice\in\slices(\at)$.
		For all $\slice' \in \slices(\at)$ we have:\;
		$\stp(\at, \slice) = \stp(\at, \slice')$ $\iff$ $\slice = \slice'$.
	\end{enumerate}
\end{restatable}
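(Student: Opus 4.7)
All four parts are direct unpackings of the definitions of similarity type and slice, so my plan is to verify each as a short chain of equivalences; no induction or auxiliary construction is required. I would treat the parts in order~(a)--(d).

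For~(a) I would expand $\stp(\at,\bt)=\setc{(i,j)\in[k]\times[k']}{a_i=b_j}$ and read off that non-emptiness of this set is literally the existence of a shared entry, i.e., $\tset(\at)\intersect\tset(\bt)\neq\emptyset$. For~(b), the key observation is that $\stp(\at)$ always contains the diagonal $\setc{(i,i)}{i\in[k]}$, so $\stp(\at)=\stp(\bt)$ forces $\ar(\at)=\ar(\bt)$ by looking at either coordinate projection. With equal arity, the equality $\stp(\at)=\stp(\bt)$ is the condition $a_i=a_j\iff b_i=b_j$ for all $i,j\in[k]$, which is precisely well-definedness and injectivity of the map $\beta$; surjectivity of $\beta$ onto $\tset(\bt)$ is immediate from $\beta(a_i)\deff b_i$.

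For~(c), the definition of a slice already guarantees $\ell\geq 1$ and pairwise distinct entries, so $\slice\in\slices(\at)$ reduces to $\tset(\slice)\subseteq\tset(\at)$. Written out coordinate-wise, the latter says every $s_i$ equals some $a_j$, which is precisely \enquote{for every $i\in[\ell]$ there is a $j\in[k]$ with $(i,j)\in\stp(\slice,\at)$}; the third formulation merely transposes the roles of the two coordinates and is equivalent for the same reason.

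For~(d), I would use that every entry $s_j$ of a slice in $\slices(\at)$ lies in $\tset(\at)$, so the \enquote{$j$-th column} $\setc{i\in[k]}{(i,j)\in\stp(\at,\slice)}=\setc{i}{a_i=s_j}$ is non-empty by~(c), and any $i$ in that column satisfies $s_j=a_i$; hence $(\at,\stp(\at,\slice))$ already determines $\slice$ entrywise, which gives the non-trivial direction (the converse is immediate). The only point requiring a little care --- and which I expect to be the main, albeit very minor, obstacle --- is that before comparing entries one must first extract the arity $\ell$ of $\slice$ from $\stp(\at,\slice)$: by~(c), the second-coordinate projection of $\stp(\at,\slice)$ is exactly $[\ell]$, so the equality $\stp(\at,\slice)=\stp(\at,\slice')$ already forces $\ell=\ell'$ via $\slice'\in\slices(\at)$, and only then is the column-wise argument meaningful.
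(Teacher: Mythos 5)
Your plan is correct and follows essentially the same route as the paper's own proof: (a) is the direct unpacking; (b) splits into the arity check followed by translating well-definedness and injectivity of $\beta$ into $\stp(\at)\subseteq\stp(\bt)$ and $\stp(\bt)\subseteq\stp(\at)$ respectively, with surjectivity automatic; (c) is the coordinate-wise reading of $\tset(\slice)\subseteq\tset(\at)$ with the transposition remark; and (d) is the entry-wise reconstruction of $\slice$ from $\stp(\at,\slice)$. The only stylistic difference is in (d): you extract $\ell=\ell'$ up front by noting that the second-coordinate projection of $\stp(\at,\slice)$ is exactly $[\ell]$ (a consequence of (c)), whereas the paper derives $\ell'\geq\ell$ from the column comparison and $\ell\geq\ell'$ by applying (c) to $\slice'$; both arguments are correct and essentially interchangeable.
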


Two nodes $u,u'\in V(\vgrep{\A})$ are called \emph{neighbors in $\vgrep{\A}$} if $(u,u')\in {(E_{i,j})}^{\vgrep{\A}}$ for some $i,j\in [\ar(\sig)]$.

\begin{remark}\label{remark:neighbors-in-vgrep}
Note that for all $\at\in\tA$ and $\slice\in \slices(\tA)$ we have: $w_\at$ and $v_{\slice}$ are neighbors in $\vgrep{\A}$ $\iff$ $\slice\in\slices(\at)$.
By \cref{lem:slice-observations:slice-stp-is-unique} we obtain: For all $\at\in\tA$ and all  $\slice,\slice'\in\slices(\at)$ with $\slice\neq\slice'$, we have 
$\stp(\at, \slice)\neq\stp(\at,\slice')$.
\end{remark}

The following characterization of tuples $\at,\bt$ with $\stp(\at)=\stp(\bt)$ will be crucial for our
proof of \cref{thm:rcr-is-cr-on-vgrep}.

\begin{restatable}{lemma}{selfStpIdentifiesSlices}\label{lem:self-stp-identifies-stp-of-slices}
	Let $\A$ be a $\sig$-structure.
	For all $\at, \bt \in \tA$ we have:\;
	$\stp(\at) = \stp(\bt)$ $\iff$ there is a bijection $\pi_{\slices}\colon \slices(\at) \to \slices(\bt)$ such that for all $\slice \in \slices(\at)$ we have $\stp(\at, \slice) = \stp(\bt, \pi_{\slices}(\slice))$.
\end{restatable}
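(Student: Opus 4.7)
The plan is to prove both directions separately: the forward direction via the element-level bijection from \cref{lem:slice-observations:self-stp-equal-iff-bijection}, and the backward direction by examining how $\pi_{\slices}$ acts on $1$-element slices.

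For the forward direction (\enquote{$\Rightarrow$}), suppose $\stp(\at)=\stp(\bt)$. By \cref{lem:slice-observations:self-stp-equal-iff-bijection}, $\ar(\at)=\ar(\bt)\defis k$ and the map $\beta\colon\tset(\at)\to\tset(\bt)$ with $\beta(a_i)\isdef b_i$ is a well-defined bijection. I would then define $\pi_{\slices}$ by applying $\beta$ componentwise: for a slice $\slice=(s_1,\ldots,s_\ell)\in\slices(\at)$, set $\pi_{\slices}(\slice)\isdef (\beta(s_1),\ldots,\beta(s_\ell))$. Injectivity of $\beta$ preserves pairwise distinctness of entries, so the result is indeed a slice of $\bt$; and applying $\beta^{-1}$ componentwise yields the inverse, hence $\pi_{\slices}$ is a bijection. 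To verify the similarity-type condition, observe that $a_i=s_j \iff \beta(a_i)=\beta(s_j) \iff b_i=\beta(s_j)$, which directly gives $\stp(\at,\slice)=\stp(\bt,\pi_{\slices}(\slice))$.

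For the backward direction (\enquote{$\Leftarrow$}), the key idea is that the $1$-element slices carry enough information to recover the entire similarity type. For each $i\in[\ar(\at)]$, the tuple $(a_i)$ is a slice of $\at$ with $\stp(\at,(a_i))=\setc{(i',1)}{i'\in[\ar(\at)],\,a_{i'}=a_i}$, which contains $(i,1)$. Let $\slice'_i\isdef \pi_{\slices}((a_i))\in\slices(\bt)$. The equality $\stp(\bt,\slice'_i)=\stp(\at,(a_i))$ forces every pair in $\stp(\bt,\slice'_i)$ to have second coordinate $1$, and via \cref{lem:slice-observations:every-j-has-i} this implies $\ar(\slice'_i)=1$. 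Writing $\slice'_i=(b'_i)$ and exploiting $(i,1)\in\stp(\bt,(b'_i))$, I obtain $i\in[\ar(\bt)]$ and $b'_i=b_i$. Thus $\pi_{\slices}((a_i))=(b_i)$ and $\ar(\at)\leq\ar(\bt)$; applying the same argument to $\pi_{\slices}^{-1}$ (which inherits the hypothesised property with $\at$ and $\bt$ swapped) gives the reverse inequality, so $\ar(\at)=\ar(\bt)\defis k$. The proof then concludes via the chain of equivalences $(i,j)\in\stp(\at) \iff a_i=a_j \iff (j,1)\in\stp(\at,(a_i))=\stp(\bt,(b_i)) \iff b_j=b_i \iff (i,j)\in\stp(\bt)$, valid for all $i,j\in[k]$.

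The main obstacle I anticipate is the arity alignment in the backward direction: the raw hypothesis on $\pi_{\slices}$ does not \emph{a priori} tell us that $\ar(\at)=\ar(\bt)$, and the set $\stp(\at,\slice)\subseteq[\ar(\at)]\times[\ar(\slice)]$ does not by itself encode these arities. The $1$-element-slice trick resolves this cleanly, simultaneously pinning down $\pi_{\slices}((a_i))=(b_i)$ and matching the arities, from which the equality of self-similarity types drops out.
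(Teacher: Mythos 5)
Your forward direction is essentially the same as the paper's: both define $\pi_\slices$ by applying the element-level bijection $\beta$ (from \cref{lem:slice-observations:self-stp-equal-iff-bijection}) componentwise, and then verify bijectivity and the similarity-type condition. The backward direction is where you diverge. The paper works with \emph{full-set slices}: it fixes slices $\slice$ with $\tset(\slice)=\tset(\at)$ and $\tup{t}'$ with $\tset(\tup{t}')=\tset(\bt)$, derives $k=\ell$ by chasing these through $\pi_\slices$ and $\pi_\slices^{-1}$, and then recovers $\stp(\at)=\stp(\bt)$ by tracking the unique index $j_i$ with $(i,j_i)\in\stp(\at,\slice)$. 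You instead work with \emph{singleton slices} $(a_i)$, arguing that $\pi_\slices((a_i))$ must itself be a singleton (because $\stp(\at,(a_i))$ has all second coordinates equal to $1$, and \cref{lem:slice-observations:every-j-has-i} then pins the arity to $1$), and that it must in fact equal $(b_i)$ (because $(i,1)\in\stp(\at,(a_i))$ forces $(i,1)\in\stp(\bt,\pi_\slices((a_i)))$). Both versions correctly recognize that the raw hypothesis doesn't immediately yield $\ar(\at)=\ar(\bt)$ and both resolve it by a two-sided application of the hypothesis (to $\pi_\slices$ and $\pi_\slices^{-1}$). Your singleton-slice route gives a slightly tighter final chain of equivalences, since once $\pi_\slices((a_i))=(b_i)$ is pinned down, $(i,j)\in\stp(\at)\iff(j,1)\in\stp(\at,(a_i))=\stp(\bt,(b_i))\iff(i,j)\in\stp(\bt)$ is immediate, with no detour through an auxiliary index $j_i$ inside a longer slice. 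The argument is correct; the only point worth spelling out in a formal write-up is the one-line observation that $\pi_\slices^{-1}$ inherits the hypothesized property with $\at$ and $\bt$ swapped (since $\stp(\bt,\tup{t})=\stp(\at,\pi_\slices^{-1}(\tup{t}))$ follows by applying the hypothesis to $\slice=\pi_\slices^{-1}(\tup{t})$), which you implicitly use to get $\ar(\bt)\leq\ar(\at)$.
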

It follows from \cref{remark:neighbors-in-vgrep} that the bijection $\pi_{\slices}$ is unique, if it exists.
The following lemma summarizes straightforward properties of the mapping $\pi_\slices$ obtained from \cref{lem:self-stp-identifies-stp-of-slices}.

\begin{restatable}{lemma}{easyPropertiesOfPiSlices}\label{lemma:easy-properties-of-pi-slices}
	Let $\A$ be a $\sig$-structure, let $\at,\bt\in\tA$ with $\stp(\at)=\stp(\bt)$, and let $\pi_\slices\colon \slices(\at) \to \slices(\bt)$ be the bijection obtained from \cref{lem:self-stp-identifies-stp-of-slices}.
	For all $\slice, \slice' \in \slices(\at)$ and for $\tup{t} \isdef \pi_\slices(\slice)$ and $\tup{t}' \isdef \pi_\slices(\slice')$ we have:
	\begin{statements}[topsep=0pt, noitemsep]
		\item
		$\ar(\slice)=\ar(\tup{t})$, \ and
		\item 
		$\tset(\slice)\subseteq \tset(\slice')$ $\iff$
		$\tset(\tup{t})\subseteq\tset(\tup{t}')$, \ and
		\item
		$\tset(\slice)=\tset(\slice')$ $\iff$ $\tset(\tup{t})=\tset(\tup{t}')$.
	\end{statements}
\end{restatable}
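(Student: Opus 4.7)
The plan is to exploit the defining property of $\pi_\slices$ (supplied by \cref{lem:self-stp-identifies-stp-of-slices}), namely $\stp(\at,\slice)=\stp(\bt,\pi_\slices(\slice))$ for every $\slice\in\slices(\at)$, together with the hypothesis $\stp(\at)=\stp(\bt)$. All three statements will follow by showing that the relevant combinatorial data about a slice (its arity) and about a pair of slices (their mutual similarity type, and hence the inclusion and equality of their underlying sets) is recoverable from the similarity types linking them to $\at$, respectively $\bt$.

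For (a), I would argue that $\ar(\slice)$ is read off from $\stp(\at,\slice)\subseteq[\ar(\at)]\times[\ar(\slice)]$: by \cref{lem:slice-observations:every-j-has-i}, every $j\in[\ar(\slice)]$ occurs as a second coordinate of some pair in $\stp(\at,\slice)$, so $\ar(\slice)$ equals the largest second coordinate appearing there (which is well-defined and at least $1$ since $\slice\in\slices(\at)$ forces $\stp(\at,\slice)\neq\emptyset$). Applying the same observation to $\tup{t}\in\slices(\bt)$, and using $\stp(\at,\slice)=\stp(\bt,\tup{t})$, I conclude $\ar(\slice)=\ar(\tup{t})$.

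For (b), the crucial step is a formula expressing $\stp(\slice,\slice')$ purely in terms of $\stp(\at)$, $\stp(\at,\slice)$, and $\stp(\at,\slice')$. Writing $\slice=(s_1,\ldots,s_\ell)$ and $\slice'=(s'_1,\ldots,s'_{\ell'})$, an element $s_i$ equals $s'_{i'}$ iff there exist $j,j'\in[\ar(\at)]$ with $(j,i)\in\stp(\at,\slice)$, $(j',i')\in\stp(\at,\slice')$ and $(j,j')\in\stp(\at)$; this follows directly from $s_i=a_j$ for every $j$ with $(j,i)\in\stp(\at,\slice)$ (which holds because $\slice$ is a slice of $\at$) and the analogous fact for $\slice'$. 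The identical formula computes $\stp(\tup{t},\tup{t}')$ from $\stp(\bt)$, $\stp(\bt,\tup{t})$, $\stp(\bt,\tup{t}')$, so I conclude $\stp(\slice,\slice')=\stp(\tup{t},\tup{t}')$. Then $\tset(\slice)\subseteq\tset(\slice')$ amounts to the condition that for each $i\in[\ar(\slice)]$ some $i'\in[\ar(\slice')]$ satisfies $(i,i')\in\stp(\slice,\slice')$; by (a) the arities agree, so this condition holds for $(\slice,\slice')$ iff it holds for $(\tup{t},\tup{t}')$.

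Finally, (c) drops out of (b) applied in both directions: $\tset(\slice)=\tset(\slice')$ iff $\tset(\slice)\subseteq\tset(\slice')$ and $\tset(\slice')\subseteq\tset(\slice)$, which by (b) is equivalent to $\tset(\tup{t})\subseteq\tset(\tup{t}')$ and $\tset(\tup{t}')\subseteq\tset(\tup{t})$, i.e., $\tset(\tup{t})=\tset(\tup{t}')$. There is no real obstacle here; the only slightly subtle point is the derivation of the formula for $\stp(\slice,\slice')$ from $\stp(\at)$ and the two similarity types with $\at$, and writing it so the symmetric derivation for $\bt$ is plainly the same expression.
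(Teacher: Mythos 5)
Your proof is correct. The main difference from the paper's argument is the intermediate claim you isolate in part~(b): that $\stp(\slice,\slice')=\stp(\tup{t},\tup{t}')$, obtained by observing that $\stp(\slice,\slice')$ is recoverable as $\{(i,i') : \exists\, j,j'\; (j,i)\in\stp(\at,\slice),\ (j',i')\in\stp(\at,\slice'),\ (j,j')\in\stp(\at)\}$, an expression that is literally identical when $\at,\slice,\slice'$ are replaced by $\bt,\tup{t},\tup{t}'$ because the three inputs to the formula are equal by hypothesis. The paper instead proves (b) directly: it takes the injection $\beta\colon[\ar(\slice)]\to[\ar(\slice')]$ witnessing $\tset(\slice)\subseteq\tset(\slice')$, pushes each $s_i=a_{j_i}$ through the equalities $\stp(\at,\slice)=\stp(\bt,\tup{t})$ and $\stp(\at,\slice')=\stp(\bt,\tup{t}')$ to get $t_i=b_{j_i}=t'_{\beta(i)}$, and then repeats the argument with the roles of $\A$ and $\B$ swapped for the converse. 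Your route buys a slightly cleaner modular structure: the single equality $\stp(\slice,\slice')=\stp(\tup{t},\tup{t}')$ is manifestly symmetric (no need to argue both directions of (b) separately), and (c) and the subset criterion both drop out of it plus (a). The paper's direct approach is a little more elementary and doesn't need the explicit formula for $\stp(\slice,\slice')$, at the cost of carrying an injection through the calculation and doing two symmetric passes. Parts (a) and (c) of your argument coincide in substance with the paper's.
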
  

For $\at \in \tA$ let $N(\at) \isdef \set{ \ct'\in\tA \mid \stp(\at, \ct') \neq \emptyset } = \set{ \ct'\in\tA \mid \tset(\at)\cap\tset(\ct')\neq \emptyset }$
(the last equality is due to \cref{lem:slice-observations:stp-nonempty-iff-intersect}).
We proceed with the main technical lemma that will enable us to prove \cref{thm:rcr-is-cr-on-vgrep}.

\begin{restatable}{lemma}{vgrepEquivRCR}\label{lem:vgrep-equiv-rcr}
	Let $\A$ be a $\sig$-structure.
	Let $\myColors$ be a non-empty set and let $\myc$ be a mapping $\myc\colon \tA \to \myColors$.
	\\
	Consider $\at, \bt \in \tA$ with $\myc(\at)=\myc(\bt)$ and $\stp(\at) = \stp(\bt)$.
	Let $\pi_{\slices}\colon \slices(\at) \to \slices(\bt)$ be the bijection obtained by \cref{lem:self-stp-identifies-stp-of-slices}.
	The following are equivalent:
	\begin{statements}[topsep=0pt, noitemsep]
		\item\label{statement1-equiv} 
		$\mset{ (\stp(\at, \ct), \myc(\ct)) \mid \ct \in N(\at) } \;=\; \mset{ (\stp(\bt, \ct), \myc(\ct)) \mid \ct \in N(\bt) }$.
		
		\item\label{statement2-equiv}
		For all $\slice \in \slices(\at)$ we have:\\
		$\mset{ (\stp(\slice, \ct), \myc({\ct})) \mid \ct \in \slices^{-1}(\slice)}
		\;=\;
		\mset{ (\stp(\pi_{\slices}(\slice), \dt), \myc({\dt})) \mid \dt \in \slices^{-1}(\pi_\slices(\slice)) }$.
	\end{statements}%
\end{restatable}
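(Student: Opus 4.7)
By \cref{lem:slice-observations:self-stp-equal-iff-bijection}, the assumption $\stp(\at)=\stp(\bt)$ yields a bijection $\beta\colon\tset(\at)\to\tset(\bt)$ with $\beta(a_i)=b_i$; matching definitions, one verifies directly that $\pi_\slices\big((a_{i_1},\dots,a_{i_\ell})\big)=(b_{i_1},\dots,b_{i_\ell})$ for every slice of $\at$. For every non-empty $S\subseteq\tset(\at)$, I fix a canonical slice $\slice_S\in\slices(\at)$ with $\tset(\slice_S)=S$ (e.g., ordered by first occurrence in $\at$), and let $\slice'_S\deff\pi_\slices(\slice_S)$; then $\tset(\slice'_S)=\beta(S)$.

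I reformulate the lemma as an equivalence of two count-function families. For each such $S$, each similarity type $\sigma$, and each $z\in\myColors$, define
\begin{equation*}
a^\at_S(\sigma,z)\deff\#\set{\ct\in\tA\mid S\subseteq\tset(\ct),\,\stp(\slice_S,\ct)=\sigma,\,\myc(\ct)=z}\,,
\end{equation*}
let $b^\at_S(\sigma,z)$ be identical but with $\tset(\at)\cap\tset(\ct)=S$ in place of $S\subseteq\tset(\ct)$, and define $a^\bt_S$, $b^\bt_S$ analogously using $\slice'_S$ and $\beta(S)$. Statement 2 is equivalent to ``$a^\at_S=a^\bt_S$ for all $S,\sigma,z$'': for $\slice=\slice_S$ this is literally the claim, and for any other ordering of $S$ the per-slice multiset is obtained from the $\slice_S$-one by uniformly permuting the rows of every $\sigma$, a permutation that $\pi_\slices$ transports consistently to the $\bt$-side. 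Statement 1 is equivalent to ``$b^\at_S=b^\bt_S$ for all $S,\sigma,z$'' via the bijection $(\tau,z)\leftrightarrow(S_\tau,\sigma_\tau,z)$, where $S_\tau=\set{a_i\mid \exists j\colon (i,j)\in\tau}$ and $\sigma_\tau$ is the projection of $\tau$ onto the rows of $\slice_{S_\tau}$; the inverse lift depends only on $\stp(\at)=\stp(\bt)$, so the same correspondence serves both sides.

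The bridge between $a$ and $b$ is a row-projection identity: for $\ct$ with $S\subseteq\tset(\ct)$, set $S'\deff\tset(\at)\cap\tset(\ct)$, so $S\subseteq S'\subseteq\tset(\at)$, and observe that $\stp(\slice_S,\ct)$ is the projection $\pi_S^{S'}\big(\stp(\slice_{S'},\ct)\big)$ onto the rows of $\slice_S$ inside $\slice_{S'}$. Partitioning by $S'$ yields
\begin{equation*}
a^\at_S(\sigma,z)\;=\;\sum_{S\subseteq S'\subseteq\tset(\at)}\ \sum_{\sigma'\colon\pi_S^{S'}(\sigma')=\sigma}b^\at_{S'}(\sigma',z)\,,
\end{equation*}
and the identical identity for $\bt$ with the same projections $\pi_S^{S'}$ (because $\pi_\slices$ preserves the index sets of slices). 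The $S'=S$ summand is exactly $b^\at_S(\sigma,z)$, and every other summand involves $b^\at_{S'}$ with $S'\supsetneq S$. By induction on $|\tset(\at)|-|S|$---the base case $S=\tset(\at)$ being trivial since there the sum collapses to $a^\at_S=b^\at_S$---the family $\set{a^\at_S=a^\bt_S}_{S}$ is equivalent to the family $\set{b^\at_S=b^\bt_S}_{S}$, which proves both directions of the lemma. I expect the main obstacle to be the bookkeeping: verifying that fixing one ordering per $S$ truly suffices for Statement 2, that the lift from $(S,\sigma)$ to $\tau$ does not differ between the $\at$- and $\bt$-sides, and that the projections $\pi_S^{S'}$ align under $\pi_\slices$ so that the triangular systems on both sides have identical structure.
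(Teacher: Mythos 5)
Your argument is correct but takes a genuinely different route from the paper. The paper proves both directions by explicitly constructing bijections: for $\ref{statement1-equiv}\Rightarrow\ref{statement2-equiv}$ it takes a bijection $\pi\colon N(\at)\to N(\bt)$ witnessing \cref{statement1-equiv} and proves (via three claims) that restricting $\pi$ to $\slices^{-1}(\slice)$ yields the bijection needed for \cref{statement2-equiv}; for $\ref{statement2-equiv}\Rightarrow\ref{statement1-equiv}$ it builds a bijection $\pi_n\colon \NeighborsGeq{n}{\at}\to\NeighborsGeq{n}{\bt}$ by \emph{decreasing} induction on $n$, using \cref{statement2-equiv} applied to $\slice_X$ to get a bijection on $\slices^{-1}(\slice_X)=\NeighborsEq{X}{\at}\disunion\ov{\NeighborsEq{X}{\at}}$ and then peeling off the $\ov{\NeighborsEq{X}{\at}}$ part via the already-constructed $\pi_{n+1}$. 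You instead translate both multiset equalities into families of counting equations indexed by subsets $S\subseteq\tset(\at)$, observe the triangular relation $a^\at_S=\sum_{S'\supseteq S}\sum_{\sigma'}b^\at_{S'}$ coming from partitioning by the exact intersection $\tset(\at)\cap\tset(\ct)$, and invert the triangular system by downward induction on $|S|$. The combinatorial engine is the same -- recursing on $|\tset(\at)\cap\tset(\ct)|$ from $|\tset(\at)|$ down to $1$, using the ``strictly larger intersection'' contributions to isolate the ``exactly $S$'' contributions -- but your formulation buys a noticeably lighter $\ref{statement1-equiv}\Rightarrow\ref{statement2-equiv}$ direction (the bridge identity gives it outright, no induction), whereas the paper must establish \cref{claim:XvsYX}, \cref{claim:YXvsX}, and \cref{claim:XvsYX-stp} to justify that the restriction of $\pi$ is well-targeted. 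The price you pay is the bookkeeping you flag yourself: that fixing one canonical $\slice_S$ per set $S$ loses nothing (this is fine because any other slice over $S$ is a permutation of $\slice_S$, which permutes rows of the similarity types uniformly on both sides, by your explicit description of $\pi_\slices$), and that the re-indexing $(\tau,z)\leftrightarrow(S_\tau,\sigma_\tau,z)$ and the projections $\pi_S^{S'}$ depend only on $\stp(\at)=\stp(\bt)$ and hence match across the two structures -- all of which checks out, but does need to be said. If you write this up in full, it would be worth stating the inductive claim precisely (you need ``assume $b^\at_{S'}=b^\bt_{S'}$ for all $S'\supsetneq S$'' as the hypothesis when going $\ref{statement2-equiv}\Rightarrow\ref{statement1-equiv}$), and noting that the hypothesis $\myc(\at)=\myc(\bt)$ from the lemma statement is not actually consumed by either proof.
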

The combinatorially quite involved proof of \cref{lem:vgrep-equiv-rcr} can be found in \cref{appendix:lem:vgrep-equiv-rcr}.

The following facts will be helpful for the proofs.
\begin{fact}\label{fact:easy-rcr-cr}
	For all $\at,\bt\in\tA$ we have
	\begin{enumerate}[label={(\alph*)}, ref={\thefact~{(\alph*)}}, topsep=0pt, noitemsep]
		\crefalias{enumi}{fact}%
		\item\label{fact:easy-rcr-cr:atp-is-initial-color}
		$\atp(\at)=\atp(\bt)$ $\iff$ $\gcr{0}{w_{\at}} = \gcr{0}{w_{\bt}}$.
		\item\label{fact:easy-rcr-cr:lambda-tuple-stp}
		For all $\slice \in \slices(\at)$ and $\tup{t} \in \slices(\bt)$ we have:\;
		$\stp(\at,\slice) = \stp(\bt,\tup{t})$ $\iff$ $\lambda(w_\at,v_\slice)=\lambda(w_\bt,v_{\tup{t}})$
		\item\label{fact:easy-rcr-cr:lambda-slice-stp}
		For all $\slice \in \slices(\at)$, $\tup{t} \in \slices(\bt)$ and for all $\ct \in \slices^{-1}(\slice)$, $\tup{d} \in \slices^{-1}(\tup{t})$ we have:\\
		$\stp(\slice, \ct) = \stp(\tup{t}, \tup{d})$ $\iff$ $\lambda(v_\slice, w_\ct)=\lambda(v_{\tup{t}}, w_{\tup{d}})$.

		\item\label{fact:easy-rcr-cr:neighbors-of-w}
		For all nodes $v$ of $\vgrep{\A}$ we have: \ $v$ and $w_\at$ are neighbors in $\vgrep{\A}$ $\iff$ $v=v_\slice$ for some $\slice\in\slices(\at)$.
	\end{enumerate}
\end{fact}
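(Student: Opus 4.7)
The plan is to verify each of the four parts by directly unfolding the definitions of $\vgrep{\A}$ (from \cref{def:efficient-rep}), of the edge-label function $\lambda$ and the initial classical CR coloring $\gcr{0}{\cdot}$ (both introduced in \cref{sec:RCR}), and of $\atp$, $\stp$, $\slices(\at)$ and $\slices^{-1}(\slice)$. No induction or combinatorial argument is needed; the only subtlety is keeping track of which direction of each binary relation $E_{i,j}$ corresponds to which similarity type, so that the bookkeeping for (b) and (c) comes out right.

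For part (a), I would observe that in $\vgrep{\A}$ the binary relations $E_{i,j}$ only contain pairs with one endpoint a $w$-node and the other a $v$-node, so $(w_{\at},w_{\at}) \notin (E_{i,j})^{\vgrep{\A}}$ for any $i,j$. Hence the \emph{loop}-part of $\gcr{0}{w_{\at}}$ is empty, and $\gcr{0}{w_{\at}}$ reduces to the set $\set{U_R \in \repsig \mid w_{\at} \in (U_R)^{\vgrep{\A}}}$, which by definition of $\vgrep{\A}$ is in bijection with $\set{R \in \sig \mid \at \in R^{\A}} = \atp(\at)$. The equivalence follows.

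For part (b), unfolding $\lambda(w_{\at},v_{\slice})$ yields two labelled sets: the set of $E_{i,j}^{+}$ with $(w_{\at},v_{\slice}) \in (E_{i,j})^{\vgrep{\A}}$, and the set of $E_{i,j}^{-}$ with $(v_{\slice},w_{\at}) \in (E_{i,j})^{\vgrep{\A}}$. By \cref{def:efficient-rep}, the first set is exactly $\setc{E_{i,j}^{+}}{(i,j) \in \stp(\at,\slice)}$ and the second is exactly $\setc{E_{i,j}^{-}}{(i,j) \in \stp(\slice,\at)}$. Since $\stp(\slice,\at)$ is just the transpose of $\stp(\at,\slice)$, knowing the first set already determines the second, so $\lambda(w_{\at},v_{\slice}) = \lambda(w_{\bt},v_{\tup{t}})$ is equivalent to $\stp(\at,\slice) = \stp(\bt,\tup{t})$. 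Part (c) is proved by exactly the same unfolding, only with the roles of $w$- and $v$-nodes swapped; the symmetry between the $E_{i,j}^{+}$- and $E_{i,j}^{-}$-components of $\lambda$ again makes the two directions of the similarity type interchangeable.

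For part (d), I would again use that the only edges in $\vgrep{\A}$ are between $w$-nodes and $v$-nodes, so any neighbor of $w_{\at}$ must be of the form $v_{\slice}$ for some $\slice \in \slices(\tA)$. From \cref{def:efficient-rep}, such a $v_{\slice}$ is connected to $w_{\at}$ in some $E_{i,j}$ if and only if $\slice \in \slices(\at)$ (and some $(i,j)$ lies in $\stp(\at,\slice)$, which is non-empty by \cref{lem:slice-observations:every-j-has-i} whenever $\slice \in \slices(\at)$). The forward direction is immediate, and for the converse one notes that $\slice \in \slices(\at)$ already guarantees a non-empty $\stp(\at,\slice)$ and hence at least one edge. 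I do not expect any real obstacle here; the only thing to be careful about is, in parts (b)--(c), to state the correspondence so that the $E_{i,j}^{-}$-labels of $\lambda$ are matched against the transpose of $\stp$ and not against $\stp$ itself.
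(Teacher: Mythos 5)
Your proof is correct and matches the paper's treatment: the paper itself disposes of all four parts by noting that they follow immediately from the definition of $\vgrep{\A}$ (\cref{def:efficient-rep}), the definitions of $\gcr{0}{\cdot}$ and $\lambda$, and \cref{remark:neighbors-in-vgrep}, which is exactly the definition-unfolding you carry out. Your extra care about matching the $E_{i,j}^{-}$-labels against the transposed similarity type is a correct (and harmless) refinement of the same argument.
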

The first three statements follow immediately from the definition of $\vgrep{\A}$, see \cref{def:efficient-rep}.
The last two statements follow from \cref{remark:neighbors-in-vgrep} and the definition of $\vgrep{\A}$.
The following lemma relates the color that CR assigns to the node $w_\at$ to the colors it assigns to the nodes $v_\slice$ for the slices $\slice$ of $\at$.
\begin{restatable}{lemma}{slicesDetermineCR}\label{lem:slices-determine-cr}
	For all $\at, \bt \in \tA$ with $\stp(\at) = \stp(\bt)$ and all $i \in \natpos$ we have:\; $\gcr{i}{w_{\at}} = \gcr{i}{w_{\bt}} \iff$
	\begin{enumerate}[label={(\alph*)}, topsep=0pt, noitemsep]
		\item $\gcr{i-1}{w_{\at}} = \gcr{i-1}{w_{\bt}}$\; and
		\item for all $\slice \in \slices(\at)$ and $\tup{t} \isdef \pi_{\slices}(\slice)$:\; $\gcr{i-1}{v_{\slice}} = \gcr{i-1}{v_{\tup{t}}}$.
	\end{enumerate}
	Here, $\pi_{\slices}\colon \slices(\at) \to \slices(\bt)$ is the bijection from \cref{lem:self-stp-identifies-stp-of-slices}.
\end{restatable}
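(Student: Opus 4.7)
My plan is to unfold the iterative definition of Color Refinement on $\vgrep{\A}$ and then use the labels $\lambda(w_\at,v_\slice)$ as unique \enquote{keys} that identify the slices of $\at$. By the recursive definition of $\classicalCR$ on colored multigraphs,
\[
  \gcr{i}{w_\at} \;=\; \bigl(\gcr{i-1}{w_\at},\; M_\at\bigr),
  \quad\text{where}\quad
  M_\at \;\deff\; \mset{\,\bigl(\lambda(w_\at, u),\, \gcr{i-1}{u}\bigr) \mid u \text{ is a neighbor of } w_\at \text{ in } \vgrep{\A}\,},
\]
and analogously for $\bt$. Thus $\gcr{i}{w_\at}=\gcr{i}{w_\bt}$ is equivalent to the conjunction of (a) and the multiset equality $M_\at=M_\bt$. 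By \cref{fact:easy-rcr-cr:neighbors-of-w}, the neighbors of $w_\at$ in $\vgrep{\A}$ are exactly the nodes $v_\slice$ with $\slice\in\slices(\at)$, so it only remains to prove $M_\at=M_\bt \iff \text{(b)}$.

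For the direction from (b) to $M_\at=M_\bt$, I would use $\pi_\slices$ as the witnessing bijection between the index sets of the two multisets. By \cref{lem:self-stp-identifies-stp-of-slices}, $\stp(\at,\slice)=\stp(\bt,\pi_\slices(\slice))$ for every $\slice\in\slices(\at)$, and thus \cref{fact:easy-rcr-cr:lambda-tuple-stp} gives $\lambda(w_\at,v_\slice)=\lambda(w_\bt,v_{\pi_\slices(\slice)})$. Combined with the color equality in (b), this matches both components of each multiset entry, so $M_\at=M_\bt$.

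For the converse, condition (a) falls out from the equality of first components. For condition (b), the key point is that within $\slices(\at)$ the map $\slice\mapsto\stp(\at,\slice)$ is injective (by \cref{remark:neighbors-in-vgrep}), hence so is $\slice\mapsto\lambda(w_\at,v_\slice)$ by \cref{fact:easy-rcr-cr:lambda-tuple-stp}; the analogous statement holds on the $\bt$-side. Consequently, $M_\at$ and $M_\bt$ are functions from their pairwise distinct first components to the associated colors, and the equality $M_\at=M_\bt$ means that these functions agree wherever their first components coincide. Given any $\slice\in\slices(\at)$ with $\tup{t}\deff\pi_\slices(\slice)$, the argument above shows $\lambda(w_\at,v_\slice)=\lambda(w_\bt,v_\tup{t})$, so reading off the matching second components yields $\gcr{i-1}{v_\slice}=\gcr{i-1}{v_\tup{t}}$, which is (b).

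The main subtlety is this label-injectivity in the converse direction: without distinctness of the labels on each side, the raw multiset equality would only supply \emph{some} bijection matching the data, and not that the specific bijection $\pi_\slices$ from \cref{lem:self-stp-identifies-stp-of-slices} transports colors correctly. The combined use of \cref{remark:neighbors-in-vgrep} and \cref{fact:easy-rcr-cr:lambda-tuple-stp} is exactly what closes this gap; everything else is bookkeeping that unwinds the definition of CR.
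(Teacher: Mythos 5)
Your proposal is correct and follows essentially the same route as the paper's proof: unfold the CR recursion, use Fact~\ref{fact:easy-rcr-cr:neighbors-of-w} to restrict attention to the slice-nodes, use Fact~\ref{fact:easy-rcr-cr:lambda-tuple-stp} to translate labels into similarity types, and then use the injectivity of $\slice\mapsto\stp(\at,\slice)$ from Remark~\ref{remark:neighbors-in-vgrep} to force the multiset matching onto the specific bijection $\pi_\slices$. The only presentational difference is that you split the equivalence $M_\at=M_\bt \iff$ (b) into its two directions, whereas the paper compresses it by noting that $\pi_\slices$ is the \emph{unique} bijection preserving the similarity types.
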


Finally, we are ready for the proof of \cref{thm:rcr-is-cr-on-vgrep}.
\begin{proof}[Proof of \cref{thm:rcr-is-cr-on-vgrep}]
	We proceed by induction over $i$.
	\smallskip

\myparagraph{Base case:} Consider $i = 0$ and $\at, \bt \in \tA$.

\noindent
By the definition of RCR, $\col{0}{\at} = \col{0}{\bt}$ is true iff
\begin{conditions}[label={(\arabic*)}, topsep=0pt, noitemsep]
	\item\label{cond:proof:rcr-is-cr-on-vgrep:base:1}
	$\atp(\at) = \atp(\bt)$, and
	\item\label{cond:proof:rcr-is-cr-on-vgrep:base:2}
	$\stp(\at) = \stp(\bt)$.
\end{conditions}
From \cref{fact:easy-rcr-cr:atp-is-initial-color} we know that \cref*{cond:proof:rcr-is-cr-on-vgrep:base:1} holds $\iff \gcr{0}{w_{\at}} = \gcr{0}{w_{\bt}}$.
\smallskip

\noindent
According to \cref{lem:self-stp-identifies-stp-of-slices}, \cref{cond:proof:rcr-is-cr-on-vgrep:base:2} holds $\iff$ there is a bijection $\pi_{\slices}\colon \slices(\at) \to \slices(\bt)$ such that $\stp(\at, \slice) = \stp(\bt, \pi_{\slices}(\slice))$ for all $\slice \in \slices(\at)$.
By using \cref{fact:easy-rcr-cr:neighbors-of-w}, we obtain that such a bijection exists $\iff$ there is a bijection $\pi'$ from the set $N_\at$ of all neighbors $v$ of $w_{\at}$ in $\vgrep{\A}$ to the set $N_\bt$ of all neighbors $v$ of $w_{\bt}$ in $\vgrep{\A}$ such that $\lambda(w_\at,v) = \lambda(w_\bt,\pi'(v))$ holds for all $v\in N_\at$.
\smallskip

\noindent
On the other hand, by the definition of CR, $\gcr{1}{w_{\at}} = \gcr{1}{w_{\bt}}$ is true iff
\begin{conditions}[label={(\arabic*)}, topsep=0pt, noitemsep]
	\item $\gcr{0}{w_{\at}} = \gcr{0}{w_{\bt}}$, and
	\item there is a bijection $\pi''\colon N_\at \to N_\bt$ such that $\lambda(w_\at,v) = \lambda(w_\bt, \pi''(v))$ and $\gcr{0}{v} = \gcr{0}{\pi''(v)}$.
\end{conditions}

Note that condition (1') is equivalent to condition (1).
Furthermore, by the definition of $\vgrep{\A}$, all nodes $v \in N_\at \union N_\bt$ have the same color $\gcr{0}{v} = \emptyset$.
Thus, we obtain that condition (2') is equivalent to condition (2).
In summary, we, in particular, obtain that $\col{0}{\at} = \col{0}{\bt} \iff \gcr{1}{w_{\at}} = \gcr{1}{w_{\bt}}$.
This completes the proof for the induction base $i=0$.
 	
	\myparagraph{Inductive Step:} Consider an $i \in \natpos$, and let $\at, \bt \in \tA$.

	\noindent\emph{Induction hypothesis}:\;
		For all  $j < i$ and all $\ct, \tup{d} \in \tA$ we have:\
		$\col{j}{\ct} = \col{j}{\tup{d}} \iff \gcr{2j+1}{w_{\ct}} = \gcr{2j+1}{w_{\tup{d}}}$.\smallskip

	\noindent\emph{Induction Claim}:\;
		$\col{i}{\at} \,{=}\mskip1mu \col{i}{\bt} \,{\iff}\mskip1mu \gcr{2i+1}{w_{\at}} \,{=}\, \gcr{2i+1}{w_{\bt}}$.\smallskip

		In case that $\stp(\at) \neq \stp(\bt)$, we have $\col{0}{\at} \neq \col{0}{\bt}$, and hence, by the definition of RCR, $\col{j}{\at} \neq \col{j}{\bt}$ holds for all $j \geq 0$.
	 	Furthermore, by the induction base, $\col{0}{\at} \neq \col{0}{\bt}$ implies that $\gcr{1}{w_{\at}} \neq \gcr{1}{w_\bt}$. Hence, by the definition of CR,  $\gcr{j}{w_{\at}} \neq \gcr{j}{w_\bt}$ holds for all $j \geq 1$.
		In particular, this yields: $\col{i}{\at} \neq \col{i}{\bt}$ and $\gcr{2i+1}{w_{\at}} \neq \gcr{2i+1}{w_\bt}$, completing the induction step.

		In the following, we consider the case where $\stp(\at)=\stp(\bt)$.
		From \cref{lem:self-stp-identifies-stp-of-slices} we obtain a bijection $\pi_\slices\colon\slices(\at)\to\slices(\bt)$ satisfying
		$\stp(\at,\slice)=\stp(\bt,\pi_\slices(\slice))$ for all $\slice\in\slices(\at)$.

		If $\col{i-1}{\at} \neq \col{i-1}{\bt}$, then $\col{i}{\at} \neq \col{i}{\bt}$ by definition of RCR and $\gcr{2(i-1)+1}{w_{\at}} \neq \gcr{2(i-1)+1}{w_{\bt}}$ by induction hypothesis. It follows from the definition of CR that $\gcr{2i+1}{w_{\at}} \neq \gcr{2i+1}{w_{\bt}}$ as well. Thus, from now on we consider the case that $\col{i-1}{\at} = \col{i-1}{\bt}$.

		Using \cref{lem:slices-determine-cr} we get that $\gcr{2i+1}{w_{\at}} = \gcr{2i+1}{w_{\bt}}$ if, and only if, $\gcr{2i}{w_{\at}} = \gcr{2i}{w_{\bt}}$ and for all $\slice \in \slices(\at)$ and $\tup{t} \isdef \pi_{\slices}(\slice)$ it holds that $\gcr{2i}{v_{\slice}} = \gcr{2i}{v_{\tup{t}}}$. Applying the same lemma again yields that $\gcr{2i+1}{w_{\at}} = \gcr{2i+1}{w_{\bt}}$ if, and only if $\gcr{2i-1}{w_{\at}} = \gcr{2i-1}{w_{\bt}}$ and for all $\slice \in \slices(\at)$ and $\tup{t} \isdef \pi_{\slices}(\slice)$ it holds that
		$\gcr{2i}{v_{\slice}} = \gcr{2i}{v_{\tup{t}}}$ 
		and
		$\gcr{2i-1}{v_{\slice}} = \gcr{2i-1}{v_{\tup{t}}}$.
		\smallskip
		
		Thus, we must show that $\col{i}{\at} = \col{i}{\bt}$ if, and only if, $\gcr{2i-1}{w_{\at}} = \gcr{2i-1}{w_{\bt}}$ and for all $\slice \in \slices(\at)$ and $\tup{t} \isdef \pi_{\slices}(\slice)$ it holds that
		$\gcr{2i}{v_{\slice}} = \gcr{2i}{v_{\tup{t}}}$ 
		and
		$\gcr{2i-1}{v_{\slice}} = \gcr{2i-1}{v_{\tup{t}}}$.
		Recall that we have $\col{i-1}{\at} = \col{i-1}{\bt}$. Since $2i-1 = 2(i-1)+1$, we get that $\gcr{2i-1}{w_{\at}} = \gcr{2i-1}{w_{\bt}}$ from the induction hypothesis. Hence, it remains to show that $\col{i}{\at} = \col{i}{\bt}$ if, and only if, for all $\slice \in \slices(\at)$ and $\tup{t} \isdef \pi_{\slices}(\slice)$ it holds that
		$\gcr{2i}{v_{\slice}} = \gcr{2i}{v_{\tup{t}}}$ 
		and
		$\gcr{2i-1}{v_{\slice}} = \gcr{2i-1}{v_{\tup{t}}}$. It is now easy to see that the following two claims finish the proof.
		\begin{restatable}{claim}{intermediateSliceCR}\label{claim:intermediate-slice-cr-implied}
			If $\col{i}{\at} = \col{i}{\bt}$, then the following holds for all $\slice \in \slices(\at)$ and $\tup{t} \isdef \pi_{\slices}(\slice)$:
			\begin{quote}
				If\; $\gcr{2i-1}{v_{\slice}} = \gcr{2i-1}{v_{\tup{t}}}$, then\; $\gcr{2i}{v_{\slice}} = \gcr{2i}{v_{\tup{t}}}$.
			\end{quote}
		\end{restatable}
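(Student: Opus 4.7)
My plan is to reduce the desired conclusion $\gcr{2i}{v_{\slice}} = \gcr{2i}{v_{\tup{t}}}$ to the second statement of \cref{lem:vgrep-equiv-rcr}, applied with $\myc \isdef \col{i-1}$. First, I would unpack the CR update rule for $v_{\slice}$: by definition, $\gcr{2i}{v_{\slice}} = (\gcr{2i-1}{v_{\slice}},\, M_{\slice})$, where $M_{\slice}$ is the multiset of pairs $(\lambda(v_{\slice}, u),\, \gcr{2i-1}{u})$ ranging over the neighbors $u$ of $v_{\slice}$ in $\vgrep{\A}$, and analogously for $v_{\tup{t}}$. By construction of $\vgrep{\A}$, the neighbors of $v_{\slice}$ are exactly the nodes $w_{\ct}$ with $\ct \in \slices^{-1}(\slice)$, and similarly for $v_{\tup{t}}$. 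Combined with the hypothesis $\gcr{2i-1}{v_{\slice}} = \gcr{2i-1}{v_{\tup{t}}}$, the claim reduces to verifying $M_{\slice} = M_{\tup{t}}$.

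Next I would invoke \cref{lem:vgrep-equiv-rcr} with $\myc \isdef \col{i-1}$. The assumption $\col{i}{\at} = \col{i}{\bt}$ implies, by the refining nature of RCR, both $\stp(\at) = \stp(\bt)$ and $\col{i-1}{\at} = \col{i-1}{\bt}$, so the preconditions of the lemma are met; in particular, the bijection $\pi_{\slices}$ from \cref{lem:self-stp-identifies-stp-of-slices} exists and is the one appearing in the claim. Unrolling the RCR recursion, $\col{i}{\at} = \col{i}{\bt}$ asserts $\Nset{i}{\A}(\at) = \Nset{i}{\A}(\bt)$, which after substituting $\myc = \col{i-1}$ is exactly the first statement of the lemma. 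The second statement therefore holds: for every $\slice \in \slices(\at)$ and $\tup{t} = \pi_{\slices}(\slice)$,
\[
  \mset{\,(\stp(\slice,\ct),\,\col{i-1}{\ct}) \mid \ct \in \slices^{-1}(\slice)\,}
  \;=\;
  \mset{\,(\stp(\tup{t},\dt),\,\col{i-1}{\dt}) \mid \dt \in \slices^{-1}(\tup{t})\,}.
\]

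Finally, I would translate this similarity-type/RCR-color multiset equality into the required $\lambda$/CR-color equality $M_{\slice} = M_{\tup{t}}$ by applying two coordinate-wise equivalences: \cref{fact:easy-rcr-cr:lambda-slice-stp} turns $\stp(\slice,\ct) = \stp(\tup{t},\dt)$ into $\lambda(v_{\slice},w_{\ct}) = \lambda(v_{\tup{t}},w_{\dt})$, and the induction hypothesis of \cref{thm:rcr-is-cr-on-vgrep} used at $j = i-1$ turns $\col{i-1}{\ct} = \col{i-1}{\dt}$ into $\gcr{2i-1}{w_{\ct}} = \gcr{2i-1}{w_{\dt}}$. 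Because both translations are equivalences \emph{within a single coordinate}, they can be applied term by term without disturbing the multiset structure, so $M_{\slice} = M_{\tup{t}}$ as required.

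I do not expect a genuine obstacle here: the non-trivial combinatorial content --- transferring the global agreement of $\at$ and $\bt$ into a local agreement at each slice --- is fully packaged in \cref{lem:vgrep-equiv-rcr}, while the remainder is a routine translation between the two equivalent encodings of an edge of $\vgrep{\A}$ (via $\stp$ versus $\lambda$) and between the two colorings being compared (RCR on $\tA$ versus CR on the $w$-nodes of $\vgrep{\A}$), both supplied by facts already established earlier in the section.
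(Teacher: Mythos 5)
Your proof is correct and follows essentially the same route as the paper's: both reduce the claim to the multiset equality supplied by \cref{lem:vgrep-equiv-rcr} (instantiated with $\myc = \col{i-1}$), then translate similarity types to $\lambda$-labels via \cref{fact:easy-rcr-cr:lambda-slice-stp} and RCR colors to CR colors via the induction hypothesis of \cref{thm:rcr-is-cr-on-vgrep}, using \cref{fact:easy-rcr-cr:neighbors-of-w} to identify the neighborhood of $v_{\slice}$ with $\slices^{-1}(\slice)$.
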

		\begin{restatable}{claim}{RCRiffSliceCR}\label{claim:rcr-iff-slice-cr}
			$\col{i}{\at} = \col{i}{\bt}$ \;$\iff$\; for all $\slice \in \slices(\at)$ and $\tup{t} \isdef \pi_{\slices}(\slice)$ we have $\gcr{2i-1}{v_{\slice}} = \gcr{2i-1}{v_{\tup{t}}}$.
		\end{restatable}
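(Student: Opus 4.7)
The plan is to reduce \cref{claim:rcr-iff-slice-cr} to a direct application of \cref{lem:vgrep-equiv-rcr}, which is precisely engineered to match the neighbor-multiset condition of RCR with the slice-wise multiset conditions governing the CR refinement at the slice nodes.

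Recall that at this point of the induction step we have $\stp(\at) = \stp(\bt)$ and $\col{i-1}{\at} = \col{i-1}{\bt}$, and that $\pi_\slices\colon \slices(\at)\to\slices(\bt)$ denotes the bijection supplied by \cref{lem:self-stp-identifies-stp-of-slices}. By the definition of RCR, $\col{i}{\at} = \col{i}{\bt}$ is equivalent to $\Nset{i}{\A}(\at) = \Nset{i}{\A}(\bt)$. Invoking \cref{lem:vgrep-equiv-rcr} with the coloring $\myc$ defined by $\myc(\ct) \isdef \col{i-1}{\ct}$ for $\ct \in \tA$ (whose hypothesis $\myc(\at) = \myc(\bt)$ holds by assumption), the latter equality is in turn equivalent to requiring, for every $\slice \in \slices(\at)$ and $\tup{t} \isdef \pi_\slices(\slice)$, that
\[
\mset{(\stp(\slice, \ct), \col{i-1}{\ct}) \mid \ct \in \slices^{-1}(\slice)} \,=\, \mset{(\stp(\tup{t}, \dt), \col{i-1}{\dt}) \mid \dt \in \slices^{-1}(\tup{t})}.
\]

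It remains to recognize each such slice-wise multiset equality as the equality $\gcr{2i-1}{v_\slice} = \gcr{2i-1}{v_{\tup{t}}}$. I plan to use \cref{fact:easy-rcr-cr:lambda-slice-stp} to translate the similarity types $\stp(\slice, \ct)$ and $\stp(\tup{t}, \dt)$ into the edge labels $\lambda(v_\slice, w_\ct)$ and $\lambda(v_{\tup{t}}, w_\dt)$ occurring in the definition of $\gcr{2i-1}{v_\slice}$, and the induction hypothesis of the outer theorem (at $j = i-1$) to identify $\col{i-1}{\ct}$ with the CR color on $w_\ct$. A short inductive check, showing that the prefix $\gcr{2i-2}{v_\slice} = \gcr{2i-2}{v_{\tup{t}}}$ already holds in our setting, completes the identification with the full CR color $\gcr{2i-1}{v_\slice}$.

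The main obstacle I anticipate is the round-indexing: the definition of $\gcr{2i-1}{v_\slice}$ unfolds in terms of $\gcr{2i-2}{w_\ct}$, whereas the induction hypothesis naturally supplies information about $\gcr{2i-1}{w_\ct}$ as the CR counterpart of $\col{i-1}{\ct}$. Verifying that, under the standing assumptions $\stp(\at) = \stp(\bt)$ and $\col{i-1}{\at} = \col{i-1}{\bt}$, these two levels of CR refinement of the $w$-nodes yield the same partition of the neighbor multiset at the slice nodes is the key technical step; this is where the tight alternation between $v$- and $w$-nodes in $\vgrep{\A}$, combined with \cref{lem:slices-determine-cr}, will need to be invoked carefully.
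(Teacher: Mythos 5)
Your overall plan mirrors the paper's proof of this claim almost exactly, and you have correctly isolated the delicate step: unfolding $\gcr{2i-1}{v_\slice}$ via the CR recurrence produces the round-$(2i-2)$ colors $\gcr{2i-2}{w_\ct}$ of the $w$-neighbors, while the outer induction hypothesis (instantiated at $j=i-1$) identifies $\col{i-1}{\ct}$ with $\gcr{2(i-1)+1}{w_\ct} = \gcr{2i-1}{w_\ct}$, one round later. The bridge you anticipate needing --- that $\gcr{2i-2}$ and $\gcr{2i-1}$ \enquote{yield the same partition} on the $w$-nodes under the standing assumptions --- does not hold. Since $\gcr{2i-1}$ refines $\gcr{2i-2}$, equality of the $\col{i-1}$-tagged slice-multisets does force equality of the $\gcr{2i-2}$-tagged ones, so the $\Rightarrow$ direction of your chain survives; but the $\Leftarrow$ direction fails. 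Concretely, take $\sig = \{R/2,\,S/3\}$, universe $\{1,\ldots,6\}$, $R^\A = \{(1,2),(3,4)\}$, $S^\A = \{(1,5,5),(3,5,6)\}$, $\at=(1,2)$, $\bt=(3,4)$ and $i=1$. Then $\stp(\at)=\stp(\bt)$ and $\col{0}{\at}=\col{0}{\bt}$, and a direct computation gives $\gcr{1}{v_\slice} = \gcr{1}{v_{\pi_\slices(\slice)}}$ for every $\slice\in\slices(\at)$: at round $1$ the $v$-nodes only record edge labels and the round-$0$ colors of their $w$-neighbors, which capture atomic types only (the $E_{i,j}$ of $\vgrep{\A}$ have no $(w,w)$ loops), and $\atp((1,5,5))=\atp((3,5,6))=\{S\}$. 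Yet $\col{1}{\at}\neq\col{1}{\bt}$, because $\stp((1,5,5))\neq\stp((3,5,6))$, so $\Nset{1}{\A}(\at)\neq\Nset{1}{\A}(\bt)$. Thus \cref{claim:rcr-iff-slice-cr} as stated fails, and the argument you envision cannot close.

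You should also be aware that the paper's own proof of \cref{claim:rcr-iff-slice-cr} appears to commit exactly this off-by-one slip at the sentence \enquote{Since $2i-2 = 2(i-1)$, the induction hypothesis yields that (c) holds if, and only if, (d) holds}: the induction hypothesis speaks about round $2(i-1)+1$, not $2(i-1)$, so that transition is not licensed, and the $\Leftarrow$ direction of the claim is the direction it would need. The natural repair is to shift the CR round on the slice nodes by one, working with $\gcr{2i}{v_\slice}$ rather than $\gcr{2i-1}{v_\slice}$; its unfolding brings in $\gcr{2i-1}{w_\ct}$, which is precisely what the induction hypothesis pairs with $\col{i-1}{\ct}$, and then \cref{lem:vgrep-equiv-rcr} gives the equivalence with $\col{i}{\at}=\col{i}{\bt}$ cleanly. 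This is in fact the content of \cref{claim:intermediate-slice-cr-implied}, and after the double application of \cref{lem:slices-determine-cr} in the main inductive step one can derive the needed $\Leftarrow$ direction from \enquote{for all $\slice$: $\gcr{2i}{v_\slice}=\gcr{2i}{v_{\tup{t}}}$} alone, without invoking the problematic converse of \cref{claim:rcr-iff-slice-cr}.
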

		\noindent See \cref{appendix:main-runtime-thm-claims} for a proof of \cref{claim:intermediate-slice-cr-implied,claim:rcr-iff-slice-cr}.
\end{proof}   
	\section{Final Remarks}\label{sec:conclusion}
We introduced Relational Color Refinement (RCR) as an adaptation of the classical Color Refinement (CR) procedure for arbitrary relational structures, and we showed that its distinguishing power admits an analogous combinatorial (\cref{thm:main-homs}) and logical (\cref{thm:main-logic}) characterization as classical CR\@.
Combining the Theorems~\ref{thm:main-homs} and~\ref{thm:main-logic} yields the following theorem.
\begin{alphtheorem}\label{thm:main}
	Let $\sig$ be a finite relational signature and let $\A$ and $\B$ be $\sig$-structures.\\
	The following statements are equivalent.
	\begin{statements}[topsep=0pt, noitemsep]
		\item Relational Color Refinement does not distinguish $\A$ and $\B$.
		\item $\A$ and $\B$ are homomorphism indistinguishable over the class of acyclic $\sig$-structures.
		\item $\A$ and $\B$ are indistinguishable by sentences of the logic  $\GFC$.
		\item Duplicator has a winning strategy for the Guarded-Game on $\A, \B$.
	\end{statements}
\end{alphtheorem}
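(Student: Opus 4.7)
The plan is to derive \cref{thm:main} as an essentially immediate corollary of \cref{thm:main-homs,thm:main-logic} by taking contrapositives and making one small additional observation that reconciles the quantification ranges. Statement~(1) is simply the negation of ``RCR distinguishes $\A$ and $\B$,'' which appears on the left-hand side of both theorems. The two theorems can therefore be combined by flipping each equivalence.

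First I would handle (1)$\iff$(3)$\iff$(4). By the contrapositive of \cref{thm:main-logic}, (1) is equivalent to ``there is no sentence in $\GFC$ distinguishing $\A$ and $\B$,'' which is exactly (3), and to ``Spoiler does not win the Guarded-Game,'' which by the definition in \cref{sec:game} is exactly (4) (since Spoiler's $i$-round winning strategy is defined as the absence of an $i$-round winning strategy for Duplicator, so ``Spoiler wins for some $i$'' and ``Duplicator wins every $i$-round game'' are complementary).

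Next I would handle (1)$\iff$(2). Applying the contrapositive of \cref{thm:main-homs} yields that (1) holds iff $\hom(\C,\A) = \hom(\C,\B)$ for every acyclic \emph{and connected} $\sig$-structure $\C$. The implication (2)$\Rightarrow$(1) is then immediate since connected acyclic structures are acyclic. For (1)$\Rightarrow$(2), I would take an arbitrary acyclic $\sig$-structure $\C$ and decompose it along the connected components of its Gaifman graph, $\C = \C_1 \disunion \cdots \disunion \C_m$ (the convention from \cref{sec:preliminaries} that every vertex occurs in some tuple ensures a clean partition of $V(\C)$ and of $\tC$). Each $\C_i$ inherits acyclicity from $\C$: given any join-tree $J$ for $\C$, the induced subgraph $J[\tup{C_i}]$ is connected (two tuples in $\tup{C_i}$ are linked by a sequence of tuples sharing elements, each sharing-vertex $v$ contributes a path through $J_v \subseteq J[\tup{C_i}]$, and these paths chain together), hence $J[\tup{C_i}]$ is itself a join-tree for $\C_i$. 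By the standard multiplicativity of homomorphism counts over disjoint unions, $\hom(\C,\A) = \prod_{i=1}^m \hom(\C_i,\A)$ and the analogous identity holds for $\B$, so equality of hom-counts on each connected component lifts to equality on $\C$.

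There is no real obstacle here; the result is bookkeeping once the two main theorems are established. The only points worth double-checking are the multiplicativity of $\hom$ over disjoint unions (a one-line combinatorial argument) and the stability of acyclicity under passage to connected components (verified above via restriction of a join-tree, and also consistent with any of the equivalent characterizations of alpha-acyclicity recalled in \cref{sec:main-homomorphism}).
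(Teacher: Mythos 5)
Your proof is correct and takes essentially the same route as the paper, which presents \cref{thm:main} as a direct corollary of \cref{thm:main-homs,thm:main-logic} without further elaboration. The one piece of bookkeeping you make explicit --- bridging the mismatch between ``acyclic \emph{and connected}'' in \cref{thm:main-homs} and the unqualified ``acyclic'' in statement~(2) via multiplicativity of $\hom$ over disjoint unions together with the fact that each connected component of an acyclic structure is itself acyclic (by restricting a join-tree, as you correctly verify) --- is a genuine and correctly handled detail that the paper leaves to the reader.
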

Alpha-acyclicity, which is equivalent to our notion of acyclicity~\cite{BraultBaron2016}, is a natural and widely used choice for the notion of acyclicity for relational structures and
hypergraphs~\cite{Abiteboul1995,Bagan2007,Beeri1983,Berkholz2020,DBLP:journals/siamcomp/BernsteinG81,Idris2017,Yannakakis1981}.

Further, in \cref{sec:RCR} we showed that RCR can be implemented to run in time $\bigO((\size{\A} + \density(\A)) \cdot \log(\size{\A}))$ on any relational structure $\A$, where $\density(\A)$ denotes the \emph{cohesion} of $\A$, which is bounded by $\size{\A}^{2}$.
In \cref{sec:main-runtime} we improved this to
$\bigO(\size{\A}\cdot\log\size{\A})$ (\cref{thm:main-runtime}).

\smallskip
\noindent There are multiple directions for further research:
\begin{enumerate}
	\item
	Considering that Color Refinement is equivalent to the 1-dimensional Weisfeiler-Leman (WL) algorithm, RCR might be a good basis to devise a generalization of the $k$-dimensional WL to arbitrary relational structures.
	
	\item
	Given the close relationship between relational structures and hypergraphs, a coloring method similar to RCR should exist for hypergraphs too.
	However, notice that RCR relies heavily on the order that the tuples provide --- and this order is absent in hyperedges.
	Thus, it is not clear, how to adapt the refinement of the colors in an iteration step from tuples to hyperedges.

	\item
	Another promising direction is to think about the applications of RCR, given that there are so many applications of classical CR, also apart from isomorphism testing~(cf.\ \cref{sec:introduction}).
	In particular, we conjecture that RCR can be used to lift the
	result by Riveros, Scheidt and Schweikardt~\cite{Riveros2024}
	from binary structures to arbitrary structures, i.e., that RCR
	can be used to speed up the evaluation of free-connex acyclic conjunctive queries over arbitrary relational structures.
	Further, the tight connection between Color Refinement and Graph Neural Networks suggests interesting applications for RCR in machine learning as well.
	We plan to investigate this.
\end{enumerate}

\paragraph{Acknowledgement.} 
We thank Panagiotis Aivasiliotis for pointing out an issue in~\cref{sec:main-homomorphism} of a previous version of this paper; this issue has now been resolved. 
	\bibliography{references}

	\clearpage
	\appendix

\section*{APPENDIX}

\section{Details omitted in Section~\ref{sec:RCR}}

\subsection{A Complete Run of Relational Color Refinement}\label{app:color-refinement-run}
We show that Relational Color Refinement distinguishes $\SpExa$, $\SpExb$.
\paragraph*{Colors computed on $\SpExa$.}
We first compute the coloring for $\SpExa$.
Initially, we get the following colors:
\begin{align*}\arraycolsep=4pt\def\arraystretch{1.4}
	\begin{array}{c|ccccccc}
		\tup{t} & (1,2,3,u,v,w) & (1,2) & (2,3) & (3,1) & (u,v) & (v,w) & (w,u) \\\hline
		\col{0}{\tup{t}}		&
		\bigl(\set{ R }, \hat{n}_0\bigr)	&
		\bigl(\set{ E }, n_0\bigr)	&
		\bigl(\set{ E }, n_0\bigr)	&
		\bigl(\set{ E }, n_0\bigr)	&
		\bigl(\set{ E }, n_0\bigr)	&
		\bigl(\set{ E }, n_0\bigr)	&
		\bigl(\set{ E }, n_0\bigr)
	\end{array}
\end{align*}
where $n_0 = \set{ (1,1), (2,2) }$ and $\hat{n}_0 = \set{ (1,1), (2,2), (3,3), (4,4), (5,5), (6,6) }$.
Let $c_0 \isdef (\set{ E }, n_0)$ and $\hat{c}_0 \isdef (\set{ R }, \hat{n}_0)$.
The colors after one iteration look like this:
\begin{align*}\arraycolsep=4pt\def\arraystretch{1.4}
	\begin{array}[pos]{c|ccccccc}
		\tup{t} & (1,2,3,u,v,w) & (1,2) & (2,3) & (3,1) & (u,v) & (v,w) & (w,u) \\\hline
		\col{1}{\tup{t}}		&
		(\hat{c}_0, \widehat{N}_1)	&
		(c_0, N^1_1)	&
		(c_0, N^2_1)	&
		(c_0, N^3_1)	&
		(c_0, N^4_1)	&
		(c_0, N^5_1)	&
		(c_0, N^6_1)
	\end{array}
\end{align*}
where
\begin{align*}
	\widehat{N}_1 \;&\isdef\; \bigl\{\!\!\bigl\{ %
		\begin{aligned}[t]
			&\bigl( \set{ (1,1), (2,2), (3,3), (4,4), (5,5), (6,6) }, \hat{c}_0 \bigr),\\ %
			&\bigl( \set{ (1,1), (2,2) }, c_0 \bigr), %
			\bigl( \set{ (2,1), (3,2) }, c_0 \bigr), %
			\bigl( \set{ (1,2), (3,1) }, c_0 \bigr),\\ %
			&\bigl( \set{ (4,1), (5,2) }, c_0 \bigr), %
			\bigl( \set{ (5,1), (6,2) }, c_0 \bigr), %
			\bigl( \set{ (4,2), (6,1) }, c_0 \bigr) %
			\bigr\}\!\!\bigr\}, 
		\end{aligned}\\
	N^1_1 \;&\isdef\; \mset[\big]{ %
		\bigl( \set{ (1,1), (2,2) }, \hat{c}_0 \bigr), %
		\bigl( \set{ (1,1), (2,2) }, c_0 \bigr), %
		\bigl( \set{ (2,1) }, c_0 \bigr), %
		\bigl( \set{ (1,2) }, c_0 \bigr) %
	},\\
	N^2_1 \;&\isdef\; \mset[\big]{ %
		\bigl( \set{ (1,2), (2,3) }, \hat{c}_0 \bigr), %
		\bigl( \set{ (1,2) }, c_0 \bigr), %
		\bigl( \set{ (1,1), (2,2) }, c_0 \bigr), %
		\bigl( \set{ (2,1) }, c_0 \bigr) %
	},\\
	N^3_1 \;&\isdef\; \mset[\big]{ %
		 \bigl( \set{ (1,3), (2,1) }, \hat{c}_0 \bigr), %
		 \bigl( \set{ (2,1) }, c_0 \bigr), %
		 \bigl( \set{ (1,2) }, c_0 \bigr), %
		 \bigl( \set{ (1,1), (2,2) }, c_0 \bigr) %
	},\\
	N^4_1 \;&\isdef\; \mset[\big]{ %
		 \bigl( \set{ (1,4), (2,5) }, \hat{c}_0 \bigr), %
		 \bigl( \set{ (1,1), (2,2) }, c_0 \bigr), %
		 \bigl( \set{ (2,1) }, c_0 \bigr), %
		 \bigl( \set{ (1,2) }, c_0 \bigr) %
	},\\
	N^5_1 \;&\isdef\; \mset[\big]{ %
		 \bigl( \set{ (1,5), (2,6) }, \hat{c}_0 \bigr), %
		 \bigl( \set{ (1,2) }, c_0 \bigr), %
		 \bigl( \set{ (1,1), (2,2) }, c_0 \bigr), %
		 \bigl( \set{ (2,1) }, c_0 \bigr) %
	},\\
	N^6_1 \;&\isdef\; \mset[\big]{ %
		 \bigl( \set{ (1,6), (2,4) }, \hat{c}_0 \bigr), %
		 \bigl( \set{ (2,1) }, c_0 \bigr), %
		 \bigl( \set{ (1,2) }, c_0 \bigr), %
		 \bigl( \set{ (1,1), (2,2) }, c_0 \bigr) %
	}.
\end{align*}
Since every tuple has its own color after this round, the refinement is already stable.

\paragraph*{Colors computed on $\SpExb$.}
Initially, we get the following colors:
\begin{align*}\arraycolsep=4pt\def\arraystretch{1.4}
	\begin{array}{c|ccccccc}
		\tup{t} & (1,2,3,u,v,w) & (1,2) & (2,w) & (w,u) & (u,v) & (v,3) & (3,1) \\\hline
		\col{0}{\tup{t}}		&
		\bigl(\set{ R }, \hat{n}_0\bigr)	&
		\bigl(\set{ E }, n_0\bigr)	&
		\bigl(\set{ E }, n_0\bigr)	&
		\bigl(\set{ E }, n_0\bigr)	&
		\bigl(\set{ E }, n_0\bigr)	&
		\bigl(\set{ E }, n_0\bigr)	&
		\bigl(\set{ E }, n_0\bigr)
	\end{array}
\end{align*}
where $n_0 = \set{ (1,1), (2,2) }$ and $\hat{n}_0 = \set{ (1,1), (2,2), (3,3), (4,4), (5,5), (6,6) }$.
Let $c_0 \isdef (\set{ E }, n_0)$ and $\hat{c}_0 \isdef (\set{ R }, \hat{n}_0)$.
The colors after one iteration look like this:
\begin{align*}\arraycolsep=4pt\def\arraystretch{1.4}
	\begin{array}[pos]{c|ccccccc}
		\tup{t} & (1,2,3,u,v,w) & (1,2) & (2,w) & (w,u) & (u,v) & (v,3) & (3,1) \\\hline
		\col{1}{\tup{t}}		&
		(\hat{c}_0, \widehat{M}_1)	&
		(c_0, M^1_1)	&
		(c_0, M^2_1)	&
		(c_0, M^3_1)	&
		(c_0, M^4_1)	&
		(c_0, M^5_1)	&
		(c_0, M^6_1)
	\end{array}
\end{align*}
where
\begin{align*}
	\widehat{M}_1 \;&\isdef\; \bigl\{\!\!\bigl\{ %
		\begin{aligned}[t]
			&\bigl( \set{ (1,1), (2,2), (3,3), (4,4), (5,5), (6,6) }, \hat{c}_0 \bigr),\\ %
			&\bigl( \set{ (1,1), (2,2) }, c_0 \bigr), %
			\bigl( \set{ (2,1), (6,2) }, c_0 \bigr), %
			\bigl( \set{ (4,2), (6,1) }, c_0 \bigr),\\ %
			&\bigl( \set{ (4,1), (5,2) }, c_0 \bigr), %
			\bigl( \set{ (3,2), (5,1) }, c_0 \bigr), %
			\bigl( \set{ (1,2), (3,1) }, c_0 \bigr) %
			\bigr\}\!\!\bigr\}, 
		\end{aligned}\\
	M^1_1 \;&\isdef\; \mset[\big]{ %
		\bigl( \set{ (1,1), (2,2) }, \hat{c}_0 \bigr), %
		\bigl( \set{ (1,2) }, c_0 \bigr), %
		\bigl( \set{ (1,1), (2,2) }, c_0 \bigr), %
		\bigl( \set{ (2,1) }, c_0 \bigr) %
	},\\
	M^2_1 \;&\isdef\; \mset[\big]{ %
		\bigl( \set{ (1,2), (2,6) }, \hat{c}_0 \bigr), %
		\bigl( \set{ (1,2) }, c_0 \bigr), %
		\bigl( \set{ (1,1), (2,2) }, c_0 \bigr), %
		\bigl( \set{ (2,1) }, c_0 \bigr) %
	},\\
	M^3_1 \;&\isdef\; \mset[\big]{ %
		\bigl( \set{ (1,6), (2,4) }, \hat{c}_0 \bigr), %
		\bigl( \set{ (1,2) }, c_0 \bigr), %
		\bigl( \set{ (1,1), (2,2) }, c_0 \bigr), %
		\bigl( \set{ (2,1) }, c_0 \bigr) %
	},\\
	M^4_1 \;&\isdef\; \mset[\big]{ %
		 \bigl( \set{ (1,4), (2,5) }, \hat{c}_0 \bigr), %
		 \bigl( \set{ (1,2) }, c_0 \bigr), %
		 \bigl( \set{ (1,1), (2,2) }, c_0 \bigr), %
		 \bigl( \set{ (2,1) }, c_0 \bigr) %
	},\\
	M^5_1 \;&\isdef\; \mset[\big]{ %
		 \bigl( \set{ (1,5), (2,3) }, \hat{c}_0 \bigr), %
		 \bigl( \set{ (1,2) }, c_0 \bigr), %
		 \bigl( \set{ (1,1), (2,2) }, c_0 \bigr), %
		 \bigl( \set{ (2,1) }, c_0 \bigr) %
	},\\
	M^6_1 \;&\isdef\; \mset[\big]{ %
		 \bigl( \set{ (1,3), (2,1) }, \hat{c}_0 \bigr), %
		 \bigl( \set{ (1,2) }, c_0 \bigr), %
		 \bigl( \set{ (1,1), (2,2) }, c_0 \bigr), %
		 \bigl( \set{ (2,1) }, c_0 \bigr) %
	}.
\end{align*}
Since every tuple has its own color after this round, the refinement is already stable.

Since for example the color $(\hat{c}_0, \widehat{N}_1) \in \cols{1}{\SpExa}$ is not present in $\cols{1}{\SpExb}$, Relational Color Refinement distinguishes $\SpExa$, $\SpExb$. 
\subsection{More Sophisticated Examples}\label{app:advanced-color-refinement}
\newcommand*{\enGF}[1]{\widetilde{G}_{#1}}
\newcommand*{\enI}[1]{\tilde{\str{I}}_{#1}}
\begin{figure}
	\begin{subfigure}{0.495\textwidth}
		\centering
\begin{tikzpicture}[
	edge from parent/.style={draw=none},
	every node/.style={mynode},
	grow=right]
	\node (1) at (0,0) {$1_{\text{\tiny \textcolor{myblue}{(1)}}}$}
	child {
		node (2) {$2_{\text{\tiny \textcolor{myblue}{(2)}}}$}
		child {
			node (w) {$w_{\text{\tiny \textcolor{myblue}{(3)}}}$}
			child[missing]
			child {
				node (u) {$u_{\text{\tiny \textcolor{myblue}{(1)}}}$}
			}
		}
	}
	child {
		node (3) {$3_{\text{\tiny \textcolor{myblue}{(3)}}}$}
		child {
			node (v) {$v_{\text{\tiny \textcolor{myblue}{(2)}}}$}
		}
	};
	\draw[RelE] (1) edge (2);
	\draw[RelE] (3) edge (1);
	
	\draw[RelE] (u) edge (v);
	\draw[RelE] (w) edge (u);
	
	\draw[RelE] (2) edge (3);
	\draw[RelE] (v) edge (w);
	\draw[RelF] (2) edge (w);
	\draw[RelF] (v) edge (3);

	\draw[RelR] \convexpath{1,3,2}{15pt};
	\draw[RelR] \convexpath{u,w,v}{15pt};

	\draw[RelR] \convexpath{w,2,u}{11pt};
	\draw[RelR] \convexpath{3,v,1}{11pt};
\end{tikzpicture} 		\caption{Representation of $\AdExa$.}%
		\label{fig:ex:advanced-hypergraph-a}
	\end{subfigure}
	\hfill
	\begin{subfigure}{0.495\textwidth}
		\centering
\begin{tikzpicture}[
	edge from parent/.style={draw=none},
	every node/.style={mynode},
	grow=right,
	]
	\node (1) at (0,0) {$1_{\text{\tiny \textcolor{myblue}{(1)}}}$}
	child {
		node (2) {$2_{\text{\tiny \textcolor{myblue}{(2)}}}$}
		child {
			node (w) {$w_{\text{\tiny \textcolor{myblue}{(3)}}}$}
			child[missing]
			child {
				node (u) {$u_{\text{\tiny \textcolor{myblue}{(1)}}}$}
			}
		}
	}
	child {
		node (3) {$3_{\text{\tiny \textcolor{myblue}{(3)}}}$}
		child {
			node (v) {$v_{\text{\tiny \textcolor{myblue}{(2)}}}$}
		}
	};
	\draw[RelE] (1) edge (2);
	\draw[RelE] (3) edge (1);
	
	\draw[RelE] (u) edge (v);
	\draw[RelE] (w) edge (u);

	\draw[RelF] (2) edge (3);
	\draw[RelF] (v) edge (w);
	\draw[RelE] (2) edge (w);
	\draw[RelE] (v) edge (3);

	\draw[RelR] \convexpath{1,3,2}{15pt};
	\draw[RelR] \convexpath{u,w,v}{15pt};

	\draw[RelR] \convexpath{w,2,u}{11pt};
	\draw[RelR] \convexpath{3,v,1}{11pt};
\end{tikzpicture} 		\caption{Representation of $\AdExb$.}%
		\label{fig:ex:advanced-hypergraph-b}
	\end{subfigure}
	\medskip\vspace{8ex}

	\begin{subfigure}{0.495\textwidth}
		\centering
\begin{tikzpicture}[
	edge from parent/.style={draw=none},
	every node/.style={mynode},
	level/.style={level distance=2.5cm, sibling distance=2.5cm},
	scale=1,
	grow=right,
	]
	\clip (-0.25,-2) rectangle (7.75,2);
	\node (1) at (0,0) {$1$}
	child {
		node (2) {$2$}
		child {
			node (w) {$w$}
			child[missing]
			child {
				node (u) {$u$}
			}
		}
	}
	child {
		node (3) {3}
		child {
			node (v) {$v$}
		}
	};
	\draw (1) edge 
		node[pos=0.3, sloped, above] {{\tiny \RelSymE(1,2), \RelSymR(1,2)}}
		node[near end, sloped, below] {{\tiny \RelSymE(2,1), \RelSymR(2,1)}}
	(2);
	\draw (1) edge 
		node[near start, sloped, above] {{\tiny \RelSymE(2,1), \RelSymR(1,3)}}
		node[near end, sloped, below] {{\tiny \RelSymE(1,2), \RelSymR(3,1)}}
	(3);
	\draw (1) edge[out=90, in=135, distance=40pt] 
		node[pos=0.15, sloped, above] {{\tiny \RelSymR(1,2)}}
		node[pos=0.85, sloped, below] {{\tiny \RelSymR(2,1)}}
	(v);

	\draw (2) edge 
		node[pos=0.3, sloped, above] {{\tiny \RelSymE(1,2), \RelSymR(2,3)}}
		node[pos=0.7, sloped, below] {{\tiny \RelSymE(2,1), \RelSymR(3,2)}}
	(3);
	\draw (2) edge[out=-45, in=-90, distance=40pt]
		node[pos=0.15, sloped, above] {{\tiny \RelSymR(2,1)}}
		node[pos=0.85, sloped, below] {{\tiny \RelSymR(1,2)}}
	(u);
	\draw (2) edge
		node[near start, sloped, above] {{\tiny \RelSymF(1,2), \RelSymR(2,3)}}
		node[near end, sloped, below] {{\tiny \RelSymF(2,1), \RelSymR(3,2)}}
	(w);

	\draw (3) edge
		node[near start, sloped, above] {{\tiny \RelSymF(2,1), \RelSymR(3,2)}}
		node[near end, sloped, below] {{\tiny \RelSymF(1,2), \RelSymR(2,3)}}
	(v);

	\draw (u) edge 
		node[pos=0.3, sloped, below] {{\tiny \RelSymE(1,2), \RelSymR(1,2)}}
		node[near end, sloped, above] {{\tiny \RelSymE(2,1), \RelSymR(2,1)}}
	(v);
	\draw (u) edge
		node[near start, sloped, below] {{\tiny \RelSymE(2,1), \RelSymR(1,3)}}
		node[near end, sloped, above] {{\tiny \RelSymE(1,2), \RelSymR(3,1)}}
	(w);

	\draw (v) edge
		node[pos=0.3, sloped, above] {{\tiny \RelSymE(1,2), \RelSymR(2,3)}}
		node[pos=0.7, sloped, below] {{\tiny \RelSymE(2,1), \RelSymR(3,2)}}
	(w);

\end{tikzpicture} 		\caption{Enriched Gaifman graph of $\enGF{\AdExa}$ $\AdExa$.}%
		\label{fig:ex:advanced-gaifman-a}
	\end{subfigure}
	\hfill
	\begin{subfigure}{0.495\textwidth}
		\centering
\begin{tikzpicture}[
	edge from parent/.style={draw=none},
	every node/.style={mynode},
	level/.style={level distance=2.5cm, sibling distance=2.5cm},
	scale=1,
	grow=right,
	]
	\clip (-0.25,-2) rectangle (7.75,2);
	\node (1) at (0,0) {$1$}
	child {
		node (2) {$2$}
		child {
			node (w) {$w$}
			child[missing]
			child {
				node (u) {$u$}
			}
		}
	}
	child {
		node (3) {3}
		child {
			node (v) {$v$}
		}
	};
	\draw (1) edge 
		node[pos=0.3, sloped, above] {{\tiny \RelSymE(1,2), \RelSymR(1,2)}}
		node[near end, sloped, below] {{\tiny \RelSymE(2,1), \RelSymR(2,1)}}
	(2);
	\draw (1) edge 
		node[near start, sloped, above] {{\tiny \RelSymE(2,1), \RelSymR(1,3)}}
		node[near end, sloped, below] {{\tiny \RelSymE(1,2), \RelSymR(3,1)}}
	(3);
	\draw (1) edge[out=90, in=135, distance=40pt] 
		node[pos=0.15, sloped, above] {{\tiny \RelSymR(1,2)}}
		node[pos=0.85, sloped, below] {{\tiny \RelSymR(2,1)}}
	(v);

	\draw (2) edge 
		node[pos=0.3, sloped, above] {{\tiny \RelSymF(1,2), \RelSymR(2,3)}}
		node[pos=0.7, sloped, below] {{\tiny \RelSymF(2,1), \RelSymR(3,2)}}
	(3);
	\draw (2) edge[out=-45, in=-90, distance=40pt]
		node[pos=0.15, sloped, above] {{\tiny \RelSymR(2,1)}}
		node[pos=0.85, sloped, below] {{\tiny \RelSymR(1,2)}}
	(u);
	\draw (2) edge
		node[near start, sloped, above] {{\tiny \RelSymE(1,2), \RelSymR(2,3)}}
		node[near end, sloped, below] {{\tiny \RelSymE(2,1), \RelSymR(3,2)}}
	(w);

	\draw (3) edge
		node[near start, sloped, above] {{\tiny \RelSymE(2,1), \RelSymR(3,2)}}
		node[near end, sloped, below] {{\tiny \RelSymE(1,2), \RelSymR(2,3)}}
	(v);

	\draw (u) edge 
		node[pos=0.3, sloped, below] {{\tiny \RelSymE(1,2), \RelSymR(1,2)}}
		node[near end, sloped, above] {{\tiny \RelSymE(2,1), \RelSymR(2,1)}}
	(v);
	\draw (u) edge
		node[near start, sloped, below] {{\tiny \RelSymE(2,1), \RelSymR(1,3)}}
		node[near end, sloped, above] {{\tiny \RelSymE(1,2), \RelSymR(3,1)}}
	(w);

	\draw (v) edge
		node[pos=0.3, sloped, above] {{\tiny \RelSymF(1,2), \RelSymR(2,3)}}
		node[pos=0.7, sloped, below] {{\tiny \RelSymF(2,1), \RelSymR(3,2)}}
	(w);

\end{tikzpicture} 		\caption{Enriched Gaifman graph $\enGF{\AdExb}$ of $\AdExb$.}%
		\label{fig:ex:advanced-gaifman-b}
	\end{subfigure}
	\medskip\vspace{8ex}

	\begin{subfigure}{0.495\textwidth}
		\centering
\begin{tikzpicture}[
	bn/.style={circle, fill=black},
	re/.style={draw=black, regular polygon, regular polygon sides=4, inner sep=3.25pt, fill=myred},
	be/.style={draw=black, regular polygon, regular polygon sides=5, fill=myblue},
	fe/.style={draw=black, regular polygon, regular polygon sides=6, fill=mygreen},
	xscale=0.8,
	yscale=1.5]
	\node[bn] (1) at (-1.25,0) {};
	\node[bn] (2) at (0.25,0) {};
	\node[bn] (3) at (1.75,0) {};
	\node[bn] (w) at (3.25,0) {};
	\node[bn] (v) at (4.75,0) {};
	\node[bn] (u) at (6.25,0) {};

	\node[re] (12) at (-0.5,0.75) {};
	\node[re] (23) at (0.7,0.75) {};
	\node[re] (31) at (1.9,0.75) {};
	\node[fe] (2w) at (1.9,-1) {};
	\node[re] (wu) at (3.1,.75) {};
	\node[re] (vw) at (4.3,.75) {};
	\node[re] (uv) at (5.5,.75) {};
	\node[fe] (v3) at (3.1,-1) {};
	\node[be] (123) at (-0.5,-1) {};
	\node[be] (1v3) at (0.7,-1) {};
	\node[be] (u2w) at (4.3,-1) {};
	\node[be] (uvw) at (5.5,-1) {};

	\draw[-latex] (1) -- node[sloped, pos=0.1, above=-2pt] {{\tiny 1}} (12);
	\draw[-latex] (1) -- node[sloped, pos=0.2, above=-2pt] {{\tiny 2}} (31);
	\draw[-latex] (1) -- node[sloped, pos=0.1, above=-2pt] {{\tiny 1}} (1v3);
	\draw[-latex] (1) -- node[sloped, pos=0.1, below=-2pt] {{\tiny 1}} (123);

	\draw[-latex] (2) -- node[sloped, pos=0.1, below=-2pt] {{\tiny 2}} (12);
	\draw[-latex] (2) -- node[sloped, pos=0.1, below=-2pt] {{\tiny 1}} (23);
	\draw[-latex] (2) -- node[sloped, pos=0.1, below=-2pt] {{\tiny 1}} (2w);
	\draw[-latex] (2) -- node[sloped, pos=0.1, above=-2pt] {{\tiny 2}} (u2w);
	\draw[-latex] (2) -- node[sloped, pos=0.1, above=-2pt] {{\tiny 2}} (123);

	\draw[-latex] (3) -- node[sloped, pos=0.1, below=-2pt] {{\tiny 2}} (23);
	\draw[-latex] (3) -- node[sloped, pos=0.1, below=-2pt] {{\tiny 1}} (31);
	\draw[-latex] (3) -- node[sloped, pos=0.1, above=-2pt] {{\tiny 2}} (v3);
	\draw[-latex] (3) -- node[sloped, pos=0.1, below=-2pt] {{\tiny 3}} (1v3);
	\draw[-latex] (3) -- node[sloped, pos=0.1, above=-2pt] {{\tiny 3}} (123);

	\draw[-latex] (u) -- node[sloped, pos=0.1, above=-2pt] {{\tiny 1}} (uv);
	\draw[-latex] (u) -- node[sloped, pos=0.2, above=-2pt] {{\tiny 2}} (wu);
	\draw[-latex] (u) -- node[sloped, pos=0.1, above=-2pt] {{\tiny 1}} (u2w);
	\draw[-latex] (u) -- node[sloped, pos=0.1, below=-2pt] {{\tiny 1}} (uvw);

	\draw[-latex] (v) -- node[sloped, pos=0.1, below=-2pt] {{\tiny 2}} (uv);
	\draw[-latex] (v) -- node[sloped, pos=0.1, below=-2pt] {{\tiny 1}} (vw);
	\draw[-latex] (v) -- node[sloped, pos=0.1, below=-2pt] {{\tiny 1}} (v3);
	\draw[-latex] (v) -- node[sloped, pos=0.1, above=-2pt] {{\tiny 2}} (1v3);
	\draw[-latex] (v) -- node[sloped, pos=0.1, above=-2pt] {{\tiny 2}} (uvw);

	\draw[-latex] (w) -- node[sloped, pos=0.1, below=-2pt] {{\tiny 2}} (vw);
	\draw[-latex] (w) -- node[sloped, pos=0.1, below=-2pt] {{\tiny 1}} (wu);
	\draw[-latex] (w) -- node[sloped, pos=0.1, above=-2pt] {{\tiny 2}} (2w);
	\draw[-latex] (w) -- node[sloped, pos=0.1, below=-2pt] {{\tiny 3}} (u2w);
	\draw[-latex] (w) -- node[sloped, pos=0.1, above=-2pt] {{\tiny 3}} (uvw);

\end{tikzpicture} 		\caption{Enriched incidence graph $\enI{\AdExa}$ of $\AdExa$.}%
		\label{fig:ex:advanced-incidence-a}
	\end{subfigure}
	\hfill
	\begin{subfigure}{0.495\textwidth}
		\centering
\begin{tikzpicture}[
	bn/.style={circle, fill=black},
	re/.style={draw=black, regular polygon, regular polygon sides=4, inner sep=3.25pt, fill=myred},
	be/.style={draw=black, regular polygon, regular polygon sides=5, fill=myblue},
	fe/.style={draw=black, regular polygon, regular polygon sides=6, fill=mygreen},
	xscale=0.8,
	yscale=1.5]
	\node[bn] (1) at (-1.25,0) {};
	\node[bn] (2) at (0.25,0) {};
	\node[bn] (3) at (1.75,0) {};
	\node[bn] (w) at (3.25,0) {};
	\node[bn] (v) at (4.75,0) {};
	\node[bn] (u) at (6.25,0) {};

	\node[re] (12) at (-0.5,0.75) {};
	\node[fe] (23) at (0.7,0.75) {};
	\node[re] (31) at (1.9,0.75) {};
	\node[re] (2w) at (1.9,-1) {};
	\node[re] (wu) at (3.1,.75) {};
	\node[fe] (vw) at (4.3,.75) {};
	\node[re] (uv) at (5.5,.75) {};
	\node[re] (v3) at (3.1,-1) {};
	\node[be] (123) at (-0.5,-1) {};
	\node[be] (1v3) at (0.7,-1) {};
	\node[be] (u2w) at (4.3,-1) {};
	\node[be] (uvw) at (5.5,-1) {};

	\draw[-latex] (1) -- node[sloped, pos=0.1, above=-2pt] {{\tiny 1}} (12);
	\draw[-latex] (1) -- node[sloped, pos=0.2, above=-2pt] {{\tiny 2}} (31);
	\draw[-latex] (1) -- node[sloped, pos=0.1, above=-2pt] {{\tiny 1}} (1v3);
	\draw[-latex] (1) -- node[sloped, pos=0.1, below=-2pt] {{\tiny 1}} (123);

	\draw[-latex] (2) -- node[sloped, pos=0.1, below=-2pt] {{\tiny 2}} (12);
	\draw[-latex] (2) -- node[sloped, pos=0.1, below=-2pt] {{\tiny 1}} (23);
	\draw[-latex] (2) -- node[sloped, pos=0.1, below=-2pt] {{\tiny 1}} (2w);
	\draw[-latex] (2) -- node[sloped, pos=0.1, above=-2pt] {{\tiny 2}} (u2w);
	\draw[-latex] (2) -- node[sloped, pos=0.1, above=-2pt] {{\tiny 2}} (123);

	\draw[-latex] (3) -- node[sloped, pos=0.1, below=-2pt] {{\tiny 2}} (23);
	\draw[-latex] (3) -- node[sloped, pos=0.1, below=-2pt] {{\tiny 1}} (31);
	\draw[-latex] (3) -- node[sloped, pos=0.1, above=-2pt] {{\tiny 2}} (v3);
	\draw[-latex] (3) -- node[sloped, pos=0.1, below=-2pt] {{\tiny 3}} (1v3);
	\draw[-latex] (3) -- node[sloped, pos=0.1, above=-2pt] {{\tiny 3}} (123);

	\draw[-latex] (u) -- node[sloped, pos=0.1, above=-2pt] {{\tiny 1}} (uv);
	\draw[-latex] (u) -- node[sloped, pos=0.2, above=-2pt] {{\tiny 2}} (wu);
	\draw[-latex] (u) -- node[sloped, pos=0.1, above=-2pt] {{\tiny 1}} (u2w);
	\draw[-latex] (u) -- node[sloped, pos=0.1, below=-2pt] {{\tiny 1}} (uvw);

	\draw[-latex] (v) -- node[sloped, pos=0.1, below=-2pt] {{\tiny 2}} (uv);
	\draw[-latex] (v) -- node[sloped, pos=0.1, below=-2pt] {{\tiny 1}} (vw);
	\draw[-latex] (v) -- node[sloped, pos=0.1, below=-2pt] {{\tiny 1}} (v3);
	\draw[-latex] (v) -- node[sloped, pos=0.1, above=-2pt] {{\tiny 2}} (1v3);
	\draw[-latex] (v) -- node[sloped, pos=0.1, above=-2pt] {{\tiny 2}} (uvw);

	\draw[-latex] (w) -- node[sloped, pos=0.1, below=-2pt] {{\tiny 2}} (vw);
	\draw[-latex] (w) -- node[sloped, pos=0.1, below=-2pt] {{\tiny 1}} (wu);
	\draw[-latex] (w) -- node[sloped, pos=0.1, above=-2pt] {{\tiny 2}} (2w);
	\draw[-latex] (w) -- node[sloped, pos=0.1, below=-2pt] {{\tiny 3}} (u2w);
	\draw[-latex] (w) -- node[sloped, pos=0.1, above=-2pt] {{\tiny 3}} (uvw);

\end{tikzpicture} 		\caption{Enriched incidence graph $\enI{\AdExb}$ of $\AdExb$.}%
		\label{fig:ex:advanced-incidence-b}
	\end{subfigure}
	\caption{Sophisticated example of two structures $\AdExa$, $\AdExb$ distinguished by Relational Color Refinement.}
\end{figure}

We already saw that running Color Refinement on the Gaifman graph (or on the incidence graph) is not very powerful.
In this part of the paper, we will explore more sophisticated ways of adapting Color Refinement based on the Gaifman or on the incidence graph:
To increase the power of the Color Refinement algorithm on Relational Structures, one could try to encode as much information about the original structure into the Gaifman graph (the incidence graph) as possible.
Here are two naive attempts.

We let the \emph{enriched Gaifman graph} $\enGF{\A}$ of $\A$ be defined by $\V(\enGF{\A}) \isdef \V(\A)$ and for all $R \in \sig$ and all $i, j \in [\ar(R)]$ with $i \neq j$ we let
\begin{equation*}
	{(E_{R(i,j)})}^{\enGF{\A}} \ \isdef \  \set{ (u,v) \mid R \in \sig, \at \in R^{\A}, u = a_i, v = a_j } \;.
\end{equation*}
I.e., we enrich the edge $\set{u,v}$ in the Gaifman graph with the information, in which relation, and on which position, $u$ and $v$ appear in the same tuple.

Similarly, we let the \emph{enriched incidence graph} $\enI{\A}$ of $\A$ be defined by $\V(\enI{\A})  \isdef \V(\A) \union \set{ w_{R(\at)} \!\mid\! \at \in R^{\A} }[]$, and for all $i \in [\ar(\sig)]$ we let
\begin{equation*}
	{(E_{i})}^{\enI{\A}} \ \isdef \ \set{ (a_i, w_{R(\at)}) \mid R \in \sig, \at \in R^{\A} } \;.
\end{equation*}
I.e., we enrich the edge $(u, \at)$ in the incidence graph with the information, on which position $u$ is in $\at$.

We will now present a pair of relational structures that is distinguished by RCR, but where Color Refinement distinguishes neither their enriched Gaifman graphs, nor their enriched incidence graphs.
Let $\AdsigEx \isdef \set{ E/2, F/2, R/3 }$ and consider the
structures $\AdExa$, $\AdExb$ over the universe $\set{ 1,2,3,u,v,w }$
with\\
$E^{\AdExa} \isdef \set{ (1,2), (2,3), (3,1), (u,v), (v,w), (w,u) }$,
\ $E^{\AdExb} \isdef \set{ (1,2), (2,w), (w,u), (u,v), (v,3) (3,1) }$,
\\
$F^{\AdExa} \isdef \set{ (2,w), (v,3) }$, \ $F^{\AdExb} \isdef \set{
  (2,3), (v,w) }$ \ and \
$R^{\AdExa} \isdef \set{ (1,2,3), (u,v,w), (1,v,3), (u,2,w) } \defis R^{\AdExb}$.
See \cref{fig:ex:advanced-hypergraph-a,fig:ex:advanced-hypergraph-b} for a graphical representation of $\AdExa$, $\AdExb$ and see \cref{fig:ex:advanced-gaifman-a,fig:ex:advanced-gaifman-b} for the enriched Gaifman graphs $\enGF{\AdExa}$, $\enGF{\AdExb}$ and \cref{fig:ex:advanced-incidence-a,fig:ex:advanced-incidence-b} for the enriched incidence graphs $\enI{\AdExa}$, $\enI{\AdExb}$.

It follows from \cref{thm:main-logic}, that Relational Color Refinement distinguishes $\AdExa$, $\AdExb$, since $\AdExa \models \phi$ and $\AdExb \not\models \phi$ for
	$\phi \isdef \existsgeq{1}(\var_1, \var_2, \var_3) \qsep \bigl(
		R(\var_1, \var_2, \var_3) 
		\land \bigl(
			E(\var_1, \var_2) 
			\land (\,E(\var_2, \var_3) 
			\land E(\var_3, \var_1)\,)
		\bigr)
	\bigr)$.
However, we will now see that Color Refinement does not distinguish $\enGF{\AdExa}$, $\enGF{\AdExb}$, and it does not distinguish $\enI{\AdExa}$, $\enI{\AdExb}$.

\paragraph*{Color Refinement does not distinguish $\enGF{\AdExa}$, $\enGF{\AdExb}$.}

The first round yields a coloring equivalent to the following:\\
\begin{tikzpicture}[
	edge from parent/.style={draw=none},
	every node/.style={mynode},
	level/.style={level distance=2.5cm, sibling distance=2.5cm},
	scale=1,
	grow=right,
	]
	\clip (-0.25,-2) rectangle (7.75,2);
	\node[colourC] (1) at (0,0) {$1$}
	child {
		node[colourB, inner sep=0pt] (2) {$2$}
		child {
			node[colourA, inner sep=0pt] (w) {$\vphantom{1}w$}
			child[missing]
			child {
				node[colourC] (u) {$\vphantom{1}u$}
			}
		}
	}
	child {
		node[colourA, inner sep=0pt] (3) {3}
		child {
			node[colourB, inner sep=0pt] (v) {$\vphantom{1}v$}
		}
	};
	\draw (1) edge 
		node[pos=0.3, sloped, above] {{\tiny \RelSymE(1,2), \RelSymR(1,2)}}
		node[near end, sloped, below] {{\tiny \RelSymE(2,1), \RelSymR(2,1)}}
	(2);
	\draw (1) edge 
		node[near start, sloped, above] {{\tiny \RelSymE(2,1), \RelSymR(1,3)}}
		node[near end, sloped, below] {{\tiny \RelSymE(1,2), \RelSymR(3,1)}}
	(3);
	\draw (1) edge[out=90, in=135, distance=40pt] 
		node[pos=0.15, sloped, above] {{\tiny \RelSymR(1,2)}}
		node[pos=0.85, sloped, below] {{\tiny \RelSymR(2,1)}}
	(v);

	\draw (2) edge 
		node[pos=0.3, sloped, above] {{\tiny \RelSymE(1,2), \RelSymR(2,3)}}
		node[pos=0.7, sloped, below] {{\tiny \RelSymE(2,1), \RelSymR(3,2)}}
	(3);
	\draw (2) edge[out=-45, in=-90, distance=40pt]
		node[pos=0.15, sloped, above] {{\tiny \RelSymR(2,1)}}
		node[pos=0.85, sloped, below] {{\tiny \RelSymR(1,2)}}
	(u);
	\draw (2) edge
		node[pos=0.3, sloped, above] {{\tiny \RelSymF(1,2), \RelSymR(2,3)}}
		node[near end, sloped, below] {{\tiny \RelSymF(2,1), \RelSymR(3,2)}}
	(w);

	\draw (3) edge
		node[near start, sloped, above] {{\tiny \RelSymF(2,1), \RelSymR(3,2)}}
		node[pos=0.7, sloped, below] {{\tiny \RelSymF(1,2), \RelSymR(2,3)}}
	(v);

	\draw (u) edge 
		node[pos=0.3, sloped, below] {{\tiny \RelSymE(1,2), \RelSymR(1,2)}}
		node[near end, sloped, above] {{\tiny \RelSymE(2,1), \RelSymR(2,1)}}
	(v);
	\draw (u) edge
		node[near start, sloped, below] {{\tiny \RelSymE(2,1), \RelSymR(1,3)}}
		node[near end, sloped, above] {{\tiny \RelSymE(1,2), \RelSymR(3,1)}}
	(w);

	\draw (v) edge
		node[pos=0.3, sloped, above] {{\tiny \RelSymE(1,2), \RelSymR(2,3)}}
		node[pos=0.7, sloped, below] {{\tiny \RelSymE(2,1), \RelSymR(3,2)}}
	(w);

\end{tikzpicture} %
\begin{tikzpicture}[
	edge from parent/.style={draw=none},
	every node/.style={mynode},
	level/.style={level distance=2.5cm, sibling distance=2.5cm},
	scale=1,
	grow=right,
	]
	\clip (-0.25,-2) rectangle (7.75,2);
	\node[colourC] (1) at (0,0) {$1$}
	child {
		node[colourB, inner sep=0pt] (2) {$2$}
		child {
			node[colourA, inner sep=0pt] (w) {$\vphantom{1}w$}
			child[missing]
			child {
				node[colourC] (u) {$\vphantom{1}u$}
			}
		}
	}
	child {
		node[colourA, inner sep=0pt] (3) {3}
		child {
			node[colourB, inner sep=0pt] (v) {$\vphantom{1}v$}
		}
	};
	\draw (1) edge 
		node[pos=0.3, sloped, above] {{\tiny \RelSymE(1,2), \RelSymR(1,2)}}
		node[near end, sloped, below] {{\tiny \RelSymE(2,1), \RelSymR(2,1)}}
	(2);
	\draw (1) edge 
		node[near start, sloped, above] {{\tiny \RelSymE(2,1), \RelSymR(1,3)}}
		node[near end, sloped, below] {{\tiny \RelSymE(1,2), \RelSymR(3,1)}}
	(3);
	\draw (1) edge[out=90, in=135, distance=40pt] 
		node[pos=0.15, sloped, above] {{\tiny \RelSymR(1,2)}}
		node[pos=0.85, sloped, below] {{\tiny \RelSymR(2,1)}}
	(v);

	\draw (2) edge 
		node[pos=0.3, sloped, above] {{\tiny \RelSymF(1,2), \RelSymR(2,3)}}
		node[pos=0.7, sloped, below] {{\tiny \RelSymF(2,1), \RelSymR(3,2)}}
	(3);
	\draw (2) edge[out=-45, in=-90, distance=40pt]
		node[pos=0.15, sloped, above] {{\tiny \RelSymR(2,1)}}
		node[pos=0.85, sloped, below] {{\tiny \RelSymR(1,2)}}
	(u);
	\draw (2) edge
		node[pos=0.3, sloped, above] {{\tiny \RelSymE(1,2), \RelSymR(2,3)}}
		node[near end, sloped, below] {{\tiny \RelSymE(2,1), \RelSymR(3,2)}}
	(w);

	\draw (3) edge
		node[near start, sloped, above] {{\tiny \RelSymE(2,1), \RelSymR(3,2)}}
		node[pos=0.7, sloped, below] {{\tiny \RelSymE(1,2), \RelSymR(2,3)}}
	(v);

	\draw (u) edge 
		node[pos=0.3, sloped, below] {{\tiny \RelSymE(1,2), \RelSymR(1,2)}}
		node[near end, sloped, above] {{\tiny \RelSymE(2,1), \RelSymR(2,1)}}
	(v);
	\draw (u) edge
		node[near start, sloped, below] {{\tiny \RelSymE(2,1), \RelSymR(1,3)}}
		node[near end, sloped, above] {{\tiny \RelSymE(1,2), \RelSymR(3,1)}}
	(w);

	\draw (v) edge
		node[pos=0.3, sloped, above] {{\tiny \RelSymF(1,2), \RelSymR(2,3)}}
		node[pos=0.7, sloped, below] {{\tiny \RelSymF(2,1), \RelSymR(3,2)}}
	(w);

\end{tikzpicture} \medskip

\noindent The second round yields a coloring equivalent to the following:\\
\begin{tikzpicture}[
	edge from parent/.style={draw=none},
	every node/.style={mynode},
	level/.style={level distance=2.5cm, sibling distance=2.5cm},
	scale=1,
	grow=right,
	]
	\clip (-0.25,-2) rectangle (7.75,2);
	\node[colourJ] (1) at (0,0) {$1$}
	child {
		node[colourK, inner sep=0pt] (2) {$2$}
		child {
			node[colourG, inner sep=0pt] (w) {$\vphantom{1}w$}
			child[missing]
			child {
				node[colourJ] (u) {$\vphantom{1}u$}
			}
		}
	}
	child {
		node[colourG, inner sep=0pt] (3) {3}
		child {
			node[colourK, inner sep=0pt] (v) {$\vphantom{1}v$}
		}
	};
	\draw (1) edge 
		node[pos=0.3, sloped, above] {{\tiny \RelSymE(1,2), \RelSymR(1,2)}}
		node[near end, sloped, below] {{\tiny \RelSymE(2,1), \RelSymR(2,1)}}
	(2);
	\draw (1) edge 
		node[near start, sloped, above] {{\tiny \RelSymE(2,1), \RelSymR(1,3)}}
		node[near end, sloped, below] {{\tiny \RelSymE(1,2), \RelSymR(3,1)}}
	(3);
	\draw (1) edge[out=90, in=135, distance=40pt] 
		node[pos=0.15, sloped, above] {{\tiny \RelSymR(1,2)}}
		node[pos=0.85, sloped, below] {{\tiny \RelSymR(2,1)}}
	(v);

	\draw (2) edge 
		node[pos=0.3, sloped, above] {{\tiny \RelSymE(1,2), \RelSymR(2,3)}}
		node[pos=0.7, sloped, below] {{\tiny \RelSymE(2,1), \RelSymR(3,2)}}
	(3);
	\draw (2) edge[out=-45, in=-90, distance=40pt]
		node[pos=0.15, sloped, above] {{\tiny \RelSymR(2,1)}}
		node[pos=0.85, sloped, below] {{\tiny \RelSymR(1,2)}}
	(u);
	\draw (2) edge
		node[pos=0.3, sloped, above] {{\tiny \RelSymF(1,2), \RelSymR(2,3)}}
		node[near end, sloped, below] {{\tiny \RelSymF(2,1), \RelSymR(3,2)}}
	(w);

	\draw (3) edge
		node[near start, sloped, above] {{\tiny \RelSymF(2,1), \RelSymR(3,2)}}
		node[pos=0.7, sloped, below] {{\tiny \RelSymF(1,2), \RelSymR(2,3)}}
	(v);

	\draw (u) edge 
		node[pos=0.3, sloped, below] {{\tiny \RelSymE(1,2), \RelSymR(1,2)}}
		node[near end, sloped, above] {{\tiny \RelSymE(2,1), \RelSymR(2,1)}}
	(v);
	\draw (u) edge
		node[near start, sloped, below] {{\tiny \RelSymE(2,1), \RelSymR(1,3)}}
		node[near end, sloped, above] {{\tiny \RelSymE(1,2), \RelSymR(3,1)}}
	(w);

	\draw (v) edge
		node[pos=0.3, sloped, above] {{\tiny \RelSymE(1,2), \RelSymR(2,3)}}
		node[pos=0.7, sloped, below] {{\tiny \RelSymE(2,1), \RelSymR(3,2)}}
	(w);

\end{tikzpicture} %
\begin{tikzpicture}[
	edge from parent/.style={draw=none},
	every node/.style={mynode},
	level/.style={level distance=2.5cm, sibling distance=2.5cm},
	scale=1,
	grow=right,
	]
	\clip (-0.25,-2) rectangle (7.75,2);
	\node[colourJ] (1) at (0,0) {$1$}
	child {
		node[colourK, inner sep=0pt] (2) {$2$}
		child {
			node[colourG, inner sep=0pt] (w) {$\vphantom{1}w$}
			child[missing]
			child {
				node[colourJ] (u) {$\vphantom{1}u$}
			}
		}
	}
	child {
		node[colourG, inner sep=0pt] (3) {3}
		child {
			node[colourK, inner sep=0pt] (v) {$\vphantom{1}v$}
		}
	};
	\draw (1) edge 
		node[pos=0.3, sloped, above] {{\tiny \RelSymE(1,2), \RelSymR(1,2)}}
		node[near end, sloped, below] {{\tiny \RelSymE(2,1), \RelSymR(2,1)}}
	(2);
	\draw (1) edge 
		node[near start, sloped, above] {{\tiny \RelSymE(2,1), \RelSymR(1,3)}}
		node[near end, sloped, below] {{\tiny \RelSymE(1,2), \RelSymR(3,1)}}
	(3);
	\draw (1) edge[out=90, in=135, distance=40pt] 
		node[pos=0.15, sloped, above] {{\tiny \RelSymR(1,2)}}
		node[pos=0.85, sloped, below] {{\tiny \RelSymR(2,1)}}
	(v);

	\draw (2) edge 
		node[pos=0.3, sloped, above] {{\tiny \RelSymF(1,2), \RelSymR(2,3)}}
		node[pos=0.7, sloped, below] {{\tiny \RelSymF(2,1), \RelSymR(3,2)}}
	(3);
	\draw (2) edge[out=-45, in=-90, distance=40pt]
		node[pos=0.15, sloped, above] {{\tiny \RelSymR(2,1)}}
		node[pos=0.85, sloped, below] {{\tiny \RelSymR(1,2)}}
	(u);
	\draw (2) edge
		node[pos=0.3, sloped, above] {{\tiny \RelSymE(1,2), \RelSymR(2,3)}}
		node[near end, sloped, below] {{\tiny \RelSymE(2,1), \RelSymR(3,2)}}
	(w);

	\draw (3) edge
		node[near start, sloped, above] {{\tiny \RelSymE(2,1), \RelSymR(3,2)}}
		node[pos=0.7, sloped, below] {{\tiny \RelSymE(1,2), \RelSymR(2,3)}}
	(v);

	\draw (u) edge 
		node[pos=0.3, sloped, below] {{\tiny \RelSymE(1,2), \RelSymR(1,2)}}
		node[near end, sloped, above] {{\tiny \RelSymE(2,1), \RelSymR(2,1)}}
	(v);
	\draw (u) edge
		node[near start, sloped, below] {{\tiny \RelSymE(2,1), \RelSymR(1,3)}}
		node[near end, sloped, above] {{\tiny \RelSymE(1,2), \RelSymR(3,1)}}
	(w);

	\draw (v) edge
		node[pos=0.3, sloped, above] {{\tiny \RelSymF(1,2), \RelSymR(2,3)}}
		node[pos=0.7, sloped, below] {{\tiny \RelSymF(2,1), \RelSymR(3,2)}}
	(w);

\end{tikzpicture} \medskip

\noindent Since the coloring is stable, Color Refinement stops.
It does not distinguish $\enGF{\AdExa}$, $\enGF{\AdExb}$.

\paragraph*{Color Refinement does not distinguish $\enI{\AdExa}$, $\enI{\AdExb}$.}

The first round yields a coloring equivalent to the following:
\begin{center}
\begin{tikzpicture}[
	bn/.style={circle, fill=black},
	re/.style={circle, fill=myred},
	be/.style={circle, fill=myblue},
	fe/.style={circle, fill=mygreen},
	xscale=0.8,
	yscale=1.5]
	\node[colourD] (1) at (-1.25,0) {};
	\node[colourE] (2) at (0.25,0) {};
	\node[colourF] (3) at (1.75,0) {};
	\node[colourF] (w) at (3.25,0) {};
	\node[colourE] (v) at (4.75,0) {};
	\node[colourD] (u) at (6.25,0) {};

	\node[colourA] (12) at (-0.5,0.75) {};
	\node[colourA] (23) at (0.7,0.75) {};
	\node[colourA] (31) at (1.9,0.75) {};
	\node[colourB] (2w) at (1.9,-1) {};
	\node[colourA] (wu) at (3.1,.75) {};
	\node[colourA] (vw) at (4.3,.75) {};
	\node[colourA] (uv) at (5.5,.75) {};
	\node[colourB] (v3) at (3.1,-1) {};
	\node[colourC] (123) at (-0.5,-1) {};
	\node[colourC] (1v3) at (0.7,-1) {};
	\node[colourC] (u2w) at (4.3,-1) {};
	\node[colourC] (uvw) at (5.5,-1) {};

	\draw[-latex] (1) -- node[sloped, pos=0.1, above=-2pt] {{\tiny 1}} (12);
	\draw[-latex] (1) -- node[sloped, pos=0.2, above=-2pt] {{\tiny 2}} (31);
	\draw[-latex] (1) -- node[sloped, pos=0.1, above=-2pt] {{\tiny 1}} (1v3);
	\draw[-latex] (1) -- node[sloped, pos=0.1, below=-2pt] {{\tiny 1}} (123);

	\draw[-latex] (2) -- node[sloped, pos=0.1, below=-2pt] {{\tiny 2}} (12);
	\draw[-latex] (2) -- node[sloped, pos=0.1, below=-2pt] {{\tiny 1}} (23);
	\draw[-latex] (2) -- node[sloped, pos=0.1, below=-2pt] {{\tiny 1}} (2w);
	\draw[-latex] (2) -- node[sloped, pos=0.1, above=-2pt] {{\tiny 2}} (u2w);
	\draw[-latex] (2) -- node[sloped, pos=0.1, above=-2pt] {{\tiny 2}} (123);

	\draw[-latex] (3) -- node[sloped, pos=0.1, below=-2pt] {{\tiny 2}} (23);
	\draw[-latex] (3) -- node[sloped, pos=0.1, below=-2pt] {{\tiny 1}} (31);
	\draw[-latex] (3) -- node[sloped, pos=0.1, above=-2pt] {{\tiny 2}} (v3);
	\draw[-latex] (3) -- node[sloped, pos=0.1, below=-2pt] {{\tiny 3}} (1v3);
	\draw[-latex] (3) -- node[sloped, pos=0.1, above=-2pt] {{\tiny 3}} (123);

	\draw[-latex] (u) -- node[sloped, pos=0.1, above=-2pt] {{\tiny 1}} (uv);
	\draw[-latex] (u) -- node[sloped, pos=0.2, above=-2pt] {{\tiny 2}} (wu);
	\draw[-latex] (u) -- node[sloped, pos=0.1, above=-2pt] {{\tiny 1}} (u2w);
	\draw[-latex] (u) -- node[sloped, pos=0.1, below=-2pt] {{\tiny 1}} (uvw);

	\draw[-latex] (v) -- node[sloped, pos=0.1, below=-2pt] {{\tiny 2}} (uv);
	\draw[-latex] (v) -- node[sloped, pos=0.1, below=-2pt] {{\tiny 1}} (vw);
	\draw[-latex] (v) -- node[sloped, pos=0.1, below=-2pt] {{\tiny 1}} (v3);
	\draw[-latex] (v) -- node[sloped, pos=0.1, above=-2pt] {{\tiny 2}} (1v3);
	\draw[-latex] (v) -- node[sloped, pos=0.1, above=-2pt] {{\tiny 2}} (uvw);

	\draw[-latex] (w) -- node[sloped, pos=0.1, below=-2pt] {{\tiny 2}} (vw);
	\draw[-latex] (w) -- node[sloped, pos=0.1, below=-2pt] {{\tiny 1}} (wu);
	\draw[-latex] (w) -- node[sloped, pos=0.1, above=-2pt] {{\tiny 2}} (2w);
	\draw[-latex] (w) -- node[sloped, pos=0.1, below=-2pt] {{\tiny 3}} (u2w);
	\draw[-latex] (w) -- node[sloped, pos=0.1, above=-2pt] {{\tiny 3}} (uvw);

\end{tikzpicture} 	\quad
\begin{tikzpicture}[
	bn/.style={circle, fill=black},
	re/.style={circle, fill=myred},
	be/.style={circle, fill=myblue},
	fe/.style={circle, fill=mygreen},
	xscale=0.8,
	yscale=1.5]
	\node[colourD] (1) at (-1.25,0) {};
	\node[colourE] (2) at (0.25,0) {};
	\node[colourF] (3) at (1.75,0) {};
	\node[colourF] (w) at (3.25,0) {};
	\node[colourE] (v) at (4.75,0) {};
	\node[colourD] (u) at (6.25,0) {};

	\node[colourA] (12) at (-0.5,0.75) {};
	\node[colourB] (23) at (0.7,0.75) {};
	\node[colourA] (31) at (1.9,0.75) {};
	\node[colourA] (2w) at (1.9,-1) {};
	\node[colourA] (wu) at (3.1,.75) {};
	\node[colourB] (vw) at (4.3,.75) {};
	\node[colourA] (uv) at (5.5,.75) {};
	\node[colourA] (v3) at (3.1,-1) {};
	\node[colourC] (123) at (-0.5,-1) {};
	\node[colourC] (1v3) at (0.7,-1) {};
	\node[colourC] (u2w) at (4.3,-1) {};
	\node[colourC] (uvw) at (5.5,-1) {};

	\draw[-latex] (1) -- node[sloped, pos=0.1, above=-2pt] {{\tiny 1}} (12);
	\draw[-latex] (1) -- node[sloped, pos=0.2, above=-2pt] {{\tiny 2}} (31);
	\draw[-latex] (1) -- node[sloped, pos=0.1, above=-2pt] {{\tiny 1}} (1v3);
	\draw[-latex] (1) -- node[sloped, pos=0.1, below=-2pt] {{\tiny 1}} (123);

	\draw[-latex] (2) -- node[sloped, pos=0.1, below=-2pt] {{\tiny 2}} (12);
	\draw[-latex] (2) -- node[sloped, pos=0.1, below=-2pt] {{\tiny 1}} (23);
	\draw[-latex] (2) -- node[sloped, pos=0.1, below=-2pt] {{\tiny 1}} (2w);
	\draw[-latex] (2) -- node[sloped, pos=0.1, above=-2pt] {{\tiny 2}} (u2w);
	\draw[-latex] (2) -- node[sloped, pos=0.1, above=-2pt] {{\tiny 2}} (123);

	\draw[-latex] (3) -- node[sloped, pos=0.1, below=-2pt] {{\tiny 2}} (23);
	\draw[-latex] (3) -- node[sloped, pos=0.1, below=-2pt] {{\tiny 1}} (31);
	\draw[-latex] (3) -- node[sloped, pos=0.1, above=-2pt] {{\tiny 2}} (v3);
	\draw[-latex] (3) -- node[sloped, pos=0.1, below=-2pt] {{\tiny 3}} (1v3);
	\draw[-latex] (3) -- node[sloped, pos=0.1, above=-2pt] {{\tiny 3}} (123);

	\draw[-latex] (u) -- node[sloped, pos=0.1, above=-2pt] {{\tiny 1}} (uv);
	\draw[-latex] (u) -- node[sloped, pos=0.2, above=-2pt] {{\tiny 2}} (wu);
	\draw[-latex] (u) -- node[sloped, pos=0.1, above=-2pt] {{\tiny 1}} (u2w);
	\draw[-latex] (u) -- node[sloped, pos=0.1, below=-2pt] {{\tiny 1}} (uvw);

	\draw[-latex] (v) -- node[sloped, pos=0.1, below=-2pt] {{\tiny 2}} (uv);
	\draw[-latex] (v) -- node[sloped, pos=0.1, below=-2pt] {{\tiny 1}} (vw);
	\draw[-latex] (v) -- node[sloped, pos=0.1, below=-2pt] {{\tiny 1}} (v3);
	\draw[-latex] (v) -- node[sloped, pos=0.1, above=-2pt] {{\tiny 2}} (1v3);
	\draw[-latex] (v) -- node[sloped, pos=0.1, above=-2pt] {{\tiny 2}} (uvw);

	\draw[-latex] (w) -- node[sloped, pos=0.1, below=-2pt] {{\tiny 2}} (vw);
	\draw[-latex] (w) -- node[sloped, pos=0.1, below=-2pt] {{\tiny 1}} (wu);
	\draw[-latex] (w) -- node[sloped, pos=0.1, above=-2pt] {{\tiny 2}} (2w);
	\draw[-latex] (w) -- node[sloped, pos=0.1, below=-2pt] {{\tiny 3}} (u2w);
	\draw[-latex] (w) -- node[sloped, pos=0.1, above=-2pt] {{\tiny 3}} (uvw);

\end{tikzpicture} \end{center}
\medskip

\noindent The second round yields a coloring equivalent to the following:
\begin{center}
\begin{tikzpicture}[
	bn/.style={circle, fill=black},
	re/.style={circle, fill=myred},
	be/.style={circle, fill=myblue},
	fe/.style={circle, fill=mygreen},
	xscale=0.8,
	yscale=1.5]
	\node[colourL] (1) at (-1.25,0) {};
	\node[colourM] (2) at (0.25,0) {};
	\node[colourN] (3) at (1.75,0) {};
	\node[colourN] (w) at (3.25,0) {};
	\node[colourM] (v) at (4.75,0) {};
	\node[colourL] (u) at (6.25,0) {};

	\node[colourG] (12) at (-0.5,0.75) {};
	\node[colourH] (23) at (0.7,0.75) {};
	\node[colourI] (31) at (1.9,0.75) {};
	\node[colourK] (2w) at (1.9,-1) {};
	\node[colourI] (wu) at (3.1,.75) {};
	\node[colourH] (vw) at (4.3,.75) {};
	\node[colourG] (uv) at (5.5,.75) {};
	\node[colourK] (v3) at (3.1,-1) {};
	\node[colourJ] (123) at (-0.5,-1) {};
	\node[colourJ] (1v3) at (0.7,-1) {};
	\node[colourJ] (u2w) at (4.3,-1) {};
	\node[colourJ] (uvw) at (5.5,-1) {};

	\draw[-latex] (1) -- node[sloped, pos=0.1, above=-2pt] {{\tiny 1}} (12);
	\draw[-latex] (1) -- node[sloped, pos=0.2, above=-2pt] {{\tiny 2}} (31);
	\draw[-latex] (1) -- node[sloped, pos=0.1, above=-2pt] {{\tiny 1}} (1v3);
	\draw[-latex] (1) -- node[sloped, pos=0.1, below=-2pt] {{\tiny 1}} (123);

	\draw[-latex] (2) -- node[sloped, pos=0.1, below=-2pt] {{\tiny 2}} (12);
	\draw[-latex] (2) -- node[sloped, pos=0.1, below=-2pt] {{\tiny 1}} (23);
	\draw[-latex] (2) -- node[sloped, pos=0.1, below=-2pt] {{\tiny 1}} (2w);
	\draw[-latex] (2) -- node[sloped, pos=0.1, above=-2pt] {{\tiny 2}} (u2w);
	\draw[-latex] (2) -- node[sloped, pos=0.1, above=-2pt] {{\tiny 2}} (123);

	\draw[-latex] (3) -- node[sloped, pos=0.1, below=-2pt] {{\tiny 2}} (23);
	\draw[-latex] (3) -- node[sloped, pos=0.1, below=-2pt] {{\tiny 1}} (31);
	\draw[-latex] (3) -- node[sloped, pos=0.1, above=-2pt] {{\tiny 2}} (v3);
	\draw[-latex] (3) -- node[sloped, pos=0.1, below=-2pt] {{\tiny 3}} (1v3);
	\draw[-latex] (3) -- node[sloped, pos=0.1, above=-2pt] {{\tiny 3}} (123);

	\draw[-latex] (u) -- node[sloped, pos=0.1, above=-2pt] {{\tiny 1}} (uv);
	\draw[-latex] (u) -- node[sloped, pos=0.2, above=-2pt] {{\tiny 2}} (wu);
	\draw[-latex] (u) -- node[sloped, pos=0.1, above=-2pt] {{\tiny 1}} (u2w);
	\draw[-latex] (u) -- node[sloped, pos=0.1, below=-2pt] {{\tiny 1}} (uvw);

	\draw[-latex] (v) -- node[sloped, pos=0.1, below=-2pt] {{\tiny 2}} (uv);
	\draw[-latex] (v) -- node[sloped, pos=0.1, below=-2pt] {{\tiny 1}} (vw);
	\draw[-latex] (v) -- node[sloped, pos=0.1, below=-2pt] {{\tiny 1}} (v3);
	\draw[-latex] (v) -- node[sloped, pos=0.1, above=-2pt] {{\tiny 2}} (1v3);
	\draw[-latex] (v) -- node[sloped, pos=0.1, above=-2pt] {{\tiny 2}} (uvw);

	\draw[-latex] (w) -- node[sloped, pos=0.1, below=-2pt] {{\tiny 2}} (vw);
	\draw[-latex] (w) -- node[sloped, pos=0.1, below=-2pt] {{\tiny 1}} (wu);
	\draw[-latex] (w) -- node[sloped, pos=0.1, above=-2pt] {{\tiny 2}} (2w);
	\draw[-latex] (w) -- node[sloped, pos=0.1, below=-2pt] {{\tiny 3}} (u2w);
	\draw[-latex] (w) -- node[sloped, pos=0.1, above=-2pt] {{\tiny 3}} (uvw);

\end{tikzpicture} 	\quad
\begin{tikzpicture}[
	bn/.style={circle, fill=black},
	re/.style={circle, fill=myred},
	be/.style={circle, fill=myblue},
	fe/.style={circle, fill=mygreen},
	xscale=0.8,
	yscale=1.5]
	\node[colourL] (1) at (-1.25,0) {};
	\node[colourM] (2) at (0.25,0) {};
	\node[colourN] (3) at (1.75,0) {};
	\node[colourN] (w) at (3.25,0) {};
	\node[colourM] (v) at (4.75,0) {};
	\node[colourL] (u) at (6.25,0) {};

	\node[colourG] (12) at (-0.5,0.75) {};
	\node[colourK] (23) at (0.7,0.75) {};
	\node[colourI] (31) at (1.9,0.75) {};
	\node[colourH] (2w) at (1.9,-1) {};
	\node[colourI] (wu) at (3.1,.75) {};
	\node[colourK] (vw) at (4.3,.75) {};
	\node[colourG] (uv) at (5.5,.75) {};
	\node[colourH] (v3) at (3.1,-1) {};
	\node[colourJ] (123) at (-0.5,-1) {};
	\node[colourJ] (1v3) at (0.7,-1) {};
	\node[colourJ] (u2w) at (4.3,-1) {};
	\node[colourJ] (uvw) at (5.5,-1) {};

	\draw[-latex] (1) -- node[sloped, pos=0.1, above=-2pt] {{\tiny 1}} (12);
	\draw[-latex] (1) -- node[sloped, pos=0.2, above=-2pt] {{\tiny 2}} (31);
	\draw[-latex] (1) -- node[sloped, pos=0.1, above=-2pt] {{\tiny 1}} (1v3);
	\draw[-latex] (1) -- node[sloped, pos=0.1, below=-2pt] {{\tiny 1}} (123);

	\draw[-latex] (2) -- node[sloped, pos=0.1, below=-2pt] {{\tiny 2}} (12);
	\draw[-latex] (2) -- node[sloped, pos=0.1, below=-2pt] {{\tiny 1}} (23);
	\draw[-latex] (2) -- node[sloped, pos=0.1, below=-2pt] {{\tiny 1}} (2w);
	\draw[-latex] (2) -- node[sloped, pos=0.1, above=-2pt] {{\tiny 2}} (u2w);
	\draw[-latex] (2) -- node[sloped, pos=0.1, above=-2pt] {{\tiny 2}} (123);

	\draw[-latex] (3) -- node[sloped, pos=0.1, below=-2pt] {{\tiny 2}} (23);
	\draw[-latex] (3) -- node[sloped, pos=0.1, below=-2pt] {{\tiny 1}} (31);
	\draw[-latex] (3) -- node[sloped, pos=0.1, above=-2pt] {{\tiny 2}} (v3);
	\draw[-latex] (3) -- node[sloped, pos=0.1, below=-2pt] {{\tiny 3}} (1v3);
	\draw[-latex] (3) -- node[sloped, pos=0.1, above=-2pt] {{\tiny 3}} (123);

	\draw[-latex] (u) -- node[sloped, pos=0.1, above=-2pt] {{\tiny 1}} (uv);
	\draw[-latex] (u) -- node[sloped, pos=0.2, above=-2pt] {{\tiny 2}} (wu);
	\draw[-latex] (u) -- node[sloped, pos=0.1, above=-2pt] {{\tiny 1}} (u2w);
	\draw[-latex] (u) -- node[sloped, pos=0.1, below=-2pt] {{\tiny 1}} (uvw);

	\draw[-latex] (v) -- node[sloped, pos=0.1, below=-2pt] {{\tiny 2}} (uv);
	\draw[-latex] (v) -- node[sloped, pos=0.1, below=-2pt] {{\tiny 1}} (vw);
	\draw[-latex] (v) -- node[sloped, pos=0.1, below=-2pt] {{\tiny 1}} (v3);
	\draw[-latex] (v) -- node[sloped, pos=0.1, above=-2pt] {{\tiny 2}} (1v3);
	\draw[-latex] (v) -- node[sloped, pos=0.1, above=-2pt] {{\tiny 2}} (uvw);

	\draw[-latex] (w) -- node[sloped, pos=0.1, below=-2pt] {{\tiny 2}} (vw);
	\draw[-latex] (w) -- node[sloped, pos=0.1, below=-2pt] {{\tiny 1}} (wu);
	\draw[-latex] (w) -- node[sloped, pos=0.1, above=-2pt] {{\tiny 2}} (2w);
	\draw[-latex] (w) -- node[sloped, pos=0.1, below=-2pt] {{\tiny 3}} (u2w);
	\draw[-latex] (w) -- node[sloped, pos=0.1, above=-2pt] {{\tiny 3}} (uvw);

\end{tikzpicture} \end{center}
\medskip

\noindent The third round yields a coloring equivalent to the following:
\begin{center}
\begin{tikzpicture}[
	bn/.style={circle, fill=black},
	re/.style={circle, fill=myred},
	be/.style={circle, fill=myblue},
	fe/.style={circle, fill=mygreen},
	xscale=0.8,
	yscale=1.5]
	\node[colourO] (1) at (-1.25,0) {};
	\node[colourP] (2) at (0.25,0) {};
	\node[colourQ] (3) at (1.75,0) {};
	\node[colourQ] (w) at (3.25,0) {};
	\node[colourP] (v) at (4.75,0) {};
	\node[colourO] (u) at (6.25,0) {};

	\node[colourR] (12) at (-0.5,0.75) {};
	\node[colourS] (23) at (0.7,0.75) {};
	\node[colourT] (31) at (1.9,0.75) {};
	\node[colourV] (2w) at (1.9,-1) {};
	\node[colourT] (wu) at (3.1,.75) {};
	\node[colourS] (vw) at (4.3,.75) {};
	\node[colourR] (uv) at (5.5,.75) {};
	\node[colourV] (v3) at (3.1,-1) {};
	\node[colourU] (123) at (-0.5,-1) {};
	\node[colourU] (1v3) at (0.7,-1) {};
	\node[colourU] (u2w) at (4.3,-1) {};
	\node[colourU] (uvw) at (5.5,-1) {};

	\draw[-latex] (1) -- node[sloped, pos=0.1, above=-2pt] {{\tiny 1}} (12);
	\draw[-latex] (1) -- node[sloped, pos=0.2, above=-2pt] {{\tiny 2}} (31);
	\draw[-latex] (1) -- node[sloped, pos=0.1, above=-2pt] {{\tiny 1}} (1v3);
	\draw[-latex] (1) -- node[sloped, pos=0.1, below=-2pt] {{\tiny 1}} (123);

	\draw[-latex] (2) -- node[sloped, pos=0.1, below=-2pt] {{\tiny 2}} (12);
	\draw[-latex] (2) -- node[sloped, pos=0.1, below=-2pt] {{\tiny 1}} (23);
	\draw[-latex] (2) -- node[sloped, pos=0.1, below=-2pt] {{\tiny 1}} (2w);
	\draw[-latex] (2) -- node[sloped, pos=0.1, above=-2pt] {{\tiny 2}} (u2w);
	\draw[-latex] (2) -- node[sloped, pos=0.1, above=-2pt] {{\tiny 2}} (123);

	\draw[-latex] (3) -- node[sloped, pos=0.1, below=-2pt] {{\tiny 2}} (23);
	\draw[-latex] (3) -- node[sloped, pos=0.1, below=-2pt] {{\tiny 1}} (31);
	\draw[-latex] (3) -- node[sloped, pos=0.1, above=-2pt] {{\tiny 2}} (v3);
	\draw[-latex] (3) -- node[sloped, pos=0.1, below=-2pt] {{\tiny 3}} (1v3);
	\draw[-latex] (3) -- node[sloped, pos=0.1, above=-2pt] {{\tiny 3}} (123);

	\draw[-latex] (u) -- node[sloped, pos=0.1, above=-2pt] {{\tiny 1}} (uv);
	\draw[-latex] (u) -- node[sloped, pos=0.2, above=-2pt] {{\tiny 2}} (wu);
	\draw[-latex] (u) -- node[sloped, pos=0.1, above=-2pt] {{\tiny 1}} (u2w);
	\draw[-latex] (u) -- node[sloped, pos=0.1, below=-2pt] {{\tiny 1}} (uvw);

	\draw[-latex] (v) -- node[sloped, pos=0.1, below=-2pt] {{\tiny 2}} (uv);
	\draw[-latex] (v) -- node[sloped, pos=0.1, below=-2pt] {{\tiny 1}} (vw);
	\draw[-latex] (v) -- node[sloped, pos=0.1, below=-2pt] {{\tiny 1}} (v3);
	\draw[-latex] (v) -- node[sloped, pos=0.1, above=-2pt] {{\tiny 2}} (1v3);
	\draw[-latex] (v) -- node[sloped, pos=0.1, above=-2pt] {{\tiny 2}} (uvw);

	\draw[-latex] (w) -- node[sloped, pos=0.1, below=-2pt] {{\tiny 2}} (vw);
	\draw[-latex] (w) -- node[sloped, pos=0.1, below=-2pt] {{\tiny 1}} (wu);
	\draw[-latex] (w) -- node[sloped, pos=0.1, above=-2pt] {{\tiny 2}} (2w);
	\draw[-latex] (w) -- node[sloped, pos=0.1, below=-2pt] {{\tiny 3}} (u2w);
	\draw[-latex] (w) -- node[sloped, pos=0.1, above=-2pt] {{\tiny 3}} (uvw);

\end{tikzpicture} 	\quad
\begin{tikzpicture}[
	bn/.style={circle, fill=black},
	re/.style={circle, fill=myred},
	be/.style={circle, fill=myblue},
	fe/.style={circle, fill=mygreen},
	xscale=0.8,
	yscale=1.5]
	\node[colourO] (1) at (-1.25,0) {};
	\node[colourP] (2) at (0.25,0) {};
	\node[colourQ] (3) at (1.75,0) {};
	\node[colourQ] (w) at (3.25,0) {};
	\node[colourP] (v) at (4.75,0) {};
	\node[colourO] (u) at (6.25,0) {};

	\node[colourR] (12) at (-0.5,0.75) {};
	\node[colourV] (23) at (0.7,0.75) {};
	\node[colourT] (31) at (1.9,0.75) {};
	\node[colourS] (2w) at (1.9,-1) {};
	\node[colourT] (wu) at (3.1,.75) {};
	\node[colourV] (vw) at (4.3,.75) {};
	\node[colourR] (uv) at (5.5,.75) {};
	\node[colourS] (v3) at (3.1,-1) {};
	\node[colourU] (123) at (-0.5,-1) {};
	\node[colourU] (1v3) at (0.7,-1) {};
	\node[colourU] (u2w) at (4.3,-1) {};
	\node[colourU] (uvw) at (5.5,-1) {};

	\draw[-latex] (1) -- node[sloped, pos=0.1, above=-2pt] {{\tiny 1}} (12);
	\draw[-latex] (1) -- node[sloped, pos=0.2, above=-2pt] {{\tiny 2}} (31);
	\draw[-latex] (1) -- node[sloped, pos=0.1, above=-2pt] {{\tiny 1}} (1v3);
	\draw[-latex] (1) -- node[sloped, pos=0.1, below=-2pt] {{\tiny 1}} (123);

	\draw[-latex] (2) -- node[sloped, pos=0.1, below=-2pt] {{\tiny 2}} (12);
	\draw[-latex] (2) -- node[sloped, pos=0.1, below=-2pt] {{\tiny 1}} (23);
	\draw[-latex] (2) -- node[sloped, pos=0.1, below=-2pt] {{\tiny 1}} (2w);
	\draw[-latex] (2) -- node[sloped, pos=0.1, above=-2pt] {{\tiny 2}} (u2w);
	\draw[-latex] (2) -- node[sloped, pos=0.1, above=-2pt] {{\tiny 2}} (123);

	\draw[-latex] (3) -- node[sloped, pos=0.1, below=-2pt] {{\tiny 2}} (23);
	\draw[-latex] (3) -- node[sloped, pos=0.1, below=-2pt] {{\tiny 1}} (31);
	\draw[-latex] (3) -- node[sloped, pos=0.1, above=-2pt] {{\tiny 2}} (v3);
	\draw[-latex] (3) -- node[sloped, pos=0.1, below=-2pt] {{\tiny 3}} (1v3);
	\draw[-latex] (3) -- node[sloped, pos=0.1, above=-2pt] {{\tiny 3}} (123);

	\draw[-latex] (u) -- node[sloped, pos=0.1, above=-2pt] {{\tiny 1}} (uv);
	\draw[-latex] (u) -- node[sloped, pos=0.2, above=-2pt] {{\tiny 2}} (wu);
	\draw[-latex] (u) -- node[sloped, pos=0.1, above=-2pt] {{\tiny 1}} (u2w);
	\draw[-latex] (u) -- node[sloped, pos=0.1, below=-2pt] {{\tiny 1}} (uvw);

	\draw[-latex] (v) -- node[sloped, pos=0.1, below=-2pt] {{\tiny 2}} (uv);
	\draw[-latex] (v) -- node[sloped, pos=0.1, below=-2pt] {{\tiny 1}} (vw);
	\draw[-latex] (v) -- node[sloped, pos=0.1, below=-2pt] {{\tiny 1}} (v3);
	\draw[-latex] (v) -- node[sloped, pos=0.1, above=-2pt] {{\tiny 2}} (1v3);
	\draw[-latex] (v) -- node[sloped, pos=0.1, above=-2pt] {{\tiny 2}} (uvw);

	\draw[-latex] (w) -- node[sloped, pos=0.1, below=-2pt] {{\tiny 2}} (vw);
	\draw[-latex] (w) -- node[sloped, pos=0.1, below=-2pt] {{\tiny 1}} (wu);
	\draw[-latex] (w) -- node[sloped, pos=0.1, above=-2pt] {{\tiny 2}} (2w);
	\draw[-latex] (w) -- node[sloped, pos=0.1, below=-2pt] {{\tiny 3}} (u2w);
	\draw[-latex] (w) -- node[sloped, pos=0.1, above=-2pt] {{\tiny 3}} (uvw);

\end{tikzpicture} \end{center}
\medskip

\noindent Since the coloring is stable, Color Refinement stops.
It does not distinguish $\enI{\AdExa}$, $\enI{\AdExb}$. 
\subsection{RCR on \texorpdfstring{$\A$}{A} is equivalent to CR on \texorpdfstring{$\grep{\A}$}{G\_A}}\label{app:rcr-is-cr-on-rep}
It is easy to see, that CR on $\grep{\A}$ produces a coloring on $\V(\grep{\A})$ that is equivalent to the one produced by RCR on $\A$ in the sense that for all $\at, \bt \in \tA$, $\col{i}{\at} = \col{i}{\bt}$ iff $\gamma_i(w_{\at}) = \gamma_i(w_{\bt})$.
This can be shown via a simple induction.
\medskip

\myparagraph{Base Case.} Let $\at, \bt \in \tA$.
Then $\col{0}{\at} = \col{0}{\bt}$ iff $\atp(\at) = \atp(\bt)$ and $\stp(\at) = \stp(\bt)$.
This holds iff for all $R \in \sig$ we have $\at \in R^\A \iff \bt \in R^\A$ and for all $i,j \in [\ar(\sig)]$ we have $(i,j) \in \stp(\at) \iff (i,j) \in \stp(\bt)$.
By definition of $\grep{\A}$, this holds iff for all $R \in \sig$ we have $w_{\at} \in {(U_R)}^{\grep{\A}} \iff w_{\bt} \in {(U_R)}^{\grep{\A}}$ and for all $i,j \in [\ar(\sig)]$ we have $(w_{\at}, w_{\at}) \in {(E_{i,j})}^{\grep{\A}} \iff (w_{\bt}, w_{\bt}) \in {(E_{i,j})}^{\grep{\A}}$.
Since $\gamma_0(w) = (\set{ U_R \mid w \in {(U_R)}^{\grep{\A}} }, \set{ E_{i,j} \mid (w,w) \in {(E_{i,j})}^{\grep{\A}} })$, it follows that $\col{0}{\at} = \col{0}{\bt}$ iff $\gamma_0(w_{\at}) = \gamma_0(w_{\bt})$.
\medskip

\myparagraph{Inductive Step.}
Consider $i \in \nat$.
If $\col{i}{\at} \neq \col{i}{\bt}$, then $\gamma_i(w_{\at}) \neq \gamma_i(w_{\bt})$ by induction hypothesis, and $\col{i+1}{\at} \neq \col{i+1}{\bt}$ and $\gamma_{i+1}(w_{\at}) \neq \gamma_{i+1}(w_{\bt})$ by definition.
Thus, consider the case that $\col{i}{\at} = \col{i}{\bt}$.
By the induction hypothesis we have $\gamma_i(\at) = \gamma_i(\bt)$.
Then, $\col{i+1}{\at} = \col{i+1}{\bt}$ iff $\Nset{i+1}{\A}(\at) = \Nset{i+1}{\A}(\bt)$, i.e., iff
\begin{align*}
	\mset[\big]{ \bigl( \stp(\at,\ct),\, \col{i}{\ct} \bigr) \mid \ct \in \tA,\, \stp(\at, \ct) \neq \emptyset }
	\;\;=\;\;
	\mset[\big]{ \bigl( \stp(\bt,\ct),\, \col{i}{\ct} \bigr) \mid \ct \in \tA,\, \stp(\bt, \ct) \neq \emptyset }\; .
\end{align*}
Since it holds that $(i,j) \in \stp(\at, \ct) \iff (j,i) \in \stp(\ct, \at)$, the above holds iff
\begin{align*}
	&\mset[\big]{ \bigl( \stp(\at,\ct),\stp(\ct, \at),\, \col{i}{\ct} \bigr) \mid \ct \in \tA,\, \stp(\at, \ct) \neq \emptyset }\\
	=\;\;
	&\mset[\big]{ \bigl( \stp(\bt,\ct),\stp(\ct, \bt),\, \col{i}{\ct} \bigr) \mid \ct \in \tA,\, \stp(\bt, \ct) \neq \emptyset }\; .
\end{align*}
For every $\ct \in \tA$ it holds that $(w_{\at}, w_{\ct}) \in E_{i,j}^{\grep{\A}}$ iff $(i,j) \in \stp(\at, \ct)$; and analogously, $(w_{\bt}, w_{\ct}) \in E_{i,j}^{\grep{\A}}$ iff $(i,j) \in \stp(\bt, \ct)$.
Thus, the above holds iff the following holds
\begin{align*}
	&\mset[\big]{ \bigl( \set{ E_{i,j} \mid (w_{\at}, w_{\ct}) \in E_{i,j}^{\grep{\A}} },\set{ E_{i,j} \mid (w_{\ct}, w_{\at}) \in E_{i,j}^{\grep{\A}}},\, \col{i}{\ct} \bigr) \mid \ct \in \tA,\, \stp(\at, \ct) \neq \emptyset }\\
	=\;\;
	&\mset[\big]{ \bigl( \set{ E_{i,j} \mid (w_{\bt}, w_{\ct}) \in E_{i,j}^{\grep{\A}} },\set{ E_{i,j} \mid (w_{\ct}, w_{\bt}) \in E_{i,j}^{\grep{\A}}},\, \col{i}{\ct} \bigr) \mid \ct \in \tA,\, \stp(\bt, \ct) \neq \emptyset }\; .
\end{align*}
And since this is equivalent to $\lambda(w_{\at}, w_{\ct})$ and $\lambda(w_{\bt}, w_{\ct})$, this holds iff
\begin{align*}
	\!\mset[\big]{ \bigl( \lambda(w_{\at}, w_{\ct}),\, \col{i}{\ct} \bigr) \mid \ct \in \tA,\, \stp(\at, \ct) \neq \emptyset }
	\ =\
	\mset[\big]{ \bigl( \lambda(w_{\bt}, w_{\ct}),\, \col{i}{\ct} \bigr) \mid \ct \in \tA,\, \stp(\bt, \ct) \neq \emptyset }.
\end{align*}
Plugging in the fact that $\set{ w_{\at}, w_{\ct} } \in \E(G)$ iff $\stp(\at, \ct) \neq \emptyset$ and $\at\neq\ct$, where $G$ is the Gaifman graph of $\grep{\A}$, and the induction hypothesis, this is the case iff
\begin{align*}
	\mset[\big]{ \bigl( \lambda(w_{\at}, w_{\ct}),\, \gamma_{i}(\ct) \bigr) \mid \set{ \at, \ct } \in \E(G) }
	\;=\;
	\mset[\big]{ \bigl( \lambda(w_{\bt}, w_{\ct}),\, \gamma_{i}(\ct) \bigr) \mid \set{ \bt, \ct } \in \E(G) }\; .
\end{align*}
Thus, $\col{i+1}{\at} = \col{i+1}{\bt}$ iff $\gamma_{i+1}(w_{\at}) = \gamma_{i+1}(w_{\bt})$.\qed%
\subsection{Relational Color Refinement Generalizes Color Refinement}\label{app:rel-cr-generalizes-cr}
By induction, we show that (a) for all $u,v \in \V(G)$ we have: $\col{i}{u} = \col{i}{v}$ iff $\gamma_i(u) = \gamma_i(v)$, where $\gamma_i$ is defined as in the introduction, i.e., $\gamma_0(u) = 0$ for all $u \in \V(G)$ and $\gamma_{i+1}(u) = (\gamma_i(u), \mset{ \gamma_i(w) \mid \set{ u,w } \in \E(G) })$.
Additionally, we show that (b) for all $\set{ u,v }[], \set{ x,y }[] \in \E(G)$ it holds that $\col{i}{u,v} = \col{i}{x,y}$ iff $(\gamma_i(u), \gamma_i(v)) = (\gamma_i(x), \gamma_i(y))$.

\myparagraph{Base Case.} Initially, $\gamma_0(u) = 0$ and $\col{0}{u} = (\set{ U }, \set{ (1,1) })$ for every $u \in \V(G)$ and $\col{0}{u,v} = (\set{ E }, \set{ (1,1), (2,2) })$ for every $\set{ u,v } \in \E(G)$.
Thus, our claim holds in the base case.
\medskip

\myparagraph{Induction Hypothesis.} 
Let $i \in \nat$.
We assume that $\gamma_i(u) = \gamma_i(v)$ iff $\col{i}{u} = \col{i}{v}$ for all $u,v \in \V(G)$ and $\col{i}{u,v} = \col{i}{x,y}$ iff $(\gamma_i(u), \gamma_i(v)) = (\gamma_i(x), \gamma_i(y))$ for all $\set{ u,v }, \set{ x,y } \in \E(G)$.
\medskip

\myparagraph{Inductive Step (a).}
Let $u,v \in \V(G)$.
If $\gamma_i(u) \neq \gamma_i(v)$, then $\gamma_{i+1}(u) \neq \gamma_{i+1}(v)$.
By induction hypothesis, $\col{i}{u} \neq \col{i}{v}$ and hence, by definition, $\col{i+1}{u} \neq \col{i+1}{v}$.

If $\gamma_i(u) = \gamma_i(v)$, then by induction hypothesis, $\col{i}{u} = \col{i}{v}$.
Hence, it suffices to show that $\Nset{i+1}{\A_G}(u) = \Nset{i+1}{\A_G}(v)$ iff $\mset{ \gamma_i(w) \mid \set{ u,w } \in \E(G) } = \mset{ \gamma_i(w) \mid \set{ v,w } \in \E(G) }$.
It is easy to see that 
\begin{align*}
	\Nset{i+1}{\A_G}(u) &= \mset[\big]{ \bigl( \set{ (1,1) }, \col{i}{u,w} \bigr), \bigl( \set{ (1,2) }, \col{i}{w,u} \bigr) \mid \set{ u,w } \in \E(G) } \union \mset{ (\set{ (1,1) }, \col{i}{u}) }\\
	\Nset{i+1}{\A_G}(v) &= \mset[\big]{ \bigl( \set{ (1,1) }, \col{i}{v,w} \bigr), \bigl( \set{ (1,2) }, \col{i}{w,v} \bigr) \mid \set{ v,w } \in \E(G) } \union \mset{ (\set{ (1,1) }, \col{i}{v}) }
\end{align*}
Hence, $\Nset{i+1}{\A_G}(u) = \Nset{i+1}{\A_G}(v)$ if, and only if, $\mset{ \col{i}{u,w} \mid \set{ u,w }[] \in \E(G) } = \mset{ \col{i}{v,w} \mid \set{ v,w }[] \in \E(G) }$ and $\mset{ \col{i}{w,u} \mid \set{ u,w }[] \in \E(G) } = \mset{ \col{i}{w,v} \mid \set{ v,w }[] \in \E(G) }$.
By induction hypothesis, this is the case if, and only if, $\mset{ (\gamma_i(u), \gamma_i(w)) \mid \set{ u,w } \in \E(G) } = \mset{ (\gamma_i(v), \gamma_i(w)) \mid \set{ v,w } \in \E(G) }$ and $\mset{ (\gamma_i(w), \gamma_i(u)) \mid \set{ u,w } \in \E(G) } = \mset{ (\gamma_i(w), \gamma_i(v)) \mid \set{ v,w } \in \E(G) }$; and because $\gamma_i(u) = \gamma_i(v)$, this is the case iff $\mset{ \gamma_i(w) \mid \set{ u,w } \in \E(G) } = \mset{ \gamma_i(w) \mid \set{ v,w } \in \E(G) }$.
Thus, $\gamma_{i+1}(u) = \gamma_{i+1}(v)$ iff $\col{i+1}{u} = \col{i+1}{v}$.
\medskip

\myparagraph{Inductive Step (b).}
Let $\set{ u,v }, \set{ x,y } \in \E(G)$.
If $(\gamma_i(u), \gamma_i(v)) \neq (\gamma_i(x), \gamma_i(y))$, then $(\gamma_{i+1}(u), \gamma_{i+1}(v)) \neq (\gamma_{i+1}(x), \gamma_{i+1}(y))$.
By induction hypothesis, we also obtain that $\col{i}{u,v} \neq \col{i}{x,y}$ and hence, $\col{i+1}{u,v} \neq \col{i+1}{v,w}$.

If $(\gamma_i(u), \gamma_i(v)) = (\gamma_i(x), \gamma_i(y))$ then, by the induction hypothesis, $\col{i}{u,v} = \col{i}{x,y}$.
We show that $\Nset{i+1}{\A_G}(u,v) = \Nset{i+1}{\A_G}(x,y)$ if, and only if, $\mset{ \gamma_i(w) \mid \set{ u,w } \in \E(G) } = \mset{ \gamma_i(w) \mid \set{ v,w } \in \E(G) }$ and $\mset{ \gamma_i(w) \mid \set{ x,w } \in \E(G) } = \mset{ \gamma_i(w) \mid \set{ y,w } \in \E(G) }$.
Note that
\begin{align*}
	\Nset{i+1}{\A_G}(u,v)\;\; &= \;\;
		\mset[\big]{
			\bigl( \set{ (1,1) }, \col{i}{u,w} \bigr), \bigl( \set{ (1,2) }, \col{i}{w,u} \bigr) \mid \set{ u,w } \in \E(G)
		}\\
		&\union\;\;
		\mset[\big]{
			\bigl( \set{ (2,1) }, \col{i}{v,w} \bigr), \bigl( \set{ (2,2) }, \col{i}{w,v} \bigr) \mid \set{ v,w } \in \E(G)
		}\\
		&\union\;\;
		\bigl\{\!\!\bigl\{\begin{aligned}[t]
			&\bigl( \set{ (1,1), (2,2) }, \col{i}{u,v} \bigr), 
			\bigl( \set{ (1,2), (2,1) }, \col{i}{v,u} \bigr),\\
			&\bigl( \set{ (1,1) }, \col{i}{u} \bigr), 
			\bigl( \set{ (2,1) }, \col{i}{v} \bigr)
		\bigr\}\!\!\bigr\},\quad\text{and}
		\end{aligned}\\
	\Nset{i+1}{\A_G}(x,y)\;\; &= \;\;
		\mset[\big]{
			\bigl( \set{ (1,1) }, \col{i}{x,w} \bigr), \bigl( \set{ (1,2) }, \col{i}{w,x} \bigr) \mid \set{ x,w } \in \E(G)
		}\\
		&\union\;\;
		\mset[\big]{
			\bigl( \set{ (2,1) }, \col{i}{y,w} \bigr), \bigl( \set{ (2,2) }, \col{i}{w,y} \bigr) \mid \set{ y,w } \in \E(G)
		}\\
		&\union\;\;
		\bigl\{\!\!\bigl\{\begin{aligned}[t]
			&\bigl( \set{ (1,1), (2,2) }, \col{i}{x,y} \bigr),
			\bigl( \set{ (1,2), (2,1) }, \col{i}{y,x} \bigr),\\
			&\bigl( \set{ (1,1) }, \col{i}{x} \bigr),
			\bigl( \set{ (2,1) }, \col{i}{y} \bigr)
		\bigr\}\!\!\bigr\}.
		\end{aligned}
\end{align*}
Since $\col{i}{u} = \col{i}{x}$, $\col{i}{v} = \col{i}{y}$, $\col{i}{u,v} = \col{i}{x,y}$ and $\col{i}{v,u} = \col{i}{y,x}$ by induction hypothesis, it follows that $\Nset{i+1}{\A_G}(u,v) = \Nset{i+1}{\A_G}(x,y)$ iff
\begin{align*}
	&\phantom{=}\;\;\;\ 
	\mset[\big]{
		\bigl( \set{ (1,1) }, \col{i}{u,w} \bigr), \bigl( \set{ (1,2) }, \col{i}{w,u} \bigr) \mid \set{ u,w } \in \E(G)
	}\\
	&=\;\;
	\mset[\big]{
		\bigl( \set{ (1,1) }, \col{i}{x,w} \bigr), \bigl( \set{ (1,2) }, \col{i}{w,x} \bigr) \mid \set{ x,w } \in \E(G)
	}\quad\text{and}\\
	&\phantom{=}\;\;\;\ 
	\mset[\big]{
		\bigl( \set{ (2,1) }, \col{i}{v,w} \bigr), \bigl( \set{ (2,2) }, \col{i}{w,v} \bigr) \mid \set{ v,w } \in \E(G)
	}\\
	&=\;\;
	\mset[\big]{
		\bigl( \set{ (2,1) }, \col{i}{y,w} \bigr), \bigl( \set{ (2,2) }, \col{i}{w,y} \bigr) \mid \set{ y,w } \in \E(G)
	}.
\end{align*}
And this is the case iff
\begin{align*}
	&\mset[\big]{ \col{i}{u,w} \mid \set{ u,w } \in \E(G) }
	\; = \;
	\mset[\big]{ \col{i}{x,w} \mid \set{ x,w } \in \E(G) },\\
	&\mset[\big]{ \col{i}{w,u} \mid \set{ u,w } \in \E(G) }
	\; = \;
	\mset[\big]{ \col{i}{w,x} \mid \set{ x,w } \in \E(G) },\\
	&\mset[\big]{ \col{i}{v,w} \mid \set{ v,w } \in \E(G) }
	\; = \;
	\mset[\big]{ \col{i}{y,w} \mid \set{ y,w } \in \E(G) },\quad\text{and}\\
	&\mset[\big]{ \col{i}{w,v} \mid \set{ v,w } \in \E(G) }
	\; = \;
	\mset[\big]{ \col{i}{w,y} \mid \set{ y,w } \in \E(G) }.
\end{align*}
By induction hypothesis, this is the case iff
\begin{align*}
	&\mset[\big]{ (\gamma_i(u), \gamma_i(w)) \mid \set{ u,w } \in \E(G) }
	\; = \;
	\mset[\big]{ (\gamma_i(x), \gamma_i(w)) \mid \set{ x,w } \in \E(G) },\\
	&\mset[\big]{ (\gamma_i(w), \gamma_i(u)) \mid \set{ u,w } \in \E(G) }
	\; = \;
	\mset[\big]{ (\gamma_i(w), \gamma_i(x)) \mid \set{ x,w } \in \E(G) },\\
	&\mset[\big]{ (\gamma_i(v), \gamma_i(w)) \mid \set{ v,w } \in \E(G) }
	\; = \;
	\mset[\big]{ (\gamma_i(y), \gamma_i(w)) \mid \set{ y,w } \in \E(G) },\quad\text{and}\\
	&\mset[\big]{ (\gamma_i(w), \gamma_i(v)) \mid \set{ v,w } \in \E(G) }
	\; = \;
	\mset[\big]{ (\gamma_i(w), \gamma_i(y)) \mid \set{ y,w } \in \E(G) }.
\end{align*}
Since $\gamma_i(u) = \gamma_i(x)$ and $\gamma_i(v) = \gamma_i(y)$, this holds iff
\begin{align*}
	&\mset[\big]{ \gamma_i(w) \mid \set{ u,w } \in \E(G) }
	\; = \;
	\mset[\big]{ \gamma_i(w) \mid \set{ x,w } \in \E(G) },\quad\text{and}\\
	&\mset[\big]{ \gamma_i(w) \mid \set{ v,w } \in \E(G) }
	\; = \;
	\mset[\big]{ \gamma_i(w) \mid \set{ y,w } \in \E(G) }.
\end{align*}
This is the case iff $\gamma_{i+1}(u) = \gamma_{i+1}(x)$ and $\gamma_{i+1}(v) = \gamma_{i+1}(y)$.
Thus, in total we get $\col{i+1}{u,v} = \col{i+1}{v,w}$ iff $(\gamma_{i+1}(u), \gamma_{i+1}(v)) = (\gamma_{i+1}(x), \gamma_{i+1}(y))$.\qed%

\clearpage
\section{Details omitted in the proof of
 Lemma~\ref{lem:homs-tree-to-acyclic-structure}}\label{sec:appendix:homomorphisms:SecondLemma}
\begin{claim}\label{claim:homomorphisms:SecondLemma:FirstClaim}
	$\hom(\str{T}, \grep{\A}) = \sum_{P \in P_\A} \#(P, \A)$ \ \ and \ \ $\hom(\str{T}, \grep{\B}) = \sum_{P \in P_\B} \#(P, \B)$.
\end{claim}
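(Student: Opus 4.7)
The plan is to observe that the claim is essentially a partition argument. By definition of the print, every homomorphism $h \in \Hom(\str{T}, \grep{\A})$ determines exactly one print $P_h$, and by definition of $P_\A$ we have $P_h \in P_\A$. Hence the map $h \mapsto P_h$ is a well-defined function from $\Hom(\str{T}, \grep{\A})$ to $P_\A$.

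Next, I would note that the fibers of this map partition $\Hom(\str{T}, \grep{\A})$. That is, we have the disjoint union
\[
  \Hom(\str{T}, \grep{\A}) \;=\; \bigdisunion_{P \in P_\A} \set{h \in \Hom(\str{T}, \grep{\A}) \mid P_h = P}.
\]
This is immediate: every $h$ belongs to exactly one such fiber, namely the one indexed by its own print. By the definition of $\#(P, \A)$, the cardinality of the fiber indexed by $P$ is precisely $\#(P, \A)$.

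Taking cardinalities on both sides of the disjoint union and using additivity yields
\[
  \hom(\str{T}, \grep{\A}) \;=\; \card{\Hom(\str{T}, \grep{\A})} \;=\; \sum_{P \in P_\A} \#(P, \A),
\]
which is the first equality. The proof for $\B$ is entirely analogous, replacing $\A$ by $\B$ and $\grep{\A}$ by $\grep{\B}$ throughout.

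There is no substantial obstacle here: the statement is a bookkeeping identity that unfolds the definitions of print, $P_\A$, and $\#(P, \A)$. The only thing worth being explicit about is that $P_\A$ is defined as the set of prints that actually arise from some homomorphism, so there are no ``empty'' summands to worry about, and different homomorphisms sharing the same print are counted together in $\#(P, \A)$ rather than separately.
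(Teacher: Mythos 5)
Your proof is correct and takes the same approach as the paper: both arguments observe that $h \mapsto P_h$ induces a partition of $\Hom(\str{T}, \grep{\A})$ into fibers indexed by $P_\A$, each of size $\#(P, \A)$, and conclude by taking cardinalities. Your extra remark that $P_\A$ consists precisely of the prints that actually arise is a sensible clarification, but the substance is the same.
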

\begin{proof}[Proof of \cref{claim:homomorphisms:SecondLemma:FirstClaim}]
We only prove the first equality; the second equality follows analogously.
By definition,
\[
	\Hom(\str{T}, \grep{\A})
	\ \  = \ \
	\bigcup_{P \in P_{\A}} \setc{h \in \Hom(\str{T}, \grep{\A})}{P_h=P}\;.
\]
Clearly, this is a union of pairwise disjoint sets.
Thus,
\[
	\hom(\str{T}, \grep{\A})
	\ \  = \ \
	\card{\Hom(\str{T}, \grep{\A})}
	\ \ = \ \
	\sum_{P \in P_{\A}} \card{\setc{h \in \Hom(\str{T}, \grep{\A})}{P_h=P}}
	\ \ = \ \
	\sum_{P \in P_{\A}}\#(P, \A).
\]
This completes the proof of \cref{claim:homomorphisms:SecondLemma:FirstClaim}.
\end{proof}
\medskip

\begin{claim}\label{claim:homomorphisms:SecondLemma:SecondClaim}
	$\hom(P, \grep{\A}) = \sum_{P': P \preceq P'} \#(P', \A)$ \ \ and \ \ $\hom(P, \grep{\B}) = \sum_{P':P \preceq P'} \#(P', \B)$.
\end{claim}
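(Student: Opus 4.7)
The plan is to establish, by a direct bijection argument, that $\Hom(P, \grep{\A})$ is in one-to-one correspondence with the set $\{h \in \Hom(\str{T}, \grep{\A}) : P \preceq P_h\}$. Once this is in place, partitioning the right-hand set according to the value of $P_h$ immediately yields
\[
\hom(P, \grep{\A}) \;=\; \sum_{P' \in P_\A : \, P \preceq P'} \#(P', \A) \;=\; \sum_{P' : \, P \preceq P'} \#(P', \A),
\]
where the last equality uses that $\#(P', \A) = 0$ for $P' \notin P_\A$. The argument for $\grep{\B}$ is identical.

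The crucial preliminary observation, which I would prove first, is that every print $P \in P_\A \cup P_\B$ satisfies $\str{T} \preceq P$ (where, formally, $\str{T}$ is regarded as a print on itself in the obvious way, or equivalently $P \succeq P_h$ for the homomorphism $h$ whose print $P$ is means in particular that all $\str{T}$-edges are recorded in $P$). Indeed, if $P = P_h$ for some $h \in \Hom(\str{T}, \grep{\A})$, then for every $v \in (U_R)^{\str{T}}$ we have $h(v) \in (U_R)^{\grep{\A}}$, so $v \in (U_R)^{P_h}$; and for every $(u,v) \in (E_{i,j})^{\str{T}}$ we have $(h(u), h(v)) \in (E_{i,j})^{\grep{\A}}$ together with $u = v$ or $\{u,v\} \in \E(T)$, so $(u,v) \in (E_{i,j})^{P_h}$.

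Next I would verify the two directions of the bijection. Given $h \in \Hom(\str{T}, \grep{\A})$ with $P \preceq P_h$, viewing $h$ as a map $V(P) \to V(\grep{\A})$ (which is legitimate since $V(P) = V(\str{T})$) yields a homomorphism $P \to \grep{\A}$: any $v \in (U_R)^P \subseteq (U_R)^{P_h}$ satisfies $h(v) \in (U_R)^{\grep{\A}}$ by definition of $P_h$, and analogously for the binary relations. Conversely, given $g \in \Hom(P, \grep{\A})$, the preliminary observation $\str{T} \preceq P$ immediately gives $g \in \Hom(\str{T}, \grep{\A})$, and a short check shows $P \preceq P_g$: any $v \in (U_R)^P$ forces $g(v) \in (U_R)^{\grep{\A}}$, so $v \in (U_R)^{P_g}$, and any $(u,v) \in (E_{i,j})^P$ satisfies $(g(u), g(v)) \in (E_{i,j})^{\grep{\A}}$ together with $u = v$ or $\{u,v\} \in \E(T)$ (the latter is built into the definition of a print), so $(u,v) \in (E_{i,j})^{P_g}$. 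The two maps are mutually inverse on the level of set-functions $V(P) \to V(\grep{\A})$.

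I do not expect serious obstacles here; the whole argument is bookkeeping once the observation $\str{T} \preceq P$ is made explicit. The only subtle point, and the one to state carefully, is why a bare homomorphism $g\colon P \to \grep{\A}$ automatically respects the (typically richer) edge set of $\str{T}$ itself; this is where $\str{T} \preceq P$ is used, and it relies on $P$ being a print (hence recording every $\str{T}$-edge forced by a genuine homomorphism) rather than an arbitrary colored multigraph on $V(\str{T})$.
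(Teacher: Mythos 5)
Your proposal is correct and takes essentially the same approach as the paper: both establish the set identity $\Hom(P, \grep{\A}) = \{h \in \Hom(\str{T}, \grep{\A}) : P \preceq P_h\}$ via the two inclusions (the paper's Claims~(i) and~(ii)) and then partition the right-hand set according to the value of $P_h$. Your explicit extraction of the observation $\str{T} \preceq P$ is a clean way to package what the paper proves on the fly inside Step~1 of its Claim~(ii) via a fixed witness $h_0$ with $P = P_{h_0}$, but the underlying argument is the same.
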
  
\begin{proof}[Proof of \cref{claim:homomorphisms:SecondLemma:SecondClaim}]
We only prove the first equality; the second equality follows analogously.
Let
\[
	X \ \ \deff \ \ \Hom(P, \grep{A}).
	\qquad\quad
	\text{Hence, }\quad |X| \ \ = \ \ \hom(P, \grep{A}).
\]
For every $P'$ with $P \preceq P'$ let
\[
	Y_{P'} \ \ \deff \ \ \setc{h \in \Hom(\str{T}, \grep{A})}{P_h=P'}.
	\qquad\quad
	\text{Hence, }\quad |Y_{P'}| \ \ = \ \ \#(P', \A).
\]
In the following, we will show that $X = \bigcup_{P': P \preceq P'}
Y_{P'}$.
Since this is a union of pairwise disjoint sets,
it implies that $\card{X} = \sum_{P' : P \preceq P'}|Y_{P'}|$,
i.e., $\hom(P, \grep{A}) = \sum_{P' : P \preceq P'}\#(P', \A)$,
i.e., it completes the proof of the first statement of \cref{claim:homomorphisms:SecondLemma:SecondClaim}.
\\
We will prove the following two claims.
\bigskip

\noindent
\emph{Claim (i):} For all $h \in \Hom(\str{T}, \grep{\A})$ with $P \preceq P_h$ we have: \ $h \in \Hom(P, \grep{\A})$.
\smallskip

\noindent
\emph{Claim (ii):} For all $f \in \Hom(P, \grep{\A})$ we have: \ $f \in \Hom(\str{T}, \grep{\A})$ \ and \ $P \preceq P_f$.

\bigskip

\noindent
Note that Claim~(i) implies that
\[
  \bigcup_{P' : P \preceq P'}Y_{P'}
  \ \ \subseteq \ \ 
  X
\]
--- to see this, recall that $P' \in P_{\A}$, and hence $P'=P_h$ for some $h \in \Hom(\str{T}, \grep{\A})$.
\smallskip

\noindent
Furthermore, note that
\[
	\bigcup_{P' : P \preceq P'}Y_{P'}
	\ \ = \ \
	\bigcup_{P' : P \preceq P'}   \setc{h \in \Hom(\str{T}, \grep{A})}{P_h=P'}
	\ \ = \ \
	\setc{h \in \Hom(\str{T}, \grep{A})}{P \preceq P_h}.
\]
Moreover, Claim~(ii) implies that
\[
	\setc{h \in \Hom(\str{T}, \grep{A})}{P \preceq P_h}
	\ \ \supseteq \ \
	\Hom(P, \grep{\A}).
\]
Hence, we have:
\[
	\bigcup_{P' : P \preceq P'}Y_{P'}
	\ \ \supseteq \ \
	X.
\]
In summary, we obtain: $X = \bigcup_{P' : P \preceq P'}Y_{P'}$.
Therefore, all that remains to be done in order to complete the proof of the first statement of \cref{claim:homomorphisms:SecondLemma:SecondClaim} is to prove Claim~(i) and Claim~(ii).

\begin{proof}[Proof of Claim~(i):]
Consider an arbitrary $h \in \Hom(\str{T}, \grep{\A})$ with $P \preceq P_h$.
We have to show that  $h \in \Hom(P, \grep{\A})$.

Since $h \in \Hom(\str{T}, \grep{\A})$, it is a mapping $h \colon V(\str{T})\to V(\grep{\A})$.
Since $P \in P_{\A}$, by definition we have: $V(P)=V(T) = V(\str{T})$;
thus, $h$ also is a mapping $h \colon V(P)\to V(\grep{\A})$.
Let us show that this mapping is a homomorphism from $P$ to $\grep{\A}$.

First, consider an arbitrary $R \in \sig$ and an arbitrary $v \in {(U_R)}^{P}$.
We have to show that $h(v)\in {(U_R)}^{\grep{A}}$.
We know that $P \preceq P_h$.
Thus, $v \in {(U_R)}^{P}$ implies that $v \in {(U_R)}^{P_h}$.
From the definition of $P_h$ we obtain: $h(v)\in {(U_R)}^{\grep{\A}}$.

Next, consider arbitrary $i,j \in [\ar(\sig)]$ and an arbitrary tuple $(u,v) \in {(E_{i,j})}^{P}$.
We have to show that $(h(u),h(v))\in {(E_{i,j})}^{\grep{\A}}$.
We know that $P \preceq P_h$.
Thus, $(u,v)\in {(E_{i,j})}^{P}$ implies that $(u,v)\in {(E_{i,j})}^{P_h}$.
From the definition of $P_h$ we obtain: $(h(u),h(v))\in {(E_{i,j})}^{\grep{\A}}$.

In summary, we obtain that $h$ is a homomorphism from $P$ to $\grep{\A}$.
This completes the proof of Claim~(i).
\end{proof}

\begin{proof}[Proof of Claim~(ii):]
Consider an arbitrary $f \in \Hom(P, \grep{\A})$.
We have to show that $f \in \Hom(\str{T}, \grep{\A})$ \ and \ $P \preceq P_f$.

First, recall that $P \in P_{\A}$.
Thus, $P=P_{h_0}$ for some $h_0 \in \Hom(\str{T}, \grep{\A})$.
Throughout this proof, let us fix such an $h_0$.
I.e., we have: \ $h_0 \in \Hom(\str{T}, \grep{\A})$ and $P=P_{h_0}$.

By assumption, $f \in \Hom(P, \grep{\A})$.
Thus, $f$ is a mapping $f \colon V(P)\to V(\grep{\A})$.
Since $P \in P_{\A}$, by definition we have: $V(P)=V(T) = V(\str{T})$;
thus, $f$ also is a mapping $f \colon V(\str{T})\to V(\grep{\A})$.
\smallskip

\noindent
\emph{Step~1:} Show that $f$ is a homomorphism from $\str{T}$ to $\grep{\A}$.
\smallskip

\noindent
First, consider an arbitrary $R \in \sig$ and an arbitrary $v \in {(U_R)}^{\str{T}}$.
We have to show that $f(v)\in {(U_R)}^{\grep{\A}}$.

We know that $P = P_{h_0}$.
Since $h_0$ is a homomorphism from $\str{T}$ to $\grep{\A}$, $v \in {(U_R)}^{\str{T}}$ implies that $h_0(v)\in {(U_R)}^{\grep{\A}}$.
By the definition of $P_{h_0}$, we obtain that $v \in {(U_R)}^{P_{h_0}}$, i.e., $v \in {(U_R)}^P$.
Since $f$ is a homomorphism from $P$ to $\grep{\A}$, we obtain: $f(v)\in {(U_R)}^{\grep{\A}}$.

Next, consider arbitrary $i,j \in[\ar(\sig)]$ and an arbitrary tuple $(u,v)\in {(E_{i,j})}^{\str{T}}$.
We have to show that $(f(u),f(v))\in {(E_{i,j})}^{\grep{\A}}$.
Since $h_0$ is a homomorphism from $\str{T}$ to $\grep{\A}$, $(u,v)\in {(E_{i,j})}^{\str{T}}$ implies that $(h_0(u),h_0(v))\in {(E_{i,j})}^{\grep{\A}}$.
Furthermore, from $(u,v)\in {(E_{i,j})}^{\str{T}}$, by the definition of the Gaifman graph $T$ of $\str{T}$ we obtain that either $\set{u,v}\in \E(T)$ or $u=v$.
Thus, according to the definition of $P_{h_0}$ we obtain that $(u,v)\in {(E_{i,j})}^{P_{h_0}}$, i.e., $(u,v)\in {(E_{i,j})}^P$.
Since $f$ is a homomorphism from $P$ to $\grep{\A}$, we obtain: $(f(u),f(v))\in {(E_{i,j})}^{\grep{\A}}$.

In summary, we obtain that $f$ is a homomorphism from $\str{T}$ to $\grep{\A}$.
This completes \emph{Step~1}.
\smallskip

\noindent
\emph{Step~2:} Show that $P \preceq P_f$.
\smallskip

\noindent
First, consider an arbitrary $R \in \sig$ and an arbitrary $v \in {(U_R)}^{P}$.
We have to show that $v \in {(U_R)}^{P_f}$.

According to the definition of $P_f$ it suffices to show that $f(v)\in {(U_R)}^{\grep{\A}}$.
Since $f$ is a homomorphism from $P$ to $\grep{\A}$ and $v \in {(U_R)}^{P}$, we obtain: $f(v) \in {(U_R)}^{\grep{\A}}$.

Next, consider arbitrary $i,j \in[\ar(\sig)]$ and an arbitrary tuple $(u,v)\in {(E_{i,j})}^{P}$.
We have to show that $(u,v)\in {(E_{i,j})}^{P_f}$.
According to the definition of $P_f$ it suffices to show the following:
\[
	(f(u),f(v))\in {(E_{i,j})}^{\grep{\A}} \quad \text{and} \quad
	\big( \ \ 
		u=v \ \ \text{ or } \ \ \set{u,v}\in \E(T)
	\ \ \big).
\] 
Since $f$ is a homomorphism from $P$ to $\grep{\A}$ and $(u,v)\in {(E_{i,j})}^P$, we obtain that $(f(u),f(v))\in {(E_{i,j})}^{\grep{\A}}$.
\\
In case that $u=v$, we are done.
\\
In case that $u \neq v$,  we have to show that $\set{u,v}\in \E(T)$.
The reason why $\set{u,v}\in \E(T)$ is as follows:
By assumption, we have $(u,v)\in {(E_{i,j})}^P$.
Recall that $P=P_{h_0}$.
According to the definition of $P_{h_0}$, from $(u,v)\in {(E_{i,j})}^P$ and $u \neq v$ we obtain: $\set{u,v}\in \E(T)$.

In summary, we obtain that $P \preceq P_f$.
This completes \emph{Step~2} and the proof of \emph{Claim~(ii)}.
\end{proof}  

\noindent
This also completes the proof of \cref{claim:homomorphisms:SecondLemma:SecondClaim}.
\end{proof}
\medskip 

\clearpage
\section{Details omitted  in Section~\ref{sec:main-logic}}\label{app:main-logic}

\subsection{Proof of Lemma~\ref{lem:describing-color-in-gfc}}
\describingColorInGFC*
\begin{proof}[Proof of~\cref{lem:describing-color-in-gfc}]
Let $m \isdef \ar(\sig)= \max\set{\ar(R) \mid R \in \sigma}$.

Note that, by the definition of RCR, for every $\sig$-structure $\B$, for every $i \in \NN$, and for every color $c \in \cols{i}{\B}$, there is a number $k \in[m]$, an atomic type $\myatype$ of arity $k$, and a similarity type $\mystype$ of arity $k$ such that every tuple that gets assigned color $c$ by RCR has arity $k$, atomic type $\myatype$, and similarity type $\mystype$.
We will write $\ar(c)$ to denote $k$, and $\stp(c)$ to denote $\mystype$, and $\atp(c)$ to denote $\myatype$.

The following notation allows us to substitute variables in a variable tuple and will be handy throughout the proof.

Let $X = \set{ y_1, \dots, y_m, y_1', \dots, y_m' } \subseteq \VAR$ be a set of $2m$ variables and let \,$\compl{\mbox{\,\vphantom{y}$\cdot$\,}}\colon X \to X$ with $\compl{y_j} = y_j'$ and $\compl{y_j'} = y_j$, for all $j \in [m]$.
I.e., $y_j, y_j'$ serve as complementary pairs, and \,$\compl{\mbox{\,\vphantom{y}$\cdot$\,}}$\, allows us to switch from one to the other.
In the following, w.l.o.g.\ we will only consider $\ell$-variable tuples $\xt = (x_1,\dots, x_\ell)$ where $1 \leq \ell \leq m$ and $x_j$ is either $y_j$ or $y_j'$ (for all $j \in [\ell]$).

For proving the lemma, let us consider a fixed $\sig$-structure $\A$.
We proceed by induction on the number $i$ of iterations performed with RCR, and we prove a statement slightly stronger than the lemma's statement.
For every $i \in \NN$ let $\myC_i$ be the union of the sets $\cols{i}{\B}$ for all $\sig$-structures $\B$ of size strictly equal to $\A$.
Note that $\myC_i$ is a \emph{finite}, non-empty set.
By induction on $i$ we show that for every $i \in \NN$ and every $c \in \myC_i$ of arity $k \deff \ar(c)$, there exists a formula $\phi^{i}_{c}(\xt) \in \GFC$ with $\xt = (x_1, \dots, x_k)$ such that for every $\sig$-structure $\B$ of size strictly equal to $\A$ and every $\bt \in \tB$ of arity $k$ we have: $\B, \bt \models \phi^{i}_{c}(\xt)$ $\iff$ $\col{i}{\bt} = c$.

\paragraph*{Base Cases:}
For every $k \in \natpos$, every atomic type $\myatype$ of arity $k$, every similarity type $\mystype$ of arity $k$, and every $k$-variable tuple $\xt = (x_1, \dots, x_k)$ we define the formulas
\[
	\phi_{\myatype}(\xt) \;\isdef\;
	\smashoperator{\bigland_{R \in \myatype}} R(\xt)
	\;\;\land\;\;
	\smashoperator{\bigland_{R \not\in \myatype}} \lnot R(\xt)
	\qquad\text{and}\qquad
	\phi_{\mystype}(\xt) \;\isdef\;
	\smashoperator{\bigland_{(j,j') \in \mystype}}
	\xvar_j \lequal \xvar_{j'}
	\;\;\land\;\;
	\smashoperator{\bigland_{(j,j') \not\in \mystype}}
	\lnot\, \xvar_j \lequal \xvar_{j'}
	\;.
\]
Obviously, $\phi_{\myatype}, \phi_{\mystype} \in \GFC$ and $\gd(\phi_{\myatype}) = \gd(\phi_{\mystype}) = 0$, and it is easy to see that for all $\sig$-structures $\str{B}$ and all $k$-tuples $\bt \in {\V(\str{B})}^k$ the following holds:
\begin{align}\label{char:phi-typen}%
	\str{B}, \tup{b}\ \models \ \phi_{\myatype}(\xt)
		\;\iff\;
		\atp(\tup{b}) = \myatype
		\;,
	\qquad\text{and}\qquad
	\str{B}, \tup{b}\ \models \ \phi_{\mystype}(\xt)
		\;\iff\;
		\stp(\tup{b}) = \mystype
	\;.
\end{align}
Now consider an arbitrary color $c \in \myC_0$.
By definition of RCR, $c=(\atp(c), \stp(c))$.
Thus, we are done by choosing $\phi^0_{c}(\xt) \isdef \bigl(\phi_{\atp(c)}(\xt) \land \phi_{\stp(c)}(\xt)\bigr)$.

\paragraph*{Inductive Step:}
Consider $c \in \myC_{i+1}$ of arity $k$.
Let $\xt$ be a $k$-variable tuple (as described at the beginning of this proof).
By definition of RCR, $c$ is of the form $(c', N)$, where $c' \in \myC_i$ also has arity $k$, and $N$ is a multiset consisting of tuples of the form $(\mystype, d)$ with $\mystype \in \sTypes$ and $d \in \myC_i$.
\\
By the induction hypothesis we already have available a formula $\phi^{i}_{c'}(\xt)$ with the desired properties.
\\
We let $\phi^{i+1}_{c}(\xt) \isdef \bigl(\phi^{i}_{c'}(\xt) \land \phi_{N}(\xt)\bigr)$, where
\begin{equation*}
	\phi_{N}(\xt) \;\;=\;\; \smashoperator{\bigland_{\substack{d \in \myC_i,\\ \mystype \in \sTypes}}}\;
		\phi^{N}_{\mystype, d}(\xt) \;,\quad
		\text{with $\phi_{\mystype, d}^{N}(\xt)$ ideally describing how many times $(\mystype, d)$ is contained in $N$.}
\end{equation*}
Obviously, if all formulas $\phi_{\mystype, d}^{N}(\xt)$ are in $\GFC$, then also $\phi_{c}^{i+1}(\xt)$ is in $\GFC$.
Furthermore, for any $\sig$-structure $\B$ and all tuples $\bt \in \tB$,
\begin{quote}
	\enquote{$\B, \bt \models \phi^{i+1}_{c}(\xt) \iff \col{i+1}{\bt} = c$}
\end{quote}
holds if, and only if,
\begin{quote}
	\enquote{$\B, \bt \models \phi_N(\xt) \iff \Nset{i+1}{\B}(\bt) = N$}
\end{quote}
holds.
However, $\phi^{N}_{\mystype, d}(\xt)$ will not describe the number $\mult{N}((\mystype, d))$; instead it will describe the sum over all similarity types that \emph{contain} $\mystype$, i.e., $\sum_{\mystype' \supseteq \mystype} \mult{N}((\mystype', d))$.
This is fine though, due to the following claim.
\medskip

\noindent
\emph{Claim~a:} For all $\sig$-structures $\B$ of size strictly equal to $\A$, and for all tuples $\bt \in \tB$, the following statements are equivalent:
\begin{statements}[topsep=0pt, noitemsep]
\item\label{claima:one} $\Nset{i+1}{\B}(\bt)=N$.
\item\label{claima:two} For all $\mystype \in \sTypes$ and all $d \in \myC_i$ we have \ \ $\sum_{\mystype' \supseteq \mystype} \mult{\Nset{i+1}{\B}(\bt)}((\mystype', d)) \ = \ \sum_{\mystype' \supseteq \mystype} \mult{N}((\mystype', d))$.
\end{statements}
\medskip

\noindent
\emph{Proof:}
By definition, $\Nset{i+1}{\B}(\bt)=N$ iff for all $\mystype \in \sTypes$ and all $d \in \myC_i$ we have \ \ $\mult{\Nset{i+1}{\B}(\bt)}((\mystype, d)) \ = \ \mult{N}((\mystype, d))$.
Thus, the direction \enquote{\ref{claima:one}~$\Rightarrow$~\ref{claima:two}} holds trivially.

For the direction \enquote{\ref{claima:two}~$\Rightarrow$~\ref{claima:one}}, we prove the contraposition.
Let us assume that~\cref{claima:one} does not hold, i.e., there exist $\mystype \in \sTypes$ and $d \in \myC_i$ such that \ \ $\mult{\Nset{i+1}{\B}(\bt)}((\mystype, d)) \ \neq \ \mult{N}((\mystype, d))$.
We pick such $\mystype$ and $d$ so that $\mystype$ is maximal w.r.t.\ $\subseteq$.
I.e., $\mult{\Nset{i+1}{\B}(\bt)}((\mystype, d)) \ \neq \ \mult{N}((\mystype, d))$, but for all $\mystype'$ with $\mystype' \varsupsetneqq \mystype$ we have $\mult{\Nset{i+1}{\B}(\bt)}((\mystype', d)) \ = \ \mult{N}((\mystype', d))$.
Obviously, then $\sum_{\mystype' \supseteq \mystype} \mult{\Nset{i+1}{\B}(\bt)}((\mystype', d)) \ \neq \ \sum_{\mystype' \supseteq \mystype} \mult{N}((\mystype', d))$, and hence~\cref{claima:two} does not hold.
This completes the proof of Claim~a.
\bigskip

\noindent
In order to complete the induction step, it therefore suffices to prove the following claim.
\medskip

\noindent
\emph{Claim~b:} For every $\mystype \in \sTypes$ and every $d \in \myC_i$ there exists a formula $\phi^{N}_{\mystype, d}(\xt)$ in $\GFC$ such that for all $\sig$-structures $\B$ of size strictly equal to $\A$ and all tuples $\bt \in \tB$ we have:\\
$\B,\bt \models \phi^{N}_{\mystype, d}(\xt)$ \ \ $\iff$ \ \ $\sum_{\mystype' \supseteq \mystype} \mult{\Nset{i+1}{\B}(\bt)}((\mystype', d)) \ = \ \sum_{\mystype' \supseteq \mystype} \mult{N}((\mystype', d))$.
\medskip

\noindent
\emph{Proof:}
Let $\mystype \in \sTypes$ and $d \in \myC_i$.
Let $k$, $\ell$ be the arity of $c$ and $d$, respectively.
The tricky part is to encode $\mystype$ into the formula in such a way that it is a syntactically correct formula in $\GFC$.
To do this, we define \emph{left} ($\equiv_{\links}$) and \emph{right} ($\equiv_\rechts$) equivalence w.r.t.\ $\mystype$ as follows: $j \equiv_{\links} j'$ if there is an $n$ such that $(j,n), (j',n) \in \mystype$, and $j \equiv_{\rechts} j'$ if there is an $n$ such that $(n,j), (n,j') \in \mystype$.
By ${[j]}_{\links}$ and ${[j]}_{\rechts}$ we denote the corresponding equivalence classes.
We use these equivalences to decompose $\mystype$ into its \emph{left} and \emph{right} and \emph{center} parts defined as
\begin{eqnarray*}
	\lpart{\mystype} & \isdef & \set{ (j,j') \mid j \equiv_{\links} j' },
	\\
	\rpart{\mystype} & \isdef & \set{ (j,j') \mid j \equiv_{\rechts} j' },
	\quad\text{and}\quad \\
	\cpart{\mystype} & \isdef & \set{ (\min {[j]}_{\links}, \min {[j']}_{\rechts})\mid (j,j') \in \mystype }\;.
\end{eqnarray*}
\noindent
Notice that $\cpart{\mystype} \subseteq \mystype$; and if $\stp(\bt, \bt') = \mystype$, then $\stp(\bt) \supseteq \lpart{\mystype}$ and $\stp(\bt') \supseteq \rpart{\mystype}$.

We let $\xt' = (x_1', \dots, x_\ell')$ be the $\ell$-variable tuple where for all $j' \in [\ell]$, $x'_{j'} \isdef x_j$ if $(j,j') \in \cpart{\mystype}$ and $x'_{j'} \isdef \compl{x_{j'}}$ otherwise.
Let $\widehat{\xt}$ be the ordered variable tuple with $\vset(\widehat{\xt}) = \vset(\xt') \setminus \vset(\xt)$.

By the induction hypothesis, we already have available a formula $\phi^i_{d}(\xt')$ with the desired properties.
Choose an arbitrary $R \in \atp(d)$ (this is possible because $\atp(d) \neq \emptyset$), and let
\begin{equation*}
	\phi^N_{\mystype,d}(\xt) \ \ \isdef \ \ \existseq{n} \,\widehat{\xt} \qsep
	\bigl(
		R(\xt')
		\land \phi^i_{d}(\xt')
	\bigr),
	\qquad\text{where}\quad
	n \isdef \sum_{\mystype' \supseteq \mystype} \mult{N}((\mystype', d))\;.
\end{equation*}
It is easy to verify that $\phi^N_{\mystype,d} \in \GFC$ and $\gd(\phi^N_{\mystype,d}) = \gd(\phi^i_d)+1$.

Let $\B$ be an arbitrary $\sig$-structure of size strictly equal to $\A$, and let $\bt$ be an arbitrary tuple in $\tB$ of arity $k$.
It now suffices to show that $\B, \bt \models \phi^N_{\mystype,d}(\xt) \iff \sum_{\mystype' \supseteq \mystype} \mult{\Nset{i+1}{\B}(\bt)}((\mystype',d)) = n$.
By construction of the formula, we have:
\begin{align*}
	\B, \bt \models \phi_{\mystype,d}^{N}(\xt) \iff\; &\text{there exist exactly $n$ tuples $\bt'$ such that } b_j = b'_{j'} \text{ for all } (j,j') \in \cpart{\mystype}\\
	&\text{and } \B, \bt' \models ( R(\xt') \land \phi_{d}^{i}(\xt') ) \\
	\iff\; &\text{there exist exactly $n$ tuples $\bt'$ such that } b_j = b'_{j'} \text{ for all } (j,j') \in \cpart{\mystype} \\
	&\text{and } \col{i}{\bt'} = d \text{ and } \B, \bt' \models R(\xt') \\
	\iff\; &\text{there exist exactly $n$ tuples $\bt'$ such that } b_j = b'_{j'} \text{ for all } (j,j') \in \cpart{\mystype} \\
	&\text{and } \col{i}{\bt'} = d.
\end{align*}
For all $\bt' \in \tB$, the inclusion $\mystype \subseteq \stp(\bt, \bt')$ trivially implies that $b_j = b'_{j'}$ holds for all $(j,j') \in \mystype$.
Thus, $\sum_{\mystype' \supseteq \mystype} \mult{\Nset{i+1}{\B}(\bt)}((\mystype',d)) \leq n$.
Hence, it remains to show that for all $\bt' \in \tB$ with $\col{i}{\bt'} = d$,
\begin{quote}
	\enquote{$b_j = b'_{j'}$ for all $(j,j') \in \cpart{\mystype}$} \qquad implies that \qquad $\stp(\bt, \bt') \supseteq \mystype$.
\end{quote}
First, note that $\stp(\bt') = \stp(d)$.

Let $(j,j') \in \mystype$.
We have to show that $(j,j') \in \stp(\bt, \bt')$.
By construction, there exists a pair $(j_1, j'_1) \in \cpart{\mystype}$ with $j \equiv_{\links} j_1$ and $j' \equiv_{\rechts} j'_1$ and $(j_1, j'_1) \in \stp(\bt, \bt')$.
Since $\rpart{\mystype} \subseteq \stp(d)$ and $\stp(d) = \stp(\bt')$, we obtain: $(j', j'_1) \in \stp(\bt')$.
Similarly, $\lpart{\mystype} \subseteq \stp(\bt)$, thus $(j, j_1) \in \stp(\bt)$.
I.e., $b_{j} = b_{j_1} = b'_{j'_1} = b'_{j'}$, and thus, $(j,j') \in \stp(\bt, \bt')$.
Since $(j,j') \in \mystype$ was chosen arbitrarily, we have proven that $\mystype \subseteq \stp(\bt, \bt')$.
\\
In total, we obtain:
\begin{quote}
	$\B, \bt \models \phi_{\mystype,d}^{N}(\xt)$ \ \ $\iff$ \ \ $\sum_{\mystype' \supseteq \mystype} \mult{\Nset{i+1}{\B}(\bt)}((\mystype', d)) \ \ = \ n$.
\end{quote}
This completes the proof of Claim~b.
\medskip

By the construction of the formula $\phi_{N}(\xt)$ (at the beginning of the inductive step) and using Claim~a, we obtain that the following equivalence holds for all $\sig$-structures $\B$ of size strictly equal to $\A$ and for all tuples $\bt \in \tB$:
\begin{quote}
	$\B, \bt \models \phi_{N}(\xt)$ \ \ $\iff$ \ \ $\Nset{i+1}{\B}(\bt) = N$.
\end{quote}
This completes the inductive step, and hence it completes the proof of \cref{lem:describing-color-in-gfc}. \end{proof}

\subsection{Proof of Lemma~\ref{lem:formula-is-spoiler-strategy}}
\formulaIsSpoilerStrategy*
\begin{proof}[Proof of~\cref{lem:formula-is-spoiler-strategy}]
The proof works by induction over the definition of $\GFC$.
\paragraph{Base Cases.}
$\phi$ is of one of the following forms.

\medskip\noindent\emph{\Cref{def:logic-syntax:rules:relation}}:\;  $\phi(\xt)$ is of the form $R(x_{i_1}, \dots, x_{i_\ell})$, with $i_1,\ldots,i_\ell \in [k]$.
Since $\A, \at \models \phi(\xt) \iff \B, \bt \not\models \phi(\xt)$, either $(a_{i_1}, \dots, a_{i_\ell}) \in R^\A$ and $(b_{i_1}, \dots, b_{i_\ell}) \not\in R^\B$ or $(a_{i_1}, \dots, a_{i_\ell}) \not\in R^\A$ and $(b_{i_1}, \dots, b_{i_\ell}) \in R^\B$.
Hence, $\atp((a_{i_1}, \dots, a_{i_\ell})) \neq \atp((b_{i_1}, \dots, b_{i_\ell}))$.
I.e., the configuration is distinguishing and Spoiler has a $0$-round winning strategy on $(\A, \at)$, $(\B, \bt)$.

\medskip\noindent\emph{\Cref{def:logic-syntax:rules:equality}}:\;  $\phi(\xt)$ is of the form $x_i \lequal x_j$ with $i,j \in[k]$.
Since $\A, \at \models \phi(\xt) \iff \B, \bt \not\models \phi(\xt)$, either $a_i = a_j$ and $b_i \neq b_j$ or $a_i \neq a_j$ and $b_i = b_j$.
Hence, $\stp(\at) \neq \stp(\bt)$, i.e., Spoiler has a $0$-round winning strategy on $(\A, \at)$, $(\B, \bt)$.

\medskip\noindent This completes the induction base (note that, in both cases, $\gd(\phi) = 0$).

\paragraph*{Inductive Step.}~%
\medskip

\noindent\emph{\Cref{def:logic-syntax:rules:negation}}:\; $\phi(\xt)$ is of the form $\lnot \chi(\xt)$;
and by the induction hypothesis, the lemma's statement holds for the formula $\chi$.
In case that \enquote{$\A, \at \models \phi(\xt) {\iff} \B, \bt \not\models \phi(\xt)$} holds, also \enquote{$\A, \at \models \chi(\xt) {\iff} \B, \bt \not\models \chi(\xt)$} holds.
Hence, by the induction hypothesis Spoiler has a $\gd(\chi)$-round winning strategy on $(\A, \at)$, $(\B, \bt)$.
Since $\gd(\phi) = \gd(\chi)$, we are done.

\medskip\noindent\emph{\Cref{def:logic-syntax:rules:conjunction}}:\; $\phi(\xt)$ is of the form $(\psi_1(\xt) \land \psi_2(\xt))$; and by the induction hypothesis, the lemma's statement holds for each of the formulas $\psi_1$ and $\psi_2$.
In case that \enquote{$\A, \at \models \phi(\xt) {\iff} \B, \bt \not\models \phi(\xt)$} holds, there is an $i \in \set{1,2}$ such that \enquote{$\A, \at \models \psi_i(\xt) \iff \B, \bt \not\models \psi_i(\xt)$} holds.
Hence, by the induction hypothesis, Spoiler has a $\gd(\psi_i)$-round winning strategy on $(\A, \at)$, $(\B, \bt)$.
Since $\gd(\phi) \geq \gd(\psi_i)$, Spoiler also has a $\gd(\phi)$-round winning strategy on $(\A, \at), (\B, \bt)$.

\medskip\noindent\emph{\Cref{def:logic-syntax:rules:quantification}}:\; $\phi(\xt)$ is of the form $\existsgeq{n} \vartup \qsep (R(\xt') \land \psi(\xt'))$; and by the induction hypothesis, the lemma's statement holds for the formula $\psi$.
Let $\mystype \isdef \stp(\xt, \xt') = \set{ (i, j) \mid x_i = x_j' }$.
\smallskip

\noindent
\emph{Case 1:} Assume that $\A, \at \models \phi(\xt)$ and $\B, \bt \not\models \phi(\xt)$.
Then, there exist at least $n$ tuples $\at'$ with $\stp(\at, \at') \supseteq \mystype$ such that $\A, \at' \models (R(\xt') \land \psi(\xt'))$.

Spoiler picks the relation $R$.
For every bijection $\pi$ that Duplicator chooses, there must be at least one of these tuples $\at'$ such that $\A, \at' \models (R(\xt') \land \psi(\xt'))$, but $\B, \pi(\at') \not\models (R(\xt') \land \psi(\xt'))$ or $\stp(\bt, \pi(\at')) \not\supseteq \mystype$ --- otherwise, $\pi$ would witness that there are at least $n$ suitable $\bt'$ such that $\stp(\bt, \bt') \supseteq \mystype$ and $\B, \bt' \models (R(\xt') \land \psi(\xt'))$.
Let $\bt' \isdef \pi(\at')$.
Spoiler chooses $\at'$ and creates the configuration $(\A, \at')$, $(\B, \bt')$.
\\
If $\stp(\bt, \bt') \not\supseteq \mystype$, he has won the round, since this also implies that $\stp(\at, \at') \neq \stp(\bt, \bt')$.
\\
If $\B, \bt' \not\models R(\xt')$, then $(\A, \at')$, $(\B, \bt')$ is a distinguishing configuration, i.e., Spoiler wins this round as well.
Otherwise, $\B, \bt' \not\models \psi(\xt')$ must hold.
In this case, by the induction hypothesis Spoiler has a $\gd(\psi)$-round winning strategy on $(\A, \at')$, $(\B, \bt')$.
This means, that Spoiler has a $\gd(\psi){+}1$-round winning strategy on $(\A, \at)$, $(\B, \bt)$.
Since $\gd(\phi) = \gd(\psi) + 1$, we are done.
\smallskip

\noindent
\emph{Case 2:} Assume that $\A, \at \not\models \phi(\xt)$ and $\B, \bt \models \phi(\xt)$.
Then, there exist at least $n$ tuples $\bt'$ with $\stp(\bt, \bt') \supseteq \mystype$ such that $\B, \bt' \models (R(\xt') \land \psi(\xt'))$.

Again, Spoiler picks the relation $R$.
For every bijection $\pi$ that Duplicator chooses, there must be a tuple $\bt'$ such that $\B, \bt' \models (R(\xt') \land \psi(\xt'))$, but $\A, \pi^{-1}(\bt') \not\models (R(\xt') \land \psi(\xt'))$ or $\stp(\at, \pi^{-1}(\bt')) \not\supseteq \mystype$.
With the same argument as in \emph{Case~1}, Spoiler has a $\gd(\phi)$-round winning strategy on $(\A, \at)$, $(\B, \bt)$.

This completes the proof of \cref{lem:formula-is-spoiler-strategy}.
\qedhere \end{proof}

\subsection{Proof of Lemma~\ref{lem:same-color-is-dup-strategy}}
\sameColorIsDupStrategy*
\begin{proof}[Proof of~\cref{lem:same-color-is-dup-strategy}]
For the lemma's first claim, consider $\at$, $\bt$ such that $\col{1}{\at} = \col{1}{\bt}$.
Then $\stp(\at) = \stp(\bt)$.
Assume for contradiction that there exists an $\ell \in [\ar(\sig)]$ and indices $i_1, \ldots, i_\ell \in[k]$ such that $\atp(\at') \neq \atp(\bt')$, for $\at' = (a_{i_1}, \dots, a_{i_\ell})$ and $\bt' = (b_{i_1}, \dots, b_{i_\ell})$.
This implies that $\at' \in \tA$ or $\bt' \in \tB$ (because otherwise, $\atp(\at') = \emptyset = \atp(\bt')$).
\\
Furthermore, note that $\stp(\at) = \stp(\bt)$ implies that $\stp(\at') = \stp(\bt')$ and $\stp(\at,\at') = \stp(\bt,\bt')$.
\medskip

\noindent
\emph{Case~1}: \ $\at' \in \tA$.
In this case, RCR assigns colors to $\at'$.
Let $c \deff \col{0}{\at'}$.
Hence, by definition of RCR, $c=(\atp(\at'),\stp(\at'))$.
Let $\mystype \deff \stp(\at,\at')$, and note that $\mystype = \stp(\bt,\bt')$.

Furthermore, note that by definition, $\at'$ is the \emph{only} tuple $\hat{\at} \in {V(\A)}^\ell$ with $\stp(\at,\hat{\at}) = \mystype$.
Analogously, $\bt'$ is the \emph{only} tuple $\hat{\bt} \in {V(\B)}^\ell$ with $\stp(\bt,\hat{\bt}) = \mystype$.

By definition of RCR, the tuple $(\mystype,c)$ has multiplicity 1 in $\Nset{1}{\A}(\at)$ --- because $\at'$ witnesses that this tuple belongs to $\Nset{1}{\A}(\at)$, and we know that there don't exist further potential witnesses.
With a similar reasoning we obtain that the tuple $(\mystype,c)$ has multiplicity 0 in  $\Nset{1}{\B}(\bt)$ --- because $\bt'$ is the only potential witness for a multiplicity $>0$, but $\atp(\at') \neq \atp(\bt')$ implies that $\col{0}{\bt'} \neq \col{0}{\at'}$, and hence $\bt'$ cannot be a witness.

In summary, $\mult{\Nset{1}{\A}(\at)}((\mystype, c)) = 1 \neq 0 = \mult{\Nset{1}{\B}(\bt)}((\mystype, c))$.
Therefore, $\Nset{1}{\A}(\at) \neq \Nset{1}{\B}(\bt)$.
But, by definition of RCR, this implies that $\col{1}{\at} \neq \col{1}{\bt}$.
A contradiction!

\medskip

\noindent
\emph{Case~2}: \ $\bt' \in \tB$.
In this case we can argue analogously as in Case~1 (switching the roles of $\at'$ and $\bt'$).
\medskip

\noindent
This completes the proof of the lemma's first claim.
\bigskip

\noindent
We prove the lemma's second claim by induction on $i$.
\paragraph*{Base Case.}
Consider $i = 0$.
If $\col{1}{\at} = \col{1}{\bt}$ then, by the lemma's first claim, $(\A, \at)$, $(\B, \bt)$ is not distinguishing.
Hence, Duplicator has a $0$-round winning strategy on $(\A, \at)$, $(\B, \bt)$.

\paragraph*{Inductive Step.}
Consider $i \in \natpos$.
Let $\col{i+1}{\at} = \col{i+1}{\bt}$, and let $\ccount{\A}{c} = \ccount{\B}{c}$ for all $c \in \cols{i}{\A} \union \cols{i}{\B}$.
By definition of RCR, the former means that  $\col{i}{\at} = \col{i}{\bt}$ and $\Nset{i+1}{\A}(\at) = \Nset{i+1}{\B}(\bt)$.
Taking into account the definition of $\Nset{i+1}{\A}(\at)$ and $\Nset{i+1}{\B}(\bt)$ in RCR, we obtain that for any $R \in \sig$ which Spoiler might choose, Duplicator can give a bijection $\pi\colon R^\A \to R^\B$ such that for all $\at' \in R^\A$ we have $\col{i}{\at'} = \col{i}{\pi(\at')}$ and $\stp(\at, \at') = \stp(\bt, \pi(\at'))$; this can be seen as follows:
Let $X \deff \setc{\at' \in R^{\A}}{\stp(\at,\at') \neq \emptyset}$ and $Y \deff \setc{\bt' \in R^{\B}}{\stp(\bt,\bt') \neq \emptyset}$.
Because of $\Nset{i+1}{\A}(\at) = \Nset{i+1}{\B}(\bt)$, there exists a bijection $\pi\colon X \to Y$ such that $\stp(\at,\at') = \stp(\bt,\pi(\at'))$ and $\col{i}{\at'} = \col{i}{\pi(\at')}$ for all $\at' \in X$.
Let $\myC_R \deff \setc{c \in \cols{i}{\A} \union \cols{i}{\B}}{R \in \atp(c)}$.
Note that $R^{\A} = \setc{\at' \in \tA}{\col{i}{\at'} \in \myC_R}$ and $R^{\B} = \setc{\bt' \in \tB}{\col{i}{\bt'} \in \myC_R}$.
Furthermore, $R^{\A} \setminus X = \setc{\at'\in R^\A}{\stp(\at,\at') = \emptyset}$ and $R^{\B} \setminus Y = \setc{\bt' \in R^\B}{\stp(\bt,\bt') = \emptyset}$.
Since, by assumption, $\ccount{\A}{c} = \ccount{\B}{c}$ for all $c \in \myC_R$, we can extend $\pi$ to a bijection $\pi\colon R^\A \to R^\B$ such that for all $\at' \in R^\A \setminus X$ we have $\pi(\at') \in R^\B \setminus Y$ and $\col{i}{\at'} = \col{i}{\pi(\at')}$.
This completes the reasoning why Duplicator can choose a bijection $\pi$ with the claimed properties.

Since $i \geq 1$, and by definition of RCR, we obtain that $\col{i}{\at'} = \col{i}{\pi(\at')}$ implies that $\col{1}{\at'} = \col{1}{\pi(\at')}$.
Hence, by the lemma's first claim, the configuration $(\A,\at')$, $(\B,\pi(\at'))$ is not distinguishing.
I.e., Duplicator wins this round for any $\at'$ and $\bt' \deff \pi(\at')$ that Spoiler could pick.
\\
Note that $\col{i}{\at'} = \col{i}{\bt'}$.
Furthermore, by definition of RCR,
\begin{quote}
	\enquote{$\ccount{\A}{c} = \ccount{\B}{c}$ for all $c \in \cols{i}{\A} \union \cols{i}{\B}$}
\end{quote}
implies that also
\begin{quote}
	\enquote{$\ccount{\A}{c'} = \ccount{\B}{c'}$ for all $c' \in \cols{i-1}{\A} \union \cols{i-1}{\B}$}.
\end{quote}
Hence, by the induction hypothesis, Duplicator has an $(i{-}1)$-round winning strategy on $(\A, \at')$, $(\B, \bt')$.
Thus, Duplicator has an $i$-round winning strategy on $(\A, \at)$, $(\B, \bt)$.
This completes the proof of \cref{lem:same-color-is-dup-strategy}. \end{proof}

\clearpage
\section{Details omitted in  Section~\ref{sec:main-runtime}}%
\label{app:alt-rep}

\subsection{Proof of Lemma~\ref{lem:slice-observations}}

\sliceObservations*
\begin{proof}~%
\begin{enumerate}[label={(\alph*)}]
	\item By definition, $\stp(\at, \bt) \neq \emptyset$ holds iff there are $i\in[k]$ and $j\in[k']$ such that $a_i = b_j$. The latter holds iff $\tset(\at) \intersect \tset(\bt) \neq  \emptyset$.

	\item 
	If $\ar(\at) > \ar(\bt)$ then $(k,k) \in \stp(\at)$ but $(k,k) \not\in \stp(\bt)$, and hence $\stp(\at) \neq \stp(\bt)$.

	If $\ar(\at) < \ar(\bt)$ then $(k',k') \in \stp(\bt)$ but $(k',k') \not\in \stp(\at)$, and hence $\stp(\at) \neq \stp(\bt)$.

	It remains to consider the case where $\ar(\at) = \ar(\bt)$, i.e., $k=k'$.
	Note that in this case, the function $\beta$ is well-defined iff for all $i,j \in [k]$ with $a_i = a_j$ we have $b_i = b_j$.
	Thus, $\beta$ is well-defined iff $\stp(\at) \subseteq \stp(\bt)$.

	Hence, if $\beta$ is \emph{not} well-defined then $\stp(\at)\neq\stp(\bt)$.

	It remains to consider the case where $\beta$ \emph{is} well-defined.
	In this case we have $\stp(\at) \subseteq \stp(\bt)$.
	By definition, $\beta$ is surjective.
	Thus, $\beta$ is bijective iff $\beta$ is injective iff for all $i,j \in [k]$ with $a_i \neq a_j$ we have $b_i\neq b_j$.
	The latter is the case iff $\stp(\bt) \subseteq \stp(\at)$, and this is the case iff $\stp(\bt) = \stp(\at)$ (because we already know that $\stp(\at)\subseteq\stp(\bt)$).
	This completes the proof of \cref{lem:slice-observations:self-stp-equal-iff-bijection}.
	
	\item $\slice \in \slices(\at) \iff \tset(\slice) \subseteq \tset(\at)$, i.e., for every $i \in [\ell]$, there is a $j \in [k]$ such that $s_i = a_j$.
	Furthermore, $s_i = a_j \iff (i,j) \in \stp(\slice,\at) \iff (j, i) \in \stp(\at, \slice)$.

	\item
	\enquote{$\Longleftarrow$} is obvious.
	For \enquote{$\Longrightarrow$} let $(s'_1, \ldots, s'_{\ell'}) = \slice'$ and assume that $\stp(\at,\slice)=\stp(\at,\slice')$.
	From \cref{lem:slice-observations:every-j-has-i} we know that for every $i \in [\ell]$ there exists a $j \in [k]$ such that $(j,i) \in \stp(\at,\slice) = \stp(\at,\slice')$, and hence $s_i = a_j = s'_i$.
	Thus, $\ell' \geq \ell$ and $s'_i = s_i$ for all $i \leq \ell$.
	Furthermore, applying \cref{lem:slice-observations:every-j-has-i} to $\slice'$ we obtain that in particular for $i = \ell'$ there is a $j \in [k]$ such that $(\ell',j) \in \stp(\slice',\at) = \stp(\slice,\at)$.
	This implies that $\ell \geq \ell'$, and hence $\ell = \ell'$ and $\slice = \slice'$.
\end{enumerate}
This completes the proof of \cref{lem:slice-observations}.
\end{proof}

\subsection{Proof of Lemma~\ref{lem:self-stp-identifies-stp-of-slices}}

\selfStpIdentifiesSlices*
\begin{proof}
	Let $\at = (a_1, \dots, a_k)$ and $\bt = (b_1, \dots, b_\ell)$ for $k,\ell \in \natpos$.
	
	\enquote{$\Longleftarrow$}:\; By assumption there is a bijection $\pi_{\slices}\colon \slices(\at) \to \slices(\bt)$ such that $\stp(\at, \slice) = \stp(\bt, \pi_{\slices}(\slice))$ holds for all $\slice \in \slices(\at)$.

	Consider an arbitrary slice $\tup{t}'$ with $\tup{t}'\in\slices(\bt)$ and $\tset(\tup{t}') = \tset(\bt)$ (obviously, such a slice exists).
	In particular, there exists a $j \in [\ar(\tup{t}')]$ such that $b_{\ell} = t'_{j}$, and hence $(\ell,j) \in \stp(\bt,\tup{t}')$.
	Let $\slice' \isdef \pi_{\slices}^{-1}(\tup{t}')$ and note that, by the choice of $\pi_{\slices}$ we have $\stp(\at, \slice') = \stp(\bt, \tup{t}')$.
	Hence, from $(\ell,j)\in\stp(\bt,\tup{t}')$ we obtain that $(\ell,j) \in \stp(\at,\slice')$, and hence $k \geq \ell$. 

	Now, let us fix an arbitrary slice $\slice \in \slices(\at)$ with $\tset(\slice) = \tset(\at)$.
	Let $\tup{t} \isdef \pi_{\slices}(\slice)$.
	By the choice of $\pi_{\slices}$ we have $\stp(\at,\slice) = \stp(\bt,\tup{t})$.
	From $\tset(\slice) = \tset(\at)$ and $\slice \in \slices(\at)$, we obtain that for each $i \in [k]$ there exists exactly one $j_i \in [\ar(\slice)]$ with $(i,j_i) \in \stp(\at, \slice)$.
	From $\stp(\at, \slice) = \stp(\bt, \tup{t})$ we obtain that $(i,j_i) \in \stp(\bt,\tup{t})$ for every $i \in [k]$.
	For $i = k$ this in particular implies that $\ell \geq k$.
	In summary, we have shown that $k = \ell$.

	Finally, let us fix arbitrary $i, \hati \in [k]$.
	We have $(i,\hati) \in \stp(\at) \iff a_i = a_\hati \iff j_i=j_{\hati} \iff (i, j_i), (\hati, j_i) \in \stp(\at, \slice)$.
	From $\stp(\at, \slice) = \stp(\bt, \tup{t})$ we obtain that $(i, j_i), (\hati, j_i) \in \stp(\at, \slice) \iff (i, j_i), (\hati, j_i) \in \stp(\bt, \tup{t}) \iff$ $b_i = t_{j_i} = b_{\hati} \iff b_i = b_{\hati} \iff (i,\hati) \in \stp(\bt)$.

	In summary, we obtain that $\stp(\at)=\stp(\bt)$. This completes the proof of \enquote{$\Longleftarrow$}.
	\medskip

	\enquote{$\Longrightarrow$}:\;
	By assumption we have $\stp(\at) = \stp(\bt)$.
	From \cref{lem:slice-observations:self-stp-equal-iff-bijection} we obtain that $k = \ell$ and the function $\beta\colon \tset(\at) \to \tset(\bt)$ with $\beta(a_i) \isdef b_i$ for all $i \in [k]$ is well-defined and bijective.
	For any $\slice \in \slices(\at)$ of the form $(s_1, \ldots, s_n) = \slice$ (in particular, $1 \leq n = \ar(\slice)$), we let $\pi_\slices(\slice)\isdef (\beta(s_1), \ldots, \beta(s_n))$.

	We first show that $\pi_\slices(\slice) \in \slices(\bt)$:
	Let $\tup{t} \isdef \pi_\slices(\slice)$, i.e., $\tup{t} = (t_1, \ldots, t_n)$ and $t_i = \beta(s_i)$ for all $i \in [n]$.
	Since $\beta$ is a bijection from $\tset(\at)$ to $\tset(\bt)$ and the elements $s_1, \ldots, s_n$ are pairwise distinct, $t_1, \ldots, t_n$ are pairwise distinct elements in $\tset(\bt)$.
	Thus, $\tup{t} \in \slices(\bt)$.

	Next, we show that $\stp(\at,\slice) = \stp(\bt,\tup{t})$:
	For arbitrary $i \in [k]$ and $j \in [n]$ we have $(i,j) \in \stp(\at,\slice) \iff a_i = s_j \iff \beta(a_i) = \beta(s_j) \iff b_i = t_j \iff (i,j) \in \stp(\bt,\tup{t})$.
	Hence, we have $\stp(\at, \slice) = \stp(\bt, \tup{t})$.

	In summary, we have shown that $\pi_\slices$ is a mapping  $\pi_\slices\colon \slices(\at) \to \slices(\bt)$ that satisfies $\stp(\at,\slice) = \stp(\bt,\pi_\slices(\slice))$ for all $\slice \in \slices(\at)$.

	Next, we show that $\pi_\slices$ is \emph{injective}:
	Consider arbitrary $\slice, \slice' \in \slices(\at)$ of the form $(s_1, \dots, s_n)$ and $(s_1', \dots, s_m')$ such that $\pi_{\slices}(\slice) = \pi_{\slices}(\slice')$.
	By definition of $\pi_{\slices}$ we have $n = m$, and $\beta(s_i) = \beta(s'_i)$ for all $i \in [n]$.
	Since $\beta$ is injective, we obtain that $s_i = s'_i$ holds for all $i \in [n]$.
	Thus, $\slice = \slice'$.
	Hence, $\pi_\slices$ is injective.

	Next, we show that $\pi_\slices$ is \emph{surjective}:
	Consider an arbitrary $\tup{t} \in \slices(\bt)$ of the form $(t_1, \ldots, t_n)$.
	For $i \in [n]$ let $s_i \isdef \beta^{-1}(t_i)$, and let $\slice \isdef (s_1, \ldots, s_n)$.
	Clearly, $\tset(\slice) \subseteq \tset(\at)$.
	Furthermore, for any $i, j \in [n]$ with $i \neq j$ we have $t_i \neq t_j$, and hence (since $t_i = \beta(s_i)$ and $t_j = \beta(s_j)$) also $s_i \neq s_j$.
	Therefore, $\slice \in \slices(\at)$.
	We are done by noting that $\pi_\slice(\slice) = \tup{t}$.
	This completes the proof of \enquote{$\Longrightarrow$} and the proof of \cref{lem:self-stp-identifies-stp-of-slices}.
\end{proof}

\subsection{Proof of Lemma~\ref{lemma:easy-properties-of-pi-slices}}

\easyPropertiesOfPiSlices*
\begin{proof}
	We first show that $\ar(\slice) = \ar(\tup{t})$:
	By assumption, $\slice \in \slices(\at)$.
	Hence, in particular for $i \isdef \ar(\slice)$ there exists a $j_i$ such that $s_i = a_{j_i}$.
	I.e., $(j_i,i) \in \stp(\at,\slice) = \stp(\bt,\tup{t})$.
	Hence, $b_{j_i} = t_i$ and, in particular, $i \leq \ar(\tup{t})$, i.e., $\ar(\slice) \leq \ar(\tup{t})$.
	By a similar reasoning we obtain that $\ar(\tup{t}) \leq \ar(\slice)$:
	By assumption, $\tup{t} \in \slices(\bt)$.
	Hence, in particular for $i \isdef \ar(\tup{t})$ there exists a $j_i$ such that $t_i = b_{j_i}$.
	I.e., $(j_i,i) \in \stp(\bt,\tup{t}) = \stp(\at,\slice)$.
	Hence, $a_{j_i} = s_i$ and, in particular, $i \leq \ar(\slice)$, i.e., $\ar(\tup{t}) \leq \ar(\slice)$.
	This proves that the lemma's first statement is correct.
	\smallskip

	Note that the lemma's third statement follows immediately from the lemma's second statement.
	For proving the lemma's second statement, let $\ell \isdef \ar(\slice)$ and $\ell' \isdef \ar(\slice')$.
	From the lemma's first statement we know that $\ell = \ar(\tup{t})$ and $\ell' = \ar(\tup{t}')$.
	Let $(s_1, \ldots, s_\ell) = \slice$ and $(t_1, \ldots, t_\ell) = \tup{t}$ and $(s'_1, \ldots, s'_{\ell'}) = \slice'$ and $(t'_1, \ldots, t'_{\ell'}) = \tup{t}'$.

	For direction \enquote{$\Longrightarrow$} assume that $\tset(\slice) \subseteq \tset(\slice')$.
	This implies that $\ell \leq \ell'$ and there is an injection $\beta\colon[\ell] \to [\ell']$ such that $s_i = s'_{\beta(i)}$ for all $i \in [\ell]$.
	Furthermore, from $\slice \in \slices(\at)$ we know that for every $i \in [\ell]$ there is a $j_i$ such that $s_i = a_{j_i}$.
	Hence, $(j_i,i) \in \stp(\at,\slice) = \stp(\bt,\tup{t})$, and therefore we have $t_i = b_{j_i}$.
	On the other hand, we have $a_{j_i} = s_i = s'_{\beta(i)}$.
	Hence, $(j_i,\beta(i)) \in \stp(\at,\slice') = \stp(\bt,\tup{t}')$, and therefore we have $t'_{\beta(i)} = b_{j_i}$, i.e., $t_i = t'_{\beta(i)}$.
	This holds for every $i \in [\ell]$.
	Thus, $\tset(\tup{t}) \subseteq \tset(\tup{t}')$.
	This completes the proof for direction \enquote{$\Longrightarrow$}.

	For direction \enquote{$\Longleftarrow$} assume that $\tset(\tup{t}) \subseteq \tset(\tup{t}')$.
	This implies that $\ell \leq \ell'$ and there is an injection $\beta\colon[\ell] \to [\ell']$ such that $t_i = t'_{\beta(i)}$ for all $i \in [\ell]$.
	Furthermore, from $\tup{t} \in \slices(\bt)$ we know that for every $i \in [\ell]$ there is a $j_i$ such that $t_i = b_{j_i}$.
	Hence, $(j_i,i) \in \stp(\bt,\tup{t}) = \stp(\at,\slice)$, and therefore we have $s_i = a_{j_i}$.
	On the other hand, we have $b_{j_i} = t_i = t'_{\beta(i)}$.
	Hence, $(j_i,\beta(i)) \in \stp(\at,\tup{t}') = \stp(\at,\slice')$, and therefore we have $s'_{\beta(i)} = a_{j_i}$, i.e., $s_i = s'_{\beta(i)}$.
	This holds for every $i \in [\ell]$.
	Thus, $\tset(\slice) \subseteq \tset(\slice')$.
	This completes the proof for direction \enquote{$\Longleftarrow$} and the proof of \cref{lemma:easy-properties-of-pi-slices}.
\end{proof}

\subsection{Proof of Lemma~\ref{lem:vgrep-equiv-rcr}}\label[appendix]{appendix:lem:vgrep-equiv-rcr}

\vgrepEquivRCR*
\begin{proof}
	\enquote{\ref{statement1-equiv}~$\Longrightarrow$~\ref{statement2-equiv}}:\; By assumption we have
	\[
		\mset[\big]{ (\stp(\at, \ct), \myc(\ct)) \mid \ct \in N(\at) } = \mset[\big]{ (\stp(\bt, \ct), \myc(\ct)) \mid \ct \in N(\bt) }\;.
	\]
	Thus, there exists a bijection $\pi\colon N(\at) \to N(\bt)$ such that $\stp(\at, \ct) = \stp(\bt, \pi(\ct))$ and $\myc(\ct) = \myc(\pi(\ct))$ for all $\ct \in N(\at)$.
	Consider an arbitrary $\slice \in \slices(\at)$.
	Let $\tup{t} \isdef \pi_{\slices}(\slice)$.
	Note that the proof of \enquote{\ref{statement1-equiv}~$\Longrightarrow$~\ref{statement2-equiv}} is complete as soon as we have found a bijection $\pi'\colon \slices^{-1}(\slice) \to \slices^{-1}(\tup{t})$ such that $\stp(\slice,\ct) = \stp(\tup{t},\pi'(\ct))$ and $f(\ct) = f(\pi'(\ct))$ holds for all $\ct \in \slices^{-1}(\slice)$.
	In the following, we construct such a mapping $\pi'$.

	By the choice of $\pi_{\slices}$ we have $\tup{t} \in \slices(\bt)$ and $\stp(\at, \slice) = \stp(\bt, \tup{t})$, and hence in particular $\ar(\slice) = \ar(\tup{t})$.
	Note that for every $\ct \in \slices^{-1}(\slice)$ we have $\slice \in \slices(\ct)$ and $\slice \in \slices(\at)$, and hence $\emptyset \neq \tset(\slice) \subseteq \tset(\ct) \cap \tset(\at)$.
	From \cref{lem:slice-observations:stp-nonempty-iff-intersect} we obtain that $\stp(\at,\ct) \neq \emptyset$, and thus $\ct \in N(\at)$.
	Hence, $\slices^{-1}(\slice) \subseteq N(\at)$.

	Let $\pi'$ be the restriction of $\pi$ to the set $\slices^{-1}(\slice)$, i.e., $\pi'$ is the mapping $\pi'\colon \slices^{-1}(\slice) \to N(\bt)$ defined by $\pi'(\ct) \isdef \pi(\ct)$ for every $\ct \in \slices^{-1}(\slice)$.
	Clearly, $\pi'$ is injective (since $\pi$ is), and for all $\ct \in \slices^{-1}(\slice)$ we have $f(\ct) = f(\pi'(\ct))$ (since $\pi$ satisfies $f(\ct) = f(\pi(\ct))$).
	Thus, to complete the proof of \enquote{\ref{statement1-equiv}~$\Longrightarrow$~\ref{statement2-equiv}}, it suffices to show that $\img(\pi') = \slices^{-1}(\tup{t})$ and that $\stp(\slice,\ct) = \stp(\tup{t},\pi'(\ct))$ holds for all $\ct \in \slices^{-1}(\slice)$.
	\smallskip

	Let us consider an arbitrary $\ct \in \slices^{-1}(\slice)$, and let $\tup{d} \isdef \pi'(\ct) = \pi(\ct)$. 
	By the choice of $\pi$ we have $\stp(\at, \ct) = \stp(\bt, \tup{d})$.
	Recall that by the choice of $\pi_{\slices}$ we have $\tup{t} \in \slices(\bt)$ and $\stp(\at, \slice) = \stp(\bt, \tup{t})$, and hence in particular $\ar(\slice) = \ar(\tup{t})$.
	\medskip

	\begin{claim*}
		$\stp(\slice, \ct) = \stp(\tup{t}, \dt)$.
	\end{claim*}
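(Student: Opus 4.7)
The plan is to prove the Claim by a direct elementwise verification of both inclusions, using $\slice$ (respectively $\tup{t}$) as a ``bridge'' that sits inside $\at$ (respectively $\bt$) to transfer index information between the $\A$-side and the $\B$-side. The two pieces of data we have available are the equalities $\stp(\at,\slice) = \stp(\bt,\tup{t})$ (by choice of $\pi_\slices$) and $\stp(\at,\ct) = \stp(\bt,\dt)$ (by choice of $\pi$); the membership facts $\slice \in \slices(\at)$ and $\tup{t} \in \slices(\bt)$ will provide, via \cref{lem:slice-observations:every-j-has-i}, a witness index in $\at$ (respectively $\bt$) for each component of $\slice$ (respectively $\tup{t}$).

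For the inclusion $\stp(\slice,\ct) \subseteq \stp(\tup{t},\dt)$, take an arbitrary $(i,j) \in \stp(\slice,\ct)$, i.e.\ $s_i = c_j$. Since $\slice \in \slices(\at)$, \cref{lem:slice-observations:every-j-has-i} supplies an index $k$ with $(k,i) \in \stp(\at,\slice)$, so $a_k = s_i = c_j$ and hence $(k,j) \in \stp(\at,\ct)$. Applying the two transfer equalities yields $(k,i) \in \stp(\bt,\tup{t})$ and $(k,j) \in \stp(\bt,\dt)$, i.e.\ $t_i = b_k = d_j$, which is exactly $(i,j) \in \stp(\tup{t},\dt)$.

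The reverse inclusion is completely symmetric. Given $(i,j) \in \stp(\tup{t},\dt)$, i.e.\ $t_i = d_j$, use $\tup{t} \in \slices(\bt)$ and \cref{lem:slice-observations:every-j-has-i} to pick $k$ with $(k,i) \in \stp(\bt,\tup{t})$. Then $b_k = t_i = d_j$ gives $(k,j) \in \stp(\bt,\dt)$, and transferring back via $\stp(\at,\slice) = \stp(\bt,\tup{t})$ and $\stp(\at,\ct) = \stp(\bt,\dt)$ yields $a_k = s_i$ and $a_k = c_j$, so $s_i = c_j$, i.e.\ $(i,j) \in \stp(\slice,\ct)$.

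There is no real obstacle here: the whole argument is a short diagram chase through the three similarity types $\stp(\at,\slice)$, $\stp(\at,\ct)$, $\stp(\slice,\ct)$ (and their $\bt$-counterparts). The only point one has to be slightly careful about is that $\slice$ is genuinely a slice of $\at$ (and $\tup{t}$ of $\bt$), which is what allows one to ``lift'' any coordinate of $\slice$ to a coordinate of $\at$; this is guaranteed by $\ct \in \slices^{-1}(\slice)$ combined with $\slice \in \slices(\at)$ on one side, and by $\tup{t} = \pi_\slices(\slice) \in \slices(\bt)$ (from the definition of $\pi_\slices$) on the other.
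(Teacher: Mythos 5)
Your proof is correct and follows essentially the same route as the paper's: in both directions you lift the coordinate of the slice to a witness index in $\at$ (respectively $\bt$), then push both incidences through the equalities $\stp(\at,\slice)=\stp(\bt,\tup{t})$ and $\stp(\at,\ct)=\stp(\bt,\dt)$. The only cosmetic difference is that you explicitly invoke Lemma~\ref{lem:slice-observations:every-j-has-i} for the existence of the bridge index, whereas the paper states this directly from $\slice\in\slices(\at)$.
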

	\begin{proof}[Proof of Claim]
	For \enquote{$\subseteq$} consider an arbitrary $(i,j) \in \stp(\slice, \ct)$.
	Hence, $s_i = c_j$.
	From $\slice \in \slices(\at)$ we obtain that there exists an $\hati$ such that $s_i = a_{\hati}$.
	I.e., $s_i = a_{\hati} = c_j$, and hence $(\hati, j) \in \stp(\at, \ct)$ and $(\hati,i) \in \stp(\at,\slice)$.
	From $\stp(\at,\ct) = \stp(\bt,\tup{d})$ we obtain that $(\hati,j) \in \stp(\bt,\tup{d})$, and from $\stp(\at, \slice) = \stp(\bt, \tup{t})$ we obtain that $(\hati,i) \in \stp(\bt,\tup{t})$.
	Hence, $t_i = b_{\hati} = d_j$, and thus $(i,j) \in \stp(\tup{t},\tup{d})$.
	This completes the proof of \enquote{$\subseteq$}.
	For proving \enquote{$\supseteq$}, consider an arbitrary $(i,j) \in \stp(\tup{t}, \tup{d})$.
	Hence, $t_i = d_j$.
	From $\tup{t} \in \slices(\bt)$ we obtain that there exists an $\hati$ such that $t_i = b_{\hati}$.
	I.e., $t_i = b_{\hati}=d_j$, and hence $(\hati, j) \in \stp(\bt, \tup{d})$ and $(\hati,i) \in \stp(\bt,\tup{t})$.
	From $\stp(\at,\ct) = \stp(\bt,\tup{d})$ we obtain that $(\hati,j) \in \stp(\at,\ct)$, and from $\stp(\at, \slice) = \stp(\bt, \tup{t})$ we obtain that $(\hati,i) \in \stp(\at,\slice)$.
	Hence, $s_i = a_{\hati} = c_j$, and thus $(i,j) \in \stp(\slice,\ct)$.
	This completes the proof of \enquote{$\supseteq$} and the proof of the claim.
\end{proof}

\begin{claim*}
	$\dt\in \slices^{-1}(\tup{t})$.
\end{claim*}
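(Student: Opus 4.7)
The plan is to reduce the claim $\dt \in \slices^{-1}(\tup{t})$ to showing $\tup{t} \in \slices(\dt)$, and then to verify the latter by using \cref{lem:slice-observations:every-j-has-i} to turn set-inclusion into a statement about similarity types, which in turn can be transported along the identities already established.

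First, I would recall the identities available at this point in the argument: $\stp(\at,\slice) = \stp(\bt,\tup{t})$ (from the choice of $\pi_\slices$), $\stp(\at,\ct) = \stp(\bt,\dt)$ (from the choice of $\pi$), the just-proved identity $\stp(\slice,\ct) = \stp(\tup{t},\dt)$, and the equality of arities $\ar(\slice) = \ar(\tup{t})$ from \cref{lemma:easy-properties-of-pi-slices}. Since $\ct \in \slices^{-1}(\slice)$, we have $\slice \in \slices(\ct)$. By \cref{lem:slice-observations:every-j-has-i} applied to $\slice$ and $\ct$, for every $i \in [\ar(\slice)]$ there exists $j \in [\ar(\ct)]$ such that $(i,j) \in \stp(\slice,\ct)$.

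Now I would transport this along the two identities $\ar(\slice) = \ar(\tup{t})$ and $\stp(\slice,\ct) = \stp(\tup{t},\dt)$: for every $i \in [\ar(\tup{t})]$ there exists $j \in [\ar(\dt)]$ such that $(i,j) \in \stp(\tup{t},\dt)$. Applying \cref{lem:slice-observations:every-j-has-i} in the other direction, this is equivalent to $\tset(\tup{t}) \subseteq \tset(\dt)$, i.e.\ $\tup{t} \in \slices(\dt)$. By definition of $\slices^{-1}$, this means precisely $\dt \in \slices^{-1}(\tup{t})$, which is what we wanted.

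There is no real obstacle here; the argument is a routine transport along the similarity-type identities already collected. The only thing to be careful about is making explicit that $\tup{t}$ already has pairwise distinct entries (it is a slice of $\bt$ by choice of $\pi_\slices$), so membership in $\slices(\dt)$ is equivalent to the set-inclusion $\tset(\tup{t}) \subseteq \tset(\dt)$, and hence the criterion of \cref{lem:slice-observations:every-j-has-i} suffices.
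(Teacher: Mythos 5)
Your proof is correct and takes essentially the same approach as the paper's: both reduce $\dt \in \slices^{-1}(\tup{t})$ to $\tup{t} \in \slices(\dt)$, note that $\tup{t}$ has pairwise distinct entries because $\tup{t} = \pi_\slices(\slice) \in \slices(\bt)$, and then transport the "for every $i$ there is a $j$ with $(i,j) \in \stp(\slice,\ct)$" characterization (from $\slice \in \slices(\ct)$) across the identities $\ar(\slice)=\ar(\tup{t})$ and $\stp(\slice,\ct)=\stp(\tup{t},\dt)$. The only cosmetic difference is that you cite \cref{lem:slice-observations:every-j-has-i} by name, whereas the paper unfolds the same criterion directly.
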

\begin{proof}[Proof of Claim]
	We need to show that $\dt \in \slices^{-1}(\tup{t})$, i.e., that $\tup{t} \in \slices(\dt)$.
	We know that $\tup{t} = \pi_{\slices}(\slice) \in \slices(\bt)$, and hence for $(t_1, \ldots, t_\ell) = \tup{t}$, the entries $t_1, \ldots, t_\ell$ are pairwise distinct.
	Furthermore, we already know that $\ell = \ar(\tup{t}) = \ar(\slice)$ and $\stp(\slice, \ct) = \stp(\tup{t}, \dt)$.
	Since $\slice \in \slices(\ct)$, we know that for every $i \in [\ell]$ there is a $j_i$ such that $(i,j_i) \in \stp(\slice, \ct) = \stp(\tup{t}, \dt)$, i.e., $s_i = c_{j_i}$ and $t_i = d_{j_i}$ for all $i \in [\ell]$.
	Hence, $\tup{t} \in \slices(\dt)$.
	This completes the proof of the claim.
\end{proof}
\smallskip

So far, we have shown that $\pi'$ is an injective mapping $\pi'\colon \slices^{-1}(\slice) \to \slices^{-1}(\tup{t})$ such that $f(\ct) = f(\pi'(\ct))$ and $\stp(\slice,\ct) = \stp(\tup{t},\pi'(\ct))$ holds for all $\ct \in \slices^{-1}(\slice)$.
Hence, the proof of \enquote{\ref{statement1-equiv}~$\Longrightarrow$~\ref{statement2-equiv}} is complete as soon as we have proved the following claim.

\begin{claim*}
	For every $\dt \in \slices^{-1}(\tup{t})$ there exists a $\ct \in \slices^{-1}(\slice)$ such that $\pi'(\ct) = \dt$.
\end{claim*}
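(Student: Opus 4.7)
The plan is to prove surjectivity by exploiting the fact that $\pi$ is already a bijection $N(\at)\to N(\bt)$. Given $\dt\in \slices^{-1}(\tup{t})$, I first observe that $\dt\in N(\bt)$: since $\tup{t}\in\slices(\dt)$ and $\tup{t}\in\slices(\bt)$, we have $\emptyset\neq\tset(\tup{t})\subseteq\tset(\dt)\cap\tset(\bt)$, so $\stp(\bt,\dt)\neq\emptyset$ by \cref{lem:slice-observations:stp-nonempty-iff-intersect}. Therefore $\ct\isdef\pi^{-1}(\dt)$ is a well-defined element of $N(\at)$, and by the choice of $\pi$ we have $\stp(\at,\ct)=\stp(\bt,\dt)$ and $\myc(\ct)=\myc(\dt)$. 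The whole task is then to verify that $\ct\in\slices^{-1}(\slice)$, i.e., $\slice\in\slices(\ct)$, because then $\pi'(\ct)=\pi(\ct)=\dt$ as required.

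To verify $\slice\in\slices(\ct)$, I would use the characterization from \cref{lem:slice-observations:every-j-has-i}: it suffices to show that for every $i\in[\ar(\slice)]$ there exists $j\in[\ar(\ct)]$ with $(i,j)\in\stp(\slice,\ct)$, that is, $s_i=c_j$. This is a four-step index chase entirely symmetric to the argument already given for the claim $\stp(\slice,\ct)=\stp(\tup{t},\dt)$. Fix $i\in[\ar(\slice)]$. Since $\slice\in\slices(\at)$, pick $\hati$ with $s_i=a_{\hati}$, so $(\hati,i)\in\stp(\at,\slice)$. Since $\stp(\at,\slice)=\stp(\bt,\tup{t})$, we get $(\hati,i)\in\stp(\bt,\tup{t})$, i.e., $t_i=b_{\hati}$. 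Now because $\tup{t}\in\slices(\dt)$, there is $j$ with $(i,j)\in\stp(\tup{t},\dt)$, i.e., $t_i=d_j$. Combining, $b_{\hati}=d_j$, so $(\hati,j)\in\stp(\bt,\dt)=\stp(\at,\ct)$, yielding $a_{\hati}=c_j$, and hence $s_i=c_j$ as desired.

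Once $\slice\in\slices(\ct)$ is established, $\ct\in\slices^{-1}(\slice)$, the restriction $\pi'$ is defined at $\ct$, and $\pi'(\ct)=\pi(\ct)=\dt$. Since $\dt$ was arbitrary, $\pi'\colon\slices^{-1}(\slice)\to\slices^{-1}(\tup{t})$ is surjective, completing the injection-plus-surjection argument that $\pi'$ is a bijection with the two desired compatibility properties, and hence completing the implication \enquote{\ref{statement1-equiv}~$\Longrightarrow$~\ref{statement2-equiv}}.

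The argument poses no real obstacle: the only delicate aspect is making sure that each of the four identifications of indices is justified by exactly one of the three equalities $\stp(\at,\slice)=\stp(\bt,\tup{t})$, $\stp(\at,\ct)=\stp(\bt,\dt)$, and $\tup{t}\in\slices(\dt)$, and that the witness $\hati$ chosen from $\slice\in\slices(\at)$ is transported faithfully across the bijections. Essentially, the lemma statement has been set up so that the whole structure $(\at,\slice,\ct)$ on the $\A$-side mirrors $(\bt,\tup{t},\dt)$ on the $\B$-side under the pairings provided by $\pi_\slices$ and $\pi$, and surjectivity of $\pi'$ is the last piece of symmetry that remains to be checked.
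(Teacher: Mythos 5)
Your proposal is correct and follows essentially the same approach as the paper's own proof: you establish $\dt\in N(\bt)$ via the non-empty intersection, set $\ct\isdef\pi^{-1}(\dt)$, and then verify $\slice\in\slices(\ct)$ by the same four-step index chase (through $\stp(\at,\slice)=\stp(\bt,\tup{t})$, then $\tup{t}\in\slices(\dt)$, then $\stp(\bt,\dt)=\stp(\at,\ct)$) that the paper uses.
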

\begin{proof}[Proof of Claim]
	Let us fix an arbitrary $\dt \in \slices^{-1}(\tup{t})$. 
	From $\dt \in \slices^{-1}(\tup{t})$ we obtain that $\tup{t} \in \slices(\tup{d})$, and hence $\emptyset \neq \tset(\tup{t}) \subseteq \tset(\tup{d})$.
	On the other hand, $\tup{t} = \pi_{\slices}(\slice) \in \slices(\bt)$, and hence $\emptyset \neq \tset(\tup{t}) \subseteq\tset(\bt)$.
	Thus, $\tset(\bt) \intersect \tset(\tup{d}) \neq \emptyset$.
	Hence, by \cref{lem:slice-observations:stp-nonempty-iff-intersect}, $\stp(\bt,\tup{d}) \neq \emptyset$, and therefore $\tup{d} \in N(\bt)$.
	Let $\ct \isdef \pi^{-1}(\tup{d})$.
	Note that to complete the proof of the claim, it suffices to show that $\ct \in \slices^{-1}(\slice)$, i.e., that $\slice \in \slices(\ct)$.

	Let $\ell \isdef \ar(\slice)$, and recall that $\ell = \ar(\slice) = \ar(\tup{t})$ and $\stp(\at,\slice) = \stp(\bt,\tup{t})$.
	Consider an arbitrary $i \in [\ell]$.
	We have to show that there exists a $j$ such that $s_i = c_j$.
	Since $\slice \in \slices(\at)$, there exists an $\hati$ such that $s_i=a_{\hati}$, i.e., $(\hati,i) \in \stp(\at,\slice)$.
	From $\stp(\at,\slice) = \stp(\bt,\tup{t})$ we obtain that $(\hati,i) \in \stp(\bt,\tup{t})$, i.e., $b_{\hati} = t_i$.
	Furthermore, by assumption, we have $\tup{t} \in \slices(\tup{d})$, and hence there exists a $j$ such that $t_i = d_j$.
	From $b_{\hati} = t_i = d_j$ we obtain that $(\hati,j) \in \stp(\bt,\dt)$.
	We know that $\dt = \pi(\ct)$, and hence by the choice of $\pi$ we have $\stp(\at,\ct) = \stp(\bt,\dt)$.
	This implies that $(\hati,j) \in \stp(\at,\ct)$, i.e., $a_{\hati} = c_j$.
	In summary, we have $s_i = a_{\hati} = c_j$, i.e., $s_i = c_j$. This proves that $\slice\in \slices(\ct)$, i.e., that $\ct \in \slices^{-1}(\slice)$.
	This completes the proof of the claim.
\end{proof}
\smallskip

In summary, the proof of the direction \enquote{\ref{statement1-equiv}~$\Longrightarrow$~\ref{statement2-equiv}} of \cref{lem:vgrep-equiv-rcr} is now complete.
\bigskip

\enquote{\ref{statement2-equiv}~$\Longrightarrow$~\ref{statement1-equiv}}:\;
By assumption,  for every $\slice \in \slices(\at)$
we have \[
	\mset[\big]{ (\stp(\slice, \ct), \myc({\ct}))
		\mid
		\ct \in \slices^{-1}(\slice) }
	=
	\mset[\big]{ (\stp(\pi_{\slices}(\slice), \dt), \myc({\dt}))
		\mid
		\dt \in \slices^{-1}(\pi_{\slices}(\slice))
	}.
\]
Note that the proof of \enquote{\ref{statement2-equiv}~$\Longrightarrow$~\ref{statement1-equiv}} is complete as soon as we have found a bijection $\pi\colon N(\at) \to N(\bt)$ such that $\stp(\at,\ct) = \stp(\bt,\pi(\ct))$ and $f(\ct) = f(\pi(\ct))$ for all $\ct \in N(\at)$.
For constructing such a mapping $\pi$, we need the following notation.

\begin{enumerate}[label={\arabic*.}]
\item
	For any set $X\subseteq V(\A)$ let $\Neighbors{X} \isdef \set{ \ct' \in \tA \mid \tset(\ct') \supseteq X}$.
	\smallskip

	Note that for any slice $\slice \in \slices(\tA)$ and any tuple $\ct \in \tA$ we have:\; $\slice\in\slices(\ct) \iff \ct \in \Neighbors{\tset(\slice)}$. Thus, for any $\slice \in \slices(\tA)$ and for $X \isdef \tset(\slice)$ we have:\; $\slices^{-1}(\slice) = \Neighbors{X}$.
	\medskip

	\item
	For  $n \geq 1$ and $ \ct \in \tA$ let $\PotEq{n}{\ct} \isdef \set{ X \subseteq \tset(\ct) \mid \card{X} = n }$ and $\PotGeq{n}{\ct} \isdef \set{X \subseteq \tset(\ct) \mid \card{X} \geq n}$.

	Let $\NeighborsEq{n}{\ct} \isdef \set{ \ct' \in \tA \mid \card{ \tset(\ct')\intersect \tset(\ct)} = n}$ and $\NeighborsGeq{n}{\ct} \isdef \set{ \ct' \in \tA \mid \card{\tset(\ct') \intersect \tset(\ct)} \geq n}$.

	For $X \subseteq \V(\A)$ let $\NeighborsEq{X}{\ct} \isdef \set{ \ct' \in \tA \mid \tset(\ct') \intersect \tset(\ct) = X }$.
	\smallskip

	Note that for any $\ct \in \tA$ and for $k \isdef \card{\tset(\ct)}$, the sets $\NeighborsEq{n}{\ct}$ for $n \in [k]$ are pairwise disjoint, and we have:
	$N(\ct) = \NeighborsGeq{1}{\ct}$ and $\NeighborsGeq{n}{\ct} = \NeighborsEq{n}{\ct} \disunion \NeighborsGeq{n+1}{\ct}$, for any $n \geq 1$.
	
	Furthermore, for any $n \in [k]$, the sets $\NeighborsEq{X}{\ct}$ for $X \in \PotEq{n}{\ct}$ are pairwise disjoint, and we have:
	$\NeighborsEq{n}{\ct} = \bigdisunion_{X \in \PotEq{n}{\ct}} \NeighborsEq{X}{\ct}$.
	Moreover, for any $\ct \in \tA$ and any $X \subseteq \V(\A)$ we have $\NeighborsEq{X}{\ct} \subseteq \Neighbors{X}$.
	\medskip

	\item
	For $\ct \in \tA$ and $X \subseteq \V(\A)$ we let $\ov{\NeighborsEq{X}{\ct}} \isdef \Neighbors{X} \setminus \NeighborsEq{X}{\ct}$.
\end{enumerate}
It is easy to see that for any $\ct \in \tA$ and $X \subseteq \tset(\ct)$ we have:\;
$\ov{\NeighborsEq{X}{\ct}} = \setc{\ct' \in \tA}{ \tset(\ct') \intersect \tset(\ct) \varsupsetneqq X }$, and hence, in particular, $\ov{\NeighborsEq{X}{\ct}}\subseteq \NeighborsGeq{\card{X}+1}{\ct}$.

Recall that our ultimate goal is to construct a bijection $\pi\colon N(\at) \to N(\bt)$ such that $\stp(\at,\ct) = \stp(\bt,\pi(\ct))$ and $f(\ct) = f(\pi(\ct))$ for all $\ct \in N(\at)$.
We will proceed recursively and construct, for every $n$ with $\card{\tset(\at)} \geq n \geq 1$, a bijection $\pi_n\colon \NeighborsGeq{n}{\at} \to \NeighborsGeq{n}{\bt}$ such that $\stp(\at,\ct) = \stp(\bt,\pi_n(\ct))$ and $f(\ct) = f(\pi_n(\ct))$ for all $\ct \in \NeighborsGeq{n}{\at}$.
Note that, once having accomplished this, we are done by choosing $\pi \isdef \pi_1$.
\medskip

We start by constructing $\pi_n$ for $n \isdef \card{\tset(\at)}$.
Note that for $n \isdef \card{\tset(\at)}$ and $X \isdef \tset(\at)$ we have:
$\NeighborsGeq{n}{\at} = \NeighborsEq{n}{\at} = \NeighborsEq{X}{\at} = \Neighbors{X} = \slices^{-1}(\slice)$ where $\slice$ is an arbitrary slice of $\at$ that satisfies $\tset(\slice) = \tset(\at)$.
Let us fix such a slice $\slice$, and let $\tup{t} \isdef \pi_\slices(\slice)$.
From $\tset(\slice) = \tset(\at)$ and $\stp(\at,\slice) = \stp(\bt,\tup{t})$ we obtain that $\tset(\tup{t}) = \tset(\bt)$ and $n = \ar(\slice) = \ar(\tup{t}) = \card{\tset(\bt)}$.
Thus, for $Y \isdef \tset(\bt)$ we have:
$\NeighborsGeq{n}{\bt} = \NeighborsEq{n}{\bt} = \NeighborsEq{Y}{\bt} = \Neighbors{Y} = \slices^{-1}(\tup{t})$.

By assumption, \cref{statement2-equiv} of \cref{lem:vgrep-equiv-rcr} holds, and hence for our particular choice of $\slice$ and for $\tup{t} = \pi_\slices(\slice)$ there is a bijection $\beta\colon \slices^{-1}(\slice) \to \slices^{-1}(\tup{t})$ such that $\stp(\at,\ct) = \stp(\bt,\beta(\ct))$ and $\myc(\ct) = \myc(\beta(\ct))$ holds for all $\ct \in \slices^{-1}(\slice)$.
Since $\slices^{-1}(\slice) = \NeighborsGeq{n}{\at}$ and $\slices^{-1}(\tup{t})=\NeighborsGeq{n}{\bt}$, we are done by choosing $\pi_n \isdef \beta$.
\medskip

Let us now consider an arbitrary integer $n$ with $\card{\tset(\at)} > n \geq 1$, and let us assume that we have already constructed a bijection $\pi_{n+1}\colon \NeighborsGeq{n+1}{\at} \to \NeighborsGeq{n+1}{\bt}$ with the
desired properties.
Our goal is to construct a bijection $\pi_n\colon \NeighborsGeq{n}{\at} \to \NeighborsGeq{n}{\bt}$ with the desired properties.
Note that $\NeighborsGeq{n}{\at} = \NeighborsEq{n}{\at} \disunion \NeighborsGeq{n+1}{\at}$ and $\NeighborsGeq{n}{\bt} = \NeighborsEq{n}{\bt} \disunion \NeighborsGeq{n+1}{\bt}$.

In the following, we will construct a bijection $\pi'_n\colon \NeighborsEq{n}{\at} \to \NeighborsEq{n}{\bt}$ such that $\stp(\at,\ct) = \stp(\bt,\pi'_n(\ct))$ and $\myc(\ct) = \myc(\pi'_n(\ct))$ holds for all $\ct \in \NeighborsEq{n}{\at}$.
Note that, once having accomplished this, we are done by letting $\pi_n(\ct) \isdef \pi'_n(\ct)$ for any $\ct \in \NeighborsEq{n}{\at}$, and letting
$\pi_n(\ct) \isdef \pi_{n+1}(\ct)$ for any $\ct \in \NeighborsGeq{n+1}{\at}$.

Recall from the notations introduced above that $\NeighborsEq{n}{\at} = \bigdisunion_{X \in \PotEq{n}{\at}} \NeighborsEq{X}{\at}$ and $\NeighborsEq{n}{\bt} = \bigdisunion_{Y \in \PotEq{n}{\bt}} \NeighborsEq{Y}{\bt}$.
Throughout the remainder of this proof let us fix, for every $X \in \PotEq{n}{\at}$ (i.e., $X \subseteq \tset(\at)$ with $\card{X} = n$), a slice $\slice_X \in \slices(\at)$ with $\tset(\slice_X) = X$, and let $\tup{t}_X \isdef \pi_\slices(\slice_X)$ and $Y_X \isdef \tset(\tup{t}_X)$.

From \cref{lemma:easy-properties-of-pi-slices} we obtain that $\ar(\tup{t}_X) = \ar(\slice_X)$ (and hence $\card{Y_X} = \card{X} = n$) and, moreover, that for any $X, X' \in \PotEq{n}{\at}$ with $X \neq X'$ we have $Y_{X} \neq Y_{X'}$.
Thus, $\PotEq{n}{\bt} = \set{ Y_X \mid X \in \PotEq{n}{\at} }$ (note that \enquote{$\supseteq$} holds because $Y_X \subseteq \tset(\bt)$ and $\card{Y_X} = n$, and \enquote{$\subseteq$} holds because of the following reasoning:
From $\stp(\at) = \stp(\bt)$ we obtain that $\card{\tset(\bt)} = \card{\tset(\at)}$.
Thus, $\card{\PotEq{n}{\bt}} = \card{\PotEq{n}{\at}} = \binom{\card{\tset(\at)}}{n} = \card{\set{ Y_X \mid X \in \PotEq{n}{\at}}}$.).

Recall from the notations introduced above that $\slices^{-1}(\slice_X) = \Neighbors{X} = \NeighborsEq{X}{\at} \disunion \ov{\NeighborsEq{X}{\at}}$, and
$\ov{\NeighborsEq{X}{\at}} \subseteq \NeighborsGeq{n+1}{\at}$.
Analogously, we have: $\slices^{-1}(\tup{t}_X) = \Neighbors{Y_X} = \NeighborsEq{Y_X}{\bt} \disunion \ov{\NeighborsEq{Y_X}{\bt}}$, and $\ov{\NeighborsEq{Y_X}{\bt}} \subseteq \NeighborsGeq{n+1}{\bt}$.

Let us consider an arbitrary $X \in \PotEq{n}{\at}$.
In the following we will construct a bijection
\begin{equation}\label{eq:remaininggoal}
	\beta_X\colon \NeighborsEq{X}{\at} \to \NeighborsEq{Y_X}{\bt}
\end{equation}
such that $\stp(\at,\ct) = \stp(\bt,\beta_X(\ct))$ and $\myc(\ct) = \myc(\beta_X(\ct))$ holds for all $\ct \in \NeighborsEq{X}{\at}$.
Note that, once having accomplished this, we are done by letting $\pi'_n(\ct) \isdef \beta_{X_\ct}(\ct)$ for every $\ct \in \NeighborsEq{n}{\at}$ and the (uniquely defined) set $X_{\ct} \in \PotEq{n}{\at}$ satisfying $\ct \in \NeighborsEq{X_{\ct}}{\at}$.
Thus, all that remains to be done in order to complete the proof of direction \enquote{\ref{statement2-equiv}~$\Longrightarrow$~\ref{statement1-equiv}} of \cref{lem:vgrep-equiv-rcr} is to construct the mapping $\beta_X$ for $X \in \PotEq{n}{\at}$.

By assumption, \cref{statement2-equiv} holds for $\slice \isdef \slice_X$, i.e., for $\tup{t}_X = \pi_\slices(\slice_X)$ we have:
\begin{equation}\label{eq:finalstep}
	\mset[\big]{ (\stp(\slice_X, \ct), \myc({\ct})) \mid \ct \in \slices^{-1}(\slice_X)}
	\ \ = \ \
	\mset[\big]{ (\stp(\tup{t}_X, \dt), \myc({\dt})) \mid \dt \in \slices^{-1}(\tup{t}_X) }.
\end{equation}
We already know that $\slices^{-1}(\slice_X) = \NeighborsEq{X}{\at} \disunion \ov{\NeighborsEq{X}{\at}}$ and $\slices^{-1}(\tup{t}_X) = \NeighborsEq{Y_X}{\bt}\disunion \ov{\NeighborsEq{Y_X}{\bt}}$.
Furthermore, we know that $\ov{\NeighborsEq{X}{\at}} \subseteq \NeighborsGeq{n+1}{\at}$ and $\ov{\NeighborsEq{Y_X}{\bt}} \subseteq \NeighborsGeq{n+1}{\bt}$.

By our recursive construction, we have already defined a bijection $\pi_{n+1}\colon \NeighborsGeq{n+1}{\at} \to \NeighborsGeq{n+1}{\bt}$ such that $\stp(\at,\ct) = \stp(\bt,\pi_{n+1}(\ct))$ and $f(\ct) = f(\pi_{n+1}(\ct))$ for all $\ct \in \NeighborsGeq{n+1}{\at}$.

\begin{claim}\label{claim:XvsYX}
	For every $\ct \in \ov{\NeighborsEq{X}{\at}}$ we have: \ $\pi_{n+1}(\ct) \in \ov{\NeighborsEq{Y_X}{\bt}}$.
\end{claim}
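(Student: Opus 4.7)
\textbf{Proof plan for Claim~\ref{claim:XvsYX}.} The plan is to read off how the intersections $\tset(\ct)\cap\tset(\at)$ and $\tset(\dt)\cap\tset(\bt)$ are encoded in the respective similarity types, and then use that $\pi_{n+1}$ preserves these similarity types.

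First, I would spell out what the hypothesis $\ct \in \ov{\NeighborsEq{X}{\at}}$ says: by the definitions above, this means $\tset(\ct)\cap\tset(\at)\varsupsetneqq X$. In particular $|\tset(\ct)\cap\tset(\at)|\geq n+1$, so $\ct\in \NeighborsGeq{n+1}{\at}$, and the recursively constructed bijection $\pi_{n+1}$ is defined at $\ct$. Let $\dt\isdef \pi_{n+1}(\ct)$; by the induction hypothesis $\dt\in \NeighborsGeq{n+1}{\bt}$ and $\stp(\at,\ct)=\stp(\bt,\dt)$.

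Next I would identify $Y_X$ concretely. Since $\stp(\at)=\stp(\bt)$, \cref{lem:slice-observations:self-stp-equal-iff-bijection} yields a bijection $\beta\colon\tset(\at)\to\tset(\bt)$ with $\beta(a_i)=b_i$ for all $i\in[\ar(\at)]$. Writing $\slice_X=(s_1,\ldots,s_n)$ and picking, for each $p\in[n]$, some $j_p$ with $s_p=a_{j_p}$, the construction of $\pi_\slices$ in the proof of \cref{lem:self-stp-identifies-stp-of-slices} gives $\tup{t}_X=(b_{j_1},\ldots,b_{j_n})$. Hence $Y_X=\tset(\tup{t}_X)=\beta(X)$.

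The core observation is then the identity
\[
	\tset(\dt)\cap\tset(\bt)\;=\;\beta\bigl(\tset(\ct)\cap\tset(\at)\bigr),
\]
which I would prove by unfolding definitions: $a_i\in\tset(\at)\cap\tset(\ct)$ iff there is some $j$ with $(i,j)\in\stp(\at,\ct)$, and $b_i\in\tset(\bt)\cap\tset(\dt)$ iff there is some $j$ with $(i,j)\in\stp(\bt,\dt)$. Since $\stp(\at,\ct)=\stp(\bt,\dt)$ and $\beta(a_i)=b_i$, the index sets of $\tset(\at)$-entries (resp.\ $\tset(\bt)$-entries) that participate in the intersection coincide, yielding the claimed equality.

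Finally I would conclude: from $\tset(\ct)\cap\tset(\at)\varsupsetneqq X$ and the bijectivity of $\beta$, applying $\beta$ gives $\tset(\dt)\cap\tset(\bt)=\beta(\tset(\ct)\cap\tset(\at))\varsupsetneqq \beta(X)=Y_X$. By the definition recalled just above the claim ($\ov{\NeighborsEq{Y_X}{\bt}}=\setc{\dt'\in\tB}{\tset(\dt')\cap\tset(\bt)\varsupsetneqq Y_X}$), this means $\dt\in\ov{\NeighborsEq{Y_X}{\bt}}$, as required. The only subtlety is to justify the concrete description of $\tup{t}_X$ via the bijection $\beta$; the rest is a mechanical unfolding of the definitions of similarity type and of the sets $\NeighborsEq{X}{\at}$, $\ov{\NeighborsEq{X}{\at}}$.
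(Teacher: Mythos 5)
Your proof is correct, and it takes a genuinely different route from the paper's. The paper's argument introduces an auxiliary slice $\hat{\slice}$ with $\tset(\hat{\slice}) = \tset(\ct)\cap\tset(\at)$, sets $\hat{\tup{t}} \isdef \pi_\slices(\hat{\slice})$, invokes \cref{lemma:easy-properties-of-pi-slices} to obtain $Y_X = \tset(\tup{t}_X) \varsubsetneqq \tset(\hat{\tup{t}})$, and then shows by two unfoldings that $\tset(\dt)\cap\tset(\bt) = \tset(\hat{\tup{t}})$. You instead work directly with the global bijection $\beta\colon\tset(\at)\to\tset(\bt)$ from \cref{lem:slice-observations:self-stp-equal-iff-bijection}, read off $Y_X = \beta(X)$ from the defining condition $\stp(\at,\slice_X)=\stp(\bt,\tup{t}_X)$, prove the single identity $\tset(\dt)\cap\tset(\bt) = \beta(\tset(\ct)\cap\tset(\at))$, and then simply push the strict inclusion $\tset(\ct)\cap\tset(\at)\varsupsetneqq X$ through the bijection $\beta$. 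This avoids the auxiliary slice $\hat{\slice}$ and the explicit appeal to \cref{lemma:easy-properties-of-pi-slices} (beyond the arity match needed to identify $\tup{t}_X$), and makes the mechanism more transparent: the one fact doing all the work is that $\pi_{n+1}$ preserves $\stp(\at,\cdot)$, hence preserves the index set $\setc{i}{\exists j: (i,j)\in\stp(\at,\ct)}$ that carves out the intersection, and $\beta$ transports that index set from $\tset(\at)$ to $\tset(\bt)$. The paper's version is perhaps more uniform with the companion \cref{claim:YXvsX}, which it proves by the mirror-image argument, but your version is shorter and, to my mind, clearer.
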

\begin{proof}[Proof of Claim]
	When introducing the above notations, we had already noted that
	\[
		\ov{\NeighborsEq{X}{\at}}
		=
		\set[\big]{ \ct' \in \tA \mid \tset(\ct') \intersect \tset(\at) \varsupsetneqq X }
		\quad \text{and} \quad
		\ov{\NeighborsEq{Y_X}{\bt}}
		=
		\set[\big]{ \dt' \in \tA \mid \tset(\dt') \intersect \tset(\bt) \varsupsetneqq Y_X }.
	\]
	Consider an arbitrary $\ct \in \ov{\NeighborsEq{X}{\at}}$, and let $\dt \isdef \pi_{n+1}(\ct)$.
	We have to show that $\dt \in \ov{\NeighborsEq{Y_X}{\bt}}$.

	Let $\hat{X} \isdef \tset(\ct) \intersect \tset(\at)$.
	Note that $\hat{X} \varsupsetneqq X$.
	Let $\hat{\slice}$ be an arbitrary slice in $\slices(\at)$ with $\tset(\hat{\slice}) = \hat{X}$ (such a slice exists because $\hat{X} \subseteq \tset(\at)$).
	Let $\hat{\tup{t}} \isdef \pi_\slices(\hat{\slice})$.

	We know that $\tset(\slice_X) = X \varsubsetneqq \hat{X} = \tset(\hat{\slice})$.
	Thus, from \cref{lemma:easy-properties-of-pi-slices} we obtain that $\tset(\tup{t}_X) \varsubsetneqq \tset(\hat{\tup{t}})$.
	Recall that $Y_X = \tset(\tup{t}_X)$.
	In order to prove that $\dt \in \ov{\NeighborsEq{Y_X}{\bt}}$, it therefore suffices to prove that $\tset(\dt) \intersect \tset(\bt) = \tset(\hat{\tup{t}})$.

	We know that $\hat{\tup{t}} = \pi_\slices(\hat{\slice})$, and thus, $\stp(\at,\hat{\slice}) = \stp(\bt,\hat{\tup{t}})$.
	Furthermore, we have $\dt = \pi_{n+1}(\ct)$, and thus, $\stp(\at,\ct) = \stp(\bt,\dt)$.

	Let us first show that $\tset(\dt) \intersect \tset(\bt) \subseteq \tset(\hat{\tup{t}})$.
	Consider an arbitrary $z \in \tset(\dt) \intersect \tset(\bt)$. There exist $i, j$ such that $z = b_i = d_j$.
	Thus, $(i,j) \in \stp(\bt,\dt) = \stp(\at,\ct)$, and hence $a_i = c_j \in \tset(\at) \cap \tset(\ct) = \hat{X}$.
	We have $\tset(\hat{\slice}) = \hat{X}$.
	Thus, there is a $k$ such that $\hat{s}_k = a_i = c_j$.
	Hence, $(i,k) \in \stp(\at,\hat{\slice}) = \stp(\bt,\hat{\tup{t}})$, i.e., $b_i = \hat{t}_k$.
	In summary, $z = b_i \in \tset(\hat{\tup{t}})$.
	We have thus shown that $\tset(\dt) \intersect \tset(\bt) \subseteq \tset(\hat{\tup{t}})$.

	Let us now show that $\tset(\dt) \cap \tset(\bt) \supseteq \tset(\hat{\tup{t}})$.
	Consider an arbitrary $k \in [\ar(\hat{\tup{t}})]$.
	We have to show that $\hat{t}_k \in \tset(\dt) \cap \tset(\bt)$.
	From \cref{lemma:easy-properties-of-pi-slices} we know that $\ar(\hat{\tup{t}}) = \ar(\hat{\slice})$.
	From $\tset(\at) \cap \tset(\ct) = \hat{X} = \tset(\hat{\slice})$ we obtain that there exist $i, j$ such that $\hat{s}_k = a_i = c_j$.
	Thus, $(i,k) \in \stp(\at,\hat{\slice}) = \stp(\bt,\hat{\tup{t}})$, i.e., $b_i = \hat{t}_k$.
	Furthermore, $(i,j) \in \stp(\at,\ct) = \stp(\bt,\dt)$, i.e., $b_i = d_j$.
	Hence, $\hat{t}_k = b_i = d_j$, and thus $\hat{t}_k \in \tset(\bt) \cap \tset(\dt)$.
	We have thus shown that $\tset(\dt) \cap \tset(\bt) \supseteq \tset(\hat{\tup{t}})$.
	This completes the proof of \cref{claim:XvsYX}.
\end{proof}  

By an analogous reasoning we obtain:

\begin{claim}\label{claim:YXvsX}
	For every $\dt \in \ov{\NeighborsEq{Y_X}{\bt}}$ we have: \ $\pi_{n+1}^{-1}(\dt) \in \ov{\NeighborsEq{X}{\at}}$.
\end{claim}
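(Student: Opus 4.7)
The plan is to mirror the proof of \cref{claim:XvsYX} with the roles of $\at$ and $\bt$ (and correspondingly of $X$ and $Y_X$) swapped. Given $\dt \in \ov{\NeighborsEq{Y_X}{\bt}}$, I set $\ct \isdef \pi_{n+1}^{-1}(\dt)$. By the inductively guaranteed properties of $\pi_{n+1}$, I have $\stp(\at,\ct)=\stp(\bt,\dt)$ and $\myc(\ct)=\myc(\dt)$. The goal is to show $\tset(\ct)\cap\tset(\at)\varsupsetneqq X$, i.e.\ $\ct\in\ov{\NeighborsEq{X}{\at}}$.

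First I would set $\hat{Y}\isdef\tset(\dt)\cap\tset(\bt)$; by the characterization of $\ov{\NeighborsEq{Y_X}{\bt}}$ noted in the proof of \cref{claim:XvsYX}, we have $\hat{Y}\varsupsetneqq Y_X$. Pick a slice $\hat{\tup{t}}\in\slices(\bt)$ with $\tset(\hat{\tup{t}})=\hat{Y}$. Since $\pi_\slices\colon\slices(\at)\to\slices(\bt)$ is a bijection, there is a unique $\hat{\slice}\in\slices(\at)$ with $\pi_\slices(\hat{\slice})=\hat{\tup{t}}$, and then $\stp(\at,\hat{\slice})=\stp(\bt,\hat{\tup{t}})$ by the defining property of $\pi_\slices$ from \cref{lem:self-stp-identifies-stp-of-slices}. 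Applying \cref{lemma:easy-properties-of-pi-slices} to $\slice_X$ and $\hat{\slice}$ (whose images under $\pi_\slices$ are $\tup{t}_X$ and $\hat{\tup{t}}$), the strict inclusion $\tset(\tup{t}_X)=Y_X\varsubsetneqq\hat{Y}=\tset(\hat{\tup{t}})$ transfers to $X=\tset(\slice_X)\varsubsetneqq\tset(\hat{\slice})$.

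It then remains to prove $\tset(\ct)\cap\tset(\at)=\tset(\hat{\slice})$, and this is exactly the two-inclusion argument at the end of the proof of \cref{claim:XvsYX} with the sides exchanged. For \enquote{$\subseteq$}, take $z\in\tset(\ct)\cap\tset(\at)$, write $z=a_i=c_j$, so $(i,j)\in\stp(\at,\ct)=\stp(\bt,\dt)$, hence $b_i=d_j\in\tset(\bt)\cap\tset(\dt)=\hat{Y}=\tset(\hat{\tup{t}})$; pick $k$ with $\hat{t}_k=b_i$, then $(i,k)\in\stp(\bt,\hat{\tup{t}})=\stp(\at,\hat{\slice})$, so $z=a_i=\hat{s}_k\in\tset(\hat{\slice})$. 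For \enquote{$\supseteq$}, take $k\in[\ar(\hat{\slice})]$; since $\tset(\hat{\slice})\subseteq\tset(\at)$, pick $i$ with $\hat{s}_k=a_i$, giving $(i,k)\in\stp(\at,\hat{\slice})=\stp(\bt,\hat{\tup{t}})$, so $\hat{t}_k=b_i\in\hat{Y}\subseteq\tset(\dt)$; pick $j$ with $d_j=\hat{t}_k=b_i$, then $(i,j)\in\stp(\bt,\dt)=\stp(\at,\ct)$, so $\hat{s}_k=a_i=c_j\in\tset(\ct)\cap\tset(\at)$. Combining, $\tset(\ct)\cap\tset(\at)=\tset(\hat{\slice})\varsupsetneqq X$.

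There is no real obstacle here: the whole construction is symmetric in $\A$ and $\B$, $\pi_\slices$ and $\pi_\slices^{-1}$, and the bijection $\pi_{n+1}$ already has the symmetric property $\stp(\bt,\pi_{n+1}(\ct))=\stp(\at,\ct)$. The only thing worth being careful about is using \cref{lemma:easy-properties-of-pi-slices} in the \enquote{other} direction, which is legitimate because that lemma's second statement is an \emph{iff}, so the inclusion $Y_X\varsubsetneqq\hat{Y}$ transports back to $X\varsubsetneqq\tset(\hat{\slice})$.
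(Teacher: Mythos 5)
Your proof takes essentially the same route as the paper's: same choice of $\hat{Y}$, $\hat{\tup{t}}$, $\hat{\slice}=\pi_\slices^{-1}(\hat{\tup{t}})$, same invocation of \cref{lemma:easy-properties-of-pi-slices} to transport the strict inclusion $Y_X\varsubsetneqq\hat{Y}$ back to $X\varsubsetneqq\tset(\hat{\slice})$, and then the same two-inclusion argument to conclude $\tset(\ct)\cap\tset(\at)=\tset(\hat{\slice})$. The only (cosmetic) difference is that in your \enquote{$\supseteq$} step you enter from the $\hat{\slice}$-side ($\hat{s}_k=a_i$) while the paper enters from the $\hat{\tup{t}}$-side; both are valid, and your remark that the lemma's second statement being an \emph{iff} is what licenses the reversed use is exactly the point worth making.
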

\begin{proof}[Proof of Claim]
	Recall that
	\[  
		\ov{\NeighborsEq{X}{\at}}
		=
		\set[\big]{ \ct' \in \tA \mid \tset(\ct') \cap \tset(\at) \varsupsetneqq X }
		\quad \text{and} \quad
		\ov{\NeighborsEq{Y_X}{\bt}}
		=
		\set[\big]{ \dt' \in \tA \mid \tset(\dt') \cap \tset(\bt) \varsupsetneqq Y_X}.
	\]
	Consider an arbitrary $\dt \in \ov{\NeighborsEq{Y_X}{\bt}}$, and let $\ct \isdef \pi_{n+1}^{-1}(\dt)$.
	We have to show that $\ct \in \ov{\NeighborsEq{X}{\at}}$.

	Let $\hat{Y} \isdef \tset(\dt) \cap \tset(\bt)$.
	Note that $\hat{Y} \varsupsetneqq Y_X$.
	Let $\hat{\tup{t}}$ be an arbitrary slice in $\slices(\bt)$ with $\tset(\hat{\tup{t}}) = \hat{Y}$ (such a slice exists because $\hat{Y} \subseteq \tset(\bt)$).
	Let $\hat{\slice} \isdef \pi_\slices^{-1}(\hat{\tup{t}})$.

	We know that $\tset(\tup{t}_X) = Y_X \varsubsetneqq \hat{Y} = \tset(\hat{\tup{t}})$.
	Thus, from \cref{lemma:easy-properties-of-pi-slices} we obtain that $\tset(\slice_X) \varsubsetneqq \tset(\hat{\slice})$.
	Recall that $X = \tset(\slice_X)$.
	In order to prove that $\ct \in \ov{\NeighborsEq{X}{\at}}$, it therefore suffices to prove that $\tset(\ct) \cap \tset(\at) = \tset(\hat{\slice})$.

	We know that $\hat{\tup{t}} = \pi_\slices(\hat{\slice})$, and thus, $\stp(\at,\hat{\slice}) = \stp(\bt,\hat{\tup{t}})$.
	Furthermore, we have $\dt = \pi_{n+1}(\ct)$, and thus, $\stp(\at,\ct) = \stp(\bt,\dt)$.

	Let us first show that $\tset(\ct) \cap \tset(\at) \subseteq \tset(\hat{\slice})$.
	Consider an arbitrary $z \in \tset(\ct) \cap \tset(\at)$.
	There exist $i,j$ such that $z = a_i = c_j$.
	Thus, $(i,j) \in \stp(\at,\ct) = \stp(\bt,\dt)$, and hence $b_i = d_j \in \tset(\bt) \cap \tset(\dt) = \hat{Y}$.
	We have $\tset(\hat{\tup{t}}) = \hat{Y}$.
	Thus, there is a $k$ such that $\hat{t}_k = b_i = d_j$.
	Hence, $(i,k) \in \stp(\bt,\hat{\tup{t}}) = \stp(\at,\hat{\slice})$, i.e., $a_i = \hat{s}_k$.
	In summary, $z = a_i \in \tset(\hat{\slice})$.
	We have thus shown that $\tset(\ct) \cap \tset(\at) \subseteq \tset(\hat{\slice})$.

	Let us now show that $\tset(\ct) \cap \tset(\at) \supseteq \tset(\hat{\slice})$.
	Consider an arbitrary $k \in [\ar(\hat{\slice})]$.
	We have to show that $\hat{s}_k \in \tset(\ct) \cap \tset(\at)$.
	From \cref{lemma:easy-properties-of-pi-slices} we know that $\ar(\hat{\tup{t}}) = \ar(\hat{\slice})$.
	From $\tset(\bt) \cap \tset(\dt) = \hat{Y} = \tset(\hat{\tup{t}})$ we obtain that there exist $i,j$ such that $\hat{t}_k = b_i = d_j$.
	Thus, $(i,k) \in \stp(\bt,\hat{\tup{t}}) = \stp(\at,\hat{\slice})$, i.e.,
	$a_i = \hat{s}_k$.
	Furthermore, $(i,j) \in \stp(\bt,\dt) = \stp(\at,\ct)$, i.e., $a_i = c_j$.
	Hence, $\hat{s}_k = a_i = c_j$, and thus $\hat{s}_k \in \tset(\at) \cap \tset(\ct)$.
	We have thus shown that $\tset(\at) \cap \tset(\ct) \supseteq \tset(\hat{\slice})$.
	This completes the proof of \cref{claim:YXvsX}.
\end{proof}  

\begin{claim}\label{claim:XvsYX-stp}
	For every $\ct \in \ov{\NeighborsEq{X}{\at}}$ we have: \ 
	$\stp(\slice_X,\ct) = \stp(\tup{t}_X,\pi_{n+1}(\ct))$.
\end{claim}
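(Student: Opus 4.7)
The plan is to derive the claimed equality $\stp(\slice_X,\ct) = \stp(\tup{t}_X,\pi_{n+1}(\ct))$ by chaining together two similarity-type equalities that are already in hand, namely $\stp(\at,\slice_X) = \stp(\bt,\tup{t}_X)$ (because $\tup{t}_X = \pi_\slices(\slice_X)$, by the very choice of $\pi_\slices$) and $\stp(\at,\ct) = \stp(\bt,\dt)$ where $\dt \deff \pi_{n+1}(\ct)$ (a property of $\pi_{n+1}$ guaranteed by the recursive construction). Both $\ct$ and $\dt$ are well-defined and live in $\tA$ and $\tB$, respectively, and by \cref{claim:XvsYX} we already know $\dt \in \ov{\NeighborsEq{Y_X}{\bt}}$.

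For the inclusion $\stp(\slice_X,\ct) \subseteq \stp(\tup{t}_X,\dt)$, I would pick an arbitrary $(i,j) \in \stp(\slice_X,\ct)$, so that $(\slice_X)_i = c_j$. Since $\slice_X \in \slices(\at)$, \cref{lem:slice-observations:every-j-has-i} supplies an index $\hati$ with $a_{\hati} = (\slice_X)_i$, giving $(\hati,i) \in \stp(\at,\slice_X)$ and $(\hati,j) \in \stp(\at,\ct)$. Transporting these two facts along the equalities $\stp(\at,\slice_X) = \stp(\bt,\tup{t}_X)$ and $\stp(\at,\ct) = \stp(\bt,\dt)$, we obtain $(\hati,i) \in \stp(\bt,\tup{t}_X)$ and $(\hati,j) \in \stp(\bt,\dt)$, hence $b_{\hati} = (\tup{t}_X)_i = d_j$. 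Therefore $(i,j) \in \stp(\tup{t}_X,\dt)$ as required.

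The reverse inclusion is entirely symmetric: given $(i,j) \in \stp(\tup{t}_X,\dt)$, use $\tup{t}_X \in \slices(\bt)$ and \cref{lem:slice-observations:every-j-has-i} to obtain an index $\hati$ with $b_{\hati} = (\tup{t}_X)_i$, giving $(\hati,i) \in \stp(\bt,\tup{t}_X)$ and $(\hati,j) \in \stp(\bt,\dt)$, and then transport back to the $\at$-side via the same two equalities to conclude $(\slice_X)_i = a_{\hati} = c_j$, i.e., $(i,j) \in \stp(\slice_X,\ct)$.

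No combinatorial obstacle is anticipated here: the claim is essentially a bookkeeping exercise asserting that the bijection $\pi_{n+1}$, which was built to preserve the full similarity type against $\at$, automatically preserves the restricted similarity type against any fixed slice $\slice_X$ of $\at$, once we identify slices of $\at$ with slices of $\bt$ via $\pi_\slices$. The only point that required real work was already carried out in \cref{claim:XvsYX}, which guarantees that $\dt$ lies in the right set $\ov{\NeighborsEq{Y_X}{\bt}}$ for the subsequent use of \cref{claim:XvsYX-stp} in completing the construction of $\beta_X$ in \eqref{eq:remaininggoal}.
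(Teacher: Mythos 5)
Your proposal is correct and follows essentially the same approach as the paper's own proof: both directions of the inclusion are handled by introducing an intermediate index $\hati$ in $\at$ (resp.\ $\bt$) such that $a_{\hati}=(\slice_X)_i$ (resp.\ $b_{\hati}=(\tup{t}_X)_i$), and then chaining the two available equalities $\stp(\at,\slice_X)=\stp(\bt,\tup{t}_X)$ and $\stp(\at,\ct)=\stp(\bt,\pi_{n+1}(\ct))$.
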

\begin{proof}[Proof of Claim]
	Let $\slice \isdef \slice_X$ and $\tup{t} \isdef \tup{t}_X$.
	We know that $\tup{t} = \pi_\slices(\slice)$, and thus $\stp(\at,\slice) = \stp(\bt,\tup{t})$.
	Furthermore, $X = \tset(\slice)$ and $Y_X = \tset(\tup{t})$.
	Consider an arbitrary $\ct \in \ov{\NeighborsEq{X}{\at}}$ and let $\dt \isdef \pi_{n+1}(\ct)$.
	We know that $\stp(\at,\ct) = \stp(\bt,\dt)$.
	We have to show that $\stp(\slice,\ct) = \stp(\tup{t},\dt)$.

	For proving \enquote{$\subseteq$}, consider arbitrary $(i,j) \in \stp(\slice,\ct)$, i.e., $s_i = c_j$.
	From $s_i \in X \subseteq \tset(\at)$ we obtain that there exists an $\hati$ such that $s_i = a_\hati$.
	Thus, $(\hati,i) \in \stp(\at,\slice) = \stp(\bt,\tup{t})$.
	Thus, $t_i = b_{\hati}$.
	Furthermore, from $c_j = s_i = a_\hati$ we obtain that $(\hati,j) \in \stp(\at,\ct) = \stp(\bt,\dt)$.
	Thus, $t_i = b_\hati = d_j$.
	Hence, $(i,j) \in \stp(\tup{t},\dt)$.
	This completes the proof of \enquote{$\subseteq$}.

	For proving \enquote{$\supseteq$}, consider arbitrary $(i,j) \in \stp(\tup{t},\dt)$, i.e., $t_i = d_j$.
	From $t_i \in Y_X \subseteq \tset(\bt)$ we obtain that there exists an $\hati$ such that $t_i = b_\hati$.
	Thus, $(\hati,i) \in \stp(\bt,\tup{t}) = \stp(\at,\slice)$.
	Thus, $s_i = a_{\hati}$.
	Furthermore, from $d_j = t_i = b_\hati$ we obtain that $(\hati,j) \in \stp(\bt,\dt) = \stp(\at,\ct)$.
	Thus, $s_i = a_\hati = c_j$.
	Hence, $(i,j) \in \stp(\slice,\ct)$.
	This completes the proof of \enquote{$\supseteq$} and the proof of \cref{claim:XvsYX-stp}.
\end{proof}

From \cref{claim:XvsYX}, \cref{claim:YXvsX}, and \cref{claim:XvsYX-stp} we obtain that restricting $\pi_{n+1}$ to $\ov{\NeighborsEq{X}{\at}}$ yields a bijection $\ov{\pi}\colon \ov{\NeighborsEq{X}{\at}} \to \ov{\NeighborsEq{Y_X}{\bt}}$ such that $\stp(\slice_X,\ct) = \stp(\tup{t}_X,\ov{\pi}(\ct))$ and $\myc(\ct) = \myc(\ov{\pi}(\ct))$ holds for all $\ct \in \ov{\NeighborsEq{X}{\at}}$
(to achieve this, we simply let $\ov{\pi}(\ct) \isdef \pi_{n+1}(\ct)$ for all $\ct \in \ov{\NeighborsEq{X}{\at}}$).

Recall from equation~\eqref{eq:finalstep} that $\slices^{-1}(\slice_X) = \NeighborsEq{X}{\at} \disunion \ov{\NeighborsEq{X}{\at}}$ and $\slices^{-1}(\tup{t}_X) = \NeighborsEq{Y_X}{\bt} \disunion \ov{\NeighborsEq{Y_X}{\bt}}$ and
\[
	\mset{ (\stp(\slice_X, \ct), \myc({\ct})) \mid \ct \in \slices^{-1}(\slice_X)}
	\ \ = \ \ 
	\mset{ (\stp(\tup{t}_X, \dt), \myc({\dt})) \mid \dt \in \slices^{-1}(\tup{t}_X) }.
\]
For every $\ct \in \ov{\NeighborsEq{X}{\at}}$ we remove the tuple $(\stp(\slice_X, \ct), \myc({\ct}))$ from the left multiset, and we remove the tuple $(\stp(\tup{t}_X,  \ov{\pi}(\ct)), \myc(\ov{\pi}(\ct)))$ from the right multiset.
The resulting multisets are equal, i.e., we have:
\[
	\mset{ (\stp(\slice_X, \ct), \myc({\ct})) \mid \ct \in \NeighborsEq{X}{\at}}
	\ \ = \ \ 
	\mset{ (\stp(\tup{t}_X, \dt), \myc({\dt})) \mid \dt \in \NeighborsEq{Y_X}{\bt} }.
\]
Hence, there is a bijection $\beta_X\colon \NeighborsEq{X}{\at} \to \NeighborsEq{Y_X}{\bt}$ such that $\stp(\slice_X, \ct) = \stp(\tup{t}_X, \beta_X(\ct))$ and
$\myc(\ct) = \myc(\beta_X(\ct))$ holds for all $\ct \in \NeighborsEq{X}{\at}$.

\begin{claim}\label{claim:finalstep}
	For every $\ct \in \NeighborsEq{X}{\at}$ we have: \
	$\stp(\at,\ct) = \stp(\bt,\beta_X(\ct))$.
\end{claim}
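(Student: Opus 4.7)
The plan is to prove \cref{claim:finalstep} by the same two-direction transport-through-the-slice argument that appeared in the proof of \cref{claim:XvsYX-stp}. Fix an arbitrary $\ct \in \NeighborsEq{X}{\at}$ and set $\dt \isdef \beta_X(\ct)$. By the choice of $\beta_X$ we already know the two slice-level equalities $\stp(\slice_X,\ct) = \stp(\tup{t}_X,\dt)$ and, from $\tup{t}_X = \pi_\slices(\slice_X)$, also $\stp(\at,\slice_X) = \stp(\bt,\tup{t}_X)$. The definition of $\NeighborsEq{X}{\at}$ gives $X = \tset(\slice_X) \subseteq \tset(\at) \cap \tset(\ct)$, and by construction $Y_X = \tset(\tup{t}_X) \subseteq \tset(\bt) \cap \tset(\dt)$ because $\dt \in \NeighborsEq{Y_X}{\bt}$.

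For the inclusion $\stp(\at,\ct) \subseteq \stp(\bt,\dt)$, I would take an arbitrary $(i,j) \in \stp(\at,\ct)$, i.e.\ $a_i = c_j$. Since $a_i \in \tset(\at)\cap\tset(\ct) = X = \tset(\slice_X)$, there exists $k$ with $s_k = a_i = c_j$ (writing $\slice_X = (s_1,\ldots)$ and $\tup{t}_X = (t_1,\ldots)$). Then $(i,k) \in \stp(\at,\slice_X)$, which equals $\stp(\bt,\tup{t}_X)$, so $b_i = t_k$; and $(k,j) \in \stp(\slice_X,\ct)$, which equals $\stp(\tup{t}_X,\dt)$, so $t_k = d_j$. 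Chaining these gives $b_i = t_k = d_j$, hence $(i,j) \in \stp(\bt,\dt)$.

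The reverse inclusion $\stp(\bt,\dt) \subseteq \stp(\at,\ct)$ is entirely symmetric: from $(i,j) \in \stp(\bt,\dt)$, the element $b_i = d_j$ lies in $\tset(\bt)\cap\tset(\dt) = Y_X = \tset(\tup{t}_X)$, yielding a $k$ with $t_k = b_i = d_j$, and then pulling back through the already-established equalities $\stp(\bt,\tup{t}_X) = \stp(\at,\slice_X)$ and $\stp(\tup{t}_X,\dt) = \stp(\slice_X,\ct)$ produces $a_i = s_k = c_j$, so $(i,j) \in \stp(\at,\ct)$.

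There is no real obstacle here beyond bookkeeping; the argument is essentially identical to \cref{claim:XvsYX-stp} and only reuses the two inherited similarity-type equalities together with the elementary inclusion $X \subseteq \tset(\at)\cap\tset(\ct)$ (and its counterpart on the $\bt$-side). In fact, one could simply observe that the proof of \cref{claim:XvsYX-stp} never used anything beyond the facts $\stp(\at,\slice) = \stp(\bt,\tup{t})$, $\stp(\at,\ct) = \stp(\bt,\dt)$, and $\tset(\slice) \subseteq \tset(\at)$, $\tset(\tup{t}) \subseteq \tset(\bt)$, and quote it in the other direction: given the slice-level match provided by $\beta_X$, the same chain of substitutions yields the desired tuple-level match. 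After this claim, one assembles $\beta_X$ as the bijection witnessing \eqref{eq:remaininggoal}, which completes the recursive construction of $\pi_n$ and thereby the direction $\ref{statement2-equiv} \Rightarrow \ref{statement1-equiv}$ of \cref{lem:vgrep-equiv-rcr}.
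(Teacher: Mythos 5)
Your proof is correct and follows essentially the same two-direction chaining argument as the paper: pick an arbitrary $(i,j)$ in one similarity type, locate the common element in the slice $\slice_X$ (respectively $\tup{t}_X$) via $\tset(\at)\cap\tset(\ct)=X$ (respectively $\tset(\bt)\cap\tset(\dt)=Y_X$), and transport through the two known equalities $\stp(\at,\slice_X)=\stp(\bt,\tup{t}_X)$ and $\stp(\slice_X,\ct)=\stp(\tup{t}_X,\dt)$. One small caveat on your closing remark: unlike \cref{claim:XvsYX-stp}, which only needed $\tset(\slice_X)\subseteq\tset(\at)$, here you genuinely need the \emph{equality} $\tset(\at)\cap\tset(\ct)=X$ coming from $\ct\in\NeighborsEq{X}{\at}$ (and its counterpart for $\dt$), so one cannot literally reuse that proof ``in the other direction'' without this extra hypothesis --- your explicit argument does use it correctly, though.
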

\begin{proof}[Proof of Claim]
	Let $\slice \isdef \slice_X$ and $\tup{t} \isdef \tup{t}_X$.
	We have $\tset(\slice) = X$ and $\tset(\tup{t}) = Y_X$.
	Furthermore, from $\tup{t} = \pi_\slices(\slice)$ we obtain that $\stp(\slice,\at) = \stp(\tup{t},\bt)$.
	Consider an arbitrary $\ct \in \NeighborsEq{X}{\at}$ and let $\dt \isdef \beta_X(\ct)$.
	We know that $\stp(\slice,\ct) = \stp(\tup{t},\dt)$.
	We have to show that $\stp(\at,\ct) = \stp(\bt,\dt)$.

	For proving \enquote{$\subseteq$} consider arbitrary $(i,j) \in \stp(\at,\ct)$.
	I.e., $a_i=c_j$.
	Hence, $a_i = c_j \in \tset(\at) \cap \tset(\ct) = X$.
	Since $X = \tset(\slice)$, there is an $\hati$ such that $a_i = c_j = s_\hati$.
	Hence, $(\hati,i) \in \stp(\slice,\at) = \stp(\tup{t},\bt)$, i.e., $t_\hati = b_i$.
	Furthermore, $(\hati,j) \in \stp(\slice,\ct) = \stp(\tup{t},\dt)$, i.e., $t_\hati = d_j$. Hence, $d_j = t_\hati = b_i$. Therefore, $(i,j) \in \stp(\bt,\dt)$.
	This completes the proof of \enquote{$\subseteq$}.

	For proving \enquote{$\supseteq$} consider arbitrary $(i,j) \in \stp(\bt,\dt)$.
	I.e., $b_i = d_j$.
	Hence, $b_i = d_j \in \tset(\bt) \cap \tset(\dt) = Y_X$.
	Since $Y_X = \tset(\tup{t})$, there is an $\hati$ such that $b_i = d_j = t_\hati$.
	Hence, $(\hati,i) \in \stp(\tup{t},\bt) = \stp(\slice,\at)$, i.e.,
	$s_\hati = a_i$.
	Furthermore, $(\hati,j) \in \stp(\tup{t},\dt) = \stp(\slice,\ct)$, i.e.,
	$s_\hati = c_j$.
	Hence, $c_j = s_\hati = a_i$.
	Therefore, $(i,j) \in \stp(\at,\ct)$.
	This completes the proof of \enquote{$\supseteq$} and the proof of \cref{claim:finalstep}.
\end{proof}  

From \cref{claim:finalstep} we obtain that $\beta_X$ indeed has all the properties stated in the text around equation~\eqref{eq:remaininggoal}, i.e., $\beta_X$ is a bijection from $\NeighborsEq{X}{\at}$ to $\NeighborsEq{Y_X}{\bt}$ such that $\stp(\at, \ct) = \stp(\bt, \beta_X(\ct))$ and $\myc(\ct) =\myc(\beta_X(\ct))$ holds for all $\ct\in \NeighborsEq{X}{\at}$.
This completes the proof of direction \enquote{\ref{statement2-equiv}~$\Longrightarrow$~\ref{statement1-equiv}} of \cref{lem:vgrep-equiv-rcr}.

In summary, the proof of \cref{lem:vgrep-equiv-rcr} is complete.
\end{proof} 

\subsection{Proof of Lemma~\ref{lem:slices-determine-cr}}\label[appendix]{appendix:slices-determine-cr}
\slicesDetermineCR*
\begin{proof}
	Let $N_{\at}$ and $N_{\bt}$ be the neighbors of $w_{\at}$ and $w_{\bt}$ in $\vgrep{\A}$. By definition of CR we have that $\gcr{i}{w_{\at}} = \gcr{i}{w_{\bt}} \iff$
	\begin{enumerate}
		\item $\gcr{i-1}{w_{\at}} = \gcr{i-1}{w_{\bt}}$\; and
		\item $\mset{ (\lambda(w_{\at}, w), \gcr{i-1}{w}) \mid w \in N_{\at} } = \mset{ (\lambda(w_{\bt}, w), \gcr{i-1}{w}) \mid w \in N_{\bt} }$.
	\end{enumerate}
	Using \cref{fact:easy-rcr-cr:neighbors-of-w} we get that the second statement is equivalent to
	\begin{equation*}
		\mset{ (\lambda(w_{\at}, v_{\slice}), \gcr{i-1}{v_{\slice}}) \mid \slice \in \slices(\at) }
		=
		\mset{ (\lambda(w_{\bt}, v_{\slice}), \gcr{i-1}{v_{\slice}}) \mid \slice \in \slices(\bt) }\;.
	\end{equation*}
	This is according to \cref{fact:easy-rcr-cr:lambda-tuple-stp} equivalent to
	\begin{equation*}
		\mset{ (\stp(\at, \slice), \gcr{i-1}{v_{\slice}}) \mid \slice \in \slices(\at) }
		=
		\mset{ (\stp(\bt, \slice), \gcr{i-1}{v_{\slice}}) \mid \slice \in \slices(\bt) }\;.
	\end{equation*}
	Consider the bijection $\pi_\slices\colon \slices(\at) \to \slices(\bt)$ satisfying $\stp(\at,\slice) = \stp(\bt,\pi_\slices(\slice))$ for all $\slice \in \slices(\at)$, that we obtain from \cref{lem:self-stp-identifies-stp-of-slices}.
	From \cref{remark:neighbors-in-vgrep} we know that $\stp(\at,\slice) \neq \stp(\at,\slice')$ for all $\slice, \slice' \in \slices(\at)$, and that $\stp(\bt,\tup{t}) \neq \stp(\bt,\tup{t}')$ for all $\tup{t}, \tup{t}' \in \slices(\bt)$.
	Thus, $\pi_\slices$ is the \emph{only} bijection $\pi\colon \slices(\at) \to \slices(\bt)$ satisfying $\stp(\at,\slice) = \stp(\bt,\pi(\slice))$ for all $\slice \in \slices(\at)$, and we get that $\gcr{i}{w_{\at}} = \gcr{i}{w_{\bt}} \iff$
	\begin{enumerate}
		\item $\gcr{i-1}{w_{\at}} = \gcr{i-1}{w_{\bt}}$\; and
		\item $\gcr{i-1}{v_{\slice}} = \gcr{i-1}{v_{\pi_\slices(\slice)}}$, for all $\slice \in \slices(\at)$.
	\end{enumerate}
	This finishes the proof.
\end{proof}

\subsection{Proof of Claims~\ref{claim:intermediate-slice-cr-implied}~and~\ref{claim:rcr-iff-slice-cr}}\label[appendix]{appendix:main-runtime-thm-claims}

\intermediateSliceCR*%
\begin{proof}
For a slice $\slice$, let $N_{\slice}$ denote the set of neighbors of $v_{\slice}$ in $\vgrep{\A}$ in the following.\\
Assume that $\col{i}{\at} = \col{i}{\bt}$. Then, the following must hold:
\begin{equation*}
	\mset{ (\stp(\at, \ct), \col{i-1}{\ct}) \mid \ct \in N(\at) }
	=
	\mset{ (\stp(\bt, \ct), \col{i-1}{\ct}) \mid \ct \in N(\bt) }.
\end{equation*}
Thus, \cref{lem:vgrep-equiv-rcr} yields that for all $\slice \in \slices(\at)$ and $\tup{t} \isdef \pi_{\slices}(\slice)$ we have that
\begin{equation}\label{claim:intermediate-slice:a}
	\mset{ (\stp(\slice, \ct), \col{i-1}{\ct}) \mid \ct \in \slices^{-1}(\slice) }
	=
	\mset{ (\stp(\tup{t}, \ct), \col{i-1}{\ct}) \mid \ct \in \slices^{-1}(\tup{t}) }.
\end{equation}
The induction hypothesis yields that for all $\ct, \tup{d} \in \tA$ we have $\col{i-1}{\ct} = \col{i-1}{\tup{d}}$ if, and only if, $\gcr{2(i-1)+1}{w_{\ct}} = \gcr{2(i-1)+1}{w_{\tup{d}}}$. Combined with the fact that $2(i-1)+1 = 2i-1$ it follows that~\eqref{claim:intermediate-slice:a} holds if, and only if
\begin{equation*}
	\mset{ (\stp(\slice, \ct), \gcr{2i-1}{w_{\ct}}) \mid \ct \in \slices^{-1}(\slice) }
	=
	\mset{ (\stp(\tup{t}, \ct), \gcr{2i-1}{w_{\ct}}) \mid \ct \in \slices^{-1}(\tup{t}) },
\end{equation*}
which, because of $N_{\slice} = \set{ w_{\ct} \mid \ct \in \slices^{-1}(\slice) }$ and $N_{\tup{t}} = \set{ w_{\ct} \mid \ct \in \slices^{-1}(\tup{t}) }$ (follows from \cref{fact:easy-rcr-cr:neighbors-of-w}), holds if, and only if
\begin{equation}\label{claim:intermediate-slice:b}
	\mset{ (\stp(\slice, \ct), \gcr{2i-1}{w_{\ct}}) \mid w_{\ct} \in N_{\slice} }
	=
	\mset{ (\stp(\tup{t}, \ct), \gcr{2i-1}{w_{\ct}}) \mid w_{\ct} \in N_{\tup{t}} }.
\end{equation}
By \cref{fact:easy-rcr-cr:lambda-slice-stp},~\eqref{claim:intermediate-slice:b} holds if, and only if,
\begin{equation}\label{claim:intermediate-slice:c}
	\mset{ (\lambda(v_{\slice}, w_{\ct}), \gcr{2i-1}{w_{\ct}}) \mid w_{\ct} \in N_{\slice} }
	=
	\mset{ (\lambda(v_{\tup{t}}, w_{\ct}), \gcr{2i-1}{w_{\ct}}) \mid w_{\ct} \in N_{\tup{t}} }.
\end{equation}
Statement~\eqref{claim:intermediate-slice:c} holds if, and only if,
\begin{equation}\label{claim:intermediate-slice:d}
	\mset{ (\lambda(v_{\slice}, w), \gcr{2i-1}{w}) \mid w \in N_{\slice} }
	=
	\mset{ (\lambda(v_{\tup{t}}, w), \gcr{2i-1}{w}) \mid w \in N_{\tup{t}} }.
\end{equation}

In total, we showed that $\col{i}{\at} = \col{i}{\bt}$ implies that~\eqref{claim:intermediate-slice:d} holds for all $\slice \in \slices(\at)$ and $\tup{t} \isdef \pi_{\slices}(\slice)$. By definition, we know that for all $\slice \in \slices(\at)$ and $\tup{t} \isdef \pi_{\slices}(\slice)$ it holds that $\gcr{2i}{v_{\slice}} = \gcr{2i}{v_{\tup{t}}}$ if, and only if, $\gcr{2i-1}{v_{\slice}} = \gcr{2i-1}{v_{\tup{t}}}$ and~\eqref{claim:intermediate-slice:d}.
Thus, for all $\slice \in \slices(\at)$ and $\tup{t} \isdef \pi_{\slices}(\slice)$ it holds that:\quad
	If\; $\col{i}{\at} = \col{i}{\bt}$\; and \;$\gcr{2i-1}{v_{\slice}} = \gcr{2i-1}{v_{\tup{t}}}$, then\; $\gcr{2i}{v_{\slice}} = \gcr{2i}{v_{\tup{t}}}$. \end{proof}
\medskip

\RCRiffSliceCR*%
\begin{proof}
For a slice $\slice$, let $N_{\slice}$ denote the set of neighbors of $v_{\slice}$ in $\vgrep{\A}$ in the following.\\
Consider $\slice \in \slices(\at)$ and let $\tup{t} \isdef \pi_{\slices}(\slice)$. By definition of CR, $\gcr{2i-1}{v_{\slice}} = \gcr{2i-1}{v_{\tup{t}}}$ holds if, and only if $\gcr{2i-2}{v_{\slice}} = \gcr{2i-2}{v_{\tup{t}}}$ and
\begin{equation}\label{slice-claim:stmt:a}
	\mset{ (\lambda(v_{\slice}, w), \gcr{2i-2}{w}) \mid w \in N_{\slice} }
	=
	\mset{ (\lambda(v_{\tup{t}}, w), \gcr{2i-2}{w}) \mid w \in N_{\tup{t}} }.
\end{equation}
Since $N_{\slice} = \set{ w_{\ct} \mid \ct \in \slices^{-1}(\slice) }$ and $N_{\tup{t}} = \set{ w_{\ct} \mid \ct \in \slices^{-1}(\tup{t}) }$ (follows from \cref{fact:easy-rcr-cr:neighbors-of-w}),~\eqref{slice-claim:stmt:a} holds if, and only if, the following holds:
\begin{equation}\label{slice-claim:stmt:b}
	\mset{ (\lambda(v_{\slice}, w_{\ct}), \gcr{2i-2}{w_{\ct}}) \mid \ct \in \slices^{-1}(\slice) }
	=
	\mset{ (\lambda(v_{\tup{t}}, w_{\ct}), \gcr{2i-2}{w_{\ct}}) \mid \ct \in \slices^{-1}(\tup{t}) }.
\end{equation}
By \cref{fact:easy-rcr-cr:lambda-slice-stp},~\eqref{slice-claim:stmt:b} holds if, and only if, the following holds:
\begin{equation}\label{slice-claim:stmt:c}
	\mset{ (\stp(\slice, \ct), \gcr{2i-2}{w_{\ct}}) \mid \ct \in \slices^{-1}(\slice) }
	=
	\mset{ (\stp(\tup{t}, \ct), \gcr{2i-2}{w_{\ct}}) \mid \ct \in \slices^{-1}(\tup{t}) }.
\end{equation}
Since $2i-2 = 2(i-1)$, the induction hypothesis yields that~\eqref{slice-claim:stmt:c} holds if, and only if, the following holds:
\begin{equation}\label{slice-claim:stmt:d}
	\mset{ (\stp(\slice, \ct), \col{i-1}{\ct}) \mid \ct \in \slices^{-1}(\slice) }
	=
	\mset{ (\stp(\tup{t}, \ct), \col{i-1}{\ct}) \mid \ct \in \slices^{-1}(\tup{t}) }.
\end{equation}
Thus, for every $\slice \in \slices(\at)$ we have that $\gcr{2i-1}{v_{\slice}} = \gcr{2i-1}{v_{\pi_{\slices}(\slice)}}$ if, and only if, $\gcr{2i-1}{v_{\slice}} = \gcr{2i-1}{v_{\pi_{\slices}(\slice)}}$ and
\begin{equation}\label{slice-claim:stmt:e}
	\mset{ (\stp(\slice, \ct), \col{i-1}{\ct}) \mid \ct \in \slices^{-1}(\slice) }
	=
	\mset{ (\stp(\pi_{\slices}(\slice), \ct), \col{i-1}{\ct}) \mid \ct \in \slices^{-1}(\pi_{\slices}(\slice)) }.
\end{equation}
\Cref{lem:vgrep-equiv-rcr} yields that~\eqref{slice-claim:stmt:e} holds for all $\slice \in \slices(\at)$ if, and only if, the following holds:
\begin{equation}\label{slice-claim:stmt:f}
	\mset{ (\stp(\at, \ct), \col{i-1}{\ct}) \mid \ct \in N(\at) }
	=
	\mset{ (\stp(\bt, \ct), \col{i-1}{\ct}) \mid \ct \in N(\bt) }.
\end{equation}
Since $\col{i-1}{\at} = \col{i-1}{\bt}$,~\eqref{slice-claim:stmt:f} holds if, and only if, $\col{i}{\at} = \col{i}{\bt}$.

Hence, we showed that for all $\slice \in \slices(\at)$ and $\tup{t} \isdef \pi_{\slices}(\slice)$ it holds that $\gcr{2i-1}{v_{\slice}} = \gcr{2i-1}{v_{\tup{t}}}$ if, and only if, $\gcr{2i-2}{v_{\slice}} = \gcr{2i-2}{v_{\tup{t}}}$ and $\col{i}{\at} = \col{i}{\bt}$. Since $\gcr{2i-2}{v_{\slice}} \neq \gcr{2i-2}{v_{\tup{t}}}$ would contradict $\gcr{2i-1}{w_{\at}} = \gcr{2i-1}{w_{\bt}}$ according to \cref{lem:slices-determine-cr}, we know that $\gcr{2i-2}{v_{\slice}} = \gcr{2i-2}{v_{\tup{t}}}$ holds for all $\slice \in \slices(\at)$ and $\tup{t} \isdef \pi_{\slices}(\slice)$.

In total, we showed that for all $\slice \in \slices(\at)$ and $\tup{t} \isdef \pi_{\slices}(\slice)$ we have\; $\gcr{2i-1}{v_{\slice}} = \gcr{2i-1}{v_{\tup{t}}} \iff \col{i}{\at} = \col{i}{\bt}$. \end{proof} 

 \end{document}